\tikzset{
	small node/.style={circle, fill=black, minimum size=4pt, inner sep=0pt},
	small node transparent/.style={circle, fill=black!40, minimum size=4pt, inner sep=0pt},
	directed edge/.style={->, thick}, x=0.75cm, y=1.75cm,
	reversed edge/.style={<-, thick},
	next edge/.style={->, teal, dashed, thick},
	next edge transparent/.style={->, teal!40, dashed, thick},
	diamond edge/.style={->, purple, decorate,decoration={coil, segment length=4pt}},
	brace/.style={decorate,decoration={brace, segment length=4pt}}
}
\newtheorem{theorem}{Theorem}
\newtheorem{proposition}[theorem]{Proposition}
\newtheorem{definition}[theorem]{Definition}
\newtheorem{remark}[theorem]{Remark}
\newtheorem{example}[theorem]{Example}
\newtheorem{apptheorem}{Theorem}[section]
\newtheorem{applemma}[apptheorem]{Lemma}
\newtheorem{appcorollary}[apptheorem]{Corollary}
\newtheorem{appproposition}[apptheorem]{Proposition}
\newcommand{\BibTeX}{B\kern-.05em{\sc i\kern-.025em b}\kern-.08em\TeX}
\newcommand{\mn}[1]{\ensuremath{\mathsf{#1}}}
\newcommand{\e}{\emph}
\newcommand{\A}{\ensuremath{\mathcal{A}}\xspace}
\newcommand{\T}{\ensuremath{\mathcal{T}}\xspace}
\newcommand{\p}{\varphi}
\newcommand{\mdl}{\models}
\newcommand{\sbs}{\subseteq}
\newcommand{\sqs}{\sqsubseteq}
\newcommand{\xto}[1]{\ensuremath{\xrightarrow{#1}}\xspace}
\newcommand{\ovl}[1]{\overline{#1}}
\newcommand{\F}{\ensuremath{\mathcal{F}}\xspace}
\newcommand{\G}{\ensuremath{\mathcal{G}}\xspace}
\newcommand{\Trig}{\ensuremath{\mathcal{T}_\textit{rig}}\xspace}
\newcommand{\Triglin}{\ensuremath{\mathcal{T}_\textit{rig}^\textit{\,lin}}\xspace}
\newcommand{\B}{\ensuremath{\mathcal{B}}\xspace}
\newcommand{\ld}{\ensuremath{\ .\ }}
\newcommand{\grto}{ \quad \to \quad }
\newcommand{\grand}{ \ \& \ }
\newcommand{\infers}{\ \ \ \vdash\ \ \ }
\newcommand{\der}{\textbf{der}\xspace}
\newcommand{\dom}{\mn{prem}}
\DeclareMathOperator{\rn}{\mn{conc}}
\newcommand{\role}[1]{\ensuremath{\rho^r_{#1}}\xspace}
\newcommand{\dlos}{\ensuremath{\mathit{Datalog}_{\textup{1S}}}\xspace}
\newcommand{\impd}{\leftarrow}
\newcommand{\whalecalf}{\e{Whale Calf}\xspace}
\newcommand{\proj}{\ensuremath{\mathrm{proj}}\xspace}
\newcommand{\bigO}[1]{{\mathcal{O}\left(#1\right)}}
\newcommand{\poly}[1]{{\mathrm{poly}\left(#1\right)}}
\newcommand{\ind}{\IN}
\newcommand{\IN}{\ensuremath{\mathsf{N_I}}\xspace}
\newcommand{\CN}{\ensuremath{\mathsf{N_C}}\xspace}
\newcommand{\RN}{\ensuremath{\mathsf{N_R}}\xspace}
\newcommand{\RNrig}{\ensuremath{\mathsf{N_R^{rig}}}\xspace}
\newcommand{\RNloc}{\ensuremath{\mathsf{N_R^{loc}}}\xspace}
\newcommand{\NN}{\ensuremath{\mathsf{N_N}}\xspace}
\newcommand\mydots{\makebox[1em][c]{.\hfil.\hfil.}}
\newcommand{\Circ}{\raisebox{0.25ex}{\text{\scriptsize{$\bigcirc$}}}}
\newcommand{\Next}{{\ensuremath{\Circ}}\xspace}
\newcommand{\Prev}{{\ensuremath{\Circ^{-}}}\xspace}
\newcommand{\Diam}{\raisebox{0.1ex}{\text{\small{$\Diamond$}}}}
\newcommand{\D}{\ensuremath{\Diam}\xspace}
\newcommand{\Df}{\ensuremath{\D}\xspace}
\newcommand{\Dp}{\ensuremath{\D\!^{-}}\xspace}
\newcommand{\monodic}{\ensuremath{\mathop{\ooalign{$\Box$ \cr \kern0.57ex \raisebox{0.2ex}{\scalebox{0.55}{$1$}}}\rule{0pt}{1.5ex} \kern-0.7ex}}\xspace}
\newcommand{\Nextone}{\ensuremath{\mathop{\ooalign{$\Next$ \cr \kern0.57ex \raisebox{0.3ex}{\scalebox{0.55}{$1$}}}\rule{0pt}{1.5ex} \kern-0.7ex}}\xspace}
\newcommand{\Wnextone}{\ensuremath{\mathop{\ooalign{$\Wnext$ \cr \kern0.57ex \raisebox{0.2ex}{\scalebox{0.55}{\textcolor{white}{$1$}}}}\rule{0pt}{1ex} \kern-0.7ex}}\xspace}
\newcommand{\Xnext}{^{\,\scalebox{0.5}{\smash{\raisebox{-3pt}{$\bigcirc$}}}}}
\newcommand{\Xall}{^{\,{\scalebox{0.5}{\smash{\raisebox{-3pt}{$\bigcirc$}}}} {\smash{\raisebox{-2.5pt}{$\diamond$}}}}}
\newcommand{\dlnd}{\ensuremath{\textsl{Datalog}\Xall}\xspace}
\newcommand{\NC}{\ensuremath{{\sf N_C}}\xspace}
\newcommand{\NI}{\ensuremath{{\sf N_I}}\xspace}
\newcommand{\NR}{\ensuremath{{\sf N_R}}\xspace}
\newcommand{\LTL}{\ensuremath{\textsl{L\!TL}}\xspace}
\newcommand{\EL}{\ensuremath{\Emc\Lmc}\xspace}
\newcommand{\TEL}{\ensuremath{\Tmc\Emc\Lmc}\xspace}
\newcommand{\TELn}{\ensuremath{\TEL^{\Xnext}}\xspace}
\newcommand{\TELnl}{\ensuremath{\TEL^{\Xnext}_{\textit{lin}}}\xspace}
\newcommand{\TELnc}{\ensuremath{\TELnl}\xspace}
\newcommand{\TELnf}{\ensuremath{\TEL^{\Xnext}_{\textit{future}}}\xspace}
\newcommand{\NLogSpace}{\textsc{NL}\xspace}
\newcommand{\PTime}{\textsc{P}\xspace}
\newcommand{\PSpace}{\textsc{PSpace}\xspace}
\newcommand{\ExpSpace}{\textsc{ExpSpace}\xspace}
\newcommand{\Jmf}{\ensuremath{\mathfrak{J}}\xspace}
\newcommand{\Qmf}{\ensuremath{\mathfrak{Q}}\xspace}
\newcommand{\Amc}{\ensuremath{\mathcal{A}}\xspace}
\newcommand{\Bmc}{\ensuremath{\mathcal{B}}\xspace}
\newcommand{\Cmc}{\ensuremath{\mathcal{C}}\xspace}
\newcommand{\Dmc}{\ensuremath{\mathcal{D}}\xspace}
\newcommand{\Emc}{\ensuremath{\mathcal{E}}\xspace}
\newcommand{\Fmc}{\ensuremath{\mathcal{F}}\xspace}
\newcommand{\Gmc}{\ensuremath{\mathcal{G}}\xspace}
\newcommand{\Imc}{\ensuremath{\mathcal{I}}\xspace}
\newcommand{\Jmc}{\ensuremath{\mathcal{J}}\xspace}
\newcommand{\Lmc}{\ensuremath{\mathcal{L}}\xspace}
\newcommand{\Mmc}{\ensuremath{\mathcal{M}}\xspace}
\newcommand{\Nmc}{\ensuremath{\mathcal{N}}\xspace}
\newcommand{\Smc}{\ensuremath{\mathcal{S}}\xspace}
\newcommand{\Tmc}{\ensuremath{\mathcal{T}}\xspace}
\newcommand{\Nbb}{\ensuremath{\mathbb{N}}\xspace}
\newcommand{\Zbb}{\ensuremath{\mathbb{Z}}\xspace}
\newcommand{\Z}{\Zbb}
\newcommand{\N}{\Nbb} 
\begin{document}


\begin{frontmatter}


\paperid{8583} 


\title{Analysing Temporal Reasoning in Description Logics\\ 
Using Formal Grammars}


\author[A]{\fnms{Camille}~\snm{Bourgaux}
}
\author[B]{\fnms{Anton}~\snm{Gnatenko}\thanks{Email: anton.gnatenko@student.unibz.it}}
\author[A]{\fnms{Micha\"el}~\snm{Thomazo}
} 

\address[A]{DI ENS, ENS, CNRS, PSL University \& Inria, Paris, France}
\address[B]{Free University of Bozen-Bolzano, Italy}


\begin{abstract}
	We establish a correspondence between (fragments of) \TELn, a temporal extension of the \EL description logic with the LTL operator $\Next^k$, and some specific kinds of formal grammars, in particular, conjunctive grammars (context-free grammars equipped with the operation of intersection). 
	This connection implies that \TELn does not possess the property of ultimate periodicity of models, and further leads to undecidability of query answering in \TELn, closing a question left open since the introduction of \TELn. Moreover, it also allows to establish decidability of query answering for some new interesting fragments of \TELn, and to reuse for this purpose existing tools and algorithms for conjunctive grammars.
\end{abstract}

\end{frontmatter}



\section{Introduction}\label{sec:introduction}

\emph{Ontology-mediated query answering} (OMQA) aims at improving data access by enriching data with an ontology that expresses domain knowledge \cite{DBLP:journals/jods/PoggiLCGLR08,DBLP:conf/rweb/KontchakovZ14,DBLP:conf/rweb/BienvenuO15}. In this setting, an ontology is a set of logical formulas, typically expressed in a given description logic (DL) \cite{DBLP:conf/dlog/2003handbook} or via extensions of Datalog \cite{DBLP:books/aw/AbiteboulHV95,DBLP:journals/ai/BagetLMS11,DBLP:journals/ws/CaliGL12}. It provides a formalized vocabulary that allows a user to formulate queries in familiar terms, and to obtain more complete answers to queries, as answers are based not only on facts explicitly stored in the data (or ABox, in DL parlance) but also on facts that can be deduced through logical reasoning using the ontology (TBox).  
In the large literature on OMQA, special attention has been devoted to the so-called lightweight description logics, such as the DL-Lite family \cite{DBLP:journals/jair/ArtaleCKZ09} or the \EL family \cite{DBLP:conf/ijcai/BaaderBL05,DBLP:conf/owled/BaaderLB08}, which allow for tractable reasoning and underpin the OWL 2 QL and OWL 2 EL profiles of the Semantic Web standard ontology language \cite{profiles}. In particular, many large real-world ontologies, including the bio-medical ontology SNOMED CT
, use languages from the \EL family.

As many real-world applications require to query temporal data, various extensions of the OMQA framework have been proposed to integrate temporal modelling \cite{DBLP:conf/time/ArtaleKKRWZ17}. In this paper, we consider (fragments of) \TEL, a temporal extension of the DL language \EL introduced by \citet{Basulto-et-al:TEL}. In \TEL, the ABox facts are associated with \emph{timestamps}, and the TBox concept inclusions may feature some operators from \emph{linear temporal logic} (\LTL): $\Next$ (next), $\Next^-$ (previous), $\Diam$ (eventually) and $\Diam^-$ (eventually in the past). For instance, the concept inclusion $\mn{Prof} \sqs \Next \mn{Prof}$ intuitively means that at any moment, someone that is a professor is also a professor at the next instant. Moreover, \TEL allows the user to specify that some roles (binary predicates) are \emph{rigid}, i.e., that the relations they model do not change over time. 

\begin{example}\label{ex:running-example:init}
	Imagine that Alice is a professor in 2025, denoted $\mn{Prof}(\textup{Alice}, 2025)$. Professorship is permanent and requires advising students, who in three years become doctors. Being an advisor of a doctor makes one proud, and proud professors are happy. This knowledge is formalized as follows (using a rigid role \mn{advisorOf}):
	\begin{equation*}
		\begin{gathered}
			\ensuremath{
			\mn{Prof} \sqs \Next \mn{Prof} \quad
			\mn{Prof}\sqcap\mn{Proud} \sqs \mn{Happy} \quad 
			\mn{Student} \sqs \Next^3\mn{Dr}\\
			\mn{Prof} \sqs \exists\mn{advisorOf}.\mn{Student} \quad
			{\exists\mn{advisorOf}.\mn{Dr} \sqs \mn{Proud}}
			}
		\end{gathered}
	\end{equation*}
	Figure~\ref{fig:introduction:alice-example} provides a graphical representation of some information about Alice that can be inferred from $\mn{Prof}(\textup{Alice}, 2025)$ and the above \TEL TBox. In particular, Alice is happy at year $2028$. 
\end{example}

\begin{figure}[t]
	\centering
	

\begin{tikzpicture}[node distance=1.8cm]
	
	\node (t0) {2025};
	\node[right of=t0] (t1) {2026};
	\node[right of=t1] (t2) {2027};
	\node[right of=t2] (t3) {2028};
	
	\node[small node, below of=t0, node distance=1cm, label=above:$\mn{Prof}$] (a0) {};

	\node[small node, below of=a0, node distance=1.4cm, label=below:{$\mn{Student}$}] (b1) {};

	\node[small node, right of=a0, label=above:{$\mn{Prof}$}] (a1) {};
	\node[small node, right of=a1, label=above:{$\mn{Prof}$}] (a2) {};
	\node[small node, right of=a2, label=above:{$\mn{Prof}, \mn{Proud}, \mn{Happy}$}] (a3) {};

	\draw[next edge] (a0) to (a1);
	\draw[next edge] (a1) to (a2);
	\draw[next edge] (a2) to (a3);
	
	\node[left of=a0, node distance=0.8cm] (a) {$\textup{Alice}$};
	
	\node[small node,right of=b1] (b2) {};
	\node[small node,right of=b2] (b3) {};
	\node[small node, right of=b3, label=below:{$\mn{Dr}$}] (b4) {};
	
	\draw[next edge] (b1) to (b4);
		
	\draw[directed edge] (a0) to node[pos=0.5, left] {$\mn{advisorOf}$} (b1);
	\draw[directed edge,dotted] (a3) to node[pos=0.5, right] {$\mn{advisorOf}$} (b4);
	\draw[directed edge,dotted] (a2) to (b3) {};
	\draw[directed edge,dotted] (a1) to (b2) {};
	
\end{tikzpicture}
		
	\caption{Representation of some inferences for Example~\ref{ex:running-example:init}. Dashed lines represent the temporal evolution of a given element, while dotted lines represent a relation  whose existence is known due to role rigidity.}
	\label{fig:introduction:alice-example}
\end{figure}
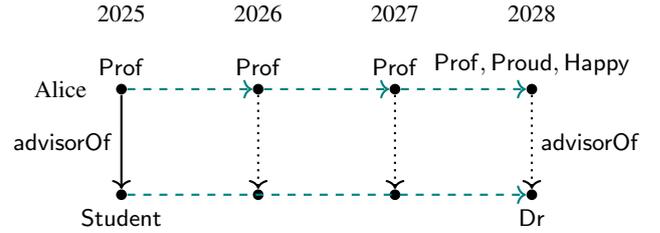

Integrating temporal reasoning in OMQA gave rise to a flourishing literature, with a large body of work on the theoretical side as well as some implementations  \cite{DBLP:journals/amcs/Kalayci0CRXZ19,DBLP:conf/aaai/WangHWG22,DBLP:conf/dlog/ThostHO15}, in a wide variety of settings. For example, an alternative way to model fact temporal validity is to use time intervals instead of timestamps \cite{DBLP:conf/rr/Gutierrez-BasultoK12,DBLP:journals/jair/BrandtKRXZ18,DBLP:journals/tocl/BaaderBKOT20}. 
On the ontology-mediated query side, it is also possible to use a standard, atemporal TBox and temporal queries, built from conjunctive queries and \LTL operators \cite{
DBLP:conf/ijcai/BorgwardtT15,DBLP:journals/ws/BorgwardtLT15}. 
Finally, \TEL stems from a line of research which studies combinations of various DL languages and \LTL operators \cite{DBLP:conf/time/ArtaleKLWZ07,DBLP:conf/ijcai/ArtaleKWZ13,DBLP:journals/jair/ArtaleKKRWZ22}. 
For a more in-depth discussion of temporal reasoning within OMQA, we refer the reader to the survey by \citet{DBLP:conf/time/ArtaleKKRWZ17}.

Despite both \LTL and \EL being decidable, \citet{Basulto-et-al:TEL} showed that their combination in \TEL quickly leads to undecidability. However, they exhibited fragments of \TEL for which atomic query answering is decidable. Most of such fragments restrict the expressiveness of the temporal modelling, by only allowing 
operators \Next and \Prev, giving rise to the \TELn language. \citet{Basulto-et-al:TEL} left open the question of whether that restriction by itself is enough to regain decidability and proposed additional syntactic constraints, based on some form of acyclicity (either on the description logics side, or on the temporal side). All these constraints enforce a crucial property: the existence of models that are \emph{ultimately periodic}. In a nutshell, a model is ultimately periodic if the evolution over time of any given element is, after some initial segment, periodic. 

Our main contribution is to link temporal reasoning with \TELn-TBoxes to the study of associated formal languages, which allows us to close the open question of \citet{Basulto-et-al:TEL} and obtain additional results. We first consider the case  of \TELnf, where concept inclusions only allow to derive novel information about the future (or present) and not the past. For this fragment, we show in Section~\ref{sec:conjunctive-grammars} that the task of deciding whether a concept inclusion of the form $A\sqsubseteq \Next^n B$ is entailed by a TBox can be reduced to deciding whether $n$ belongs to the Parikh image \cite{Parikh:Theorem} of a unary conjunctive language, and vice-versa. Such languages are defined by \emph{conjunctive grammars}, introduced by \citet{Okhotin:Conjunctive-Grammars} as a generalization of context-free grammars which allows for a conjunction operation in rules. We then turn our attention to \TELnl, a temporal extension of linear \EL \cite{DBLP:conf/rr/DimartinoCPW16}, and obtain in Section~\ref{sec:linear} a similar reduction from reasoning in \TELnl  to deciding whether the Parikh image of a context-free grammar over a binary alphabet fulfils some property. 
In Section~\ref{sec:consequences}, we exploit the previous correspondences to obtain results for \TELn and fragments thereof. First, we close negatively the question of whether all \TELn-TBoxes enjoy ultimate periodicity, 
and complete the complexity picture of \TEL given by \citet{Basulto-et-al:TEL}: atomic query answering in \TELn is undecidable. Second, we provide results for the new fragments \TELnf and \TELnl we introduce. As for \TELnf, we prove that atomic query answering is solvable in polynomial time (in combined and data complexity), but becomes undecidable if \TELnf is extended with rigid concept names or the universal concept $\top$. Regarding \TELnl, we prove that \TELnl-TBoxes enjoy ultimate periodicity, and derive complexity results for atomic query answering.  Detailed proofs are provided in the appendix.


\section{Preliminaries}\label{sec:preliminaries}

In this section, we introduce \TELn and conjunctive grammars.

\subsection{The \TELn temporal description logic}

	We recall the syntax and semantics of \TELn from \citet{Basulto-et-al:TEL}.
	Let \IN, \CN, \RN be disjoint countably infinite sets of \emph{individual names}, \emph{concept names} and \emph{role names}, respectively, with \RN 
	partitioned into \emph{rigid role names} \RNrig and \emph{local role names} \RNloc. 
	
	\paragraph{Syntax}
	A \emph{fact} is an expression of the form  $A(a, n)$ or $r(a, b, n)$, where $a, b \in \IN$, $A \in \CN$, $r \in \RN$, and $n \in \Z$. A \emph{(temporal) ABox} (data instance) \A is a finite set of facts. A \TELn-\emph{TBox} (ontology) \T is a finite set of \emph{concept inclusions} of the form
	\begin{align}\label{math:ci:preliminaries:normal-form}
		&A \sqs \Next^n B
		&&A \sqcap A' \sqs B
		&&\exists r.A \sqs B
		&&A \sqs \exists r.B
	\end{align}
	where $A, A', B \in \CN$, $r\in\RN$, and $n \in \Z$. 
 When $n = 0$, $n = 1$, or $n=-1$, we simply write $A \sqs B$, $A \sqs \Next B$, and $A\sqs \Next^- B$, respectively.
	We will further consider two fragments of \TELn: the \emph{future} fragment, \TELnf, is obtained by setting $n \geqslant 0$, and the \emph{linear} fragment, \TELnl, disallows concept inclusions of the form $A \sqcap A' \sqs B$. 
	A \emph{(temporal) knowledge base (KB)} is a pair $(\T,\A)$. 
	Note that we consider \TELn-TBoxes \emph{in normal form}  \cite{Basulto-et-al:TEL}. We denote by $\CN(\T)$, $\RN(\T)$, $\RNrig(\T)$, and $\RNloc(\T)$, respectively, the sets of concept names, role names, and rigid and local role names appearing in \T.	 The \emph{size} $|\T|$ of \T (resp.~$|\A|$ of \A) is the number of symbols needed to write it down, with integers encoded in \emph{unary}.

	\paragraph{Semantics}
	An \emph{interpretation} $\Jmf$ is a structure $(\Delta^\Jmf, (\Imc_i)_{i\in\Z})$ where each $\Imc_i=(\Delta^\Jmf,\cdot^{\Imc_i})$ is a classical DL interpretation with domain $\Delta^\Jmf$:  
	for every $a\in\IN$, $a^{\Imc_i}=a$ (\emph{standard name assumption}, assuming that $\IN\subseteq\Delta^\Jmf$), 
	for every $A\in\CN$, $A^{\Imc_i}\subseteq \Delta^\Jmf$, and for every $r\in\RN$, $r^{\Imc_i}\subseteq \Delta^\Jmf\times\Delta^\Jmf$. 
		Moreover, for every $r \in \RNrig$, $r^{\Imc_i}=r^{\Imc_0}$ for every $i\in\Z$. 
	The interpretation function $\cdot^{\Imc_i}$ is often written as $\cdot^{\Jmf,i}$ and is extended to interpret complex concepts as expected: 
	\begin{align*}
	 (\Next^n A)^{\Jmf,i}=&A^{\Jmf,i+n} \quad\quad\quad
	 (A\sqcap B)^{\Jmf,i}=A^{\Jmf,i}\cap B^{\Jmf,i}\\
	 (\exists r.A)^{\Jmf,i}=&\{d\mid \exists e\in A^{\Jmf,i}, (d,e)\in r^{\Jmf,i}\}
	\end{align*}
The interpretation $\Jmf$ is a \emph{model} of a fact $A(a,n)$ (resp.~$r(a,b,n)$) if $a\in A^{\Jmf,n}$ (resp.~$(a,b)\in r^{\Jmf,n}$), and of a concept inclusion $C\sqsubseteq D$ if $C^{\Jmf,i}\subseteq D^{\Jmf,i}$ for every $i\in\Z$. It is a model of an ABox \A (resp.~a TBox \T) if it is a model of all facts in $\Amc$ (resp.~all concept inclusions in \T), and of a KB $(\T,\A)$ if it is a model of \T and \A. We write $\Jmf\models \alpha$ to denote that $\Jmf$ is a model of $\alpha$. A TBox $\Tmc$ \emph{entails} a concept inclusion $\alpha$, written $\Tmc\models\alpha$, if $\Jmf\models\Tmc$ implies $\Jmf\models\alpha$, and a KB $(\T,\A)$ entails a fact $\alpha$, $(\T,\A)\models\alpha$, if $\Jmf\models(\T,\A)$ implies $\Jmf\models\alpha$. 
Note that \emph{rigid concept names} can be  defined in a similar fashion as rigid role names, and can be simulated with $A\sqsubseteq\Next A$ and $A\sqsubseteq\Next^{-1} A$ in \TELn and \TELnl (but not in \TELnf).

\paragraph{Derivations}
	We will use the following derivation system for \TELn. Let $\NN$ be an infinite countable set of \emph{named nulls} (constants) disjoint from $\IN$. Given a \TELn-TBox \T and ABox \A, we write $(\T,\A)\vdash A(a,n)$ if there exists a derivation of $A(a,n)$ from $\A\cup\T$ using rules of the form \eqref{math:inference-rule:rigid}--\eqref{math:inference-rule:existential}. 
	Formally, such a derivation is a sequence $(\Fmc_0,\dots,\Fmc_m)$ 
	with $\Fmc_0=\A\cup\T$, $A(a,n)\in\Fmc_m$, and for $1\leqslant i\leqslant m$, $\Fmc_i$ is obtained from $\Fmc_{i-1}$ by choosing a rule such that all formulas in the left-hand side are in $\Fmc_{i-1}$ and adding 
	the formulas on the right-hand side.
	\begin{align}
		{r(a, b, n),\ r \in \RNrig,\ k \in \Z} &&\infers & r(a, b, k)\label{math:inference-rule:rigid} \\		
		{A(a, n),\ A \sqs \Next^k B} &&\infers &{B(a, n + k)} 
			\label{math:inference-rule:shift}\\
		{A(a, n),\ A'(a, n),\ A \sqcap A' \sqs B} &&\infers &{B(a, n)}
			\label{math:inference-rule:conjunction}\\
		{r(a, b, n),\ A(b, n),\ \exists r.A \sqs B} &&\infers &{B(a, n)}  
			\label{math:inference-rule:return}
			\\
		{A(a, n),\ A \sqs \exists r. B} &&\infers & r(a, b, n),\ B(b, n)\label{math:inference-rule:existential}
	\end{align}	
	where in \eqref{math:inference-rule:existential}, $b$ is a fresh element from $\NN$. The next proposition is easily shown using the canonical model of a \TELn KB, defined in a similar way as in the temporal DL-Lite case \cite{DBLP:conf/ijcai/ArtaleKWZ13,DBLP:conf/ijcai/ArtaleKKRWZ15}.
	
	\begin{proposition}\label{prop:derivationTEL}
	For every \TELn-TBox \T, ABox \A, $A,B\in\NC$, $a\in\IN$, and $n,k\in\Z$:
	\begin{itemize} 
		\item $(\T,\A)\models A(a,n)$ iff $(\T,\A)\vdash A(a,n)$, and
	\item $\T\models A\sqsubseteq\Next^n B$ iff $(\T,\{A(a,k)\})\vdash B(a,k+n)$.
	\end{itemize}
	\end{proposition}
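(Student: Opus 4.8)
The plan is to prove both statements through a single device, the \emph{canonical model} $\Imf$ of the knowledge base, obtained by saturating $\A \cup \T$ under the rules \eqref{math:inference-rule:rigid}--\eqref{math:inference-rule:existential}. Concretely, I would run an exhaustive (fair) saturation: starting from $\Fmc_0 = \A \cup \T$, at each step apply an applicable rule, using a fresh element of \NN whenever \eqref{math:inference-rule:existential} fires; since the process may be infinite, I take the union of all $\Fmc_i$. The domain of $\Imf$ is \IN together with all nulls created along the way, and I set $d \in A^{\Imf,n}$ (resp.\ $(d,e) \in r^{\Imf,n}$) exactly when $A(d,n)$ (resp.\ $r(d,e,n)$) is derivable. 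Then $d \in A^{\Imf,n}$ iff $(\T,\A) \vdash A(d,n)$ by construction, so it remains to show that derivability coincides with entailment, which is precisely the first bullet.

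The first bullet splits into the two usual halves. For soundness ($\vdash$ implies $\models$) I show that $\Imf$ maps homomorphically into every model $\Jmf$ of $(\T,\A)$: there is a map $h$ from $\Delta^{\Imf}$ to $\Delta^{\Jmf}$ that is the identity on \IN and satisfies $d \in A^{\Imf,n} \Rightarrow h(d) \in A^{\Jmf,n}$ and $(d,e) \in r^{\Imf,n} \Rightarrow (h(d),h(e)) \in r^{\Jmf,n}$ for all $n$. This is an induction on the saturation step producing a fact: rule \eqref{math:inference-rule:shift} is matched by the semantics of $\Next^n$, rules \eqref{math:inference-rule:conjunction} and \eqref{math:inference-rule:return} by the fact that $\Jmf$ satisfies the corresponding inclusion in \eqref{math:ci:preliminaries:normal-form}, rule \eqref{math:inference-rule:rigid} by rigidity of $\Jmf$, and for \eqref{math:inference-rule:existential} I extend $h$ on the fresh null $b$ by choosing, using $\Jmf \models A \sqsubseteq \exists r.B$ and $h(a) \in A^{\Jmf,n}$, a witness $e$ with $(h(a),e) \in r^{\Jmf,n}$ and $e \in B^{\Jmf,n}$, setting $h(b)=e$. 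Since $h$ fixes individual names, any derivable $A(a,n)$ gives $a \in A^{\Jmf,n}$, i.e.\ $\Jmf \models A(a,n)$. For completeness ($\models$ implies $\vdash$) I verify directly that $\Imf$ is itself a model of $(\T,\A)$: the ABox facts lie in the saturation, each inclusion from \eqref{math:ci:preliminaries:normal-form} holds because its matching rule is applied to exhaustion, and rigid roles are interpreted uniformly across time thanks to \eqref{math:inference-rule:rigid}. Hence $(\T,\A) \models A(a,n)$ forces $a \in A^{\Imf,n}$, so $(\T,\A) \vdash A(a,n)$.

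For the second bullet, the forward direction is a direct semantic argument: if $\T \models A \sqsubseteq \Next^n B$, then in every model of $(\T,\{A(a,k)\})$ we have $a \in A^{\cdot,k} \subseteq B^{\cdot,k+n}$, so $(\T,\{A(a,k)\}) \models B(a,k+n)$, and the first bullet turns this into $(\T,\{A(a,k)\}) \vdash B(a,k+n)$. For the backward direction I use \emph{time-shift invariance} of TBoxes: if $\Jmf \models \T$ then so does every time-shifted copy of $\Jmf$, since inclusions must hold at all time points and rigidity is preserved. Given $\Jmf \models \T$, an index $i$, and $d \in A^{\Jmf,i}$, I replay the derivation of $B(a,k+n)$ from $A(a,k)$ inside $\Jmf$ via a map $g$ with $g(a)=d$, shifting every timestamp by $i-k$, exactly as in the homomorphism above but with this new base point; the terminal fact $B(a,k+n)$ then yields $g(a)=d \in B^{\Jmf,i+n}$. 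As $\Jmf$, $i$, and $d$ are arbitrary, $\T \models A \sqsubseteq \Next^n B$.

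I expect the obstacles to be bookkeeping rather than conceptual. First, the saturation is genuinely infinite in general (for instance under cyclic existential inclusions), so it must be run \emph{fairly} to guarantee that $\Imf$ really satisfies every inclusion; this is the one place needing care. Second, rigid roles must be treated consistently on both sides: a witness chosen in $\Jmf$ for an existential over a rigid role automatically works at every time point because $\Jmf$ interprets that role rigidly, which matches the effect of rule \eqref{math:inference-rule:rigid}. Finally, in the backward direction of the second bullet the standard name assumption causes no trouble, because the replay is a direct induction on the derivation that tracks the facts derived about $a$ through $g$ and never relies on the identity of $a$ beyond its membership in $A$ at the starting time.
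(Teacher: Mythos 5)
Your proposal is correct and follows essentially the same route as the paper: a fair saturation yielding a canonical model, a homomorphism from it into every model for soundness, modelhood of the saturation itself for completeness, and time-shift invariance for the second bullet. The only cosmetic difference is in the backward direction of the second bullet, where you replay the derivation directly in an arbitrary model at an arbitrary base point, whereas the paper argues contrapositively by swapping $a$ with a counterexample element and shifting time; both hinge on the same two observations (shift invariance of TBox satisfaction and the fact that the derivation never depends on the identity of $a$ beyond its initial concept membership).
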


	\begin{example}[Example~\ref{ex:running-example:init} cont'd]\label{ex:running-example:inference}
		 Below is a derivation of $\mn{Happy}(\textup{Alice}, 2028)$, from the TBox \T that contains the concept inclusions from Example~\ref{ex:running-example:init} and $\A=\{\mn{Prof}(\textup{Alice}, 2025)\}$, with $\mn{advisorOf} \in \RNrig$. The form of the rule applied is given as a subscript to $\vdash$ and the left-hand side is left implicit. 
		 \begin{align*}
		 		&\vdash_{\eqref{math:inference-rule:shift}} \mn{Prof}(\textup{Alice}, 2026)\\
				&\vdash_{\eqref{math:inference-rule:shift}} \mn{Prof}(\textup{Alice}, 2027)\\
				&\vdash_{\eqref{math:inference-rule:shift}} \mn{Prof}(\textup{Alice}, 2028)\\
		 		&\vdash_{\eqref{math:inference-rule:existential}} \mn{advisorOf}(\textup{Alice}, b, 2025), \mn{Student}(b, 2025)\\
		 		&\vdash_{\eqref{math:inference-rule:shift}} \mn{Dr}(b, 2028)\\
		 		&\vdash_{\eqref{math:inference-rule:rigid}} \mn{advisorOf}(\textup{Alice}, b, 2028)\\
		 		&\vdash_{\eqref{math:inference-rule:return}} \mn{Proud}(\textup{Alice}, 2028)\\
		 		&\vdash_{\eqref{math:inference-rule:conjunction}} \mn{Happy}(\textup{Alice}, 2028)
		 \end{align*}
		 By Proposition~\ref{prop:derivationTEL}, it holds that $(\T, \A) \mdl \mn{Happy}(\textup{Alice}, 2028)$ and $\T\models\mn{Prof}\sqsubseteq \Next^3 \mn{Happy}$.
	\end{example}
	
	\paragraph{Query answering and ultimately periodic TBoxes}
	The \emph{temporal atomic query answering} (TAQA) problem is that of deciding, given a temporal KB $(\T, \A)$ and a fact $A(a, n)$, whether $(\T, \A) \mdl A(a, n)$. 
	We consider \emph{combined complexity}, where the size of the input is $|\T|+|\A|+|A(a, n)|$, and \emph{data complexity}, where \T is fixed. 
	\citet{Basulto-et-al:TEL} show that decidability of TAQA is ensured by a property of the TBox, namely, \emph{ultimate periodicity}, which they define using the \emph{canonical quasimodel} of the TBox. 
	Since we do not use the notion of quasimodel in this work, we rephrase this property using the sets of numbers $\{n \in \Z \mid \T \mdl A \sqs \Next^n B\}$. 
	A set $\Lmc \sbs \Z^n$ is \emph{linear} if $\Lmc = \{\vec b + k_1\vec p_1 + \dots + k_l \vec p_l \mid k_1, \dots, k_l \in \N\}$ for some $\vec b \in \Z^n$, {called \e{offset}}, and $\vec p_1, \dots, \vec p_l \in \Z^n$, {called \e{periods}}. 
	A \emph{semilinear} set is a union of finitely many linear sets. A TBox \T is \emph{ultimately periodic} if for every $A, B \in \CN(\T)$, the set $\{n \in \Z \mid \T \mdl A \sqs \Next^n B\}$ is \emph{semilinear}. 
	
	\begin{theorem}[\citet{Basulto-et-al:TEL}]\label{thm:basulto:ultimate-periodicity-means-pspace} 
		TAQA with ultimately periodic \TELn-TBoxes is in \PSpace for data complexity.
	\end{theorem}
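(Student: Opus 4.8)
The plan is to use Proposition~\ref{prop:derivationTEL} to replace entailment by derivability, and then to show that ultimate periodicity makes the set of derivable facts admit a polynomial-size representation from which the query can be read off. Throughout, write $S_{B,C} = \{n \in \Z \mid \T \mdl B \sqs \Next^n C\}$. By hypothesis each $S_{B,C}$ is semilinear, and since we work in data complexity \T is fixed; hence there is a single constant $P$ such that all the $S_{B,C}$ have period dividing $P$ and become periodic past a constant offset.

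First I would analyse the shape of derivations. Inspecting rules \eqref{math:inference-rule:rigid}--\eqref{math:inference-rule:existential}, the only rule that transfers information from an element to a freshly created one is \eqref{math:inference-rule:existential}, and it does so solely at the creation time; no rule ever propagates a concept from a parent into a child afterwards, and rigid closure~\eqref{math:inference-rule:rigid} only reuses an incoming edge to fire the return rule \emph{upwards}. Consequently, the facts derivable about an anonymous null introduced with concept $B$ at time $t$ depend only on $B$ and $t$: by Proposition~\ref{prop:derivationTEL} the isolated derivation from $\{B(x,t)\}$ already captures everything, so the null satisfies $C$ exactly at the times $t + S_{B,C}$, a constant-size semilinear set shifted by the creation time, independently of the rest of the model. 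The anonymous part therefore influences the ABox individuals only upwards, through the return rule~\eqref{math:inference-rule:return}.

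Next I would compute the temporal profiles of the at most $|\A|$ ABox individuals, i.e.\ for each individual $a$ and each $C \in \CN(\T)$ the set of times at which $C(a,\cdot)$ is derivable. These profiles form the least fixpoint of a monotone operator combining the ABox facts, the intra-element rules \eqref{math:inference-rule:shift} and \eqref{math:inference-rule:conjunction}, and the return contributions coming both from the anonymous nulls and from other ABox individuals linked by role facts (closed under \eqref{math:inference-rule:rigid} for rigid roles). Each such contribution is produced from the $S_{B,C}$ and the already-derived profiles by shifts, unions, intersections with fixed semilinear sets, and Minkowski sums with fixed semilinear sets, all of which preserve the invariant ``periodic with period dividing $P$ past an offset that stays polynomial in $|\A|$''. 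Since the ABox timestamps and the query time $n$ are polynomial in the input (integers being encoded in unary), every profile is fully described by its restriction to a polynomial-size time window together with its two constant-period periodic tails, and hence fits in polynomial space. The algorithm stores this representation, saturates it to the fixpoint, and finally checks whether $n$ lies in the profile of $a$ for the queried concept; this saturation/reachability is carried out in \PSpace.

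The main obstacle is the core fixpoint in the presence of rigid roles: a rigid edge makes the return rule fire at \emph{every} time point, so it couples the entire timelines of its two endpoints, and ABox role facts can create cyclic dependencies among ABox individuals. The crux is to verify that the semilinear operations triggered by these couplings keep the period bounded by the fixed constant $P$ — this is exactly where ultimate periodicity of \T is used — and the offsets polynomial, so that the fixpoint stabilises within a polynomially bounded, \PSpace-checkable representation. The clean characterisation of anonymous profiles by $t + S_{B,C}$, justified by the absence of any parent-to-child propagation, is what confines all remaining difficulty to this bounded core.
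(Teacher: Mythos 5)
Your treatment of the anonymous part is sound: since none of the rules \eqref{math:inference-rule:rigid}--\eqref{math:inference-rule:existential} ever pushes information from a parent into an already-created null, the profile of a null introduced with $B$ at time $t$ is exactly $\{(C,t+k)\mid k\in S_{B,C}\}$ (in your notation $S_{B,C}=\{n\mid \T\mdl B\sqs\Next^n C\}$), and this is precisely how the original proof, recalled in the appendix as the rules \eqref{math:dlos:fill:offset}--\eqref{math:dlos:fill:period} of the \dlos program $\Pi_\T$, compiles the anonymous part away using ultimate periodicity. The gap is in the step you yourself flag as the crux and then assert rather than prove: the invariant ``periodic with period dividing the fixed constant $P$'' is \emph{not} preserved by the ABox-level fixpoint. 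Take $\T=\{A\sqs\Next B,\ \exists r.B\sqs A\}$ with $r\in\RNrig$ and $\A=\{A(a,0),\,r(a,b,0),\,r(b,a,0)\}$. Every set $S_{C,D}$ is finite here ($S_{A,A}=\{0\}$, $S_{A,B}=\{1\}$, $S_{B,B}=\{0\}$, $S_{B,A}=\emptyset$), so \T is ultimately periodic and your $P$ would be $1$; yet $\{n\mid(\T,\A)\mdl A(a,n)\}=\{2k\mid k\in\N\}$, of minimal period $2$. The period of the fixpoint is manufactured by the feedback loop through the rigid ABox edges combined with the temporal shifts of \T, not by the periods of the $S_{B,C}$; with an individual lying on many such cycles of coprime shift sums, the union (and, via conjunction rules, the intersection) of the resulting progressions has minimal period an lcm that can be exponential in $|\A|$, so the profiles do not admit the ``polynomial window plus constant-period tails'' representation your algorithm stores.

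Ultimate periodicity of \T is therefore used only to tame the anonymous part; bounding the ABox-level recursion is a separate problem, and it is exactly what the cited proof delegates to \dlos: after the traces are compiled into rules \eqref{math:dlos:fill:offset}--\eqref{math:dlos:fill:period}, what remains is an ordinary \dlos program over the ABox individuals, and the \PSpace upper bound is inherited from the known data complexity of that formalism, whose proof copes with exponential periods by reasoning about the eventually periodic sequence of polynomial-size time slices rather than by materialising a short periodic description of each profile. To repair your direct argument you would need an analogous statement about least fixpoints of systems built from unions, intersections, shifts, and Minkowski sums with fixed semilinear sets --- namely that they are semilinear and admit an implicitly representable, \PSpace-checkable description --- and that is a genuine theorem (indeed, with unrestricted such systems it fails, which is the whole point of Section~\ref{sec:conjunctive-grammars}), not a consequence of the closure properties you list.
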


	\subsection{Conjunctive grammars}	
	To analyse ultimate periodicity of general \TELn-TBoxes, we employ 	conjunctive grammars over a unary alphabet. 
	
	\paragraph{Syntax} A \emph{conjunctive grammar}, as introduced by \citet{Okhotin:Conjunctive-Grammars, Okhotin:Conjunctive-and-Boolean-Grammars-Survey}, is a quadruple $G = (N, \Sigma, \Smc, R)$, where $N$ and $\Sigma$ are disjoint alphabets of \emph{nonterminals} and \emph{terminals}, respectively, $\Smc \in N$ is a distinguished \emph{start symbol}, and $R$ is a finite set of \emph{rules} of the form:
	\begin{align}\label{math:grammar:preliminaries:definition}
		\Nmc \grto \alpha_1 \grand\dots \grand\alpha_n
	\end{align}
	with $\Nmc \in N$, $n\geqslant 1$, and $\alpha_i \in (N \cup \Sigma)^*$. Each $\alpha_i$ is called a \emph{conjunct}. If a grammar has a unique conjunct in every rule, then it is a \emph{context-free grammar}, and if further this conjunct has form either $\varepsilon$ or $c\,\Nmc'$, for $c \in \Sigma$, $\Nmc' \in N$, then it is a \emph{regular grammar} \cite{Chomsky:Grammars, Hopcroft-Ullman:Automata-Theory}. When the start symbol is not specified, we write $G = (N, \Sigma, R)$.
	Several rules with the same left-hand side \Nmc can also be written as a single rule (with $\mid$ used to separate the right-hand sides):
	\begin{align*}
		\Nmc \quad \to \quad \alpha^1_{1} \grand\dots \grand\alpha^1_{n_1} \mid \dots \mid \alpha^m_{1} \grand\dots \grand\alpha^m_{n_m}
	\end{align*}
	The \emph{size} $|G|$ of $G$ is the number of symbols needed to write it down. 
	
	\paragraph{Semantics}
	Intuitively, the semantics of conjunctive grammars extends that of context-free grammars with \emph{intersection}: given rule \eqref{math:grammar:preliminaries:definition}, apply ``in parallel'' context-free rules $\Nmc \to \alpha_i$ and take the intersection of the generated languages. 
	Formally, \emph{derivations} for grammars are defined in a similar way as derivations for knowledge bases. 
	Given $G$, let $\{X(w) \mid X \in N \cup \Sigma, w \in \Sigma^*\}$ be a set of \emph{propositions}, each meaning “a word $w$ has a property $X$”. The \emph{axioms} are
		\begin{align}\label{math:grammar:semantics:axioms}
			c(c) \quad (\text{for every } c \in \Sigma)
		\end{align}
		and the derivation rules are obtained as follows. 
		For every rule of form \eqref{math:grammar:preliminaries:definition} in $R$, each $\alpha_i$ is of the form $X^i_{1} \dots X^i_{k_i}$ with $k_i\geqslant 0$ and $X^i_j\in N \cup \Sigma$. For all words $u^i_{j}\in\Sigma^*$ with $1\leqslant i\leqslant n$ and $1\leqslant j\leqslant k_i$  such that $u^1_{1}\dots u^1_{k_1}=\cdots=u^n_{1}\dots u^n_{k_n}=w$, we have the rule:
		\begin{align}\label{math:grammar:semantics:deduction-rules}
				X^1_{1}(u^1_{1}), \mydots, X^1_{k_1}(u^1_{k_1}),\mydots, X^n_{1}(u^n_{1}),\mydots, X^n_{k_n}(u^n_{k_n})\ \vdash\  \Nmc(w)
		\end{align}
	Then we write $G \vdash X(w)$ whenever $X(w)$ can be derived from the axioms using the rules. 
	The \emph{language} of $X\in N\cup\Sigma$ is $L_G(X) = \{w \in \Sigma^* \mid G \vdash X(w)\}$, and the language of the grammar $G$ is $L(G) = L_G(\Smc)$.
	We refer to the survey by \citet{Okhotin:Conjunctive-and-Boolean-Grammars-Survey} for discussion of alternative equivalent definitions of the semantics. 
	
	\begin{example}[\citet{Okhotin:Conjunctive-and-Boolean-Grammars-Survey}]\label{ex:preliminaries:conjuntive grammar}
		The language $\{a^nb^nc^n \mid n \in \N\}$ is generated by $G=(\{\Smc,\Amc,\Bmc,\Cmc,\Dmc\},\{a,b,c\},\Smc,R)$ where $R$ contains:
		\begin{align*}
			&\Smc \grto \Amc\Bmc\, \&\, \Dmc\Cmc\\
			&\Amc \grto a\Amc\, \mid \varepsilon
			&\Bmc \grto b\Bmc c \mid \varepsilon\\
			&\Cmc \grto c\,\Cmc \mid \varepsilon
			&\Dmc \grto a\Dmc b \mid \varepsilon
		\end{align*}
	\end{example}
		
	We call a language $L \sbs \Sigma^*$ \emph{conjunctive}, (\emph{context-free}, \emph{regular}) if $L = L(G)$ for a conjunctive (respectively, context-free or regular) grammar $G$. A language (or a grammar) is called \emph{unary} when the underlying alphabet contains just one symbol, i.e. $\Sigma = \{c\}$. The membership problem for conjunctive grammars is \PTime-complete.
	\begin{theorem}[Okhotin \citep{Okhotin:Conjunctive-Grammar-Parsing:Uniform, Okhotin:Conjunctive-Grammar-Parsing:Fixed}]\label{thm:membership-testing}
		Checking whether $w \in L(G)$, for a given conjunctive grammar $G$ and $w \in \Sigma^*$, is \PTime-complete.
	\end{theorem}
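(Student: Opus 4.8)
Since the statement is a \PTime-completeness claim, the plan is to prove the two matching bounds separately: a polynomial-time algorithm for the (uniform) membership problem, where both $G$ and $w$ are part of the input, and a \PTime-hardness reduction. The membership upper bound together with hardness already for the uniform problem yields the weaker form of the theorem, while the stronger, more surprising ingredient is that hardness holds even for a single \emph{fixed} conjunctive grammar.

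For the upper bound, the plan is to generalise the Cocke--Younger--Kasami dynamic programming algorithm from context-free to conjunctive grammars. First I would put $G$ into a \emph{binary normal form}, in which every rule is $\Nmc \to a$ with $a \in \Sigma$, or $\Nmc \to \Bmc_1\Cmc_1 \grand \dots \grand \Bmc_m\Cmc_m$ with all $\Bmc_k,\Cmc_k \in N$, plus possibly $\Smc \to \varepsilon$; this is analogous to Chomsky normal form and increases $|G|$ only polynomially. Writing $w = a_1 \dots a_n$, I would then fill a table with entries $T[i,j] \sbs N$ recording, for each factor $a_i \dots a_j$, the set of nonterminals that derive it, proceeding by increasing factor length. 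A nonterminal $\Nmc$ carrying a rule $\Nmc \to \Bmc_1\Cmc_1 \grand \dots \grand \Bmc_m\Cmc_m$ is placed in $T[i,j]$ iff \emph{every} conjunct $\Bmc_k\Cmc_k$ admits some split $i \leqslant \ell_k < j$ with $\Bmc_k \in T[i,\ell_k]$ and $\Cmc_k \in T[\ell_k+1,j]$. The key point is that, since the word $a_i \dots a_j$ is fixed, the intersection semantics of \eqref{math:grammar:semantics:deduction-rules} reduces to requiring each conjunct to generate that word independently, with split points $\ell_k$ that may differ across conjuncts. There are $\bigO{n^2}$ entries, each computable in time polynomial in $n$ and $|G|$, so membership is decided in polynomial time; correctness follows by induction on factor length against the derivation relation $\vdash$.

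For the hardness, the plan is to reduce the Monotone Circuit Value Problem, which is \PTime-complete, to conjunctive-grammar membership. The guiding observation is that the two constructs available in a conjunctive grammar match the two gate types of a monotone circuit exactly: several alternative right-hand sides for one nonterminal realise disjunction, while the conjunction operator $\grand$ realises conjunction. I would encode a circuit together with its input assignment as a single word over a fixed alphabet and arrange the derivation relation so that a gate evaluates to true precisely when a designated nonterminal derives the subword describing that gate; the nonterminal associated with the output gate then derives the whole word iff the circuit outputs true.

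The main obstacle is obtaining hardness for a \emph{fixed} grammar rather than one that is allowed to depend on the circuit. This forces a single conjunctive grammar that parses an \emph{arbitrary} monotone circuit supplied in the input word, so gate indices, gate types, and the wiring must all be encoded in the string and then navigated purely by the grammar's nonterminals. Matching a gate to its two argument subwords, which sit at arbitrary positions of the encoding, amounts to implementing pointer-following and equality tests between distant blocks, and it is exactly here that conjunction layered on top of context-free derivations is indispensable: the intersection semantics is what lets a derivation simultaneously assert ``this block names gate $g$'' and ``the value read elsewhere for $g$ is consistent.'' Designing the encoding so that these intersections enforce the correct wiring, while keeping the grammar fixed and the reduction logspace-computable, is the delicate part of the construction; the polynomial-time upper bound from the generalised CYK algorithm supplies the matching containment, so the two directions together establish \PTime-completeness.
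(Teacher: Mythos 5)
This statement is not proved in the paper at all: it is imported verbatim from Okhotin's work (the two cited papers), so there is no in-paper proof to compare against. Judged against the literature it cites, your plan is essentially the standard one and is sound where it is carried out. The upper bound via a binary normal form and a CYK-style table $T[i,j]\subseteq N$, with the observation that on a \emph{fixed} factor the intersection semantics of rule~\eqref{math:grammar:semantics:deduction-rules} decouples into independent split points for each conjunct, is exactly Okhotin's recognition algorithm; the only detail worth flagging is that the polynomial bound on the normal-form transformation requires binarizing conjuncts \emph{before} eliminating $\varepsilon$-rules and unit conjuncts, otherwise $\varepsilon$-elimination on long conjuncts blows up exponentially.

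On the hardness side, note that the theorem as stated concerns the \emph{uniform} problem ($G$ is part of the input), so \PTime-hardness already follows from the classical \PTime-completeness of uniform context-free membership (Monotone Circuit Value with one nonterminal per gate, concatenation or $\grand$ for AND, alternative right-hand sides for OR, deciding $\varepsilon\in L(G)$), since conjunctive grammars subsume context-free ones; your reduction in this regime is complete modulo routine details. The fixed-grammar hardness you then discuss is the content of the second cited paper, and there your proposal correctly identifies both the target (a single grammar parsing an arbitrary encoded circuit) and the reason conjunction is essential (simultaneous assertions about distant blocks of the encoding), but it stops at ``designing the encoding \dots is the delicate part'': that encoding \emph{is} the proof, so this stronger claim is not established by your sketch. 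Since the stronger claim is not needed for the statement as given, this does not invalidate your argument for the theorem itself, but you should either drop the fixed-grammar discussion or cite it rather than present it as proved.
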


	\paragraph{Parikh images and expressiveness} 
	Given an alphabet $\Sigma = \{c_1, \dots, c_n\}$ enumerated in a fixed order, let $\#c_i(w)$ denote the number of occurrences of $c_i$ in $w \in \Sigma^*$. The \emph{Parikh image} $p(w)$ of a word $w$ is a vector $\vec u \in \Z^n$ such that $u_i = \#c_i(w)$, for all $1 \leqslant i \leqslant n$. The Parikh image $p(L)$ of a language $L$ is the set $\{p(w) \mid w \in L\}$. It is easy to see that when $L$ is regular, $p(L)$ is semilinear. A deeper result is the following. 
	\begin{theorem}[\citet{Parikh:Theorem}]\label{thm:parikh}
		If $L$ is context-free, then $p(L)$ is semilinear. Moreover,\ for any semilinear $S\sbs \N^n$ there exists a regular language $L'$ such that $S = p(L')$.
	\end{theorem}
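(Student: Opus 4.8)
The statement bundles two independent directions, and I would prove each separately. For the forward direction (context-free implies semilinear), the plan is to analyse derivation trees of a grammar $G=(N,\Sigma,\Smc,R)$, first normalized so that every nonterminal is useful (reachable from $\Smc$ and productive). Call a derivation tree \emph{minimal} if no nonterminal repeats along any root-to-leaf path; since such a path visits at most $|N|$ nonterminals and the branching is bounded by the grammar, minimal trees have bounded size, so there are only finitely many of them. For each nonterminal $A$, call a \emph{loop} a subderivation $A \Rightarrow^{+} uAv$ in which, along the spine connecting the outer $A$ to the distinguished inner $A$, no nonterminal repeats except the two boundary copies of $A$; these are likewise bounded in size, so finitely many loop-types exist, each contributing a period vector $p(uv) \in \N^{|\Sigma|}$.

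The key lemma I would establish is a \emph{pumping decomposition}: repeatedly excising a minimal loop from a derivation tree reduces it to a minimal tree, and each excision subtracts the corresponding period from the Parikh image of the yield; conversely, a loop for $A$ may be inserted at any occurrence of $A$. Thus the Parikh image of any tree equals that of a minimal skeleton plus a nonnegative combination of loop-periods. To turn this into a \emph{finite} union of linear sets I would index by the set $U \sbs N$ of nonterminals occurring in the final tree: for fixed $U$, the admissible offsets are the finitely many minimal skeletons, and the admissible periods are the finitely many loop-types all of whose symbols lie in $U$, which within the $U$-class become freely and independently attachable. Summing over $U$ and over skeletons yields a semilinear set equal to $p(L(G))$. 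The main obstacle is exactly this bookkeeping: inserting loops creates new nonterminal occurrences, so the family of attachable loops is not fixed in advance, and the delicate point is to verify that restricting to a fixed occurring-set $U$ makes the set of usable periods constant, so that each class is genuinely linear.

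For the converse (every semilinear $S \sbs \N^{n}$ is $p(L')$ for some regular $L'$), the argument is routine and I would dispatch it quickly. Since $p(L_1 \cup L_2) = p(L_1) \cup p(L_2)$ and regular languages are closed under union, it suffices to realize a single linear set $\{\vec b + k_1\vec p_1 + \dots + k_l\vec p_l \mid k_1,\dots,k_l \in \N\}$. Fixing $\Sigma = \{c_1,\dots,c_n\}$, I would associate to each vector $\vec v \in \N^{n}$ the word $w_{\vec v} = c_1^{v_1}\cdots c_n^{v_n}$, so that $p(w_{\vec v}) = \vec v$, and then take the regular language $L' = w_{\vec b}\, w_{\vec p_1}^{*}\cdots w_{\vec p_l}^{*}$. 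Because concatenation adds Parikh images, the Kleene stars range over all nonnegative multiplicities, and $p$ is insensitive to order, $p(L')$ is precisely the given linear set. The hypothesis $S \sbs \N^{n}$ is exactly what guarantees these encoding words exist, and this direction presents no real difficulty.
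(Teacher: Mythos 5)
The paper does not prove this statement at all: it is Parikh's classical theorem, imported verbatim with a citation, so there is no in-paper argument to compare against. Judged on its own terms, your converse direction is complete and correct --- encoding $\vec v$ as $c_1^{v_1}\cdots c_n^{v_n}$ and taking $L' = w_{\vec b}\,w_{\vec p_1}^{*}\cdots w_{\vec p_l}^{*}$, then closing under union, is exactly the standard argument, and the hypothesis $S \sbs \N^{n}$ is indeed what makes the encoding possible. Your forward direction follows the standard derivation-tree route (minimal skeletons as offsets, minimal pumps as periods, classes indexed by the set $U$ of occurring nonterminals), which is the right strategy and is known to succeed.

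The one genuine gap is the ``completeness'' half of your pumping decomposition, which you flag but do not resolve. The naive reduction --- repeatedly excise a minimal loop until a minimal skeleton remains --- does not obviously stay inside the class indexed by $U$: removing a loop $A \Rightarrow^{+} uAv$ can delete the last occurrence of some nonterminal of $U$ (one living inside the excised subtree), so the skeleton you land on may use a strictly smaller nonterminal set, and the periods you subtracted along the way are then not accounted for by that smaller class. The standard repair is a more careful induction: one shows that any sufficiently large tree with nonterminal set exactly $U$ contains a pump whose removal still leaves every element of $U$ present (by counting occurrences along paths), or equivalently one proves by induction on tree size that $p(w)$ lies in the linear set for $U$ whenever $w$ has a derivation tree with nonterminal set exactly $U$. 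The ``soundness'' half is fine as you state it, precisely because requiring the skeleton to use exactly $U$ guarantees that every $U$-loop has an attachment point. So the proposal is a correct outline with the pivotal combinatorial lemma left unproved; were this theorem actually proved in the paper rather than cited, that lemma is the part that would need to be written out.
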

	
	A unary language $L$ can be seen as a set of numbers given in unary that coincides with its Parikh image: $c^n \in L$ if and only if $n \in p(L)$.
	Theorem~\ref{thm:parikh} implies that a unary language is regular iff its Parikh image is semilinear, and thus unary context-free languages are regular. However, unary conjunctive languages may not be regular. 
	
	\begin{example}[\citet{Jez:Nonregular-Unary-Conjunctive-Language}]\label{ex:preliminaries:jez} 
For the following grammar $G$, the language $L_G(\Nmc_1) = \{c^{4^n} \mid n \in \N\}$ is not regular:
		\begin{align*}
			&\Nmc_1 \grto \Nmc_1\Nmc_3 \grand \Nmc_2\Nmc_2 \mid c 
			\\
			&\Nmc_2 \grto \Nmc_1\Nmc_1 \grand \Nmc_2\Nmc_{6} \mid cc\\
			&\Nmc_3 \grto \Nmc_1\Nmc_2 \grand \Nmc_{6}\Nmc_{6} \mid ccc\\
			&\Nmc_{6} \grto \Nmc_1\Nmc_2 \grand \Nmc_3\Nmc_3
		\end{align*}
		To understand the rules above, associate every word $c^n$ with the number $n$, and each $\Nmc_i$ with the set $\{i \cdot 4^n \mid n \in \N\}$. Since $c^nc^k = c^{n + k}$, concatenation of words corresponds to the summation of the respective numbers. The expression $\Nmc_1\Nmc_3 \,\&\, \Nmc_2\Nmc_2$ encodes the equation $4^m + 3\cdot 4^k = 2\cdot 4^l + 2\cdot 4^s$, which holds if and only if $m = k = l = s$, when both sides become equal to $4^{k + 1}$. This newly obtained number is assigned, by the first rule, to the set of $\Nmc_1$.
	\end{example}

	Building on the idea behind Example~\ref{ex:preliminaries:jez}, \citet{Jez-Okhotin:Unary-Conjunctive-Grammars-Undecidability} devised grammars encoding Turing machine computations, leading to:
	
	\begin{theorem}[\citet{Jez-Okhotin:Unary-Conjunctive-Grammars-Undecidability}]\label{thm:jez-okhotin:undecidability}
		Given a unary conjunctive grammar $G$, it is undecidable (and co-r.e.-complete) whether $L(G)$ is (i)~empty, (ii)~finite, or (iii)~regular.
	\end{theorem}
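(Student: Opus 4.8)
The plan is to reduce the complement of the halting problem, which is co-r.e.-complete, to each of the three decision problems; the co-r.e.\ hardness so obtained is the substance of the theorem, while a matching co-r.e.\ upper bound yields completeness. The upper bound for emptiness is immediate: since membership $c^n \in L(G)$ is decidable (Theorem~\ref{thm:membership-testing}) and the alphabet is unary, non-emptiness is semi-decidable by enumerating $n$ and testing $c^n \in L(G)$, so emptiness is co-r.e. The analogous bounds for finiteness and regularity require a finer analysis of the growth of unary conjunctive languages, and I would treat them separately; they are not the crux.

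For the hardness direction, the key device is a \emph{representation lemma}: fixing a base $k \geq 2$, for every language $L$ over the digit alphabet $\{0, 1, \dots, k-1\}$ recognized by a trellis automaton (equivalently, generated by a linear conjunctive grammar, the conjunctive analogue of a linear context-free grammar), the unary language $\{c^n \mid \text{the base-}k \text{ notation of } n \text{ belongs to } L\}$ is generated by an effectively constructible unary conjunctive grammar. This generalizes the mechanism behind Example~\ref{ex:preliminaries:jez}: since $c^m c^{m'} = c^{m+m'}$, concatenation of unary words realizes addition of the represented numbers, so a binary rule $\Nmc \to \Amc\Bmc$ expresses an additive split $n = a + b$, and the conjunction operator lets one impose several such splits on the \emph{same} number simultaneously. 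Choosing these additive constraints to mirror carry-aware positional arithmetic in base $k$ forces the nonterminals to encode individual digits together with local neighbour information, which is exactly the data a trellis automaton propagates along its diagonals. Establishing this lemma cleanly is the main obstacle: already the single set $\{c^{4^n}\}$ needs the delicately balanced system of equations of Example~\ref{ex:preliminaries:jez} (indeed $4^n$ has base-$4$ notation $1\,0^n$, so that example is precisely the lemma applied to the regular language $1\,0^*$), and simulating an arbitrary trellis computation on the digit string requires considerable bookkeeping to keep the additive-and-conjunctive equations consistent.

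Granting the representation lemma, I would combine it with the standard fact that the set of valid halting computation histories of a deterministic Turing machine $M$---strings listing successive configurations separated by markers, with consecutive configurations linked by the local transition rules---is an intersection of finitely many simple conditions, each checkable by a trellis automaton; as trellis languages are closed under intersection, this history language is a legitimate target for the lemma. For emptiness, encoding exactly the halting histories yields a unary conjunctive grammar $G_M$ with $L(G_M) = \emptyset$ precisely when $M$ has no halting computation, reducing non-halting to emptiness. For finiteness and regularity I would instead target the padded language consisting of the halting history $h$ followed by a fixed separator digit $s$ and a block $1\,0^m$: when $M$ halts this produces, for the unique halting history $h$, the numbers whose base-$k$ notation is $h\,s\,1\,0^m$ for all $m \geq 0$, i.e.\ $V\cdot k^m$ for a fixed $V \geq 1$, a geometric progression that is infinite and has unbounded gaps, hence is not ultimately periodic and therefore non-regular; when $M$ does not halt the language is empty, hence finite and regular. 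Thus one computable construction reduces non-halting simultaneously to finiteness and to regularity, establishing co-r.e.\ hardness for all three properties.
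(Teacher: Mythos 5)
The paper does not actually prove this statement: it is imported verbatim from \citet{Jez-Okhotin:Unary-Conjunctive-Grammars-Undecidability}, with only the one-line remark that those authors ``devised grammars encoding Turing machine computations'' building on Example~\ref{ex:preliminaries:jez}. So there is no in-paper proof to compare against, and your reconstruction should be judged against the cited work. On that measure your architecture is the right one and matches the published argument: a base-$k$ representation lemma turning every trellis-automaton (linear conjunctive) language of positional notations into a unary conjunctive language, an encoding of valid halting computation histories as such a language, and a padded geometric progression $V\cdot k^m$ (infinite with unbounded gaps, hence not ultimately periodic, hence non-regular for a unary language) that defeats finiteness and regularity simultaneously while collapsing to $\emptyset$ when the machine does not halt.

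Two substantive gaps remain if this is to stand as a proof rather than a roadmap. First, the representation lemma is the entire technical content of the cited paper --- many pages of delicately balanced additive-and-conjunctive equations generalizing Example~\ref{ex:preliminaries:jez} --- and you assume it; relatedly, you apply it to the concatenation of the trellis language of histories with a regular tail $s\,1\,0^*$, which requires either a closure property of linear conjunctive languages under that operation or folding the tail into the trellis construction directly, and this is not addressed. Second, the co-r.e.\ \emph{upper} bounds for finiteness and regularity, which you explicitly defer, are not routine: unlike emptiness, whose complement is semi-decided by enumerating $n$ and invoking Theorem~\ref{thm:membership-testing}, it is not evident that ``$L(G)$ is infinite'' or ``$L(G)$ is non-regular'' is r.e., so the completeness claim (as opposed to bare undecidability, which is all the paper actually uses in Theorems~\ref{thm:consequences:not-ultimately-periodic} and~\ref{thm:consequences:undecidability}) is not yet established by your argument.
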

	

\section{Future~\TELn~and~unary~conjunctive~grammars}\label{sec:conjunctive-grammars} 

In this section, we prove two theorems that establish our key result: 
$\{\{n \in \Nbb \mid \T \mdl A \sqs \Next^n B\}\mid \Tmc\ \TELnf\text{-TBox }, A,B\in\NC\}$ 
is the set of Parikh images of unary conjunctive languages. This will allow us to apply Theorems~\ref{thm:parikh} and \ref{thm:jez-okhotin:undecidability} to analyse ultimate periodicity of \TELn-TBoxes.

\begin{restatable}[TBoxes to Grammars]{theorem}{TBoxesToGrammars}\label{thm:conjunctive-grammars:tbox-to-grammar}
	For every \TELnf-TBox \T, one can construct in polynomial time a unary conjunctive grammar $G_\T = (N, \{c\}, R)$ such that for any $A, B \in \CN(\T)$, there is $\Nmc_{AB} \in N$ such that $c^n \in L_{G_\T}(\Nmc_{AB})$ iff $\T \mdl A \sqs \Next^n B$.
\end{restatable}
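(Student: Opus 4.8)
The plan is to read the grammar off the derivation system, using Proposition~\ref{prop:derivationTEL}, which turns $\T \mdl A \sqs \Next^n B$ into the existence of a derivation of $B(a,n)$ from the single seed fact $A(a,0)$. For every ordered pair $(A,B)\in\CN(\T)^2$ I introduce a nonterminal $\Nmc_{AB}$ whose intended meaning is $L_{G_\T}(\Nmc_{AB})=\{c^n \mid \T\mdl A\sqs\Next^n B\}$; since the alphabet is unary, a word just records a temporal shift as its length, and concatenation corresponds to adding shifts. It is exactly here that the \emph{future} restriction is used: with $n\geqslant 0$ every shift is non-negative, so every shift can be encoded as a (non-negative-length) word, which would be impossible for the two-way operator $\Next$.

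The rules mirror the four inclusion shapes of \eqref{math:ci:preliminaries:normal-form}. Besides reflexivity $\Nmc_{AA}\grto\varepsilon$, I would add, for every $A\in\CN(\T)$,
\begin{gather*}
	\Nmc_{AB} \grto \Nmc_{AA'}\, c^{k} \quad\text{if } A'\sqs\Next^{k}B\in\T, \\
	\Nmc_{AB} \grto \Nmc_{AA_1} \grand \Nmc_{AA_2} \quad\text{if } A_1\sqcap A_2\sqs B\in\T, \\
	\Nmc_{AB} \grto \Nmc_{AA_1}\, \Nmc_{B'A''} \quad\text{if } A_1\sqs\exists r.B',\ \exists r.A''\sqs B\in\T,\ r\in\RNrig .
\end{gather*}
The conjunction rule is where the intersection operation of conjunctive grammars is essential: it forces the two sub-shifts to coincide, reflecting that rule \eqref{math:inference-rule:conjunction} fires at a single time point. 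For a \emph{local} role $r\in\RNloc$ the existential variant must instead force the return to happen at the creation time (shift $0$), so I replace $\Nmc_{B'A''}$ by an auxiliary symbol $\Nmc^{\varepsilon}_{B'A''}$ with rules $\Nmc^{\varepsilon}_{B'A''}\grto\Nmc_{B'A''}\grand\Nmc_{\varepsilon}$ and $\Nmc_{\varepsilon}\grto\varepsilon$; intersecting with $\{\varepsilon\}$ keeps only the zero-shift returns. With integers written in unary in $|\T|$, each $c^{k}$ has size at most $|\T|$, there are $\bigO{|\CN(\T)|^2}$ nonterminals and polynomially many rules, so $G_\T$ is built in polynomial time.

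Correctness splits into two inductions. Soundness (grammar $\Rightarrow$ TBox) is an induction on grammar derivations, replacing each rule by the corresponding block of steps \eqref{math:inference-rule:rigid}--\eqref{math:inference-rule:existential} and invoking Proposition~\ref{prop:derivationTEL}. Completeness (TBox $\Rightarrow$ grammar) is a strong induction on the length of a derivation of $B(a,n)$ from $A(a,0)$, performed by examining the \emph{last} rule that produces $B(a,n)$: a shift (feeds the first rule), a conjunction (the second, where the intersection gives the equal shifts), or a return through a null $b$ created by \eqref{math:inference-rule:existential} (the third/local variant).

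The main obstacle is the return case in the completeness direction, and it rests on a structural lemma that I would prove separately: the named nulls of a derivation form a forest rooted at $a$, with every role edge directed from a parent to its freshly created child. Consequently, to derive a fact \emph{at} a null $b$ one can only use $b$'s own earlier facts and edges outgoing from $b$, i.e.\ $b$'s subtree; the facts holding at $b$ are therefore independent of $a$ and are exactly those of a standalone single-seed derivation from $B'(b,m)$. Shifting time by $-m$, this yields $\T\mdl B'\sqs\Next^{s}A''$ with $s\geqslant 0$ on a strictly shorter derivation, so the induction hypothesis gives $c^{s}\in L_{G_\T}(\Nmc_{B'A''})$. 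The same lemma pins down the role distinction: since $r(a,b,\cdot)$ is created only at the existential's time $m$, a local $r$ forces $s=0$ (handled by $\Nmc^{\varepsilon}_{B'A''}$), whereas a rigid $r$, available at all times by \eqref{math:inference-rule:rigid}, permits any non-negative return shift $s$ realizable at $b$. Making this forest/independence argument precise, together with the reduction of a general derivation to the single-seed form needed to apply the induction hypothesis, is the technically delicate part.
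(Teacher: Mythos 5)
Your construction is correct, and the overall strategy---nonterminals $\Nmc_{AB}$ whose unary languages record temporal shifts, concatenation as addition of shifts, conjunction to synchronize the two premises of $A_1\sqcap A_2\sqs B$, and a two-way induction on derivation length with a case split on the last applied rule---is exactly the paper's. There are two genuine differences. First, the rule shapes: the paper keeps a generic composition rule $\Nmc_{AB}\to\Nmc_{AC}\,\Nmc_{CB}$ for all $A,B,C$ and encodes each TBox axiom in isolation ($\Nmc_{AB}\to c^n$ for $A\sqs\Next^n B$, $\Nmc_{AB}\to\Nmc_{CD}$ for an existential/return pair), whereas you inline the composition into each rule; the two grammars generate the same languages and the inductions are interchangeable. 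Second, and more substantially, the treatment of local roles: the paper eliminates them \emph{before} building the grammar, rewriting $\T$ into a TBox $\Trig$ over rigid roles only by means of fresh guard concepts $C_r,C'_r$ that force the return to happen at the creation time, and proving a separate preservation lemma for this rewriting; you instead keep $\T$ as is and enforce the zero-shift constraint \emph{inside} the grammar by intersecting with $\{\varepsilon\}$ via the auxiliary nonterminal $\Nmc^{\varepsilon}_{B'A''}$. Your route is arguably more direct, since it reuses the conjunctive machinery you already need, at the cost of carrying the local/rigid case distinction through the main induction rather than discharging it once up front. The forest/independence lemma you single out as the delicate step is precisely what the paper also leans on (there it appears as the claim that one can ``extract'' from the global derivation a single-seed derivation rooted at the null, supported by a small lemma that a local role fact, and the guard concept $C_r$, occur at a unique timestamp); your formulation of it is accurate, and together with the timestamp-shifting argument it closes the return case, so the proposal has no gap beyond the detail you already flagged.
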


Given a \TELnf-TBox \T, we sketch the construction of $G_\T$. The first step is to ensure that every role name in \T can be treated as rigid. For each $C \in \CN(\T),\ r \in \RNloc(\T)$, introduce a pair of fresh concept names $C_r, C'_r$, and let \Trig be obtained from \T as follows:
	\begin{enumerate}
		\item for each $r \in \RNloc(\T)$, substitute every $A \sqs \exists r . B \in \T$ with 
		\begin{align}\label{math:ci:rigidisation:existential}
			&A \sqs \exists r . B_r
			&&B_r \sqs B
		\end{align}
		
		\item for each $C_r$, substitute every $\exists r . A \sqs B \in \T$ with 
		\begin{align}\label{math:ci:rigidisation:return}
			&A \sqcap C_r \sqs A'_r
			&&\exists r . A'_r \sqs B
		\end{align}
		
		\item substitute each $r \in \RNloc(\T)$ with a fresh $r' \in \RNrig$.
	\end{enumerate}
Intuitively, in a derivation using \Trig, a fact $B_r(b, n)$ created from some $A(a,n)$ and $A \sqs \exists r . B_r$ guards the ``locality'' of $r(a, b, n)$ at time~$n$. Any application of a derivation rule of form \eqref{math:inference-rule:return} using a fact of the form $r(a, b, k)$, and hence any effect of $b$ on $a$, is only possible using some $\exists r.A'_r\sqsubseteq C$ and $A'_r(b,k)$, and thus is limited to $k=n$, since $A'_r(b,k)$ can only be derived using $A\sqcap B_r\sqsubseteq A'_r$ and $B_r(b, n)$. 
The translation from \T to \Trig is polynomial, and it is not hard to prove the following lemma.
\begin{restatable}{lemma}{conjunctivegrammarslocalroleremoval}\label{lm:conjunctive-grammars:local-role-removal}
	Let $\T$ be a \TELn-TBox. For any $A, B \in \CN(\T)$ and $n\in\Z$, $\T \mdl A \sqs \Next^n B$ if and only if $\Trig \mdl A \sqs \Next^n B$.
\end{restatable}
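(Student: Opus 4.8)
The plan is to argue entirely at the level of derivations. By Proposition~\ref{prop:derivationTEL}, applied to both \T and \Trig (each a \TELn-TBox, and $A,B\in\CN(\T)\subseteq\CN(\Trig)$), it suffices to fix $a\in\IN$ and the ABox $\Amc=\{A(a,k)\}$ and show $(\T,\Amc)\vdash B(a,k+n)$ iff $(\Trig,\Amc)\vdash B(a,k+n)$. First I would record the shape of derivations from a single fact: the only rule creating role atoms is \eqref{math:inference-rule:existential}, so every element other than $a$ is a fresh null introduced by exactly one application of \eqref{math:inference-rule:existential}, hence has a unique parent, a unique creation time, and a unique creation concept; in particular $a$ itself never occurs as the filler (second argument) of a role atom. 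I would also note that the inclusions $A\sqs\Next^k B$ and $A\sqcap A'\sqs B$ are left untouched by the rigidisation, so only the existential and ``return'' steps require genuine translation.

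The technical core is a \emph{locality-pinning} claim about \Trig: in any \Trig-derivation, a marker atom $C_r(b,m)$ can hold only when $b$ is a null created as an $r$-filler at time $m$ with creation tag $C$. Indeed, the fresh names $C_r$ are produced \emph{only} by the inclusions \eqref{math:ci:rigidisation:existential} of the form $X\sqs\exists r'.C_r$, never appear on the right-hand side of any shift or conjunction inclusion, and are therefore never moved in time; since a null is created exactly once, $C_r(b,\cdot)$ is true at a single instant. Consequently a guarded return $\exists r'.A'_r\sqs B$ from \eqref{math:ci:rigidisation:return} can fire on an edge $r'(\cdot,b,\cdot)$ only at the creation time of $b$: firing at time $m$ needs $A'_r(b,m)$, whose sole producing inclusion $A\sqcap C_r\sqs A'_r$ requires both $A(b,m)$ and some marker $C_r(b,m)$, forcing $m$ to be that creation time. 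This mirrors the behaviour of the \emph{local} role in \T, where $r(a,b,k)$ is asserted by \eqref{math:inference-rule:existential} at the single time $k$, so every return \eqref{math:inference-rule:return} using it also occurs at time $k$.

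For the forward direction I would translate a \T-derivation step by step, replacing each existential application of $X\sqs\exists r.C$ by an application of $X\sqs\exists r'.C_r$ followed by $C_r\sqs C$, and each return $\exists r.A\sqs B$—which by locality occurs at the edge-creation time $m$—by the conjunction step $A\sqcap C_r\sqs A'_r$ (legitimate because the creation marker $C_r(b,m)$ is present at $m$ and $A(b,m)$ holds by the induction hypothesis) followed by $\exists r'.A'_r\sqs B$. Rigidity of $r'$ adds temporal copies of edges through \eqref{math:inference-rule:rigid}, but by the pinning claim these copies are never usable in a return, so they are harmless. For the backward direction I would map any \Trig-derivation back by sending $r'\mapsto r$, discarding the auxiliary atoms $C_r(\cdot)$ and $A'_r(\cdot)$, and preserving the one-to-one correspondence of fresh nulls obtained by matching each inclusion \eqref{math:ci:rigidisation:existential} with the original $X\sqs\exists r.C$. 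The pinning claim guarantees that every guarded return in \Trig occurs at the creation time of the relevant null, which is exactly where the local edge $r(\cdot,b,\cdot)$ is available in \T, so each guarded return translates to a single application of \eqref{math:inference-rule:return}; an induction on derivation length then closes both directions.

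I expect the locality-pinning claim to be the main obstacle. The delicate point is to verify that the auxiliary markers can be neither produced by any inclusion other than the intended existentials of \eqref{math:ci:rigidisation:existential} nor shifted to a different time point, so that the rigid roles of \Trig cannot ``leak'' a return to an instant at which \T would not fire one; once this is established, the step-by-step simulations are routine.
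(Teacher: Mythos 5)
Your proposal is correct and follows essentially the same route as the paper: your ``locality-pinning'' claim is precisely the paper's auxiliary lemma (that a local role atom $r(b_1,b_2,\cdot)$, and likewise a marker $C_r(b,\cdot)$, can only hold at the unique creation time of the fresh null), and your two step-by-step simulations---replacing each local existential by \eqref{math:ci:rigidisation:existential} and each local return by the conjunction-then-guarded-return of \eqref{math:ci:rigidisation:return}, and conversely discarding the auxiliary atoms and rigid copies---match the paper's forward and backward translations of derivations. No gaps; the point you flag as delicate (that the markers cannot be produced or time-shifted by any other inclusion, so rigid edges cannot leak a return to a wrong instant) is exactly the point the paper's proof also isolates and verifies.
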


\begin{definition}\label{def:conjuntive-grammars:GT}
	Given a \TELnf-TBox \T, $G_\T = (N_\T, \{c\}, R_\T)$, where $N_\T = \{\Nmc_{AB} \mid A, B \in \CN(\Trig)\}$ and $R_\T$ contains exactly:
	\begin{align}
		&\Nmc_{AB} \grto \varepsilon, &&\text{for } A \sqs B \in \Trig \text{ or } A = B
		\label{math:grammar:from-tbox:epsilon}\\
		&\Nmc_{AB} \grto c^n, &&\text{for } A \sqs \Next^n B \in \Trig,\ n > 0
		\label{math:grammar:from-tbox:shift}\\
		&\Nmc_{AB} \grto \Nmc_{AC} \grand \Nmc_{AD}, &&\text{for } \begin{array}{l}
			A \in \NC(\Trig),\\
			C \sqcap D \sqs B \in \Trig
		\end{array}
		\label{math:grammar:from-tbox:conjunction}\\
		&\Nmc_{AB} \grto \Nmc_{CD}, &&\text{for } \left\{\begin{array}{l}
			A \sqs \exists r . C\\
			\exists r . D \sqs B 
		\end{array}\right\} \sbs \Trig
		\label{math:grammar:from-tbox:down-and-up}\\
		&\Nmc_{AB} \grto \Nmc_{AC}\,\Nmc_{CB}, &&\text{for } A,B,C \in \CN(\Trig)
		\label{math:grammar:from-tbox:middle}
	\end{align}
\end{definition}

	Intuitively, for every pair of concept names $A,B\in\NC(\Trig)$, $G_\T$ encodes every possible way of deriving $B(a, n)$ from $\{A(a, 0)\}\cup\Trig$: either directly~(\ref{math:grammar:from-tbox:epsilon}, \ref{math:grammar:from-tbox:shift}), or by obtaining $C(a, n)$ and $D(a, n)$ that together give $B(a, n)$~\eqref{math:grammar:from-tbox:conjunction}, or by going through a null~\eqref{math:grammar:from-tbox:down-and-up}, or through an intermediate point $C(a, k)$, $0 \leqslant k \leqslant n$ \eqref{math:grammar:from-tbox:middle}.
One can show that a derivation witnessing $(\T, \{A(a, 0)\}) \vdash B(a, n)$ using rules of the form \eqref{math:inference-rule:rigid}--\eqref{math:inference-rule:existential} corresponds to a derivation for $G_\T \vdash \Nmc_{AB}(c^n)$ from axiom \eqref{math:grammar:semantics:axioms} using rules \eqref{math:grammar:semantics:deduction-rules}, and vice versa. 
Theorem~\ref{thm:conjunctive-grammars:tbox-to-grammar} follows by 
Lemma~\ref{lm:conjunctive-grammars:local-role-removal}, 
Proposition~\ref{prop:derivationTEL} and definition of $L_{G_\T}(\Nmc_{AB})$.

\begin{example}[Ex.~\ref{ex:running-example:inference} cont'd]\label{ex:conjunctive-grammars:tbox-to-grammar}
	Recall that $\T \mdl \mn{Prof} \sqs \Next^3 \mn{Happy}$. Figure \ref{fig:conjunctive-grammars:tbox-to-grammar} illustrates the 
	derivations witnessing $G_\T \vdash \Nmc_{\mn{Prof} \mn{Prof}}(c^3)$ and 				
	$G_\T \vdash \Nmc_{\mn{Prof} \mn{Proud}}(c^3)$. 
	We obtain $G_\T \vdash\Nmc_{\mn{Prof} \mn{Happy}}(c^3)$ using the rule $\Nmc_{\mn{Prof} \mn{Happy}}\rightarrow \Nmc_{\mn{Prof} \mn{Prof}}\, \&\, \Nmc_{\mn{Prof} \mn{Proud}}$. 
	One can further check that $ L_{G_\T}(\Nmc_{\mn{Prof} \mn{Happy}})=\{c^{3+n}\mid n\in\N\}$, since $\Nmc_{\mn{Prof} \mn{Happy}}\rightarrow \Nmc_{\mn{Prof} \mn{Prof}}\Nmc_{\mn{Prof} \mn{Happy}}$ is in $R_\T$ and $G_\T \vdash \Nmc_{\mn{Prof} \mn{Prof}}(c^n)$ for every $n$.
\end{example}

\begin{figure}[t]
	\centering
	

\begin{tikzpicture}[node distance=1.35cm]
		
	\node[small node transparent] (t0) {};
	
	\node[right of=t0] (c01) {{\Large $c$}};
	
	\node[small node transparent, right of=c01] (t1) {};
	
	\node[right of=t1] (c12) {{\Large $c$}};
	
	\node[small node transparent, right of=c12] (t2) {};
	
	\node[right of=t2] (c23) {{\Large $c$}};
	
	\node[small node transparent, right of=c23] (t3) {};
	
	\draw[next edge transparent] (t0) to (c01) to (t1);
	\draw[next edge transparent] (t1) to (c12) to (t2);
	\draw[next edge transparent] (t2) to (c23) to (t3);
	
	\draw ($(t0) + (0, -0.1)$) to ($(t0) + (0, -0.2)$) to ($(t3) + (0, -0.2)$) to  ($(t3) + (0, -0.1)$);
	
	\node (N-Student-Dr) at ($(t1)!0.5!(t2) + (0, -0.35)$) {$R_3 \colon \Nmc_{\mn{Student} \mn{Dr}}(ccc)$};
	
	\draw ($(t0) + (0, -0.3)$) to ($(t0) + (0, -0.6)$) to ($(t3) + (0, -0.6)$) to  ($(t3) + (0, -0.3)$);
	
	\node (N-Prof-Proud) at ($(t1)!0.5!(t2) + (0, -0.75)$) {$R_4 \colon \Nmc_{\mn{Prof} \mn{Proud}}(ccc)$};
	
	\draw ($(t0) + (0, 0.1)$) to ($(t0) + (0, 0.2)$) to ($(t1) + (-0.05, 0.2)$) to  ($(t1) + (-0.05, 0.1)$);
	\draw ($(t1) + (0.05, 0.1)$) to ($(t1) + (0.05, 0.2)$) to ($(t2) + (-0.05, 0.2)$) to  ($(t2) + (-0.05, 0.1)$);
	
	\node (N-Prof-Prof-01) at ($(t0)!0.5!(t1) + (0, +0.35)$) {$R_1 \colon \Nmc_{\mn{Prof} \mn{Prof}}(c)$};
	\node (N-Prof-Prof-12) at ($(t1)!0.5!(t2) + (0, +0.35)$) {$R_1 \colon \Nmc_{\mn{Prof} \mn{Prof}}(c)$};
	
	\draw ($(t0) + (0, 0.3)$) to ($(t0) + (0, 0.6)$) to ($(t2) + (-0.05, 0.6)$) to  ($(t2) + (-0.05, 0.3)$);
	\draw ($(t2) + (0.05, 0.1)$) to ($(t2) + (0.05, 0.6)$) to ($(t3) + (0, 0.6)$) to  ($(t3) + (0, 0.1)$);
	
	\node (N-Prof-Prof-02) at ($(t0)!0.5!(t2) + (0, +0.75)$) {$R_2 \colon \Nmc_{\mn{Prof} \mn{Prof}}(cc)$};
	\node (N-Prof-Prof-23) at ($(t2)!0.5!(t3) + (0, +0.75)$) {$R_1 \colon \Nmc_{\mn{Prof} \mn{Prof}}(c)$};
	
	\draw ($(t0) + (0, 0.7)$) to ($(t0) + (0, 1)$) to ($(t3) + (0, 1)$) to  ($(t3) + (0, 0.7)$);
	
	\node (N-Prof-Prof03) at ($(t0)!0.5!(t3) + (0, 1.15)$) {$R_2 \colon \Nmc_{\mn{Prof} \mn{Prof}}(ccc)$};
\end{tikzpicture}
	
	\caption{
		Derivations witnessing $G_\T \vdash \Nmc_{\mn{Prof} \mn{Prof}}(c^3)$, in the upper part, and 				
		$G_\T \vdash \Nmc_{\mn{Prof} \mn{Proud}}(c^3)$, in the lower part. 
		Intuitively, each symbol $c$ stands for a step forward in time (cf. Figure \ref{fig:introduction:alice-example}).
		The grammar rule of $G_\T$ used to obtain each proposition through \eqref{math:grammar:semantics:deduction-rules} is given in the following denotation: 	
		$R_1 \colon \Nmc_{\mn{Prof} \mn{Prof}}\rightarrow c$, 
		$\quad R_2 \colon \Nmc_{\mn{Prof} \mn{Prof}}\rightarrow \Nmc_{\mn{Prof} \mn{Prof}}\Nmc_{\mn{Prof} \mn{Prof}}$,  
		$\quad R_3 \colon \Nmc_{\mn{Student} \mn{Dr}}\rightarrow c^3$, 
		$\quad R_4 \colon \Nmc_{\mn{Prof} \mn{Proud}}\rightarrow \Nmc_{\mn{Student} \mn{Dr}}$.
	}
	
	\label{fig:conjunctive-grammars:tbox-to-grammar}
\end{figure}

We now turn our attention to the converse translation.

\begin{restatable}[Grammars to TBoxes]{theorem}{GrammarsToTBoxes}\label{thm:conjunctive-grammars:grammar-to-tbox}
	For every unary conjunctive grammar $G = (N, \{c\}, R)$, one can construct in polynomial time a \TELnf-TBox $\T_G$ and $A \in \CN(\T_G)$, such that for every $\B \in N$ there is $B \in \CN(\T_G)$ such that $\T_G \mdl A \sqs \Next^n B$ iff $c^n \in L_G(\Bmc)$.
\end{restatable}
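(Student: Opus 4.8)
The plan is to build $\T_G$ around a single \emph{origin} concept $A$ and, for each nonterminal $\Nmc\in N$, a concept $B_\Nmc$, engineered so that $\T_G$ is ``origin-relative'': on \emph{any} element $e$ carrying $A$ at some time $t$, the fact $B_\Nmc$ becomes derivable exactly $n$ steps later iff $c^n\in L_G(\Nmc)$. By Proposition~\ref{prop:derivationTEL} it then suffices to guarantee $(\T_G,\{A(a,0)\})\vdash B_\Nmc(a,n)$ iff $c^n\in L_G(\Nmc)$, and origin-relativity is what lets spawned subcomputations reuse the same axioms verbatim. I would use ordinary (time-varying) concept names together with \emph{rigid roles} only, which are available in \TELnf; rigid \emph{concept} names are deliberately avoided, since they would fall outside the fragment.

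The central gadget simulates concatenation. To append a nonterminal $\Nmc'$ to a partial computation marked by a concept $P$, I introduce a fresh rigid role $s$ and the two inclusions $P\sqs\exists s.A$ and $\exists s.B_{\Nmc'}\sqs P'$. Reading this on an $A$-seeded element: whenever $P$ holds at time $t$, a fresh $A$-origin $b$ is spawned at time $t$ with $s(\cdot,b,\cdot)$ rigid; if $b$ later derives $B_{\Nmc'}$ at time $t+m$ (so $c^m\in L_G(\Nmc')$), rigidity makes the link $s$ available at $t+m$ and the pull-back fires, placing $P'$ at $t+m$. Rigidity is essential, since the link created at time $t$ must be read at the strictly later time $t+m$. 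A general conjunct $\alpha=X_1\cdots X_k$ is handled by prefix markers $P_0,\dots,P_k$ with $P_0=A$: for a terminal $X_j=c$ I add $P_{j-1}\sqs\Next P_j$, and for a nonterminal $X_j$ I use the spawn gadget above from $P_{j-1}$ to $P_j$, so that $P_k$ holds at time $n$ iff $\alpha$ derives $c^n$. Finally, for a rule $\Nmc\to\alpha_1\,\&\,\cdots\,\&\,\alpha_m$ I conjoin the conjunct-markers $P^{(i)}_{k_i}$ by a chain of binary inclusions $P^{(1)}_{k_1}\sqcap P^{(2)}_{k_2}\sqs T_2,\ \dots,\ T_{m-1}\sqcap P^{(m)}_{k_m}\sqs B_\Nmc$ (simply $P^{(1)}_{k_1}\sqs B_\Nmc$ when $m=1$), so $B_\Nmc$ holds at $n$ iff all conjuncts do; taking the union over all rules for $\Nmc$ is automatic and matches $L_G(\Nmc)$. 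All concepts and roles are fresh per rule-position, all shifts are nonnegative (hence the construction stays in \TELnf), and the size is linear in $|G|$.

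For correctness, soundness ($c^n\in L_G(\Nmc)\Rightarrow\T_G\models A\sqs\Next^nB_\Nmc$) follows by induction on a grammar derivation, using origin-relativity: each spawned $A$-element re-runs exactly the same axioms shifted by its seed time, so the recursion on $\Nmc'$ translates directly into the spawned subcomputation. The main obstacle is \textbf{completeness}. Here I would first prove a structural lemma stating that every element created by a spawn axiom is a \emph{clean, independent origin}: since the only inclusions mentioning a role $s$ are its two defining ones, no concept fact is ever pushed from a predecessor to an $s$-successor except the seeding $A$, and each $s$ is specific to one rule-position, so distinct subcomputations cannot interfere. Granting this, I would analyse an arbitrary derivation of $B_\Nmc(a,n)$ by induction on the nesting depth of spawned nulls, tracing each $B_\Nmc$ back through its (uniquely shaped) defining chain of $T_j$ and $P_j$ concepts to the underlying conjuncts and their terminal/spawn steps, and read off a grammar derivation of $c^n\in L_G(\Nmc)$; the spawn steps are discharged by the induction hypothesis applied to the relevant shifted origins. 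The delicate points are that the trace is well-founded (a finite derivation contains only finitely many nulls) and that the rigid-role rule~\eqref{math:inference-rule:rigid} can only ever reconnect an origin to its own spawned children, which is precisely what the per-position freshness of the roles enforces.
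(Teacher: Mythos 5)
Your construction is correct and rests on the same key gadget as the paper's proof: spawning a fresh $A$-origin through a fresh rigid role and pulling the result back with an $\exists s.B\sqsubseteq P'$ inclusion (exploiting rigidity to read the link at a later time), with $\sqcap$ handling grammar conjunction and origin-relativity driving the recursion. The only differences are cosmetic --- you decompose a conjunct left-to-right into prefix markers kept at the origin element, so the nulls of one conjunct are siblings, whereas the paper first normalizes the grammar and right-recurses on suffixes through nested nulls via the concepts $C_{i_j\dots i_k}$ and inclusions \eqref{math:ci:from-grammar:down}--\eqref{math:ci:from-grammar:conclude}; your soundness/completeness sketch corresponds to the paper's inductions in Lemmas~\ref{lm:conjunctive-grammars:meet} and~\ref{lm:conjunctive-grammars:main}.
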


Let $G$ be a unary conjunctive grammar with nonterminals $N = \{\B_1, \dots, \B_m\}$. 
W.l.o.g., we assume that its rules are of the forms 
\begin{align}
	&\B_i \grto \varepsilon 
		\label{math:grammar:to-tbox:epsilon}\\
	&\B_i \grto c^n, \quad n > 0
		\label{math:grammar:to-tbox:terminal}\\
	&\B_i \grto \alpha_1
		\label{math:grammar:to-tbox:one}\\
	&\B_i \grto \alpha_1 \grand \alpha_2
	\label{math:grammar:to-tbox:two}
\end{align}
where $\alpha_1, \alpha_2$ are nonempty strings of nonterminals. Indeed, every unary grammar can be converted to this form in polynomial time. 

Fix concept names $A, B_1, \dots, B_m$, and, for each $\alpha_l = \B_{i_1}\dots \B_{i_{k}}$ that occurs in the rules of $G$, introduce concept names $C_{i_j \dots i_{k}}$ and rigid role names $r_{i_j \dots i_{k}}$, for $1 \leqslant j < {k}$. Let~\Jmc denote the set of number sequences $i_j \dots i_{k}$ appearing in the subscripts of these symbols. We use the symbol $\iota$ to denote the elements of~\Jmc, and write $i\iota$ to mean the sequence obtained from $\iota$ by appending $i$ in the beginning. Moreover, let $\iota(\alpha_l)$ denote exactly the sequence $i_1 \dots i_{k}$. 

\begin{definition}\label{def:grammar-to-tbox}
	Given $G = (N, \{c\}, R)$, with $N = \{\B_1, \dots, \B_m\}$ and rules of the forms \eqref{math:grammar:to-tbox:epsilon}--\eqref{math:grammar:to-tbox:two},	$\T_G$ contains exactly the following concept inclusions.
	\begin{align}
		&A \sqs B_i, 
		&&\text{for each rule of the form \eqref{math:grammar:to-tbox:epsilon}} 
			\label{math:ci:from-grammar:now} 
			\tag{\ref{math:grammar:to-tbox:epsilon}$^*$}\\
		&A \sqs \Next^n B_i,  
		&&\text{for each rule of the form \eqref{math:grammar:to-tbox:terminal}} 
			\label{math:ci:from-grammar:shift} 
			\tag{\ref{math:grammar:to-tbox:terminal}$^*$}\\
		&C_{\iota(\alpha_1)} \sqs B_i, 
		&&\text{for each rule of the form \eqref{math:grammar:to-tbox:one}} 
			\label{math:ci:from-grammar:imply} 
			\tag{\ref{math:grammar:to-tbox:one}$^*$}\\
		&C_{\iota(\alpha_1)} \sqcap C_{\iota(\alpha_2)} \sqs B_i, 
		&&\text{for each rule of the form \eqref{math:grammar:to-tbox:two}} 
			\label{math:ci:from-grammar:meet} 
			\tag{\ref{math:grammar:to-tbox:two}$^*$}\\
		&B_{i} \sqs \exists r_{i\iota} \ld A,
		&&\text{for } i\iota \in \Jmc 
			\label{math:ci:from-grammar:down} 
			\tag{\number\numexpr\value{equation}+1\relax$^*$}\\
		&\exists r_{i \iota} \ld C_{\iota} \sqs C_{i \iota}
		&&\text{for } \iota,\ i \iota \in \Jmc 
			\label{math:ci:from-grammar:up} 
			\tag{\number\numexpr\value{equation}+1\relax$^*$}
			\refstepcounter{equation}\\
		&B_{i} \sqs C_i
		&&\text{for } i\in\{1,\dots,m\} \text{ s.~t. }i\in\Jmc
			\label{math:ci:from-grammar:conclude} 
			\tag{\number\numexpr\value{equation}+1\relax$^*$}
			\refstepcounter{equation}
\end{align}
\refstepcounter{equation}
\end{definition}\label{def:conjunctive-grammars:TG}

We show that $c^n \in L_G(\B_i)$ iff $\T \mdl A \sqs \Next^n B_i$. We first illustrate this on an example.

\begin{figure}[t]
		
		\centering


\scalebox{0.95}{
\begin{tikzpicture}[node distance=1.5cm]
	
	\node (t0) {0};
	\node[right of=t0] (t1) {1};
	\node[right of=t1] (t2) {2};
	\node[right of=t2] (t3) {3};
	\node[right of=t3] (t4) {4};
	
	\node[small node, below of=t0, node distance=1cm, label=below:$A$] (a0) {};
	\node[small node, right of=a0, label=right:{$B_1^{\,\eqref{math:ex:grammar-to-tbox:one}}$}] (a1) {};
	\node[small node, right of=a1, label=right:{$B_2^{\,\eqref{math:ex:grammar-to-tbox:two}}$}] (a2) {};
	\node[right of=a2] (a3) {};
	\node[small node, right of=a3, label=above:{$C_{13}^{\,(\ref{math:ex:grammar-to-tbox:down-and-up:first}, iii)}, C_{22}^{\,(\ref{math:ex:grammar-to-tbox:down-and-up:second}, iii)}, B_1^{\,\eqref{math:ex:grammar:-to-tbox:conjunction}}$}] (a4) {};
	
	\draw[next edge] (a0) to (a1);
	
	\node[left of=a0, node distance=0.5cm] (a) {$a$};
	\node[below of=a, node distance=1cm] (b) {};
	\node[below of=b, node distance=1cm] (c) {};
	
	\node[small node, below of=a1, node distance=1cm, label=below:{$A^{(\ref{math:ex:grammar-to-tbox:down-and-up:first}, i)}$}] (b1) {};
	\node[right of=b1] (b2) {};
	\node[right of=b2] (b3) {};
	\node[small node, right of=b3, label=below:{\hspace{-5pt}$B_{3}^{\,\eqref{math:ex:grammar-to-tbox:three}}, C_3^{\,(\ref{math:ex:grammar-to-tbox:down-and-up:first}, ii)}$}] (b4) {};
	
	\draw[next edge] (b1) to (b4);
	
	\draw[directed edge] (a1) to node[pos=0.5, left] {$r_{13}^{(\ref{math:ex:grammar-to-tbox:down-and-up:first}, i)}$} (b1);
	
	\draw[directed edge, dotted] (a4) to node[pos=0.4, left]  {$r_{13}$} (b4);
	
	\draw[next edge] (a0) to[out=30, in=150] (a2);
	
	\node[small node, below of=b2, node distance=1cm, label=below:$A^{(\ref{math:ex:grammar-to-tbox:down-and-up:second}, i)}$] (c2) {};
	\node[right of=c2] (c3) {};
	\node[small node, right of=c3, label=below:{$B_2^{\,\eqref{math:ex:grammar-to-tbox:two}}, C_2^{\,(\ref{math:ex:grammar-to-tbox:down-and-up:second}, ii)}$}] (c4) {};
	
	\draw[next edge] (c2) to (c4);
	
	\draw[directed edge] (a2) to node[pos=0.73, right] {$r_{22}^{(\ref{math:ex:grammar-to-tbox:down-and-up:second}, i)}$} (c2);
	
	\draw[directed edge, dotted] (a4) to[out=0, in=90] ($(a4)!0.5!(c4) + (1.5, 0)$) to[out=-90, in=0] (c4);
	
	\node[right of=b4] (r22) {$r_{22}$};
	
\end{tikzpicture}
}
		\vspace{-0.2cm}
		\caption{Derivation witnessing $(\T_G,\{A(a, 0)\})\vdash B_1(a, 4)$. 		The concept inclusions used to obtain facts are given in superscripts.}
		\label{fig:conjunctive-grammars:grammar-to-tbox}
	\end{figure}
	
\begin{example}
\label{ex:conjunctive-grammars:grammar-to-tbox}
	Consider the grammar $G$ defined by the following rules (subset of those presented in Example~\ref{ex:preliminaries:jez} such that $L_G(\Bmc_1)=\{c,c^4\}$) and the corresponding concept inclusions of $\T_G$: 
	\begin{align}
		&\B_1 \grto \B_1\B_3 \grand \B_2\B_2 
		&&C_{13} \sqcap C_{22} \sqs B_1
		\tag{\ensuremath{I_1}}\label{math:ex:grammar:-to-tbox:conjunction}\\
		&\B_1 \grto c 
		&&A \sqs \Next B_1
		\tag{\ensuremath{I_2}}\label{math:ex:grammar-to-tbox:one}\\
		&\B_2 \grto cc
		&&A \sqs \Next^2 B_2
		\tag{\ensuremath{I_3}}\label{math:ex:grammar-to-tbox:two}\\
		&\B_3 \grto ccc
		&&A \sqs \Next^3 B_3
		\tag{\ensuremath{I_4}}\label{math:ex:grammar-to-tbox:three}
	\end{align}
	Additionally, $\T_G$ contains:
	\begin{align}
			&B_1 \sqs \exists r_{13} . A  \ \ (i)
			&&B_3 \sqs C_3  \ \ (ii)
			&&\exists r_{13} . C_3 \sqs C_{13}  \ \ (iii)
			\tag{\ensuremath{I_5}}\label{math:ex:grammar-to-tbox:down-and-up:first}\\
			&B_2 \sqs \exists r_{22} . A  \ \ (i)
			&&B_2 \sqs C_2  \ \ (ii)
			&&\exists r_{22} . C_2 \sqs C_{22}  \ \ (iii)
			\tag{\ensuremath{I_6}}\label{math:ex:grammar-to-tbox:down-and-up:second}
	\end{align}
	Then, we derive $B_1(a, 4)$ from $\T_G\cup\{A(a, 0)\}$ as shown in Figure~\ref{fig:conjunctive-grammars:grammar-to-tbox}.
\end{example}

More generally, consider a rule of the form \eqref{math:grammar:to-tbox:one}: $\B_i \to \B_{i_1} \dots \B_{i_k}$. It states that $c^n \in L(\B_i)$ if $n = n_1 + \dots + n_k$ and $c^{n_j} \in L(\B_{i_j}),\ 1 \leqslant j \leqslant k$. The right-hand side of the rule is represented in $\T_G$ by $C_{i_1\dots i_k}$, and whenever $C_{i_1\dots i_k}(e,\ell)$ is derived, $B_i(e,\ell)$ follows by the corresponding concept inclusion \eqref{math:ci:from-grammar:imply}. The derivation is performed in steps. Starting from $A(a, 0)$, we derive $B_{i_1}(a, n_1)$, and then go, by $B_{i_1}\sqsubseteq\exists r_{i_1\dots i_k}. A$ \eqref{math:ci:from-grammar:down}, to a new null, $b$, where the process restarts, recursively, from $A(b, n_1)$ targeting $C_{i_2\dots i_k}$. This fact is stored in the index of the role name $r_{i_1\dots i_k}$, so when $C_{i_2\dots i_k}(b, n_1 + n')$ is inferred, and only at that point, it is lifted up to $C_{i_1\dots i_k}(a, n_1 + n')$ by $\exists r_{i_1\dots i_k}.C_{i_2\dots i_k}\sqsubseteq C_{i_1\dots i_k}$ \eqref{math:ci:from-grammar:up}. The inclusion  $B_i\sqsubseteq C_i$ \eqref{math:ci:from-grammar:conclude} ends the recursion. 
Lemma~\ref{lm:conjunctive-grammars:meet} formalizes this intuition and is proved by induction on $k$.

\begin{restatable}{lemma}{LmCgMeet}\label{lm:conjunctive-grammars:meet}
	For $\iota = i_1 \dots i_k \in \Jmc$, $\T_G \mdl A \sqs \Next^n C_\iota$ if and only if $n = n_1 + \dots + n_k$, such that $\T_G \mdl A \sqs \Next^{n_j} B_{i_j}$ for $1 \leqslant j \leqslant k$.
\end{restatable}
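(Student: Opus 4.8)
The plan is to prove both directions simultaneously by induction on the length $k$ of $\iota = i_1 \dots i_k$, working throughout with the derivation-based characterisation of entailment (Proposition~\ref{prop:derivationTEL}): $\T_G \mdl A \sqs \Next^m D$ holds iff $(\T_G, \{A(a,0)\}) \vdash D(a, m)$. The argument rests on two structural facts about derivations in the future fragment \TELnf. First, \emph{translation invariance}: for any shift $t$ and any fact $F$, $(\T_G,\{A(e,s)\})\vdash F(e, s+\delta)$ iff $(\T_G,\{A(e,s+t)\})\vdash F(e,s+t+\delta)$, because every rule \eqref{math:inference-rule:rigid}--\eqref{math:inference-rule:existential} acts either at a single timestamp or by a relative shift, and rigid edges hold at all timestamps. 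Second, \emph{locality of nulls}: when a fresh null $b$ is introduced by applying \eqref{math:inference-rule:existential} to an inclusion $B_{i_1}\sqs\exists r_{i_1\iota'}.A$ of the form \eqref{math:ci:from-grammar:down}, the only seed fact about $b$ is $A(b,s)$, and since facts can only propagate from $b$ \emph{up} to its creator (via return rules \eqref{math:inference-rule:return}) and never back down into $b$, everything derivable about $b$ forms a self-contained subtree rooted at $A(b,s)$.

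For the base case $k=1$, the only concept inclusion of $\T_G$ carrying $C_{i_1}$ on its right-hand side is $B_{i_1}\sqs C_{i_1}$ \eqref{math:ci:from-grammar:conclude}, since the inclusions \eqref{math:ci:from-grammar:up} produce only $C_\iota$ with $|\iota|\geq 2$. Hence $C_{i_1}(e,\ell)$ is derivable exactly when $B_{i_1}(e,\ell)$ is, which gives $\T_G\mdl A\sqs\Next^n C_{i_1}$ iff $\T_G\mdl A\sqs\Next^n B_{i_1}$, i.e.\ the claim for the singleton decomposition $n=n_1$. For the inductive step write $\iota = i_1\iota'$ with $\iota'=i_2\dots i_k\in\Jmc$, so the induction hypothesis applies to $\iota'$. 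For the ($\Leftarrow$) direction, assume $n=n_1+\dots+n_k$ with each $B_{i_j}$ derivable at $n_j$ from $A(a,0)$; by the hypothesis, $C_{\iota'}$ is derivable at $n':=n_2+\dots+n_k$. I then build the witnessing derivation explicitly: derive $B_{i_1}(a,n_1)$; apply \eqref{math:ci:from-grammar:down} to obtain a fresh null $b$ with $r_{i_1\iota'}(a,b,n_1)$ and $A(b,n_1)$; use rigidity \eqref{math:inference-rule:rigid} to get $r_{i_1\iota'}(a,b,n_1+n')$; derive $C_{\iota'}(b,n_1+n')$ from the seed $A(b,n_1)$ by translation invariance; and finally apply \eqref{math:ci:from-grammar:up} through \eqref{math:inference-rule:return} to conclude $C_\iota(a,n)$, since $n=n_1+n'$.

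The ($\Rightarrow$) direction is the main obstacle, since it requires showing that \emph{every} derivation of $C_\iota(a,n)$ must follow exactly this shape. As $|\iota|\geq 2$, the fact $C_\iota(a,n)=C_{i_1\iota'}(a,n)$ can only be produced by \eqref{math:inference-rule:return} applied to \eqref{math:ci:from-grammar:up}, so there is an element $b$ with both $r_{i_1\iota'}(a,b,n)$ and $C_{\iota'}(b,n)$ derivable. The rigid edge $r_{i_1\iota'}(a,b,\cdot)$ can only have been created by \eqref{math:inference-rule:existential} on \eqref{math:ci:from-grammar:down}, which forces $a$ to satisfy $B_{i_1}$ at some timestamp $n_1$ (hence $\T_G\mdl A\sqs\Next^{n_1}B_{i_1}$) and $b$ to be a fresh null seeded with $A(b,n_1)$. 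By locality of nulls together with translation invariance, $C_{\iota'}(b,n)$ is derivable iff $\T_G\mdl A\sqs\Next^{n-n_1}C_{\iota'}$, where $n-n_1\geq 0$ because all facts about $b$ occur at timestamps $\geq n_1$ in the future fragment. The induction hypothesis then yields $n-n_1=n_2+\dots+n_k$ with $\T_G\mdl A\sqs\Next^{n_j}B_{i_j}$ for $2\leq j\leq k$, and adjoining $n_1$ gives the required decomposition $n=n_1+\dots+n_k$.

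The two delicate points inside the ($\Rightarrow$) direction, both of which I expect to require the most care, are: justifying that the role edge witnessing \eqref{math:ci:from-grammar:up} is created with $a$ as its \emph{source} and $b$ as a fresh target (rather than the reverse), which follows from the unique shape of \eqref{math:ci:from-grammar:down} and the fact that $a$ is not fresh; and establishing that the subtree below $b$ cannot acquire any extra facts from the rest of the derivation, so that $C_{\iota'}(b,n)$ genuinely reflects the translated root problem. Both are consequences of the tree-shaped, strictly upward-propagating structure of \TELnf derivations recorded in the opening paragraph, and once they are in place the induction closes.
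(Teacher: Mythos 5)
Your proof is correct and follows essentially the same route as the paper's: induction on the length of $\iota$, using that $C_{i_1}$ (resp.\ $C_{i_1\iota'}$) can only be produced via $B_{i_1}\sqs C_{i_1}$ (resp.\ $\exists r_{i_1\iota'}.C_{\iota'}\sqs C_{i_1\iota'}$), tracing the rigid edge back to its creating existential application, and shifting the self-contained sub-derivation rooted at the fresh null. The only difference is presentational: the paper isolates the ($\Rightarrow$) direction as a separate lemma that also tracks derivation lengths for later use, while you run both directions in one induction.
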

Concept inclusions of form \eqref{math:ci:from-grammar:meet} just generalize this to the case of conjunction. 
We prove the next lemma by showing that the existence of a derivation witnessing $\T_G \vdash A \sqs \Next^n B_i$ implies the existence of a derivation witnessing $G \vdash \B_i(c^n)$, and vice versa, by induction on the derivation length. This finalizes the proof of Theorem~\ref{thm:conjunctive-grammars:grammar-to-tbox}.
\begin{restatable}{lemma}{LmCgMain}\label{lm:conjunctive-grammars:main}
	$\T_G \mdl A \sqs \Next^n B_i$ if and only if $G \vdash \B_i(c^n)$, for $i \in \{1, \dots, m\}$.
\end{restatable}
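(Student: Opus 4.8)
The plan is to first invoke Proposition~\ref{prop:derivationTEL} to replace the semantic statement $\T_G \mdl A \sqs \Next^n B_i$ by the existence of a derivation of $B_i(a,n)$ from $\{A(a,0)\}\cup\T_G$ using rules \eqref{math:inference-rule:rigid}--\eqref{math:inference-rule:existential}, and then to prove the two implications separately, each by induction on a derivation length. Throughout, the behaviour of the auxiliary concepts $C_\iota$ is governed by Lemma~\ref{lm:conjunctive-grammars:meet}, which I will use to pass between an entailment $\T_G \mdl A \sqs \Next^n C_{\iota(\alpha_l)}$ and a decomposition $n = n_1 + \dots + n_k$ with $\T_G \mdl A \sqs \Next^{n_j} B_{i_j}$, where $\alpha_l = \B_{i_1}\dots\B_{i_k}$.

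For the direction $G \vdash \B_i(c^n) \Rightarrow \T_G \mdl A \sqs \Next^n B_i$, I would induct on the length of the grammar derivation and case on the last grammar rule applied. The base cases $\B_i \to \varepsilon$ and $\B_i \to c^n$ are discharged directly by the concept inclusions \eqref{math:ci:from-grammar:now} and \eqref{math:ci:from-grammar:shift}. For a rule $\B_i \to \alpha_1$ with $\alpha_1 = \B_{i_1}\dots\B_{i_k}$, the grammar derivation factors $c^n$ as $c^{n_1}\dots c^{n_k}$ via shorter derivations of $\B_{i_j}(c^{n_j})$; the induction hypothesis yields $\T_G \mdl A \sqs \Next^{n_j} B_{i_j}$, Lemma~\ref{lm:conjunctive-grammars:meet} lifts these to $\T_G \mdl A \sqs \Next^{n} C_{\iota(\alpha_1)}$, and \eqref{math:ci:from-grammar:imply} concludes. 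The conjunctive case $\B_i \to \alpha_1 \grand \alpha_2$ is identical, applied to both conjuncts and closed by \eqref{math:ci:from-grammar:meet}.

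For the converse direction I would strengthen the claim so that it covers the auxiliary concepts and every element of the derivation, not only the root $a$: for each element $e$ carrying a ``birth'' fact $A(e,\ell_e)$ (with $\ell_e=0$ for $e=a$, and $\ell_e$ the creation time for each null introduced by \eqref{math:inference-rule:existential} from an inclusion of form \eqref{math:ci:from-grammar:down}), a derivation of $B_i(e,\ell_e+n)$ implies $c^n\in L_G(\B_i)$, and a derivation of $C_{i_1\dots i_k}(e,\ell_e+n)$ implies a decomposition $n=n_1+\dots+n_k$ with $c^{n_j}\in L_G(\B_{i_j})$. I would prove this by induction on derivation length, casing on the inclusion that produced the fact: \eqref{math:ci:from-grammar:now} and \eqref{math:ci:from-grammar:shift} supply the rules $\B_i\to\varepsilon$ and $\B_i\to c^n$; \eqref{math:ci:from-grammar:imply} and \eqref{math:ci:from-grammar:meet} reduce a $B_i$-fact to $C_\iota$-facts handled by the strengthened hypothesis; \eqref{math:ci:from-grammar:conclude} handles $C_i$ (the case $k=1$) from a $B_i$-fact; and \eqref{math:ci:from-grammar:up} handles $C_{i_1\dots i_k}$ for $k>1$ by tracing the return rule \eqref{math:inference-rule:return} back to the null $b$ created by \eqref{math:ci:from-grammar:down}. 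Concretely, the fact $B_{i_1}(e,\ell')$ that created $b$ gives $n_1=\ell'-\ell_e\in L_G(\B_{i_1})$ by the hypothesis; the rigidity of $r_{i_1\dots i_k}$ (rule \eqref{math:inference-rule:rigid}) makes the role available at the return time; and the hypothesis applied to $C_{i_2\dots i_k}(b,\ell)$, with $b$ born at $\ell'$, supplies the remaining summands, whose sum with $n_1$ yields the required decomposition.

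The main obstacle is precisely this converse direction: a single \TELnf-derivation freely interleaves inference steps at $a$ and at the recursively created nulls, so one must argue that each null's contribution is confined to its own forward timeline. Two features of the construction make the strengthened invariant go through: the future fragment guarantees that from a birth fact $A(e,\ell_e)$ only facts at times $\geqslant\ell_e$ are derivable, so relative times are well defined and nonnegative; and each application of \eqref{math:inference-rule:existential} introduces a fresh null, so no null is shared between two parents and each receives its $A$-label exactly once. The one point requiring care is reconciling the semantic Lemma~\ref{lm:conjunctive-grammars:meet} with the length-based induction: the decomposition it provides must be witnessed by strictly shorter sub-derivations, which is exactly what the proof of Lemma~\ref{lm:conjunctive-grammars:meet} (by induction on $k$) exhibits, so a derivation-aware reading of it keeps the induction well founded.
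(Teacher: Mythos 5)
Your proposal is correct and follows essentially the same route as the paper: Proposition~\ref{prop:derivationTEL} to move to derivations, induction on grammar-derivation length with Lemma~\ref{lm:conjunctive-grammars:meet} for the direction from $G$ to $\T_G$, and a derivation-length induction with a derivation-aware version of the meet lemma for the converse. The only difference is organizational: where you inline the treatment of the $C_\iota$ concepts as a strengthened induction invariant over all elements, the paper factors it into a separate lemma that tracks derivation lengths explicitly (induction on $k$), which is exactly the ``derivation-aware reading'' you correctly identify as the point needed to keep the induction well founded.
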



\section{Linear \TELn and context-free grammars}\label{sec:linear}

	If a TBox \T belongs to both \TELnf and \TELnc, the grammar $G_\T$ provided by Definition~\ref{def:conjuntive-grammars:GT} in the proof of Theorem \ref{thm:conjunctive-grammars:tbox-to-grammar} 
	does not contain rules of type \eqref{math:grammar:from-tbox:conjunction}, and is thus context-free. In this section, we prove a similar result about the linear fragment in general, which is used later in Section~\ref{sec:consequences} for TAQA with \TELnc-TBoxes.
	
	\begin{restatable}{theorem}{ThmLinearTBoxesToCFGrammars}\label{thm:linear:tbox-to-grammar}
		For every \TELnc-TBox \T, there exists a context-free grammar $\Gamma_\T = (N, \{c, d\}, R)$, of size polynomial in $|\T|$, such that for any $A, B \in \CN(\T)$, there is 
		$\Nmc_{AB} \in N$ such that $\T \mdl A \sqs \Next^n B$ iff there exists $w \in L_{\Gamma_\T}(\Nmc_{AB})$ with $\#c(w) - \#d(w) = n$.
	\end{restatable}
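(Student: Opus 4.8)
The plan is to adapt the grammar $G_\T$ of Definition~\ref{def:conjuntive-grammars:GT}, now over the two-letter alphabet $\{c,d\}$, reading $c$ as a ``$+1$'' and $d$ as a ``$-1$'' step in time, so that $\#c(w)-\#d(w)$ is the net temporal displacement encoded by $w$. The base productions become $\Nmc_{AB}\to\varepsilon$ for $A\sqsubseteq B$, $\Nmc_{AB}\to c^n$ for $A\sqsubseteq\Next^n B$ with $n>0$, and $\Nmc_{AB}\to d^{-n}$ for $n<0$, and there is no longer any $\&$-production since \TELnc forbids $A\sqcap A'\sqsubseteq B$. The point of departure from the future case is that we can no longer rigidify away local roles (rigidification introduces a conjunction and would leave \TELnc), so rigid and local roles must be treated separately. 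I would first record the structural fact that, in the absence of conjunctions, every derivation witnessing $(\T,\{A(a,0)\})\vdash B(a,n)$ is a single spine that walks through the tree of nulls created by \eqref{math:inference-rule:existential}: it shifts in time, descends into a fresh null, and later returns to the parent, the descents and returns being properly nested. Along such a walk the current time is \emph{continuous}, so the word of all shifts has net displacement exactly $n$; the role semantics only constrains the \emph{return}. By \eqref{math:inference-rule:rigid}, a rigid edge $r(a,b,\cdot)$ is available at every instant, so a rigid excursion is \emph{transparent}—its null-displacement is inherited by the parent—whereas a local edge exists only at the creation instant, forcing the return there and thus forcing the null's sub-walk to be \emph{balanced} (net displacement $0$).

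Concretely, I would realize these walks as the accepting runs of a pushdown automaton $\Pmf$ over $\{c,d\}$ whose states are $\CN(\T)$. Reading a shift updates a signed counter kept on the stack with two symbols $+,-$ in the standard two-sided fashion (so ``relative displacement $0$'' means the top is not a counter symbol). A rigid excursion $\{A\sqsubseteq\exists r.C,\ \exists r.D\sqsubseteq B\}$ is simulated by two stack-neutral state moves $A\to C$ and $D\to B$, letting the null's shifts accumulate into the running counter. A local excursion instead \emph{pushes} a marker $M_r$ on entry at $C$ and may \emph{pop} it, moving $D\to B$, only when the displacement accrued since that push is $0$, i.e.\ when $M_r$ is exposed on top. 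Thus the single stack simultaneously records the LIFO nesting of local excursions and the displacement that must cancel inside each of them, while rigid excursions consume no stack—this is exactly where linearity (one spine, hence LIFO) is used. The grammar $\Gamma_\T$ is then the standard context-free grammar equivalent to $\Pmf$ (triple construction); its size is polynomial because the stack alphabet is $\{+,-,\bot\}\cup\{M_r\mid r\in\RNloc(\T)\}$ and, shifts being written in unary, so is every $c^n/d^{-n}$ production. The nonterminal $\Nmc_{AB}$ is the one generating the runs from $A$ to $B$ that close every marker they open (leftover counter symbols above $\bot$, encoding the net displacement, being cleared by $\varepsilon$-pops that are blocked if any $M_r$ remains).

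With $\Pmf$ (equivalently $\Gamma_\T$) fixed, I would prove that $\T\models A\sqsubseteq\Next^n B$ iff there is $w\in L_{\Gamma_\T}(\Nmc_{AB})$ with $\#c(w)-\#d(w)=n$, by induction and Proposition~\ref{prop:derivationTEL}. For the forward direction I induct on the length of a derivation of $B(a,n)$, splitting on its last rule: shifts and direct inclusions give the base productions; a rigid excursion gives a transparent move whose null-subderivation is handled by the induction hypothesis and whose displacement adds to the counter; a local excursion gives, by the induction hypothesis applied to the null's displacement-$0$ subderivation from $C$ to $D$, a balanced factor produced under a marker $M_r$. For the converse I read an accepting run / parse tree back into a \TELnc-derivation, the invariant being that a factor generated under $M_r$ is balanced and hence realizes a creation-time return of the local role $r$; the final net displacement equals the accumulated counter, i.e.\ $\#c(w)-\#d(w)$.

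I expect the main obstacle to be precisely the local-role case: one must argue that forcing the null's sub-walk to be balanced does not escape context-freeness. This is the exact spot where the future fragment required the \emph{intersection} available in conjunctive grammars (Theorem~\ref{thm:conjunctive-grammars:tbox-to-grammar}); here it is tamed because, without $A\sqcap A'\sqsubseteq B$, the balance of each local excursion is checked \emph{synchronously} with the excursion that introduces it, so a \emph{single} pushdown stack—shared between counting and nesting—suffices, provided rigid excursions are kept stack-neutral so their displacement can propagate to the parent. The remaining care is bookkeeping: verifying that the two-sided counter interacts correctly with nested markers, that acceptance with a nonzero leftover top-level displacement but no open marker is handled, and that the triple construction stays polynomial under the unary size convention for $|\T|$.
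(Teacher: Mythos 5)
There is a genuine gap, and it sits exactly where you predicted the main obstacle would be. The first, concrete problem is that making rigid excursions stack-neutral is unsound: with states ranging over $\CN(\T)$ and nothing pushed on descent, the automaton cannot enforce that a return uses the same rigid role as its matching descent, nor that descents and returns are properly nested. For instance, with $\T=\{A\sqs\exists r.C,\ C\sqs\exists r.D,\ \exists r.D\sqs E\}$ and $r\in\RNrig$, your $\Pmf$ has stack-neutral moves $A\to C$, $C\to D$ (descents) and $D\to E$ (return), so it accepts $\varepsilon$ from $\Nmc_{AE}$; but the derivation from $A(a,0)$ places $E$ at the child null $b$, not at $a$, and $\T\not\mdl A\sqs E$. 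Since the theorem is an ``iff'', this kills the backward direction. The only way to enforce role-matching and nesting for rigid excursions is to push a pending-return obligation on the stack (this is exactly what the paper's production $\Nmc_{AB}\to\Nmc_{CD}$, generated only for a pair $A\sqs\exists r.C$, $\exists r.D\sqs B$ with the \emph{same} $r$, achieves via the parse tree). But once a rigid descent pushes a marker, the deeper problem appears: the displacement accumulated inside that rigid excursion sits \emph{above} its marker, while the balance counter of an enclosing local excursion sits \emph{below} it; at the rigid return you would have to pop a marker buried under counter symbols and merge the two counter segments, which is not a pushdown operation. So a single stack cannot simultaneously be the nesting record for rigid excursions and a displacement counter that those excursions feed into. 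This is the same closure-property obstruction you yourself flag: the set of balanced words in $L_{\Gamma_\T}(\Nmc_{CD})$ is an intersection of two context-free languages, and checking balance ``synchronously'' does not make it context-free.

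The paper resolves this by giving up constructiveness rather than by a cleverer automaton: a local excursion from $C$ to $D$ is usable exactly when $\T\mdl C\sqs D$ (net displacement $0$), so it replaces $\T$ by $\Triglin=\T_0\cup\{C\sqs D\mid\T\mdl C\sqs D\}$, where $\T_0$ drops all inclusions mentioning local roles (Lemma~\ref{lm:linear:local-roles-removal}), and then applies the rigid-only grammar of Definition~\ref{def:linear:GammaT} together with Lemma~\ref{lem:linear:rigid-tbox-to-grammar}. This suffices because the theorem asserts only the \emph{existence} of a polynomial-size $\Gamma_\T$; the paper explicitly declines to give a recipe for computing $\{C\sqs D\mid\T\mdl C\sqs D\}$ in the presence of local roles. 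Your structural analysis of derivations (single spine, proper nesting, transparency of rigid excursions, balancedness of local ones) is correct and matches the paper's intuition, but the single-PDA realization does not go through; to repair the argument you would either need to precompute the entailed inclusions $C\sqs D$ as the paper does, or supply a genuinely different mechanism for the balance check.
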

	
	To reuse the ideas of the proof of Theorem \ref{thm:conjunctive-grammars:tbox-to-grammar}, we would need to get rid of local role names that occur in \T. However, we cannot rely on the construction of \Trig as in Section~\ref{sec:conjunctive-grammars}, since it introduces concept inclusions of the form $A\sqcap C_r\sqsubseteq A'_r$, so that \Trig does not belong to \TELnc even if \T does. 
	We treat separately the case where \T does not feature any local role name and the case where it does.
	
\subsection{The case of rigid role names only}\label{sec:linear:rigid-only}
	
	Suppose \T is a \TELnc-TBox such that all role names in \T are rigid. We construct a grammar $\Gamma_\T = (N, \{c, d\}, R)$ in a similar way as $G_\T$ in Section~\ref{sec:conjunctive-grammars}.
	
	\begin{definition}\label{def:linear:GammaT}
		Given a \TELnc-TBox \T such that $\RN(\T) \sbs \RNrig$, $\Gamma_\T = (N_\T, \{c, d\}, R_\T)$, where $N_\T = \{\Nmc_{AB} \mid A, B \in \CN(\T)\}$ and $R_\T$ contains exactly the rules defined by \eqref{math:grammar:from-tbox:epsilon}, \eqref{math:grammar:from-tbox:shift}, \eqref{math:grammar:from-tbox:down-and-up}, \eqref{math:grammar:from-tbox:middle} with $\Trig = \T$, as well as the following rules. 
		\begin{align}
			&\Nmc_{AB} \grto d^{\,|n|}, && &&\text{for } A \sqs \Next^n B \in \T,\ n < 0
			\label{math:grammar:from-tbox:negative-shift}
			\tag{\ref{math:grammar:from-tbox:shift}$^*$}
		\end{align}
	\end{definition}
	
	In a word $w \in \{c, d\}^*$, a symbol $c$ corresponds to a step forwards in time, and a symbol $d$ to a step backwards. Otherwise, the intuition behind $\Gamma_\T$ is the same as that given for $G_\T$ in Section~\ref{sec:conjunctive-grammars}. 
	For every derivation witnessing $\Gamma_\T \vdash \Nmc_{AB}(w)$, we can construct a corresponding derivation of $B(a, \# c(w) - \# d(w))$ from $\T\cup\{A(a, 0)\}$. Conversely, from a derivation witnessing 
	$(\T, \{A(a, 0)\}) \vdash B(a, n)$, we obtain a word $w$ such that $\Gamma_\T \vdash \Nmc_{AB}(w)$ as follows: whenever a derivation rule uses a concept inclusion of the form $A' \sqs \Next^k B'$, 
	we write $c^k$ if $k \geqslant 0$, and $d^{\,|k|}$ when $k < 0$. 
			\begin{figure}[t]
		\centering
		

\scalebox{0.90}{
\begin{tikzpicture}[node distance=1.5cm]
	
	\node (tm2) {-2};
	\node[right of=tm2] (tm1) {-1};
	\node[right of=tm1] (t0) {0};
	\node[right of=t0] (t1) {1};
	\node[right of=t1] (t2) {2};
	
	\node[small node, below of=t0, node distance=1cm, label=above right:$A$] (a0) {};
	\node[above left of=a0, node distance=0.5cm] (a) {$a$};
	\node[small node, below of=a0, node distance=1cm, label=above right:$B$] (b0) {};
	
	\node[left of=b0] (bm1) {};
	\node[small node, left of=bm1, label=left:$C$] (bm2) {};
	
	\node[right of=b0] (b1) {};
	\node[small node, right of=b1, label=right:$D$] (b2) {};
	
	\node[small node, above of=b2, node distance=1cm, label=above right:$E$] (a2) {}; 
	
	\draw[directed edge] (a0) to node[pos=0.3, right] {$r$} (b0);
	\draw[directed edge, dotted] (a2) to node[pos=0.3, left] {$r$} (b2);
	
	\draw[next edge] 
		($(b0) + (-0.15, 0.03)$) to node[midway, above, yshift=0.6em, black] {$d$}
		($(bm1) + (0, 0.03)$) to node[midway, above, yshift=0.6em, black] {$d$}
		($(bm2) + (0.15, 0.03)$); 
	
	\draw[next edge] 
		($(bm2) + (0.15, -0.06)$) to node[midway, below, yshift=-0.6em, black] {$c$}
		($(bm1) + (0, -0.06)$) to node[midway, below, yshift=-0.6em, black] {$c$}
		($(b0) + (0, -0.06)$) to node[midway, below, yshift=-0.6em, black] {$c$}
		($(b1) + (0, -0.06)$) to node[midway, below, yshift=-0.6em, black] {$c$}
		($(b2) + (-0.15, -0.06)$);
		
	\draw[dotted, thick, rounded corners, red, -{Stealth[length=3pt]}] 
		($(a0) + (-0.2, 0)$) to
		($(b0) + (-0.2, 0.15)$) to 
		($(bm2) + (-0.2, 0.15)$) to 
		($(bm2) + (-0.2, -0.15)$) to
		($(b2) + (0.2, -0.15)$) to
		($(a2) + (0.2, 0)$);
	
\end{tikzpicture}
}
		\caption{A derivation witnessing $(\T,\{A(a, 0)\})\vdash E(a, 2)$ and the corresponding word $ddcccc$ can be read along the \textcolor{red}{dotted line}.}
		\label{fig:linear:tbox-to-grammar}
	\end{figure}
	
		\begin{restatable}{lemma}{LemLinearTBoxesRigidToCFGrammars}\label{lem:linear:rigid-tbox-to-grammar}
		If $\RN(\T)\subseteq\RNrig$, for any $A, B \in \CN(\T)$, $\T \mdl A \sqs \Next^n B$ iff there exists $w \in L_{\Gamma_\T}(\Nmc_{AB})$ with $\#c(w) - \#d(w) = n$.
	\end{restatable}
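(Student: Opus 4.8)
The plan is to prove both implications by translating between derivations, invoking Proposition~\ref{prop:derivationTEL} to pass freely between the entailment $\T\mdl A\sqs\Next^n B$ and the existence of a derivation of $B(a,n)$ from $\T\cup\{A(a,0)\}$ using rules \eqref{math:inference-rule:rigid}--\eqref{math:inference-rule:existential}. Throughout I write $\delta(w)=\#c(w)-\#d(w)$ for the net time shift encoded by a word $w\in\{c,d\}^*$. Note first that, since \T is linear, $\Gamma_\T$ contains no rule of the form \eqref{math:grammar:from-tbox:conjunction}, so it is genuinely context-free. Both directions will be inductions, and their interplay is precisely the reading-off illustrated in Figure~\ref{fig:linear:tbox-to-grammar}.

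For the ``if'' direction I argue by induction on a derivation witnessing $\Gamma_\T\vdash\Nmc_{AB}(w)$, establishing $\T\mdl A\sqs\Next^{\delta(w)}B$ by a case analysis on the last grammar rule applied. Rules \eqref{math:grammar:from-tbox:epsilon}, \eqref{math:grammar:from-tbox:shift} and \eqref{math:grammar:from-tbox:negative-shift} give the inclusion directly, as in each case $\delta(w)$ equals the shift of the corresponding concept inclusion. For rule \eqref{math:grammar:from-tbox:middle}, where $w=w_1w_2$ with $\Gamma_\T\vdash\Nmc_{AC}(w_1)$ and $\Gamma_\T\vdash\Nmc_{CB}(w_2)$, the induction hypothesis gives $\T\mdl A\sqs\Next^{\delta(w_1)}C$ and $\T\mdl C\sqs\Next^{\delta(w_2)}B$, which compose into $\T\mdl A\sqs\Next^{\delta(w_1)+\delta(w_2)}B$ (all derivation rules being time-translation invariant), and $\delta(w_1)+\delta(w_2)=\delta(w)$. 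The crucial case is rule \eqref{math:grammar:from-tbox:down-and-up}, i.e.\ $\Nmc_{AB}\to\Nmc_{CD}$ arising from $A\sqs\exists r.C$ and $\exists r.D\sqs B$: starting from $A(a,0)$ I use \eqref{math:inference-rule:existential} to create a fresh null $b$ with $r(a,b,0)$ and $C(b,0)$, apply the induction hypothesis at $b$ to obtain $D(b,\delta(w))$, invoke rigidity of $r$ through \eqref{math:inference-rule:rigid} to get $r(a,b,\delta(w))$, and finally apply \eqref{math:inference-rule:return} to derive $B(a,\delta(w))$. Rigidity is exactly what licenses reusing the $r$-edge at the shifted time $\delta(w)\neq 0$.

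For the ``only if'' direction I fix a derivation $(\Fmc_0,\dots,\Fmc_m)$ of $B(a,n)$ from $\T\cup\{A(a,0)\}$ and prove a strengthened invariant by induction on the step at which a fact is introduced. Because \T is linear, the conjunction rule \eqref{math:inference-rule:conjunction} never fires, which is precisely why the conjunction-free grammar $\Gamma_\T$ suffices. Each null $e$ is, moreover, created exactly once (freshness in \eqref{math:inference-rule:existential}), giving it a well-defined parent $e'$, rigid role $r$, origin concept $A_e$ and origin time $\ell_e$; for the root we set $A_a=A$ and $\ell_a=0$. The invariant reads: for every concept fact $X(e,\ell)\in\Fmc_m$ there is $w$ with $\delta(w)=\ell-\ell_e$ and $\Gamma_\T\vdash\Nmc_{A_e X}(w)$. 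The base cases are $A(a,0)$ and each null's origin fact, handled by $\Nmc_{XX}\to\varepsilon$ from \eqref{math:grammar:from-tbox:epsilon}. For rule \eqref{math:inference-rule:shift}, where $X(e,\ell)$ comes from $X'(e,\ell-k)$ and $X'\sqs\Next^k X$, I compose via \eqref{math:grammar:from-tbox:middle} the word for the premise with the word $c^{k}$, $d^{|k|}$ or $\varepsilon$ realising $X'\sqs\Next^k X$ through \eqref{math:grammar:from-tbox:shift}, \eqref{math:grammar:from-tbox:negative-shift} or \eqref{math:grammar:from-tbox:epsilon}. The key case is rule \eqref{math:inference-rule:return}, where $X(e,\ell)$ comes from $r(e,f,\ell)$, $D(f,\ell)$ and $\exists r.D\sqs X$: since role facts arise only from existential creation and rigidity, the edge $r(e,f,\ell)$ forces $e$ to be the parent of $f$, so $f$'s creation fact $A''(e,\ell_f)$ with $A''\sqs\exists r.A_f$ occurs earlier in the derivation. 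Applying the induction hypothesis to $D(f,\ell)$ and to $A''(e,\ell_f)$, then turning $\Nmc_{A_f D}$ into $\Nmc_{A'' X}$ by \eqref{math:grammar:from-tbox:down-and-up}, and finally composing by \eqref{math:grammar:from-tbox:middle}, yields $\Nmc_{A_e X}(w)$ with $\delta(w)=(\ell_f-\ell_e)+(\ell-\ell_f)=\ell-\ell_e$. Instantiating the invariant at $B(a,n)$ gives the claim. I expect this last case to be the main obstacle: one must verify that freshness makes each null's origin well-defined, that linearity removes the need for a conjunction rule in the grammar, and that rigidity allows the downward and upward moves of \eqref{math:grammar:from-tbox:down-and-up} to occur at the two distinct times $\ell_f$ and $\ell$ while the net shift is still accumulated correctly.
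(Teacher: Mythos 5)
Your proof is correct and follows essentially the same route as the paper's: both directions translate between TBox derivations and grammar derivations via a case analysis on the last rule applied, with rigidity of $r$ licensing the return step at a shifted time and the concatenation rule \eqref{math:grammar:from-tbox:middle} accumulating the net shift $\#c(w)-\#d(w)$. The only difference is presentational: in the TBox-to-grammar direction the paper inducts on derivation length and, in the return case, extracts and time-shifts a sub-derivation rooted at the null, whereas you prove an equivalent invariant over a single fixed derivation by explicitly tracking each null's origin concept and origin time.
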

	\begin{example}\label{ex:linear:tbox-to-grammar}
		Consider a \TELnc-TBox \T containing the following concept inclusions, with $r \in \RNrig$. Figure \ref{fig:linear:tbox-to-grammar} shows a derivation witnessing $(\T,\{A(a, 0)\})\vdash E(a, 2)$ 
		and the corresponding word.
		\begin{align*}
			&A \sqs \exists r . B
			&&B \sqs \Next^{-2} C
			&&C \sqs \Next^4 D
			&&\exists r . D \sqs E
		\end{align*}
	\end{example}
	
	Lemma~\ref{lem:linear:rigid-tbox-to-grammar} cannot be extended beyond \TELnl because different words, say $ddcccc$ and $cdcccd$, correspond to the same ``shift'' in time (here, $2$), but the semantics of rules of type \eqref{math:grammar:from-tbox:conjunction} treats them as different: if $\Nmc\to ddcccc$ and $\Mmc \to cdcccd$, then $\Nmc\&\Mmc$ generates an empty language.

\subsection{The case of both rigid and local role names}\label{sec:linear:rigid-and-local}

For this case, we provide a translation from \T to $\Gamma_\T$ which is not constructive but nevertheless guarantees the existence of $\Gamma_\T$ and the bound on its size stated by Theorem \ref{thm:linear:tbox-to-grammar}. 
To actually build $\Gamma_\T$, one needs to compute the set $\{A \sqs B \mid \T \mdl A \sqs B\}$, and we do not provide any recipe for that when \T features local role names. However, Theorem~\ref{thm:linear:tbox-to-grammar} is enough to show ultimate periodicity of  \TELnc-TBoxes, as we do in the next section.

	Consider a \TELnc-TBox \T and let $\T_0$ be obtained from \T by removing all concept inclusions that use local role names. Then, let 
	\begin{align}\label{math:linear:rigid-and-local:Triglin}
		\Triglin = \T_0 \cup \{A \sqs B \mid \T \mdl A \sqs B\}.
	\end{align}
	Intuitively, in a derivation using \T, if a rule of form  \eqref{math:inference-rule:existential} produces a fact $r(a, b, n)$ with some $r \in \RNloc$, and later a rule of form  \eqref{math:inference-rule:return} uses $r(a, b, n')$ to ``propagate back'' the consequences of facts derived about $b$ on $a$, the locality of $r$ implies that $n = n'$. A derivation that uses \Triglin ``skips'' the derivation between these two rule applications, immediately inferring a new fact for $a$ at time $n$. 

	\begin{example}\label{ex:linear:local-roles-removal}
		Let $r \in \RNloc$ and \T contain the concept inclusions:
		\begin{align*}
			&A \sqs \Next B &&F \sqs \Next G
			&C \sqs \Next^{-3} D &&D \sqs \Next^{3} E
			\\
			&&&B \sqs \exists r . C &\exists r . E \sqs F
		\end{align*}
		Then $\T \mdl B \sqs F$, and \Triglin contains the concept inclusions in the first line and $B \sqs F$. 
		Figure \ref{fig:linear:rigidisation} illustrates two derivations of $G(a, 2)$: one from $\{A(a, 0)\}\cup\T$, and one from $\{A(a, 0)\}\cup\Triglin$. 
	\end{example}
	In general, the following lemma holds.
	\begin{restatable}{lemma}{LmLinearLocalRolesRemoval}\label{lm:linear:local-roles-removal}
		Let $\T$ be a \TELnc-TBox. For any $A, B \in \CN(\T)$ and $n \in \Z$, $\T \mdl A \sqs \Next^n B$ if and only if $\Triglin \mdl A \sqs \Next^n B$.
	\end{restatable}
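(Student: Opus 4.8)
The plan is to prove both directions through the derivation calculus \eqref{math:inference-rule:rigid}--\eqref{math:inference-rule:existential}, moving freely between entailments $\T \mdl A \sqs \Next^n B$ and derivations $(\T, \{A(a, 0)\}) \vdash B(a, n)$ via Proposition~\ref{prop:derivationTEL}. The ``if'' direction I would settle semantically: since $\T_0 \sbs \T$ and every inclusion $A \sqs B$ added in \eqref{math:linear:rigid-and-local:Triglin} satisfies $\T \mdl A \sqs B$ by construction, every model of \T is a model of \Triglin, i.e.\ $\mathrm{Mod}(\T) \sbs \mathrm{Mod}(\Triglin)$; this gives $\Triglin \mdl A \sqs \Next^n B \Rightarrow \T \mdl A \sqs \Next^n B$ at once.

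The ``only if'' direction is the crux, and I would prove it by rewriting a derivation $\delta$ of $B(a, n)$ from $\{A(a, 0)\} \cup \T$ into one over \Triglin. Two observations drive the argument. First, the only rule passing concept information between distinct elements is \eqref{math:inference-rule:return}, which moves a fact from a child to its parent, and the only rule creating elements is \eqref{math:inference-rule:existential}; hence no concept fact ever flows from a parent to a child after the child is created. Thus for a null $b$ created at time $m$ with initial fact $C(b, m)$, the restriction of $\delta$ to the subtree rooted at $b$ is a self-contained derivation from $\{C(b, m)\} \cup \T$, so any $Y(b, m)$ derived in it witnesses $\T \mdl C \sqs Y$ by Proposition~\ref{prop:derivationTEL}. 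Second, if $b$ is created from a parent $e$ by $C' \sqs \exists r . C$ with $r \in \RNloc(\T)$, then, rule \eqref{math:inference-rule:rigid} being unavailable for local roles, the atom $r(e, b, \cdot)$ exists only at the creation time $m$; consequently every application of \eqref{math:inference-rule:return} through $r$ uses some $\exists r . Y \sqs B' \in \T$ with $Y(b, m)$ and produces $B'(e, m)$ at that same time $m$. Combining the two, $\T \mdl C \sqs Y$ together with $C' \sqs \exists r . C$ and $\exists r . Y \sqs B'$ yields $\T \mdl C' \sqs B'$, so the ``shortcut'' inclusion $C' \sqs B'$ belongs to \Triglin.

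With these pieces I would build the target derivation $\delta'$ explicitly. Let $R$ be the rigid skeleton, i.e.\ the elements reachable from $a$ using only existential steps with rigid roles. I keep every step of $\delta$ acting within $R$ --- these use shift inclusions and rigid-role inclusions, all in $\T_0 \sbs \Triglin$, together with rule \eqref{math:inference-rule:rigid} for rigid roles --- I discard every local child of an element of $R$ with its whole subtree, and I replace each local return that produces a fact $B'(e, m)$ with $e \in R$ by one application of the shortcut inclusion $C' \sqs B' \in \Triglin$ to the already available $C'(e, m)$. Traversing the steps of $\delta$ in their original order shows that every premise of a retained or inserted step is present, so $\delta'$ is a valid derivation of $B(a, n)$ from $\{A(a, 0)\} \cup \Triglin$, and Proposition~\ref{prop:derivationTEL} gives $\Triglin \mdl A \sqs \Next^n B$.

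I expect the main difficulty to lie in making this surgery rigorous: one must verify that deleting the local subtrees never removes a premise needed by a retained step, which is exactly where the ``no parent-to-child flow'' observation and the fact that local role atoms survive only at their creation time are used. Finally, I would note that linearity of \T is not what makes the equivalence true, but it ensures that \Triglin contains no conjunction inclusion and is therefore again a \TELnc-TBox, as required by the grammar construction of Section~\ref{sec:linear:rigid-only}.
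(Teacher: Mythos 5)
Your proposal is correct and follows essentially the same route as the paper: the crux in both arguments is that a local role atom exists only at its creation time, so the round trip through a local child (creation by some $A'' \sqs \exists r . B''$, a self-contained sub-derivation inside the child's subtree, and a return by some $\exists r . A' \sqs B'$) certifies $\T \mdl A'' \sqs B'$ and can be replaced by the corresponding shortcut inclusion of \Triglin --- the paper splices out the segment of the derivation between the creating and returning rule applications, where you discard whole local subtrees relative to a rigid skeleton, which is the same surgery organised slightly more explicitly. The only genuine difference is the converse direction, which the paper proves syntactically by re-expanding each shortcut $A' \sqs B' \in \Triglin \setminus \T$ into a \T-derivation of $B'(b,\ell)$ from $A'(b,\ell)$, whereas your observation that $\mathrm{Mod}(\T) \subseteq \mathrm{Mod}(\Triglin)$ yields it immediately and more cleanly.
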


	\begin{figure}[t]
		\centering
		

\scalebox{0.85}{
\begin{tikzpicture}[node distance=1.6cm, label distance=0.5em]
	
	\node (t0) {0};
	\node[left of=t0] (tm1) {-1};
	\node[left of=tm1] (tm2) {-2};
	\node[right of=t0] (t1) {1};
	\node[right of=t1] (t2) {2};
		
	\node[small node, label=above:$A$, below of=t0, node distance=0.9cm] (a0) {};
	\node[small node, label=above:{$B, F$}, right of=a0] (a1) {};
	\node[small node, label=above:$G$, right of=a1] (a2) {};
	\node[label=above:{}, left of=a0] (am1) {};
	\node[label=above:{}, left of=am1] (am2) {};
	
	\draw[next edge] (a0) to (a1);
	\draw[next edge] (a1) to (a2);
	
	\node[left of=a0, node distance=0.5cm] (a) {$a$};
	\node[below of=a, node distance=1.4cm] (b) {};
		
	\node[small node, label={below:{$C, E$}}, below of=a1,node distance=1.2cm] (b1) {};
	\node[label=below:{}, left of=b1] (b0) {};
	\node[label=below:{}, left of=b0] (bm1) {};
	\node[small node, label={below:{$D$}}, left of=bm1, label distance=0.5em] (bm2) {};
		
	\node[above of=bm2, node distance=0.05cm] (bm2bis) {};
	\node[above of=b1, node distance=0.05cm] (b1bis) {};
	\draw[next edge] (b1bis) to (bm2bis);
	
	\node[below of=bm2, node distance=0.07cm] (bm2ter) {};
	\node[below of=b1, node distance=0.07cm] (b1ter) {};
	\draw[next edge] (bm2ter) to (b1ter);
	
	\draw[directed edge] (a1) to node[midway, right, xshift=0.5em] {$r$} (b1);
		
	\draw[dotted, thick, rounded corners, red, -{Stealth[length=3pt]}] 
		($(a0) + (0, -0.1)$) to
		 ($(a1) + (-0.2, -0.1)$) to 
		 ($(b1) + (-0.2, 0.1)$) to 
		 ($(bm2) + (-0.2, 0.1)$) to
		 ($(bm2) + (-0.2, -0.1)$) to
		 ($(b1) + (0.2, -0.1)$) to
		 ($(a1) + (0.2, -0.1)$) to
		 ($(a2) + (0, -0.1)$);

	\draw[dash dot dot, blue, thick, -{Stealth[length=3pt]}] 
		($(a0) + (0, 0.1)$) to
		($(a2) + (0, 0.1)$);

	\begin{pgfonlayer}{background}
		\draw[dashed, rounded corners, fill=gray!20] 
		($(am2) + (-0.4, -0.5)$) to
		($(bm2) + (-0.4, -0.5)$) to
		($(b1) + (0.7, -0.5)$) to
		($(a1) + (0.7, -0.2)$) to
		($(am2) + (-0.4, -0.2)$) to
		($(bm2) + (-0.4, 0)$);
	\end{pgfonlayer}

\end{tikzpicture}
}
		
		\caption{A derivation of $G(a, 2)$ from $\{A(a, 0)\}$  and the TBox \T of Example~\ref{ex:linear:local-roles-removal} goes along the \textcolor{red}{dotted line}. A derivation with \Triglin uses $B \sqs F$ and skips the part in the grey box, passing along the \textcolor{blue}{dash-dotted line}.}
		\label{fig:linear:rigidisation}
	\end{figure}
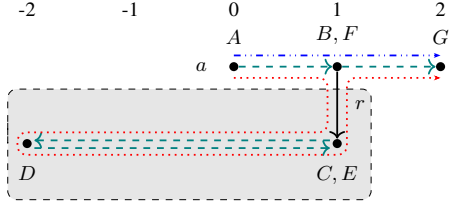

	The proof of Theorem \ref{thm:linear:tbox-to-grammar} uses Lemma~\ref{lm:linear:local-roles-removal} and the facts that $\RN(\Triglin) \sbs \RNrig$ and that $|\Triglin|$ is polynomial in $|\T|$ to assume that \T uses only rigid roles, and concludes via Lemma~\ref{lem:linear:rigid-tbox-to-grammar} and the polynomial time construction of $\Gamma_\T$ in Definition~\ref{def:linear:GammaT}.


\section{Answering temporal atomic queries}\label{sec:consequences}

We now draw the consequences of the correspondences between \TELnf or \TELnl-TBoxes and grammars for TAQA.

\subsection{Undecidability results}\label{sec:consequneces:undecidability}

 First, we close negatively the question of \TELn ultimate periodicity.

\begin{theorem}\label{thm:consequences:not-ultimately-periodic}
	The following statements hold.
	\begin{enumerate}[label=(\roman*)]
		\item There exists a \TELnf-TBox which is not ultimately periodic. \label{part:nup:exists}
		
		\item It is undecidable to check if the set $\{n \in \N \mid \T \mdl A \sqs \Next^n B\}$ is semilinear for a \TELnf-TBox \T and $A, B \in \CN(\T)$. \label{part:nup:undecidable}
	\end{enumerate}	
\end{theorem}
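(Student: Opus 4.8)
The plan is to transfer both statements into the language of unary conjunctive grammars via the ``Grammars to TBoxes'' direction, namely Theorem~\ref{thm:conjunctive-grammars:grammar-to-tbox}. The conceptual bridge is the observation already recorded after Theorem~\ref{thm:parikh}: for a \emph{unary} language $L$, the set $\{n\mid c^n\in L\}$ \emph{is} its Parikh image, and $L$ is regular if and only if this Parikh image is semilinear. Hence, if $G$ is a unary conjunctive grammar and $\T_G$ is the \TELnf-TBox produced by Theorem~\ref{thm:conjunctive-grammars:grammar-to-tbox}, then for the concept name $A$ and the concept name $B$ associated with a nonterminal \Bmc we would have
\begin{align*}
\{n\in\N\mid \T_G\mdl A\sqs\Next^n B\}=\{n\mid c^n\in L_G(\Bmc)\}=p(L_G(\Bmc)),
\end{align*}
and this set is semilinear exactly when $L_G(\Bmc)$ is regular. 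Since $\T_G$ is a \TELnf-TBox, only non-negative shifts are derivable, so the set of integers $n$ with $\T_G\mdl A\sqs\Next^n B$ is automatically contained in \N, matching the statement.

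For part~\ref{part:nup:exists}, I would instantiate this with the grammar of Example~\ref{ex:preliminaries:jez}, whose nonterminal $\Nmc_1$ generates the unary language $\{c^{4^n}\mid n\in\N\}$. Feeding $G$ into Theorem~\ref{thm:conjunctive-grammars:grammar-to-tbox} yields a concrete \TELnf-TBox $\T_G$ together with concept names $A$ and $B$ (the one attached to $\Nmc_1$) such that $\{n\in\N\mid \T_G\mdl A\sqs\Next^n B\}=\{4^n\mid n\in\N\}$. It then remains to observe that $\{4^n\mid n\in\N\}$ is not semilinear: a semilinear subset of \N is a finite union of arithmetic progressions and is therefore eventually periodic, whereas the gaps $4^{n+1}-4^n$ grow without bound, so $\{4^n\}$ cannot be eventually periodic. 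Thus $\T_G$ is not ultimately periodic.

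For part~\ref{part:nup:undecidable}, I would set up a reduction from the problem of deciding whether $L(G)$ is regular for a given unary conjunctive grammar $G$, which is undecidable by Theorem~\ref{thm:jez-okhotin:undecidability}(iii). Given $G=(N,\{c\},\Smc,R)$, build $\T_G$ in polynomial time using Theorem~\ref{thm:conjunctive-grammars:grammar-to-tbox} and let $B$ be the concept name associated with the start symbol \Smc, so that $\{n\in\N\mid \T_G\mdl A\sqs\Next^n B\}=p(L_G(\Smc))=p(L(G))$. By the unary ``regular iff semilinear Parikh image'' equivalence above, this set is semilinear iff $L(G)$ is regular. Hence any decision procedure for semilinearity of $\{n\in\N\mid \T\mdl A\sqs\Next^n B\}$ would decide regularity of unary conjunctive languages, contradicting Theorem~\ref{thm:jez-okhotin:undecidability}; co-r.e.-hardness transfers along the same many-one reduction if one wants the sharper statement.

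The mathematical substance is already packaged in the earlier theorems, so there is little genuine calculation left; the main thing to get right is the clean invocation of the unary regularity/semilinearity equivalence and the verification that the reduction of part~\ref{part:nup:undecidable} is effective, which it is since Theorem~\ref{thm:conjunctive-grammars:grammar-to-tbox} constructs $\T_G$ in polynomial time. The only point demanding a small separate argument is the non-semilinearity of $\{4^n\}$ in part~\ref{part:nup:exists}, which follows from eventual periodicity of one-dimensional semilinear sets. I expect no real obstacle beyond stating these links precisely.
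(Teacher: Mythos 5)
Your proposal is correct and follows essentially the same route as the paper: part (i) instantiates Theorem~\ref{thm:conjunctive-grammars:grammar-to-tbox} with the grammar of Example~\ref{ex:preliminaries:jez} to obtain $\{4^k\mid k\in\N\}$, and part (ii) reduces from regularity of unary conjunctive languages via Theorem~\ref{thm:jez-okhotin:undecidability}(iii) and the Parikh-image/regularity equivalence of Theorem~\ref{thm:parikh}. The only addition is your explicit justification that $\{4^k\}$ is not semilinear, which the paper states without proof.
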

\begin{proof}
\emph{\ref{part:nup:exists}} Let $G$ be the grammar of Example~\ref{ex:preliminaries:jez}. By Theorem \ref{thm:conjunctive-grammars:grammar-to-tbox}, there exists a \TELnf-TBox $\T_G$ and $A, B \in \CN(\T_G)$ such that $\T_G \mdl A \sqs \Next^n B$ iff $c^n \in L_G(\Nmc_1)$. Hence, $\{n \in \N \mid \T_G \mdl A \sqs \Next^n B\} = \{4^k \mid k \in \N\}$, which is not semilinear.

	\emph{\ref{part:nup:undecidable}} We proceed by a reduction to the problem of deciding whether $L(G)$ is regular for a unary conjunctive grammar $G$, which is undecidable by point \emph{(iii)} of Theorem \ref{thm:jez-okhotin:undecidability}.  Given a unary conjunctive grammar $G$, by Theorem~\ref{thm:conjunctive-grammars:grammar-to-tbox}, one can construct a \TELnf-TBox $\T_G$ and $A, B \in \CN(\T_G)$ such that $\T_G \mdl A \sqs \Next^n B$ iff $c^n \in L_G(\Smc)$, where $\Smc$ is the start symbol of $G$, so that $L(G)=L_G(\Smc)$. Hence, $\{(n) \in \N \mid \T_G \mdl A \sqs \Next^n B\}$ is the Parikh image of $L(G)$. Thus, by Theorem \ref{thm:parikh}, $L(G)$ is regular iff this set is semilinear.
\end{proof}

Second, we obtain some undecidability results for TAQA.
\begin{theorem}\label{thm:consequences:undecidability}
	TAQA is undecidable, for combined complexity, with \TELn-TBoxes and with \TELnf-TBoxes extended with rigid concept names.
\end{theorem}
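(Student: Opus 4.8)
The plan is to reduce the \emph{emptiness} problem for unary conjunctive grammars — undecidable by point~\emph{(i)} of Theorem~\ref{thm:jez-okhotin:undecidability} — to TAQA, building on the grammar-to-TBox construction of Theorem~\ref{thm:conjunctive-grammars:grammar-to-tbox}. The crucial observation is that, whereas membership $c^n \in L_G(\Smc)$ is decidable (Theorem~\ref{thm:membership-testing}) and corresponds to the single entailment $\T_G \mdl A \sqs \Next^n B$, the \emph{existence} of some $n$ with $c^n \in L_G(\Smc)$, i.e.\ nonemptiness of $L(G)$, is undecidable. The idea is to use rigidity to collapse this existential quantification over time points into one fixed atomic query.

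Concretely, given a unary conjunctive grammar $G$, I first invoke Theorem~\ref{thm:conjunctive-grammars:grammar-to-tbox} to obtain in polynomial time a \TELnf-TBox $\T_G$ together with concept names $A$ and $B$, where $B$ is associated with the start symbol $\Smc$, so that $\T_G \mdl A \sqs \Next^n B$ iff $c^n \in L_G(\Smc) = L(G)$. I then introduce a fresh concept name $R$ and force it to be rigid and to be implied by $B$. In the \TELnf-with-rigid-concepts setting I simply declare $R$ rigid and add $B \sqs R$; in the plain \TELn setting I instead add $B \sqs R$, $R \sqs \Next R$ and $R \sqs \Next^{-1} R$, which simulate rigidity of $R$ exactly as noted in the preliminaries. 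Call the resulting TBox $\T'$. Both $\T'$ are legitimate (being a \TELn-TBox in the first case, and a \TELnf-TBox with one rigid concept name in the second).

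The key claim is that $(\T', \{A(a, 0)\}) \mdl R(a, 0)$ iff $L(G) \neq \emptyset$. Working with the derivation system and Proposition~\ref{prop:derivationTEL} (extended with the analogous rule for rigid concept names in the second case), one checks that $R$ occurs on the right-hand side of an inclusion only through $B \sqs R$ and the two propagation inclusions, so the first $R$-fact in any derivation must arise from some $B(a, m)$; rigidity then propagates it to every time point via rule~\eqref{math:inference-rule:shift}, in particular to $R(a, 0)$. Conversely, no inclusion of $\T'$ mentioning $R$ can contribute to deriving a $B$-fact, so the $B$-derivations available in $\T'$ are exactly those in $\T_G$. Hence $R(a, 0)$ is entailed iff $B(a, m)$ is derivable for some $m \in \Z$, iff $\T_G \mdl A \sqs \Next^m B$ for some $m$, iff $c^m \in L(G)$ for some $m$, iff $L(G) \neq \emptyset$. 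Since emptiness of $L(G)$ is undecidable, deciding $(\T', \{A(a, 0)\}) \mdl R(a, 0)$ is undecidable, yielding undecidability of TAQA in both claimed settings.

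The main obstacle is the correctness of this claim: making precise that adjoining $R$ and its inclusions neither creates spurious $B$-derivations nor lets an $R$-fact appear before some $B$-fact. This is a routine but careful induction on derivation length that isolates $R$ from the rest of $\T_G$. A secondary point to verify is that $R \sqs \Next R$ together with $R \sqs \Next^{-1} R$ indeed entails $R$ at \emph{every} integer time point once $R$ holds somewhere, which is immediate from iterating rule~\eqref{math:inference-rule:shift} forwards and backwards.
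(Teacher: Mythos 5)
Your proposal is correct and follows essentially the same route as the paper: reduce emptiness of a unary conjunctive language (undecidable by Theorem~\ref{thm:jez-okhotin:undecidability}\,\emph{(i)}) to TAQA via the grammar-to-TBox translation of Theorem~\ref{thm:conjunctive-grammars:grammar-to-tbox}, then add a fresh concept implied by $B$ and made rigid (declared rigid in the \TELnf case, simulated with $\Next$/$\Next^{-1}$ inclusions in the \TELn case) so that the existential quantification over time points collapses to the single query at time $0$. The only cosmetic difference is that the paper uses just $C \sqs \Next^{-1} C$ in the \TELn case, which suffices because \TELnf derives $B$ only at non-negative timestamps, whereas you propagate $R$ in both directions; both are fine.
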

\begin{proof}	
	By Theorem \ref{thm:jez-okhotin:undecidability} \e{(i)}, checking if $L(G) = \emptyset$ for a given unary conjunctive grammar $G$ is undecidable. We reduce this problem to TAQA. As in the proof of Theorem~\ref{thm:consequences:not-ultimately-periodic} \e{(ii)}, given a unary conjunctive grammar $G$, one can construct a \TELnf-TBox $\T_G$ and $A, B \in \CN(\T_G)$ such that $\T_G \mdl A \sqs \Next^n B$ iff $c^n \in L(G)$. Let $C$ be a fresh concept name. We construct $\T'_G$ such that $L(G) \neq \emptyset$ iff $(\T'_G, \{A(a, 0)\}) \mdl C(a, 0)$ as follows. For the \TELn case, let $\T'_G = \T_G \cup \{B \sqs C, C \sqs \Next^{-1} C\}$, and for \TELnf with rigid concept names, let $C$ be rigid and $\T'_G = \T_G \cup \{B \sqs C\}$.
\end{proof}
It is known
that rigid concept names can be simulated with rigid role names if the language allows for the concept $\top$ (which is such that $\top^{\Imc_i} = \Delta^\Jmf$ for every $\Jmf=(\Delta^\Jmf, (\Imc_i)_{i\in\Z})$ and $i\in\Z$, and is not allowed in the original definition of \TELn by \citet{Basulto-et-al:TEL}): adding $C\equiv \exists r . \top$ for a fresh $r \in \RNrig$ to the TBox makes $C$ rigid. 
It thus follows from Theorem \ref{thm:consequences:undecidability} that TAQA is undecidable also if one extends \TELnf with $\top$. 
Note that Theorem~\ref{thm:consequences:undecidability} does not imply undecidability for data complexity. However, it implies that even having fixed a TBox \T, there is no ``computational'' way to obtain an algorithm for TAQA with that \T.

\subsection{Decidability results}\label{sec:consequneces:decidability}

We first show that TAQA is decidable with \TELnf-TBoxes. 
\begin{restatable}{theorem}{ThmFuturePTime}\label{thm:consequences:future-answering} 
	TAQA with \TELnf-TBoxes is \PTime-complete, both for combined and data complexity.
\end{restatable}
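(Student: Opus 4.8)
The plan is to reduce TAQA to polynomially many grammar-membership tests plus a bounded forward-chaining saturation. By Proposition~\ref{prop:derivationTEL} it suffices to decide $(\T,\A)\vdash A(a,n)$. I would first exploit the fact that in \TELnf every temporal shift is non-negative, so information only propagates forwards (or stays put) in time: tracing any derivation of $A(a,n)$ backwards through rules \eqref{math:inference-rule:rigid}--\eqref{math:inference-rule:existential}, every antecedent fact lies at a time point in the interval $[t_{\min},n]$, where $t_{\min}$ is the least timestamp occurring in \A. As integers are encoded in unary, this interval has polynomial size. Next I would observe that the named nulls produced by rule \eqref{math:inference-rule:existential} form a forest hanging off the named individuals: a null $b$ created with seed $B(b,m)$ is linked to its parent only upwards, since facts flow from $b$ to its parent through rule \eqref{math:inference-rule:return} but never downwards. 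Hence the facts derivable about $b$ are exactly $\{B'(b,m+k)\mid \T\models B\sqs\Next^k B'\}$, independently of the rest of the KB.

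This allows me to avoid materialising nulls. I would build $G_\T$ in polynomial time (Theorem~\ref{thm:conjunctive-grammars:tbox-to-grammar}), turning each test $\T\models B\sqs\Next^k B'$ into a membership test $c^k\in L_{G_\T}(\Nmc_{BB'})$ that is solvable in polynomial time by Theorem~\ref{thm:membership-testing}. I would then saturate the set of facts $\{A'(a',t)\mid a'\text{ named},\ t\in[t_{\min},n]\}$ by exhaustively applying rules \eqref{math:inference-rule:shift}--\eqref{math:inference-rule:return} among named individuals (using the role assertions of \A, rigid ones being available at every time), and, for each $A''\sqs\exists r.B$ applicable to a derived fact $A''(a',m)$, adding the summarised effect of the corresponding null: if $r\in\RNrig(\T)$ then $A(a',t)$ for every $\exists r.B'\sqs A\in\T$ and $t\ge m$ with $\T\models B\sqs\Next^{t-m}B'$, and if $r\in\RNloc(\T)$ the same only for $t=m$. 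The saturated set contains polynomially many facts and each addition is justified by a polynomial test, so a fixpoint is reached in polynomial time, and I would answer yes iff $A(a,n)$ is produced. For data complexity \T and hence $G_\T$ are fixed, so the membership tests are polynomial in $n$ by the fixed-grammar version of Theorem~\ref{thm:membership-testing}, and the procedure runs in time polynomial in $|\A|+|A(a,n)|$.

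\textbf{Lower bound (\PTime-hardness).} Here I would note that hardness already holds for the atemporal fragment: fixing all timestamps to $0$ and using only inclusions $A\sqcap A'\sqs B$, $\exists r.A\sqs B$, $A\sqs\exists r.B$ makes a \TELnf-KB coincide with an ordinary \EL-KB, whose models at time $0$ coincide with \EL-models. Since atomic query answering in \EL is \PTime-complete in combined complexity and, for a fixed TBox, \PTime-hard in data complexity, both hardness bounds transfer directly.

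\textbf{Main obstacle.} The delicate point is the upper bound, in particular justifying that the effect of a whole null-subtree on its root is captured by the single-individual entailments $\T\models B\sqs\Next^k B'$. This relies on the forest shape of the nulls and on the strictly forward (or same-time) flow of \TELnf derivations, which jointly guarantee that nulls reason independently of their surroundings and that only the polynomially many time points of $[t_{\min},n]$ matter; the rigid/local distinction in the summarisation step also needs to be treated carefully.
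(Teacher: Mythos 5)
Your proposal is correct, and both bounds land where the paper's do, but your upper-bound argument takes a genuinely different route. The paper compiles the \emph{whole ABox into the TBox}: it introduces time-indexed concept names $A_k$ (for $l\leqslant k\leqslant m+1$), per-pair rigid role names $\role{ab}$, and a TBox $\T'\cup\T_\A$ such that $(\T,\A)\models A(a,n)$ iff $(\T'\cup\T_\A,\{C_a(a,l)\})\models A_n(a,n)$; this reduces TAQA to a single TBox entailment, to which Theorem~\ref{thm:conjunctive-grammars:tbox-to-grammar} and one membership test (Theorem~\ref{thm:membership-testing}) apply directly, with all the correctness burden concentrated in one compilation lemma. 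You instead keep the ABox and run an \EL-style completion over named individuals and the polynomially many time points in $[t_{\min},n]$, summarising each anonymous subtree by the entailments $\T\models B\sqs\Next^k B'$, each decided by a membership test in $G_\T$ built for \T alone. The load-bearing step you flag yourself --- that a null $b$ seeded with $B(b,m)$ derives exactly $\{B'(b,m+k)\mid \T\models B\sqs\Next^k B'\}$ independently of its surroundings --- does hold: rule~\eqref{math:inference-rule:return} only pushes facts to the \emph{first} argument of a role atom, so information flows from a null to its parent but never downwards, and Proposition~\ref{prop:derivationTEL} identifies the subtree's contribution with the stated entailments; your rigid/local case split in the summarisation is also the right one. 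What each approach buys: the paper's reduction reuses the already-proved grammar correspondence wholesale and needs no separate saturation argument, while yours avoids the indexed-concept and per-pair-role machinery and is closer to a practical completion algorithm, at the cost of a normalisation argument for derivations (forest shape of nulls, forward-only flow) that your sketch asserts rather than proves. The paper itself notes a third, grammar-free route via temporally acyclic TBoxes, so the multiplicity of proofs here is expected; your lower bound is identical to the paper's.
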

\begin{proof}[Proof sketch]
	The lower bounds hold already for the description logic \EL (without temporal operators) \citep{DBLP:journals/ai/CalvaneseGLLR13}. 
	For the upper bounds, we provide a polynomial reduction from the problem of deciding whether $(\T, \A) \mdl A(a, n)$ to that of checking whether a word belongs to the language of a conjunctive grammar, which can be tested in polynomial time (Theorem~\ref{thm:membership-testing}). Our reduction builds a \TELnf-TBox $\T'\cup\T_\A$, an assertion $C_a(a, l)$ and a concept name $A_n$ such that $(\T, \A) \mdl A(a, n)$ iff $(\T' \cup \T_\A, \{C_a(a, l)\}) \mdl A_n(a, n)$, then use Proposition \ref{prop:derivationTEL} and Theorem \ref{thm:conjunctive-grammars:tbox-to-grammar} to conclude. The idea is to encode all information about $a$ in $\A$ into $C_a(a, l)$ thanks to $\T_\A$. 
	
Let  $\ind(\A)$ be the set of individual names that occur in \A, and $l, m \in \Z$ be the least and the greatest timestamps appearing in \A. 
	We introduce fresh concept names $\{C_a \mid a \in \ind(\A)\}$ and $\{A_k \mid A \in \CN(\T), l \leqslant k \leqslant m + 1\}$, and role names $\{\role{ab} \in \RNrig \mid  a, b \in \ind(\A), {r\in\NR(\T)}\}$. For the convenience of notation, we write $A_k$ for all $k \geqslant l$, assuming that $A_k = A_{m + 1}$ when $k > m$. The TBox $\T'$ contains the following inclusions, for all $k\in\{l,\dots, m+1\}$.
	\begin{align}
			&A_k \sqs \Next^s B_{k + s} &&\text{for } A \sqs \Next^s B \in \T
			\label{math:ci:abox-to-tbox:shift}\\
			&A_k \sqcap A'_k \sqs B_k &&\text{for } A \sqcap A' \sqs B \in \T
			\label{math:ci:abox-to-tbox:conjunction}\\
			&A_k \sqs \exists r . B_{k} &&\text{for } A \sqs \exists r . B \in \T
			\label{math:ci:abox-to-tbox:existential:null}\\
			&\exists r . A_k \sqs B_{k} &&\text{for } \exists r . A \sqs B \in \T
			\label{math:ci:abox-to-tbox:return:null}
	\end{align}
	Additionally, the TBox $\T_\A$ contains the following inclusions.
	\begin{align}
			&C_a \sqs \Next^{k - l} A_k &&\text{for } A(a, k) \in \A
			\label{math:ci:abox-to-tbox:mark}\\
			&C_a \sqs \exists \role{ab}\, .\, C_{b} &&\text{for } r(a, b, \ell) \in \A
			\label{math:ci:abox-to-tbox:existential:abox}\\
			&\exists \role{ab}\, .\, A_k \sqs B_k &&\text{for } \exists r . A \sqs B \in \T, r(a, b, \ell) \in \A,
			\label{math:ci:abox-to-tbox:return:abox}\\
			&\ \notag 														&&\text{where } r \in \RNrig \text{ or } \ell = k.
 	\end{align}
	Both $\T'$ and $\T_\A$ can be constructed in polynomial time w.r.t.~$|\T|+|\A|$ and are expressed in \TELnf (since \T is a \TELnf-TBox and $k-l\geqslant 0$ for every $A(a, k) \in \A$ by definition of $l$) and we show that $(\T, \A) \mdl A(a, n)$ iff $(\T' \cup \T_\A, \{C_a(a, l)\}) \mdl A_n(a, n)$. 
\end{proof}
		
One could prove Theorem~\ref{thm:consequences:future-answering} without using grammars. Given $(\T, \A)$, 
and $A(a, n)$, construct $(\T_n, \A')$ with $\T_n$ defined as $\T'$ above except that we use {$k\in\{l,\dots, n\}$} in the concept inclusions, and {$\A'=\{A_k(b, k) \mid A(b, k) \in \A \} \cup \{ r(b, c, k) \in \A\}$}. Using the inability of \TELnf to reason backwards, one can show that $(\T, \A) \mdl A(a, n)$ iff $(\T_n, \A') \mdl A_n(a, n)$. 
Then, $\T_n$ is \emph{temporally acyclic}, in the terminology of \citet{Basulto-et-al:TEL}, who prove that TAQA with such TBoxes is in \PTime, even with rigid concept names (this is not a contradiction with Theorem~\ref{thm:consequences:undecidability}, since $\T_n$ is constructed for a fixed $n$, while in the proof of Theorem~\ref{thm:consequences:undecidability} rigid concept names are used to simulate an existential query of the form $\exists n . A(a, n)$). 
On the other hand, the technique we present here allows us to reuse existing algorithms and tools for conjunctive grammars: a parser generator \whalecalf \citep{Okhotin:Whale-Calf} and an efficient parsing method tailored to \e{unary} conjunctive grammars \citep{Okhotin-Reitwiessner:Parsing-Unary-Grammars}.

We also obtain positive results for the linear fragment. In the next theorem, we use the following measure for the ``size'' of semilinear sets. 
	Let $\|u\| = |u_1| + \dots + |u_n|$ for $\vec{u} \in \Z^n$. We define $\|\Lmc\|$ as the least number $\|\vec{b}\| + \|\vec p_1\| + \dots + \|\vec p_l\|$ among all representations of a linear set \Lmc by an offset $\vec{b}$ and periods $\vec p_1,\dots \vec p_l$, and $\|S\|$ as the least sum $\|\Lmc_1\| + \dots + \|\Lmc_m\|$ among all representations of a semilinear set $S$ as a union of linear sets. For an ultimately periodic TBox \T, we set $\|\T\|=\max_{A, B \in \CN(\T)}(\|\{n \in \Z \mid \T \mdl A \sqs \Next^n B\}\|)$.

\begin{restatable}{theorem}{ThmLinearUP}\label{thm:consequences:linear:complexity}
	The following statements hold.
	\begin{enumerate}[label=(\roman*)]
		\item Every \TELnc-TBox \T is ultimately periodic, $\|\T\| \leqslant 2^\poly{|\T|}$. \label{part:linear:up}
		
		\item TAQA with \TELnc-TBoxes is \NLogSpace-complete, for data complexity. \label{part:linear:data}
		
		\item TAQA with \TELnc-TBoxes without local role names is in \ExpSpace, for combined complexity.
		\label{part:linear:combined}
	\end{enumerate}	
\end{restatable}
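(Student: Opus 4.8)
The plan is to derive all three parts from the correspondence with context-free grammars (Theorem~\ref{thm:linear:tbox-to-grammar}) together with Parikh's theorem (Theorem~\ref{thm:parikh}).

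\emph{Part~\ref{part:linear:up}.} Fix $A, B \in \CN(\T)$ and let $\Nmc_{AB}$ be the nonterminal provided by Theorem~\ref{thm:linear:tbox-to-grammar}. That theorem says precisely that $S_{AB} := \{n \in \Z \mid \T \mdl A \sqs \Next^n B\}$ is the image of $p(L_{\Gamma_\T}(\Nmc_{AB})) \sbs \N^2$ under the linear map $\pi\colon (x,y)\mapsto x-y$. By Parikh's theorem $p(L_{\Gamma_\T}(\Nmc_{AB}))$ is semilinear, and since semilinear sets are closed under linear maps, so is $S_{AB}$; hence \T is ultimately periodic. For the bound I would invoke an effective version of Parikh's theorem for context-free grammars, which yields a semilinear representation of $p(L_{\Gamma_\T}(\Nmc_{AB}))$ whose offsets and periods have norm $2^{\poly{|\Gamma_\T|}}$. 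As $|\Gamma_\T|$ is polynomial in $|\T|$ and $\pi$ does not increase the $1$-norm ($\|\pi(\vec u)\|\leqslant\|\vec u\|$), the induced representation of $S_{AB}$ has norm $2^{\poly{|\T|}}$, giving $\|\T\|\leqslant 2^{\poly{|\T|}}$.

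\emph{Part~\ref{part:linear:data}.} \NLogSpace-hardness holds already for a fixed TBox: encode a directed graph by facts $r(u,v,0)$ and a sink by $A(t,0)$; the single inclusion $\exists r.A\sqs A$ propagates $A$ backwards along edges, so $(\T,\A)\mdl A(s,0)$ iff $t$ is reachable from $s$. For the upper bound I would reduce TAQA to reachability in a polynomially-sized graph. As \T is fixed, by Part~\ref{part:linear:up} each $S_{AB}$ is a fixed semilinear, hence ultimately periodic, subset of $\Z$ with a constant period. Nodes are triples $(a,A,t)$ with $a\in\ind(\A)$, $A\in\CN(\T)$ and $t$ a relevant timestamp; edges encode within-individual shifts $(a,A,t)\to(a,B,t')$ whenever $t'-t\in S_{AB}$ (this already subsumes reasoning through fresh nulls), and between-individual returns $(b,A,t)\to(a,B,t)$ for each $\exists r.A\sqs B\in\T$ and $r(a,b,\ell)\in\A$, subject to $t=\ell$ when $r$ is local and arbitrary $t$ when $r$ is rigid. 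With timestamps in unary, it then remains to argue that a polynomial set of relevant timestamps suffices, after which reachability is decidable in \NLogSpace.

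\emph{Part~\ref{part:linear:combined}.} Here I would first fold the ABox into the TBox. Exploiting that all roles are rigid, I construct a \TELnc-TBox $\T^{\ast}$ with only rigid roles, of size polynomial in $|\T|+|\A|$, built from an individual-tagged copy $\T^{(a)}$ of \T for each $a\in\ind(\A)$ (handling within-individual and null reasoning on tagged concept and role names), a between-individual inclusion $A^{(b)}\sqs B^{(a)}$ for every $\exists r.A\sqs B\in\T$ and $r(a,b,\ell)\in\A$ (sound because $r$ is rigid, so a return may occur at any time), and inclusions $Z\sqs C_a$ and $C_a\sqs\Next^k A^{(a)}$ that launch all facts $A(a,k)\in\A$ from a common source $Z$. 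Matching derivations, and using that existential rules always create fresh nulls while linearity forbids cross-branch merging, one shows $(\T,\A)\mdl A(a,n)$ iff $\T^{\ast}\mdl Z\sqs\Next^n A^{(a)}$. Since $\T^{\ast}$ uses only rigid roles, the construction of $\Gamma_{\T^{\ast}}$ in Definition~\ref{def:linear:GammaT} is effective and polynomial, reducing the problem to whether some $w\in L_{\Gamma_{\T^{\ast}}}(\Nmc_{Z,A^{(a)}})$ satisfies $\#c(w)-\#d(w)=n$. By Part~\ref{part:linear:up} applied to $\T^{\ast}$ this is membership of $n$ in a semilinear set of norm $2^{\poly{|\T|+|\A|}}$, which can be computed and tested in \ExpSpace.

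The main obstacle I expect is the upper bound of Part~\ref{part:linear:data}: bounding the relevant timestamps to a polynomial set, i.e.\ showing that excursions to timestamps far outside the interval spanned by \A and the query time $n$ are never needed. This is exactly where the ultimate periodicity of Part~\ref{part:linear:up}, with its constant period for a fixed TBox, becomes essential, through a pumping argument on the within-individual shift sets $S_{AB}$; the soundness of the $\T^{\ast}$ reduction in Part~\ref{part:linear:combined} is comparatively routine once linearity and role rigidity are in place.
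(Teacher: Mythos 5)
Your part \ref{part:linear:up} is essentially the paper's argument: Theorem~\ref{thm:linear:tbox-to-grammar}, Parikh's theorem, closure of semilinear sets under the map $(x,y)\mapsto x-y$, and the effective bounds of Esparza et al.\ on Parikh images of context-free languages. Parts \ref{part:linear:data} and \ref{part:linear:combined}, however, diverge from the paper --- which translates \T into a \emph{linear} \dlnd program of size $2^{\poly{|\T|}}$ whose rules encode the offsets and periods of the sets $\{n\mid\T\mdl A\sqs\Next^n B\}$ and simulate rigid roles with $\Df/\Dp$, and then invokes the known \NLogSpace (data) and \PSpace (combined) bounds for linear \dlnd --- and both contain genuine gaps.

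The ABox-folding in part \ref{part:linear:combined} is unsound as stated: $A^{(b)}\sqs B^{(a)}$ is an unguarded concept inclusion, so it fires on \emph{every} element satisfying $A^{(b)}$, in particular on anonymous nulls created by tagged existentials, where the ABox assertion $r(a,b,\ell)$ that justified it is irrelevant. Concretely, take $\T=\{A\sqs\exists r.B,\ \exists s.B\sqs C,\ \exists t.C\sqs F,\ \exists r.F\sqs G\}$ with all roles rigid and $\A=\{A(a,0),\ s(c,a,0),\ t(a,c,0)\}$. Then $(\T,\A)\not\mdl G(a,0)$ (no element ever satisfies $C$), but in $\T^{\ast}$ the null $d$ created by $A^{(a)}\sqs\exists r^{(a)}.B^{(a)}$ satisfies $B^{(a)}$, hence $C^{(c)}$ via $B^{(a)}\sqs C^{(c)}$, hence $F^{(a)}$ via $C^{(c)}\sqs F^{(a)}$, and the return $\exists r^{(a)}.F^{(a)}\sqs G^{(a)}$ yields $G^{(a)}$ at the root, i.e.\ $\T^{\ast}\mdl Z\sqs G^{(a)}$. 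The repair is to keep distinct individuals on distinct domain elements and guard the returns by fresh rigid roles ($C_a\sqs\exists \rho^{r}_{ab}.C_b$ and $\exists \rho^{r}_{ab}.A^{(b)}\sqs B^{(a)}$), as the paper does for \TELnf in the proof of Theorem~\ref{thm:consequences:future-answering}. Separately, the upper bound in part \ref{part:linear:data} rests entirely on the polynomial bound on relevant timestamps, which you correctly identify as the crux but leave unproved; this is exactly the step the paper outsources to the linear-\dlnd literature, where the timestamp-localization lemma is already established. So as written, (ii) and (iii) are incomplete/incorrect, even though reducing to reachability, respectively to membership in an exponential-norm semilinear set, could be made to work once these two points are fixed.
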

\begin{proof}[Proof sketch] 
For \e{\ref{part:linear:up}}, let \T be a \TELnc-TBox. By Theorem \ref{thm:linear:tbox-to-grammar}, there exists a context-free grammar $\Gamma_\T=(N,\{c,d\},R)$ such that for any $A, B \in \CN(\T)$, there is $\Nmc_{AB}\in N$ such that $\Tmc\models A\sqsubseteq \Next^n B$ iff there exists $w\in L_{\Gamma_\T}(\Nmc_{AB})$ with $\#c(w) - \#d(w)=n$. 
Let $L = L_{\Gamma_\T}(\Nmc_{AB})$. By Theorem \ref{thm:parikh}, since $\Gamma_\T$ is context-free, the Parikh image $p(L)\sbs \N^2$ of $L$ (with alphabet ordered as $c, d$) is semilinear. 
Hence, for every $A, B \in \CN(\T)$, $\{n \in \Z \mid \T \mdl A \sqs \Next^n B\} = \{n \in \Z \mid n = \#c(w) - \#d(w), w \in L\} = \{u_1 - u_2 \mid \vec{u} \in p(L)\}$ is semilinear, since it is the image of the semilinear set $p(L)$ under the linear mapping $\vec{u} \mapsto u_1 - u_2$. For the bound on $\|\T\|$, we use the methods introduced by \citet{Esparza:Parikh-Automaton} to establish that $\|p(L)\| = 2^\poly{|\Gamma_\T|}$, and further observe that $\|\{u_1 - u_2 \mid \vec{u} \in p(L)\}\| \leqslant 2\|p(L)\|$.
	
	The lower bound in \e{\ref{part:linear:data}} follows from the atemporal case \cite{DBLP:journals/ai/CalvaneseGLLR13}. 
	For the upper bounds, both in \e{\ref{part:linear:data}} and \e{\ref{part:linear:combined}}, we provide a translation of \TELnc-TBoxes to programs of linear \dlnd that extends linear Datalog with operators \Next/\Prev and \Df/\Dp 
	(cf. proof of \citep[Th. 5]{Basulto-et-al:TEL}).
		
	Given a \TELnc TBox \T, let $\Phi_\T$ be a (linear) \dlnd program defined as follows. For each $\exists r.A \sqs B \in \T$, it contains 
	$ B(x) \impd r(x, y), A(y)$, and, if $r \in \RNrig$, also $B(x) \impd \D r(x, y), A(y)$ and $B(x) \impd \Dp r(x, y), A(y)$. 
		If $r(a, b, k) \in \A$ for $r \in \RNrig$, the facts $r(a, b, m)$, $m \in \Z$, are not derived explicitly, but are simulated by the latter two rules.
		Further, for every $A, B  \in \CN(\T)$, we take a representation of the semilinear set $\{n \in \Z \mid \T \mdl A \sqs \Next^n B\}$ as a union of linear sets $\Lmc_1 \cup \dots \cup \Lmc_m$, $\Lmc_i = \{b^i + k_1 p^{i}_{1} + \dots k_l p^{i}_{l} \mid k_1, \dots, k_l \in \N\}$.
		Then, $\Phi_\T$ contains the rules $F_i^{AB}(x) \impd \Next^{-b^i} A(x)$, $F_i^{AB}(x) \impd \Next^{-p^i_j} F^{AB}_i(x)$, $B(x) \impd F_i^{AB}(x)$, for all $i \in \{1, \dots, m\}$.
		Thus, if $A(a, k) \in \A$, the fact $F_i^{AB}(a, m)$, and therefore $B(a, m)$ is derived for all $m$ such that $m - k \in \Lmc_i$. It can be shown that $(\T, \A) \mdl A(a, n)$ iff $(\Phi_\T, \A) \mdl A(a, n)$ and that $|\Phi_\T| \leqslant 2^\poly{|\T|}$.
		
		Moreover, if $\RN(\T) \sbs \RNrig$, $\Phi_\T$ can be built from $\T$ using the automata-theoretic construction of \citet{Esparza:Parikh-Automaton}.  Points \e{(ii)} and \e{(iii)} then follow from the results on \dlnd \citep{Artale:Linear-Temporal-Datalog}.
\end{proof}

\begin{remark}
Following \citet{Basulto-et-al:TEL}, we assume a unary encoding of numbers. 
If sequences $\Next\cdots\Next$ were written as $\Next^n$ with $n$ encoded in binary, translating a TBox into a grammar would cause an exponential blow-up (since $c^n$ is actually a word of length $n$), making our algorithms exponentially slower w.r.t.~$|\Tmc|$.
\end{remark}


\section{Discussion}\label{sec:conslusions}

Connections between temporal logics, such as \LTL, and formal languages (in particular, regular languages) are well-known \citep{Vardi:Automata-for-LTL, Demri-Goranko-Lange:Temporal-Logics}. This paper makes new ones, between the fragments \TELnc and \TELnf of the temporal description logic \TELn, on the one side, and context-free and unary conjunctive languages, on the other. Using these connections, we obtain several important results on \TELn, both negative and positive, from the formal language theory.

In particular, \TELnf-TBoxes are in a one-to-one correspondence with unary conjunctive grammars (Theorems \ref{thm:conjunctive-grammars:tbox-to-grammar} and \ref{thm:conjunctive-grammars:grammar-to-tbox}). 
Therefore, \TELnf is not ultimately periodic (Theorem \ref{thm:consequences:not-ultimately-periodic}), which is arguably unexpected, as its temporal component, \LTL, is ultimately periodic \citep{Sistla-and-Clarke:LTL-Ultimately-Periodic}, and its DL component, \EL, is such that every KB possesses a canonical model which has, informally speaking, a regular structure \cite{DBLP:conf/rweb/KontchakovZ14}. Moreover, TAQA with \TELn-TBoxes is undecidable (Theorem \ref{thm:consequences:undecidability}). On the other hand, the same correspondence allows us to use parsing algorithms for conjunctive grammars as tools for TAQA with \TELnf-TBoxes, leading to a drastic decrease of complexity, from undecidability to polynomial time, owed to a mere removal of the temporal operator $\Next^{-}$ (previous). 
Despite the partial undecidability result of Theorem~\ref{thm:consequences:not-ultimately-periodic}, it remains open if ultimate periodicity of \TELn-TBoxes is decidable, since for the corresponding problem---given a unary conjunctive grammar tell if all its nonterminals generate regular languages---no result is known.

The linear fragment, \TELnc, is connected to context-free grammars (Theorem \ref{thm:linear:tbox-to-grammar}). As a result, it is ultimately periodic, and enjoys considerably low data complexity of query answering (Theorem \ref{thm:consequences:linear:complexity}).

Language theorists may find interesting that every unary conjunctive grammar is transformable, in polynomial time, to the ``normal form'' of Definition \ref{def:conjuntive-grammars:GT}, by applying first Theorem \ref{thm:conjunctive-grammars:grammar-to-tbox}, then Theorem \ref{thm:conjunctive-grammars:tbox-to-grammar}. 
Moreover, the grammars that correspond to 
{``pure \LTL'' TBoxes (i.e., do not have any rule of form \eqref{math:grammar:from-tbox:down-and-up} in this normal form)} are guaranteed to generate regular languages. This is, to the best of our knowledge, the first nontrivial sufficient condition for this property.

On the more theoretical side, it is possible that the TBox-grammar correspondence can be lifted to more expressive temporal description logics and more general classes of formal grammars (e.g. Boolean grammars \citep{Okhotin:Conjunctive-and-Boolean-Grammars-Survey}). 
On the applications side, we hope to employ this correspondence to develop a practical reasoner for \TELnf.

\begin{ack}
Many thanks to Roman Kontchakov for introducing us to the problem of ultimate periodicity from the perspective of semilinear sets. We would also like to thank Alexander Okhotin for providing us with the code of \whalecalf and valuable suggestions on the implementation.
\end{ack}

\bibliography{bibliography}

\appendix

\section{Two definitions of ultimate periodicity}\label{app:prelim:quasimodels}

\citet{Basulto-et-al:TEL} define ultimate periodicity using  quasimodels. We show here that every \TELn-TBox $\T$ ultimately periodic under this definition is also ultimately periodic as defined in Section~\ref{sec:preliminaries}, and vice versa. First, we recall the notion of quasimodels. The following few paragraphs quote almost verbatim \citet{Basulto-et-al:TEL} (the difference is that they use concept inclusions of the form $\Next^k A \sqs B$, $k \in \{-1, 0, 1\}$, while we use $A \sqs \Next^k B$, $k \in \Z$; it is straightforward that these forms are equivalent in terms of expressive power).

Fix a KB $(\T, \A)$ with a \TELn-TBox \T, and let $\CN(\T, \A)$ be the set of concept names used in $(\T, \A)$. A map
$\pi \colon \Z \to 2^{\CN(\T, \A)}$ is a trace for \T if it satisfies the following:
\begin{itemize}
	\item[\textbf{(t1)}] if $A \sqcap A' \sqs B \in \T$ and $A, A' \in \pi(n)$, then $B \in \pi(n)$;
	\item[\textbf{(t2)}] if $A \sqcap \Next^k B\in\T$ and $A \in \pi(n)$, then $B \in \pi(n + k)$. 
\end{itemize}
Let $\pi$ be a trace for \T. For a rigid role name $r \in \RNrig$ the $r$-projection of $\pi$ is a map $\proj_r(\pi) \colon \Z \to 2^{\CN(\T, \A)}$ that sends each $i \in \Z$ to $\{A \mid \exists r . B \sqs A \in \T, B \in \pi(i)\}$. For a local role name  $r \in \RNloc$, $\proj_r(\pi)$ is defined in the same way
on 0 but is $\emptyset$ for all other $i \in \Z$. Given a map $\rho \colon \Z \to 2^{\CN(\T, \A)}$ and
$n \in \Z$, we say that $\pi$ contains the $n$-shift of $\rho$ and write
$\rho \sbs^n \pi$ if $\rho(i - n) \subseteq \pi(i)$, for all $i \in \Z$. 

We now define quasimodels.
Let $D = \ind(\A) \cup \CN(\T, \A)$. A \e{quasimodel} $\Qmf$ for
$(\T, \A)$ is a set $\{\pi_d \mid d \in D\}$ of traces for \T such that
\begin{itemize}
	\item[\textbf{(q1)}] $A \in \pi_a(n)$, for all $A(a, n) \in \A$; 
	\item[\textbf{(q2)}] $B \in \pi_B(0)$, for all $B \in \CN(\T, \A)$; 
	\item[\textbf{(q3)}] $A \in \pi_a(n)$, for all $B \in \pi_b(n)$, $r(a, b, n) \in \A$, and $ \exists r . B \sqsubseteq A \in \T$;
	\item[\textbf{(q3')}] $\proj_r(\pi_b) \sbs^0 \pi_a$, for all $r(a, b, n) \in \A$, $r \in \RNrig$;
	\item[\textbf{(q4)}] if $A \in \pi_d(n)$, then $\proj_r(\pi_B) \sbs^n \pi_d$, for all $d \in D$, $n \in \Z$ and $A \sqs \exists r . B \in \T$.
\end{itemize}
{Compared to the original \citep{Basulto-et-al:TEL}, we split the point \textbf{(q3)} into two versions, treating local and rigid roles, fixing a small glitch in the original definition after a discussion with the authors. This does not affect further results.}

Intuitively, $\pi_a$
represents $a \in \ind(\A)$; and $\pi_B$ represents all elements that witness $B$ for $A \sqs \exists r . B \in \T$. 
For the purposes of TAQA, canonical quasimodels are defined. The \e{canonical
quasimodel} is the limit of the following saturation procedure. Start with initially empty maps $\pi_d$, for $d \in D$, and
apply \textbf{(t1)}-\textbf{(t2)}, \textbf{(q1)}-\textbf{(q4)} as rules: e.g., \textbf{(q3')} says ``if
$r(a, b, n) \in \A$, for $r \in \RNrig$, and $A \in \proj_r(\pi_b)(i)$, then add $A$ to $\pi_a(i)$''.

\begin{apptheorem}[\citet{Basulto-et-al:TEL}] \label{thm:app:qm}
	Let $\Qmf = \{\pi_d \mid d \in D\}$ be the canonical quasimodel of $(\T, \A)$ where $\T$ is a \TELn-TBox. Then, for any $A \in \CN$, $(\T, \A) \mdl A(a, n)$ iff $A \in \pi_a(n)$, for $a \in \ind(\A), n \in \Z$.
\end{apptheorem}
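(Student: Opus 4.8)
The plan is to reduce the statement to the syntactic derivation system of Proposition~\ref{prop:derivationTEL}. Since $(\T,\A)\mdl A(a,n)$ iff $(\T,\A)\vdash A(a,n)$, it suffices to prove that in the canonical quasimodel $A\in\pi_a(n)$ iff $(\T,\A)\vdash A(a,n)$, for every $a\in\ind(\A)$. The conceptual core is to give a uniform meaning to the two kinds of traces: the individual traces $\pi_a$ (for $a\in\ind(\A)$) should track exactly the facts about $a$ entailed by the whole KB, whereas each witness trace $\pi_B$ should track the purely TBox-level consequences of a fresh element of type $B$. Concretely, I would isolate as an auxiliary invariant the equivalence $C\in\pi_B(k)$ iff $\T\mdl B\sqs\Next^k C$, and prove it together with the main claim, since the two are mutually recursive: individual traces consult witness traces through rule~\textbf{(q4)}, and witness traces consult one another in the same way (indeed, $\pi_B$ evolves only through \textbf{(t1)}, \textbf{(t2)}, \textbf{(q2)}, \textbf{(q4)}, which mirror exactly a TBox-only derivation started from $B$).

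For the direction $A\in\pi_a(n)\Rightarrow(\T,\A)\vdash A(a,n)$, I would induct on the stage of the monotone saturation procedure at which a proposition is first added, checking that every saturation rule is matched by a derivation step. Rules \textbf{(t1)} and \textbf{(t2)} mirror the inference rules \eqref{math:inference-rule:conjunction} and \eqref{math:inference-rule:shift}; \textbf{(q1)} places an ABox fact verbatim; \textbf{(q2)} is the trivial entailment $\T\mdl B\sqs B$; and \textbf{(q3)}, \textbf{(q3')} reproduce the return rule \eqref{math:inference-rule:return} over an ABox role assertion, using the rigidity rule \eqref{math:inference-rule:rigid} to justify the all-times feedback of \textbf{(q3')}. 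The key case is \textbf{(q4)}: if $A\in\pi_d(n)$ and $A\sqs\exists r.B\in\T$, I would first invoke \eqref{math:inference-rule:existential} to create a fresh witness of type $B$ at time $n$, then use the auxiliary invariant on $\pi_B$ to populate that witness, and finally feed any $Y\in\pi_B(i-n)$ back through \eqref{math:inference-rule:return} together with $\exists r.Y\sqs X\in\T$ to obtain $X(d,i)$; rigidity of $r$ supplies the role fact at every time $i$, while a local $r$ is available only at its creation time, which is exactly where $\proj_r$ is nonempty.

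For the converse $(\T,\A)\vdash A(a,n)\Rightarrow A\in\pi_a(n)$, I would induct on the length of a derivation, exploiting that named nulls are fresh at each application of \eqref{math:inference-rule:existential}, so each null has a unique birth concept $B$ and birth time $m$. The invariant maintained is that every derived fact $C(a,k)$ about an individual satisfies $C\in\pi_a(k)$, and every derived fact $C(b,k)$ about a null born as $B$ at time $m$ satisfies $C\in\pi_B(k-m)$ (equivalently $\T\mdl B\sqs\Next^{k-m}C$). Each inference rule then preserves the invariant by applying the matching saturation rule: \eqref{math:inference-rule:shift} and \eqref{math:inference-rule:conjunction} use \textbf{(t2)} and \textbf{(t1)}; \eqref{math:inference-rule:existential} triggers \textbf{(q4)} via the parent membership supplied by the induction hypothesis; and the return rule \eqref{math:inference-rule:return} is absorbed into the projection of \textbf{(q4)} (or of \textbf{(q3)}, \textbf{(q3')} when the role comes from the ABox).

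I expect the main obstacle to be the bookkeeping around witness traces and the time-behaviour of roles. Three points require care. First, one must observe that after a null is born, no derivation rule ever pushes a new fact from its parent down to it — only \eqref{math:inference-rule:return} sends information upward — so the facts provable about the null depend solely on its birth pair $(B,m)$, which is what legitimises the single representative trace $\pi_B$. Second, the mutual recursion between $\pi_a$ and the family $\{\pi_B\}$ is resolved as a least fixpoint over the infinite index set $\Z$, and one has to argue that the inductions on derivation length and on saturation stage are compatible with it. Third, and most delicately, rigid and local roles must be treated uniformly: a rigid role admits feedback through \eqref{math:inference-rule:rigid} at every time point, whereas a local role feeds back only at its creation time, and this asymmetry is precisely what is hard-wired into $\proj_r$ and into rules \textbf{(q3)}, \textbf{(q3')}, \textbf{(q4)}.
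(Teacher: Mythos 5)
This theorem is imported verbatim from \citet{Basulto-et-al:TEL}: the paper states it with a citation and gives no proof, so there is no in-paper argument to compare yours against. Judged on its own, your reconstruction is sound and identifies the right ingredients. Your auxiliary invariant ($C\in\pi_B(k)$ iff $\T\mdl B\sqs\Next^k C$) is precisely the content of the paper's Lemma~\ref{lm:app:qm}, which the paper derives \emph{from} this theorem by noting $\pi_A=\pi_a$ in the canonical quasimodel of $(\T,\{A(a,0)\})$; proving it jointly with the main claim, as you do, is the natural way to make the argument self-contained. The three ``delicate points'' you flag are indeed the crux: the one-directional information flow (rule~\eqref{math:inference-rule:return} only pushes facts from a null up to its parent, so a null's derivable facts depend only on its birth pair), and the local/rigid asymmetry hard-wired into $\proj_r$ and \textbf{(q3)}/\textbf{(q3')}/\textbf{(q4)}, which matches exactly the availability of the role fact at all times (via rule~\eqref{math:inference-rule:rigid}) versus only at creation time. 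One detail worth tightening: as written, \textbf{(q3')} and \textbf{(q4)} add infinitely many propositions in a single step (the containment $\sbs^n$ ranges over all $i\in\Z$), so ``induction on the saturation stage'' should be read, as the paper does, with each rule adding one proposition $A\in\pi_d(i)$ at a time, so that every proposition of the canonical quasimodel has a finite derivation in the saturation calculus. With that reading fixed, the two mutually recursive inductions you describe go through.
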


Finally, let \T be a \TELn-TBox and \Qmf the canonical quasimodel for $(\T, \emptyset)$. We
say that \T is \textit{ultimately periodic w.r.t. quasimodels}, if for 
each $A \in \CN(\T)$ there are positive integers $m_P, p_P, m_F, p_F \in \N$, such that the following conditions hold for $\pi_d$, with $d = A$.
\begin{align}
	&\pi_d(n - p_P) = \pi_d(n)  &&\text{for all } n \leqslant -m_P
	\label{math:app:prelims:up:past}\\
	&\pi_d(n + p_F) = \pi_d(n) &&\text{for all } n \geqslant m_F
	\label{math:app:prelims:up:future}
\end{align}

Note that by definition, the traces $\pi_A$, $A \in \CN(\T)$, are the same in all canonical quasimodels of $(\T, \A)$, for any \A. We observe the following property.
\begin{applemma}\label{lm:app:qm}
	For any \TELn-TBox \T, $A \in \CN(\T)$, and $n \in \Z$, $\pi_A(n) = \{B \in \CN(\T) \mid \T \mdl A \sqs \Next^n B\}$.
\end{applemma}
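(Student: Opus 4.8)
The plan is to deduce the lemma from Theorem~\ref{thm:app:qm} together with Proposition~\ref{prop:derivationTEL}, by turning the concept name $A$ into an asserted ABox individual. Fix $A \in \CN(\T)$, pick a fresh individual name $a$, and consider the knowledge base $(\T, \{A(a, 0)\})$; note that its ABox contains no role assertions and that $\CN(\T, \{A(a,0)\}) = \CN(\T)$, since $A \in \CN(\T)$ already. I would then compose three equivalences. First, Proposition~\ref{prop:derivationTEL} (taking $k = 0$ in its second item and then applying its first item) gives $\T \mdl A \sqs \Next^n B$ iff $(\T, \{A(a, 0)\}) \mdl B(a, n)$. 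Second, Theorem~\ref{thm:app:qm} gives $(\T, \{A(a, 0)\}) \mdl B(a, n)$ iff $B \in \pi'_a(n)$, where $\pi'_a$ is the trace of the individual $a$ in the canonical quasimodel of $(\T, \{A(a,0)\})$. Third, the bridge $\pi'_a = \pi_A$, which identifies this individual trace with the concept-name trace appearing in the statement.

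The bridge is the crux, and I would establish it at the level of the least fixpoint defining the canonical quasimodel. First I would observe that the concept-name traces form a closed subsystem: the only rules that read ABox role assertions are (q3) and (q3'), and these can affect $\pi_d$ only when $d$ is an ABox individual; hence the traces $\pi_B$, for $B \in \CN(\T)$, do not depend on the ABox, in accordance with the remark preceding the lemma, so they are literally the same object in the quasimodels of $(\T, \emptyset)$ and of $(\T, \{A(a,0)\})$. Next I would compare the recurrences defining $\pi'_a$ and $\pi_A$: both are seeded with $A$ at position $0$ only (the former by (q1) from $A(a,0)$, the latter by (q2)), neither receives any contribution from (q3)/(q3') (there is no ABox role assertion involving $a$, and $A$ is not an ABox individual), and both are closed under (t1), (t2), and (q4). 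Crucially, the only external data that (q4) feeds into them are the projections $\proj_r(\pi_B)$, which reference concept-name traces that are identical in the two quasimodels by the previous observation. Thus $\pi'_a$ and $\pi_A$ satisfy one and the same monotone saturation system with the same external inputs and the same seed, and being least fixpoints they coincide.

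Chaining the three equivalences yields $B \in \pi_A(n)$ iff $\T \mdl A \sqs \Next^n B$ for every $B \in \CN(\T)$, and since $\pi_A(n) \sbs \CN(\T, \emptyset) = \CN(\T)$ by construction, this is exactly the claimed identity $\pi_A(n) = \{B \in \CN(\T) \mid \T \mdl A \sqs \Next^n B\}$. I expect the main obstacle to be the careful justification of the bridge $\pi'_a = \pi_A$: one must argue, working with the simultaneous least fixpoint rather than individual rule applications, that introducing the single seed $A(a,0)$ creates an individual trace that is a faithful copy of the pre-existing concept-name trace of $A$. This hinges precisely on the absence of role assertions in the ABox (so that (q3)/(q3') never fire for $a$) and on the ABox-independence of the concept-name traces that feed (q4).
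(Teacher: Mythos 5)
Your proposal is correct and follows essentially the same route as the paper: the paper's proof also considers the canonical quasimodel of $(\T,\{A(a,0)\})$, observes that $\pi_a=\pi_A$, and concludes via Theorem~\ref{thm:app:qm} and Proposition~\ref{prop:derivationTEL}. Your detailed least-fixpoint justification of the bridge $\pi'_a=\pi_A$ is a careful elaboration of what the paper states as a one-line observation, and it is sound.
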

\begin{proof}
	Suppose $\Qmf = \{\pi_d \mid d \in  \CN(\T) \cup \{a\}\}$ is the canonical quasimodel of $(\T, \{A(a, 0)\})$, and observe further that in this case $\pi_A = \pi_a$. The lemma follows by Theorem \ref{thm:app:qm} and Proposition \ref{prop:derivationTEL}.
\end{proof}

Now, recall from Section \ref{sec:preliminaries}, that we call a \TELn-TBox \T \textit{ultimately periodic (w.r.t. concept inclusions)}, if for every pair $A, B \in \CN(\T)$ the set $\{n \in \Z \mid \T \mdl A \sqs \Next^n B\}$ is semilinear.

Our goal in this section is to prove that the two definitions are equivalent (Proposition \ref{prop:app:def-equivalence} below). First, we recall a useful fact from arithmetic (see Niven, I., Zuckerman, H. S., Montgomery, H. L., ``An Introduction to the Theory of Numbers'' (1991), for details).

\begin{applemma}[Linear Diophantine equations] \label{lm:app:diophante}
	An equation $px + qy = c$, where $p, q, c \in \Z$, has a solution (where x and y are integers) if and only if $c$ is a multiple of the greatest common divisor of $p$ and $q$. Moreover, if $(x, y)$ is a solution, then the other solutions have the form $(x + tp', y - tq')$, $t \in \Z$, and $p'$ and $q'$ are the quotients of $p$ and $q$ (respectively) by the greatest common divisor of $p$ and $q$.
\end{applemma}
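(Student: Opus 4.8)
The plan is to treat this as the classical theory of a single linear Diophantine equation, so I would not reprove it from scratch but reduce it to \emph{B\'ezout's identity} together with a coprimality argument. Write $d = \gcd(p,q)$ and, following the statement, set $p' = p/d$ and $q' = q/d$, so that $\gcd(p', q') = 1$. The two assertions to establish are (a) the solvability criterion $d \mid c$, and (b) the parametrization of the full solution set once a single solution is known. I would first dispatch the degenerate cases $p = 0$ or $q = 0$ (where the equation reduces to divisibility of $c$ by the nonzero coefficient, or to the condition $c = 0$ when both vanish), and then assume $p, q \neq 0$ throughout the main argument.

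For (a), the forward direction is immediate: since $d \mid p$ and $d \mid q$, every integer combination $px + qy$ is a multiple of $d$, so any attainable $c$ satisfies $d \mid c$. For the converse I would invoke B\'ezout's identity, itself a consequence of the extended Euclidean algorithm: there exist $x_0, y_0 \in \Z$ with $p x_0 + q y_0 = d$. If $d \mid c$, say $c = d k$, then $(k x_0, k y_0)$ solves $px + qy = c$, which establishes solvability.

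For (b), I would argue by subtracting solutions. If $(x, y)$ and $(\tilde x, \tilde y)$ both solve the equation, then $p(\tilde x - x) + q(\tilde y - y) = 0$, and dividing through by $d$ yields the homogeneous equation $p'(\tilde x - x) = -\,q'(\tilde y - y)$. Since $\gcd(p', q') = 1$ and $q'$ divides $p'(\tilde x - x)$, it must divide $\tilde x - x$; writing $\tilde x - x = t\,q'$ for some $t \in \Z$ and substituting back gives $\tilde y - y = -\,t\,p'$. Hence the solution set is exactly $\{(x + t\,q',\, y - t\,p') \mid t \in \Z\}$, i.e.\ all solutions differ from a fixed one by integer multiples of the primitive solution $(q', -p')$ of $px + qy = 0$; the converse inclusion is checked in one line. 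The coefficients of $t$ are thus the two quotients $q'$ and $p'$, carrying the signs and roles forced by this computation.

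I do not expect a genuine obstacle, as the result is entirely classical; the only points needing a little care are the coprimality step in (b) — where $\gcd(p', q') = 1$ is precisely what forces $q' \mid (\tilde x - x)$ — and the bookkeeping of the degenerate cases, neither of which is deep. Since the lemma is used downstream only to reason about the period structure of the sets $\{n \in \Z \mid \T \mdl A \sqs \Next^n B\}$, this level of detail — existence of a solution plus the explicit one-parameter description of all of them — is exactly what the subsequent arguments require.
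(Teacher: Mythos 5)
The paper does not actually prove this lemma: it is quoted as a standard arithmetic fact with a pointer to Niven--Zuckerman--Montgomery, so there is no in-paper argument to compare against. Your proof is the classical textbook one --- B\'ezout's identity for the solvability criterion, and coprimality of $p'=p/d$, $q'=q/d$ for the description of the solution set --- and it is correct, including the handling of the degenerate cases $p=0$ or $q=0$ (which is also where the division by $q'$ in step (b) would otherwise fail). One thing your derivation usefully brings out: the homogeneous equation forces $\tilde x - x = t\,q'$ and $\tilde y - y = -t\,p'$, i.e.\ the solution set is $\{(x+tq',\,y-tp')\mid t\in\Z\}$ with the \emph{cross} quotients as the coefficients of $t$, whereas the lemma as printed writes $(x+tp',\,y-tq')$ with $p'$, $q'$ paired to $p$, $q$ ``respectively''; that form does not satisfy $px+qy=c$ in general (one gets $c+t(p^2-q^2)/d$). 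So your proof implicitly corrects a typo in the paper's statement. The swap is harmless downstream: the only place the parametrization is invoked (Lemma~\ref{lm:app:simple}, to shift a B\'ezout solution of $kp'+mq'=\pm1$ to a nonnegative one when $p'$ and $q'$ have opposite signs) only needs that the two coordinates move in directions of opposite sign as $t$ varies, which holds for the correct coefficients $(q',-p')$ just as it would for the misprinted ones.
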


In this section, a \textit{simple} set is a set of the form $\{b + kp \mid k \in \N\}$, where $b, p \in \Z$. One can show that every semilinear set $S \sbs \Z$ is representable as a union of simple sets. For completeness of the presentation, we give a direct proof here.

\begin{applemma}\label{lm:app:simple}
	The following statements hold.
	\begin{enumerate}[label=(\roman*)]
		\item Every set of the form $\{kp + mq \mid k, m \in \N\}$, where $p, q \in \Z$, is a finite union of simple sets.
		
		\item Every semilinear $S \sbs \Z$ is a finite union of simple sets.
	\end{enumerate}
\end{applemma}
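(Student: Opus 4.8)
The plan is to prove \emph{(i)} by a case analysis on the signs of $p$ and $q$, and then to bootstrap \emph{(ii)} from \emph{(i)} by induction on the number of periods of a linear set. The workhorse throughout is Lemma~\ref{lm:app:diophante}: writing $g = \gcd(p,q)$, an integer $c$ is of the form $kp + mq$ with $k,m\in\Z$ if and only if $g \mid c$, and all such integer solutions form a single one-parameter family.

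For \emph{(i)}, set $S = \{kp + mq \mid k,m\in\N\}$. If $p = q = 0$ then $S = \{0\}$ is simple, and if exactly one of $p,q$ vanishes, say $q=0$, then $S = \{kp \mid k\in\N\}$ is already simple; so I may assume $p,q\neq 0$. When $p$ and $q$ have \emph{opposite} signs, the vector $(|q|,|p|)$ has strictly positive entries and satisfies $|q|\,p + |p|\,q = 0$; adding its nonnegative multiples to a solution preserves the value while pushing both coordinates to $+\infty$. Hence every $c$ with $g\mid c$ is realised with $k,m\in\N$, so $S = g\Z = \{kg \mid k\in\N\} \cup \{k(-g) \mid k\in\N\}$, a union of two simple sets. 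When $p,q$ have the \emph{same} sign I reduce to $p,q>0$ (the both-negative case follows by negation, which sends $\{b+kp\mid k\in\N\}$ to the simple set $\{-b + k(-p)\mid k\in\N\}$). Then $S \sbs \{kg \mid k\in\N\}$, and the one-parameter description of solutions (equivalently, the Frobenius/Chicken--McNugget bound) shows that every sufficiently large multiple of $g$ lies in $S$; thus $S$ is the union of finitely many singletons (small multiples of $g$) and one simple tail $\{k_0 g + kg \mid k\in\N\}$.

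For \emph{(ii)}, recall that a semilinear $S\sbs\Z$ is by definition a finite union of linear sets, so it suffices to show that every linear set $\Lmc = \{b + k_1 p_1 + \dots + k_l p_l \mid k_i\in\N\}$ is a finite union of simple sets. Since translating by the constant $b$ maps simple sets to simple sets, I reduce to $b=0$ and induct on $l$. The cases $l\leqslant 1$ are immediate. For the inductive step, the induction hypothesis writes $\{k_1 p_1 + \dots + k_{l-1}p_{l-1}\}$ as a finite union of simple pieces $\{b_i + k q_i \mid k\in\N\}$, and then $\{k_1 p_1 + \dots + k_l p_l\}$ is the union over $i$ of the sets $b_i + \{kq_i + m p_l \mid k,m\in\N\}$; each inner set is a finite union of simple sets by part \emph{(i)}, and the shift by $b_i$ keeps them simple.

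The delicate point I expect to need the most care is the same-sign case of \emph{(i)}: arguing that $S$ is \emph{cofinite} inside $g\N$, i.e.\ that all large enough multiples of $g$ admit a nonnegative representation. This is precisely the numerical-semigroup phenomenon, and I would extract it cleanly from the one-parameter solution family of Lemma~\ref{lm:app:diophante} by checking that, for $c$ large, the window of the parameter giving both coordinates nonnegative is nonempty. The opposite-sign case is conceptually different but easier once one spots the positive zero-value vector $(|q|,|p|)$, which forces $S$ to be all of $g\Z$ rather than a half-line.
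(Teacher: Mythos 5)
Your proof is correct, and part \emph{(ii)} together with the overall case structure of part \emph{(i)} matches the paper's argument (reduce semilinear to linear, strip the offset, induct on the number of periods, and close the induction with the two-period case of \emph{(i)}; in \emph{(i)}, split on the signs of $p$ and $q$, with the opposite-sign case yielding all of $g\Z$). The genuine divergence is in the case $p,q>0$: the paper invokes Parikh's theorem to write $\{kp+mq \mid k,m\in\N\}$ as the Parikh image of a regular unary language and then reads off the simple-set decomposition from the eventual periodicity of the minimal DFA, whereas you argue via the Sylvester--Frobenius (numerical semigroup) phenomenon, extracting from the one-parameter solution family of Lemma~\ref{lm:app:diophante} that every sufficiently large multiple of $\gcd(p,q)$ admits a nonnegative representation, so the set is a finite collection of singletons plus one simple tail. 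Your route is more elementary and self-contained (no automata detour), and the window argument you sketch does go through: the interval of admissible parameters has length $c/(g\,p'q')$, hence contains an integer once $c > pq/g$. The paper's route is heavier but stays within the machinery (Parikh images, regular languages) it uses everywhere else. Your opposite-sign argument --- translating an arbitrary integer solution by the positive zero-value vector $(|q|,|p|)$ --- is a clean equivalent of the paper's step of normalising a solution of $kp'+mq'=\pm 1$ to a nonnegative one. No gaps.
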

\begin{proof}
	\e{(i)} Let $\Lmc = \{kp + mq \mid k, m \in \N\}$. If $p=0$ or $q = 0$, \Lmc is itself a simple set. So suppose $p \neq 0$ and $q\neq 0$. We consider three cases:
	\begin{itemize}
		\item If $p, q > 0$, then $\Lmc \sbs \N$, and by Theorem~\ref{thm:parikh} $\Lmc = p(L)$ for some regular language $L \sbs \{c\}^*$. Let $M = (Q, \{c\}, q_0, \delta, F)$ be the minimal deterministic automaton recognising $L$, with the set of states $Q$ and final states $F \sbs Q$, initial state $q_0 \in Q$ and $\delta \colon Q \times \{c\} \to Q$. It is easy to see that for the sequence $q_0, q_1, q_2, \dots$, where $q_{i + 1} = \delta(q_i, c)$, there exist unique $m, p \in \N$ such that $q_{j} = q_{j + p}$ for all $j \geqslant m$. Then 
		$$
		\Lmc = 
		\{j < m \mid q_j \in F \} \quad \cup \hspace{-10pt} \bigcup_{\begin{array}{c}
				m \leqslant j < m + p\\
				q_j \in F
		\end{array}} \hspace{-20pt}\{j + k p \mid k \in \N\}
		$$
		
		\item If $p, q < 0$, we take $\Lmc' = \{-1 \cdot n \mid n \in \Lmc\}$, and obtain the representation as in the previous case, then multiply everything by $-1$.
		
		\item If $p$ and $q$ are of different signs. Let $d \geqslant 1$ be their greatest common divisor, and $p' = p / d$, and $q' = q / d$. We have $\Lmc = f(\Lmc')$, where $f(x) = dx$, $x \in \Z$, and $\Lmc' = \{kp' - mq' \mid k, m \in \N\}$. Observe that if a set is simple, its image under $f$ is also simple --- hence it is enough to represent $\Lmc'$ as a union of simple sets. By Lemma \ref{lm:app:diophante}, there are $k, m \in \Z$ such that $kp' + mq' = 1$. Since $p$ and $q$ are of different signs, we can safely assume that $k, m \in \N$ (otherwise, we find $t \in \Z$ such that $k + tp', m - tq' > 0$, and take that solution). It follows that $\N \sbs \Lmc'$. Similarly, we find a positive integer solution for $kp' + mq' = -1$, and conclude that $\{-n \mid n \in \N\} \sbs \Lmc'$. Thus $\Lmc' = \Z = \{k \cdot 1 \mid k \in \N\} \cup \{k \cdot (-1) \mid k \in \N\}$.
	\end{itemize}
	
	\e{(ii)} Since a semilinear set is a finite union of linear sets, it is enough to prove that every linear set \Lmc is a finite union of simple sets. Suppose $\Lmc = \{b + k_1p_1 + \dots + k_lp_l \mid k_1, \dots, k_l \in \N\}$. We do induction on $l$. The cases $l = 0$ or $l = 1$ are trivial. Now suppose that any linear set representable using $l - 1$ periods is a union of simple sets.
	
	Then we have:
	\begin{align*}
		&S = \{b + k_1p_1 + \dots + k_lp_l \mid k_i \in \N\} =\\
		&\{b + k_1 p_1 + \dots + k_{l-1} p_{l - 1} \mid k_i \in \N\} + \{k_l p_l \mid k_l \in \N\} =\\
		& \left(\bigcup_{i = 1}^m \Lmc_i \right) +  \{k_l p_l \mid k_l \in \N\} =
		  \bigcup_{i = 1}^m (\Lmc_i + \{k_l p_l \mid k_l \in \N\})
	\end{align*}
	where $S_1 + S_2 = \{x + y \mid x \in S_1, y \in S_2\}$, and $\Lmc_i$ are simple.
	
	If $\Lmc_i = \{b\}$, then $\Lmc_i + \{k_l p_l \mid k_l \in \N\} = \{b + k_l p_l \mid k \in \N\}$, and we are done. If $\Lmc_i = \{b + kp \mid k \in N\}$, then $\Lmc_i + \{k_l p_l \mid k_l \in \N\} = \{b\} + \{kp + k_lp_l \mid k, k_l \in \N\}$. By point \e{(i)}, the latter is a finite union of simple sets $ \bigcup_{j = 1}^r\{b_r+k_rp_r\mid k_r\in\N\}$ so $\Lmc_i + \{k_l p_l \mid k_l \in \N\} = \{b\} +\bigcup_{j = 1}^r\{b_r+k_rp_r\mid k_r\in\N\}=\bigcup_{j = 1}^r\{b+b_r+k_rp_r\mid k_r\in\N\}$. Hence, in both cases, $S$ is a finite union of simple sets.
\end{proof}

\begin{appproposition}\label{prop:app:def-equivalence}
	A \TELn-TBox is ultimately periodic w.r.t. quasimodels iff it is ultimately periodic w.r.t. concept inclusions.
\end{appproposition}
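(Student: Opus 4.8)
The plan is to reduce both notions to statements about the family of sets $S_{AB} = \{n \in \Z \mid \T \mdl A \sqs \Next^n B\}$ via Lemma~\ref{lm:app:qm}, which gives $\pi_A(n) = \{B \in \CN(\T) \mid n \in S_{AB}\}$ for the canonical quasimodel of $(\T, \eset)$. Under this identification, $B \in \pi_A(n)$ iff $n \in S_{AB}$, so ultimate periodicity w.r.t.\ concept inclusions says that every $S_{AB}$ is semilinear, while ultimate periodicity w.r.t.\ quasimodels says that each trace $\pi_A$ is eventually periodic in both the past and future directions, as in~\eqref{math:app:prelims:up:past}--\eqref{math:app:prelims:up:future}. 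The whole argument then hinges on the elementary fact that a bidirectionally eventually periodic $\{0,1\}$-valued sequence is exactly the indicator of a finite union of simple sets, combined with the finiteness of $\CN(\T)$.

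For the direction from quasimodels to concept inclusions, I would fix $A, B$ and split $S_{AB}$ into three pieces using the thresholds $m_P, m_F$ for $\pi_A$. On the tail $n \geqslant m_F$, future periodicity \eqref{math:app:prelims:up:future} makes membership of $B$ in $\pi_A(n)$ depend only on the residue of $n$ modulo $p_F$, so this part of $S_{AB}$ is the finite union, over representatives $r \in [m_F, m_F + p_F)$ with $B \in \pi_A(r)$, of the simple sets $\{r + k p_F \mid k \in \N\}$. Symmetrically, past periodicity \eqref{math:app:prelims:up:past} makes the tail $n \leqslant -m_P$ a finite union of simple sets with negative period $-p_P$, and the finite middle segment $-m_P < n < m_F$ is a finite union of singletons. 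Hence $S_{AB}$ is a finite union of simple sets, and therefore semilinear.

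For the converse, I would start from semilinearity of every $S_{AB}$ and invoke Lemma~\ref{lm:app:simple}(ii) to write each $S_{AB}$ as a finite union of simple sets $\{b + kp \mid k \in \N\}$. A single such set is bounded below when $p > 0$ (hence empty on a past tail and periodic on a future tail), bounded above when $p < 0$, and a singleton when $p = 0$; in every case it is eventually periodic in both directions. Taking $p_F$ to be the least common multiple of all the positive periods occurring across the finitely many decompositions of the $S_{AB}$ for a fixed $A$, and $m_F$ larger than every offset of a positive-period simple set together with every element of the bounded sets, I obtain $\pi_A(n + p_F) = \pi_A(n)$ for $n \geqslant m_F$; a symmetric choice of $p_P, m_P$ from the negative periods handles the past. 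When no positive (resp.\ negative) period occurs, the trace is eventually empty in that direction and any period works.

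The main obstacle is the bookkeeping in this converse direction: merging finitely many simple sets with distinct periods into a single uniform period via the least common multiple, and choosing the thresholds large enough that only the genuinely periodic (unbounded) simple sets remain active beyond them. This is routine but must be carried out carefully to guarantee that $p_P, p_F, m_P, m_F$ are positive integers as required by~\eqref{math:app:prelims:up:past}--\eqref{math:app:prelims:up:future}, including the degenerate cases where a direction contributes no unbounded simple set.
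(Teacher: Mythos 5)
Your proposal is correct and follows essentially the same route as the paper's proof: both directions reduce to the sets $\{n \in \Z \mid \T \mdl A \sqs \Next^n B\}$ via Lemma~\ref{lm:app:qm}, the forward direction uses the same three-piece split into a finite middle segment and two periodic tails, and the converse uses Lemma~\ref{lm:app:simple} to decompose into simple sets and then merges their periods into a common one (you via the least common multiple of the positive resp.\ negative periods, the paper via the product of all periods) with a sufficiently large threshold. Your extra care with the degenerate cases (zero periods, directions contributing no unbounded simple set) is a welcome refinement of the bookkeeping the paper leaves implicit, but it does not change the argument.
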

\begin{proof}
	$(\Rightarrow)$ Let \T be ultimately periodic w.r.t. quasimodels and fix $A, B \in \CN(\T)$. Let $m_P, p_P, m_F, p_F$ be such that \eqref{math:app:prelims:up:past}-\eqref{math:app:prelims:up:future} hold for $\pi_A$. By Lemma \ref{lm:app:qm}, $\{n \in \Z \mid \T \mdl A \sqs \Next^n B\} = \{n \in \Z \mid B \in \pi_A(n)\}$. We represent the latter set as a union of the following linear sets:
	\begin{align*}
		&\Lmc_n = \{n\} &&\text{ for } \begin{array}{l}
			n \in (-m_P, m_F)\\ 
			B \in \pi_a(n)
		\end{array}\\
		&\Lmc'_n = \{n - k p_P \mid k \in \N\} &&\text{ for } \begin{array}{l}
			n\in (-m_P - p_P, -m_P]\\ 
			B \in \pi_a(n)
		\end{array}\\
		&\Lmc''_n = \{n + k p_F \mid k \in \N\} &&\text{ for } \begin{array}{l}
			n \in [m_F, m_F + p_F)\\ 
			B \in \pi_a(n)
		\end{array}
	\end{align*}
	Thus, $\{n \in \Z \mid \T \mdl A \sqs \Next^n B\}$ is semilinear.
	
	\noindent
	$(\Leftarrow)$
	Assume that  \T is ultimately periodic w.r.t. concept inclusions. We fix an $A \in \CN(\T)$. By the assumption and Lemma~\ref{lm:app:qm}, for each $B \in \CN(\T)$ the set $\{n \in \Z \mid B \in \pi_A(n)\} = \{n \in \Z \mid \T \mdl A \sqs \Next^n B\}$ is semilinear. By Lemma \ref{lm:app:simple} it is equal to $\Lmc_1 \cup \dots \cup \Lmc_m$, where $\Lmc_i$ are some simple sets. Fix such a representation for each $B \in \CN(\T)$, and let $\Lmc_1, \dots, \Lmc_s$ be all simple sets that appear in these representations, with $\Lmc_i = \{b_i + kp_i \mid k \in \N\}$.
	\begin{align*}
		&m_P, m_F = \max_{1 \leqslant i \leqslant s} b_i &&p_P, p_F = \prod_{i = 1}^{s} p_i
	\end{align*}
		
	It is easy to see that the conditions \eqref{math:app:prelims:up:past}-\eqref{math:app:prelims:up:future} hold for $\pi_A$ with these $m_P, p_P, m_F, p_F$.
\end{proof}

\section{Proof of Proposition~\ref{prop:derivationTEL}}\label{app:prelim}

We show Proposition~\ref{prop:derivationTEL} using the canonical model $\Jmf_{(\T,\A)}$ of the \TELn KB $(\T,\A)$, defined in a similar way as in the temporal DL-Lite case \cite{DBLP:conf/ijcai/ArtaleKWZ13,DBLP:conf/ijcai/ArtaleKKRWZ15}. 
Let $\Delta^{\Jmf_{(\T,\A)}}=\NI\cup\NN$ (recall that $\NN$ is a set of named nulls disjoint from $\IN$). Following \citet{DBLP:conf/ijcai/ArtaleKWZ13}, we represent $\Jmf_{(\T,\A)}$ as a (potentially infinite) set $\Jmc$ of atoms built from $\RN$, $\CN$ and $\NI\cup\NN$ such that for every $n\in\Z$, $d\in A^{\Jmf_{(\T,\A)},n}$ iff $A(d,n)\in\Jmc$ and $(d,e)\in r^{\Jmf_{(\T,\A)},n}$ iff $r(d,e,n)\in\Jmc$. We define $\Jmc=\bigcup_{i\geq 0}\Jmc_i$ where $\Jmc_0=\A$ and $\Jmc_{i+1}$ is built from $\Jmc_i$ by applying a rule of one the following forms, assuming that the rule application is fair (i.e., if a rule can be applied, it is eventually applied):
\begin{enumerate}[label=(\roman*)]
\item if $r(a, b, n)\in\Jmc_i$, $r \in \RNrig$ and there is $k\in\Z$ such that $r(a, b, k)\notin\Jmc_i$, then $\Jmc_{i+1}=\Jmc_i\cup\{r(a, b, k) \mid k \in \Z \}$;

\item if $A(a, n)\in\Jmc_i$, $A \sqs \Next^k B\in\Tmc$ and $B(a, n + k)\notin\Jmc_i$, then $\Jmc_{i+1}=\Jmc_i\cup\{B(a, n + k)\}$;

\item if $A(a, n), A'(a, n)\in\Jmc_i$ , $ A \sqcap A' \sqs B\in\T$ and $B(a, n)\notin\Jmc_i$, then $\Jmc_{i+1}=\Jmc_i\cup\{B(a, n)\}$;

\item if $r(a, b, n),A(b, n)\in\Jmc_i$, $ \exists r.A \sqs B\in\T$ and $B(a, n)\notin\Jmc_i$, then $\Jmc_{i+1}=\Jmc_i\cup\{B(a, n)\}$;

\item if $A(a, n)\in\Jmc_i$, $A \sqs \exists r. B\in\T$ and there is no $b\in\NN$ such that $r(a, b, n), B(b, n)\in\Jmc_i$, then $\Jmc_{i+1}=\Jmc_i\cup\{ r(a, b, n), B(b, n)\}$ for some $b\in\NN$ which does not occur in $\Jmc_i$.
\end{enumerate}	
\begin{applemma}\label{canonicalmodel}
$\Jmf_{(\T,\A)}\models(\T,\A)$ and for every $\Jmf\models(\T,\A)$, there is a homomorphism $h:\NI\cup\NN\mapsto\Delta^\Jmf$ from $\Jmf_{(\T,\A)}$ to $\Jmf$. 
\end{applemma}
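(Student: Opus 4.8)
The plan is to prove the two assertions separately. For the first, I would show by a routine induction on the stage index $i$ that $\Jmc_i\subseteq\Jmc$ for all $i$, so in particular $\A=\Jmc_0\subseteq\Jmc$ and hence $\Jmf_{(\T,\A)}\models\A$. To obtain $\Jmf_{(\T,\A)}\models\T$, I would argue that the limit $\Jmc$ is closed under each concept inclusion and under role rigidity, and here the fairness assumption is exactly what is needed. For instance, if some inclusion $A\sqs\Next^k B\in\T$ were violated — say $A(a,n)\in\Jmc$ but $B(a,n+k)\notin\Jmc$ — then rule (ii) would be applicable at every stage after the one in which $A(a,n)$ first appears and yet never applied, contradicting fairness. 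The same pattern, applied to rules (i), (iii), (iv) and (v), yields closure for rigid roles and for all four forms of concept inclusions, so $\Jmf_{(\T,\A)}$ is a model of $(\T,\A)$.

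For the second assertion, I would build $h$ as the union of a chain of partial maps $h_i$, each a homomorphism from $\Jmc_i$ into $\Jmf$, maintaining the invariant that $A(d,n)\in\Jmc_i$ implies $h_i(d)\in A^{\Jmf,n}$ and $r(d,e,n)\in\Jmc_i$ implies $(h_i(d),h_i(e))\in r^{\Jmf,n}$. In the base case I set $h_0(a)=a$ for every $a\in\NI$; since $\Jmf\models\A$ and $\Jmf$ obeys the standard name assumption, $h_0$ is a homomorphism from $\Jmc_0=\A$. In the inductive step I would distinguish the rule applied to form $\Jmc_{i+1}$. Rules (i)–(iv) add no new elements, and in each case the inductive hypothesis together with $\Jmf\models\T$ (respectively, the rigidity of $r$ in $\Jmf$ for rule (i)) shows the newly added atom is already mapped correctly, so $h_{i+1}=h_i$. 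The only rule extending the domain is (v): from $A(a,n)\in\Jmc_i$ and $A\sqs\exists r.B\in\T$ the hypothesis gives $h_i(a)\in A^{\Jmf,n}\subseteq(\exists r.B)^{\Jmf,n}$, so there is $e\in\Delta^\Jmf$ with $(h_i(a),e)\in r^{\Jmf,n}$ and $e\in B^{\Jmf,n}$; setting $h_{i+1}(b)=e$ for the fresh null $b$ preserves the invariant. Taking $h=\bigcup_i h_i$, and mapping any null never introduced to an arbitrary fixed element, gives the required homomorphism, since every atom of $\Jmc$ appears at some finite stage.

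The case analysis itself is routine; the two points that require care are the fairness argument in the first assertion and, in the second, the observation that the witness choices for rule (v) never conflict. The latter holds because each null $b$ is fresh at the stage it is introduced and is never reused, so $h$ assigns it a value exactly once and no cross-stage consistency obligation arises. The passage to the limit is unproblematic: since $h$ is defined stagewise and $\Jmc=\bigcup_i\Jmc_i$, verifying the homomorphism condition for any atom reduces to the finite stage at which that atom first appears. I expect the fairness argument to be the only genuinely delicate step, as it is what rules out a ``stuck'' infinite run that fails to saturate some applicable rule.
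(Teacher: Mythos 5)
Your proposal is correct and follows essentially the same route as the paper's proof: the fairness of rule application yields closure of the limit $\Jmc$ (hence $\Jmf_{(\T,\A)}\models(\T,\A)$), and the homomorphism is built stagewise with $h_0$ the identity on $\NI$, $h_{i+1}=h_i$ for rules (i)--(iv), and the existential witness in $\Jmf$ supplying the image of the fresh null for rule (v), with unused nulls mapped arbitrarily at the end. No gaps.
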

\begin{proof}
It is easy to check that $\Jmf_{(\T,\A)}\models(\T,\A)$: $\Jmf_{(\T,\A)}$ is a model of all facts in \A by construction of $\Jmc_0$, and if $\Jmf_{(\T,\A)}$ was not a model of some concept inclusion of \T, or if there was a rigid role name whose interpretation changed over time, this would imply that a rule of one the forms (i)--(v) is applicable in $\Jmc$, contradicting the definition of $\Jmc$. 

Let $\Jmf$ be a model of $(\T,\A)$. We show how to inductively construct a homomorphism $h$ from $\Jmf_{(\T,\A)}$ to $\Jmf$, i.e., a homomorphism $h$ from $\Jmc$ to the set of atoms $\Imc$ such that for every $n\in\Z$, $d\in A^{\Jmf,n}$ iff $A(d,n)\in\Imc$ and $(d,e)\in r^{\Jmf,n}$ iff $r(d,e,n)\in\Imc$.  
Let $h_0:\NI\mapsto\Delta^\Jmf$ be the identity (recall that $\IN\subseteq\Delta^\Jmf$ by the standard name assumption). Since $\Jmf\models \A$, $\A\subseteq\Imc$ so $h_0$ is a homomorphism from $\Jmc_0=\A$ to $\Imc$. Assume that we have built a homomorphism $h_i: \NI\cup\{e\mid e\in\NN, e\text{ occurs in }\Jmc_i\}\mapsto \Delta^\Jmf$ from $\Jmc_i$ to $\Imc$ and consider $\Jmc_{i+1}$. We distinguish two cases:
\begin{itemize}
\item If $\Jmc_{i+1}$ has been obtained from $\Jmc_i$ by applying a rule of one of the forms (i)--(iv), let $h_{i+1}=h_i$. It is easy to verify that in any case, $h_{i+1}$ is a homomorphism from $\Jmc_{i+1}$ to $\Imc$. This follows from the facts that $h_i$ is a homomorphism from $\Jmc_i$ to $\Imc$ and that $\Jmf\models \T$ and respects rigid roles. 
\item Otherwise, $\Jmc_{i+1}$ has been obtained from $\Jmc_i$ by applying a rule of form (v): there are $A(a, n)\in\Jmc_i$ and $A \sqs \exists r. B\in\T$, and $\Jmc_{i+1}=\Jmc_i\cup\{ r(a, b, n), B(b, n)\}$ for some $b\in\NN$ which does not occur in $\Jmc_i$. Since $h_i$ is a homomorphism from $\Jmc_{i}$ to $\Imc$, $A(h_i(a), n)\in\Imc$. Hence, since $\Jmf\models A \sqs \exists r. B$, there is $d\in\Delta^\Jmf$ such that $r(h_i(a), d, n)$ and $B(d, n)$ are in $\Imc$. We define $h_{i+1}: \NI\cup\{e\mid e\in\NN, e\text{ occurs in }\Jmc_{i+1}\}\mapsto \Delta^\Jmf$ by $h_{i+1}(x)=h_i(x)$ for every $x\in \NI\cup\{e\mid e\in\NN, e\text{ occurs in }\Jmc_{i}\}$ and $h_{i+1}(b)=d$. It is easy to check that $h_{i+1}$ is a homomorphism from $\Jmc_{i+1}$ to $\Imc$.
\end{itemize}
We obtain a homomorphism  $h:\NI\cup\NN\mapsto\Delta^\Jmf$ from $\Jmf_{(\T,\A)}$ to $\Jmf$ by setting $h=\bigcup_{i\geq 0} h_i$ and extending $h$ to the nulls that do not occur in $\Jmc$ by mapping them to any element of $\Delta^\Jmf$.
\end{proof}

Proposition~\ref{prop:derivationTEL} then follows from the next two lemmas.

\begin{applemma}\label{lem:derivationTEL1}
For every $A\in\NC$, $a\in\NI$, and $n\in\Z$, 
$(\T,\A)\models A(a,n)$ iff $(\T,\A)\vdash A(a,n)$.
\end{applemma}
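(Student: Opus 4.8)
The statement is an equivalence, so the plan is to prove \emph{completeness} ($\Rightarrow$) and \emph{soundness} ($\Leftarrow$) of the derivation calculus separately, routing the first through the canonical set $\Jmc=\bigcup_{i\geq 0}\Jmc_i$ and Lemma~\ref{canonicalmodel}. The bridging observation is that the saturation cases (i)--(v) used to build $\Jmc$ are exactly instances of the inference rules \eqref{math:inference-rule:rigid}--\eqref{math:inference-rule:existential}, applied from $\Jmc_0=\A$ with \T supplying the side conditions. Hence membership of a fact in $\Jmc$ and its derivability are tightly linked, and this is what I would exploit for completeness; soundness I would instead treat directly by induction on the derivation.

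For the direction ($\Rightarrow$), assume $(\T,\A)\mdl A(a,n)$. By Lemma~\ref{canonicalmodel}, $\Jmf_{(\T,\A)}\mdl(\T,\A)$, so entailment gives $\Jmf_{(\T,\A)}\mdl A(a,n)$, i.e. $A(a,n)\in\Jmc$, and therefore $A(a,n)\in\Jmc_i$ for some finite $i$. Since each $\Jmc_{j+1}$ is obtained from $\Jmc_j$ by a single application of one of (i)--(v), the finite prefix $\Jmc_0,\dots,\Jmc_i$ is --- setting $\Fmc_j=\Jmc_j\cup\T$ --- precisely a derivation in the sense of the calculus. Reading it off yields $(\T,\A)\vdash A(a,n)$.

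For the direction ($\Leftarrow$), assume $(\T,\A)\vdash A(a,n)$ via a derivation $(\Fmc_0,\dots,\Fmc_m)$ and fix an arbitrary model $\Jmf$ of $(\T,\A)$. I would show, by induction on the derivation length, that there is a map $g$ from the named nulls occurring in the derivation to $\Delta^\Jmf$, extending the identity on \NI, such that \emph{every} fact occurring in $\Fmc_m$ holds in $\Jmf$ under $g$ (that is, $B(t,k)\in\Fmc_m$ implies $g(t)\in B^{\Jmf,k}$, and analogously for role facts). The base case holds because $\Jmf\mdl\A$. In the inductive step, $\Jmf\mdl\T$ and the rigidity of $\Jmf$ ensure that rules \eqref{math:inference-rule:rigid}--\eqref{math:inference-rule:return} preserve the invariant without altering $g$, while the existential rule \eqref{math:inference-rule:existential} introduces a fresh null $b$, for which $\Jmf\mdl A\sqsubseteq\exists r.B$ supplies a witness $d\in\Delta^\Jmf$; extending $g$ by $g(b)=d$ maintains the invariant. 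Since $a\in\NI$ we have $g(a)=a$, whence $a\in A^{\Jmf,n}$; as $\Jmf$ was arbitrary, $(\T,\A)\mdl A(a,n)$. This induction essentially replays the homomorphism construction of Lemma~\ref{canonicalmodel}, but along an arbitrary derivation rather than the canonical saturation.

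The main delicate point is the handling of the fresh nulls produced by \eqref{math:inference-rule:existential}. The soundness invariant must be phrased over the entire fact set $\Fmc_m$, not merely the target fact, so that the witness chosen for a null remains available when that null is later reused as a premise of another rule; the freshness of $b$ is exactly what guarantees $g$ stays well defined when extended. A further point worth flagging is why completeness can be routed cleanly through $\Jmc$: an arbitrary derivation may introduce different nulls than the deterministic, fair saturation producing $\Jmc$, so I deliberately avoid matching derivations syntactically and instead use that the target $A(a,n)$ concerns a \emph{named} individual $a\in\NI$, which therefore appears literally in $\Jmc$ rather than only up to renaming of nulls.
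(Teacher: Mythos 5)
Your proof is correct, and the completeness direction coincides with the paper's: both use that $\Jmf_{(\T,\A)}$ is a model of $(\T,\A)$ (Lemma~\ref{canonicalmodel}) to place $A(a,n)$ in $\Jmc$, and then read a derivation off a finite prefix of the saturation. Where you diverge is soundness: the paper asserts that $(\T,\A)\vdash A(a,n)$ iff $A(a,n)\in\Jmc$ and then concludes via the homomorphism from $\Jmf_{(\T,\A)}$ into an arbitrary model supplied by Lemma~\ref{canonicalmodel}, so no new induction is needed; you instead run a direct induction on the derivation, carrying a map $g$ from nulls into $\Delta^\Jmf$ together with the invariant that \emph{all} facts of $\Fmc_j$ hold under $g$, which in effect re-proves the homomorphism construction along an arbitrary derivation rather than along the canonical saturation. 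Your route is a little longer but arguably more self-contained: the paper's ``iff'' silently requires that the conclusions of an \emph{arbitrary} derivation (with arbitrarily chosen fresh nulls) all reappear in $\Jmc$ up to renaming, which is precisely the subtlety you identify and sidestep, and your invariant over the whole fact set is the right formulation to keep witnesses for nulls available when they are later reused as premises. One technicality that both you and the paper gloss over: saturation step (i) adds the infinite set $\{r(a,b,k)\mid k\in\Z\}$ in a single step, whereas rule~\eqref{math:inference-rule:rigid} adds one fact per application, so the saturation prefix is not \emph{literally} a derivation; to be fully precise one should retain only the finitely many rigid-role instances actually used by later rule applications. This is cosmetic and does not affect correctness.
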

\begin{proof}
Since $\Jmc_0=\A$ and rules of form (i)--(v) correspond exactly to derivation rules of form \eqref{math:inference-rule:rigid}--\eqref{math:inference-rule:existential}, it is easy to see that $(\T,\A)\vdash A(a,n)$ iff $A(a,n)\in\Jmc$, i.e., $(\T,\A)\vdash A(a,n)$ iff $a\in A^{\Jmf_{(\T,\A)},n}$. 
Since by Lemma~\ref{canonicalmodel}, $a\in A^{\Jmf_{(\T,\A)},n}$ implies that $a\in A^{\Jmf,n}$ for every $\Jmf\models (\T,\A)$, the result follows.
\end{proof}

\begin{applemma}\label{lem:derivationTEL2}
For every $A,B\in\NC$, $a\in\NI$, and $n,k\in\Z$, 
$\T\models A\sqsubseteq\Next^n B$ iff $(\T,\{A(a,k)\})\vdash B(a,k+n)$.
\end{applemma}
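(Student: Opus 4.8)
The plan is to reduce this statement about concept inclusions to the statement about fact entailment that is already settled by Lemma~\ref{lem:derivationTEL1}. Concretely, I would first establish the purely semantic equivalence
\[
\T \models A \sqsubseteq \Next^n B \quad\Longleftrightarrow\quad (\T, \{A(a,k)\}) \models B(a, k+n),
\]
and then invoke Lemma~\ref{lem:derivationTEL1} to rewrite the right-hand side as $(\T, \{A(a,k)\}) \vdash B(a, k+n)$, which is exactly the derivation claim to be proved. So the whole work reduces to proving the displayed equivalence.

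The forward direction is immediate. Assuming $\T \models A \sqsubseteq \Next^n B$, take any model $\Jmf$ of $(\T, \{A(a,k)\})$. Since $\Jmf \models \T$ we have $A^{\Jmf,k} \subseteq (\Next^n B)^{\Jmf,k} = B^{\Jmf,k+n}$, and since $\Jmf \models A(a,k)$ we have $a \in A^{\Jmf,k}$, hence $a \in B^{\Jmf,k+n}$, i.e.\ $\Jmf \models B(a,k+n)$. As $\Jmf$ was arbitrary, the fact is entailed.

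For the converse I would argue by contraposition and construct a counter-model, relying on two structural facts about $\T$-models. First, \emph{time-shift invariance}: if $\Jmf \models \T$ and $s \in \Z$, the structure $\Jmf'$ obtained by shifting time by $s$ (so that $d \in A^{\Jmf',j}$ iff $d \in A^{\Jmf,j-s}$, and likewise for all roles) is again a model of $\T$, because every concept inclusion of $\T$ is required to hold at all time points uniformly and rigidity is preserved under shifting. Second, \emph{name-freeness of TBoxes}: since concept inclusions mention no individual names, applying any permutation of the domain to the concept and role extensions of a $\T$-model yields another $\T$-model. Now suppose $\T \not\models A \sqsubseteq \Next^n B$, witnessed by some $\Jmf \models \T$, index $i$, and element $d$ with $d \in A^{\Jmf,i}$ and $d \notin B^{\Jmf,i+n}$. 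Shifting by $s = k-i$ moves the witness to time $k$, and then relabelling by the transposition that swaps $a$ and $d$ (the identity if $a=d$) places the individual name $a$ where the witness sits. The resulting interpretation respects the standard name assumption, is still a model of $\T$, satisfies $A(a,k)$ and falsifies $B(a,k+n)$, so $(\T, \{A(a,k)\}) \not\models B(a,k+n)$, completing the contrapositive.

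The main obstacle is precisely the interaction with the standard name assumption in the converse direction: the witnessing element $d$ is an arbitrary domain element, whereas $a$ is forced to be interpreted as itself in every model, so one cannot simply ``rename $d$ to $a$''. The relabelling step resolves this, and it is sound only because the TBox contains no individual names; the accompanying time-shift step is what reconciles the arbitrary time point $i$ of the counter-model with the fixed timestamp $k$ of the singleton ABox (and also explains why the equivalence is insensitive to the particular choice of $k$).
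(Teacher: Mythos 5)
Your proposal is correct and matches the paper's own proof essentially step for step: both reduce to Lemma~\ref{lem:derivationTEL1}, handle the forward direction by the direct semantic argument, and handle the converse by contraposition using a time shift together with a domain permutation that swaps the witnessing element with $a$ (justified by the standard name assumption and the absence of individual names in $\T$). No gaps.
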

\begin{proof}
By Lemma~\ref{lem:derivationTEL1}, $(\T,\{A(a,k)\})\vdash B(a,k+n)$ iff $(\T,\{A(a,k)\})\models B(a,k+n)$. We show that $(\T,\{A(a,k)\})\models B(a,k+n)$ iff $\T\models A\sqsubseteq\Next^n B$. 

\noindent($\Leftarrow$) If $\T\models A\sqsubseteq\Next^n B$, every model $\Jmf$ of $(\T,\{A(a,k)\})$ is such that $\Jmf\models A(a,k)$ and $\Jmf\models A\sqsubseteq\Next^n B$, hence $\Jmf\models B(a,k+n)$.

\noindent($\Rightarrow$) Assume that $\T\not\models A\sqsubseteq\Next^n B$: there exists a model $\Jmf=(\Delta^{\Jmf}, (\Imc_i)_{i\in\Z})$ of \T with $e\in\Delta^\Jmf$ and $j\in\Z$ such that $e\in A^{\Jmf,j}$ and $e\notin B^{\Jmf,j+n}$. 
Let $\Jmf'=(\Delta^{\Jmf}, (\Imc'_i)_{i\in\Z})$ be the interpretation obtained from $\Jmf$ by switching $e$ and $a$ in all concept and role interpretations (note that since $a\in\NI$, $a\in\Delta^{\Jmf}$), and let $\Jmf''=(\Delta^{\Jmf}, (\Imc''_i)_{i\in\Z})$ where $\Imc''_i=\Imc'_{i+k-j}$ for every $i\in\Z$. 
It is easy to see that $\Jmf''\models (\T,\{A(a,k)\})$ while $\Jmf''\not\models B(a,k+n)$. 
Hence, $(\T,\{A(a,k)\})\not\models B(a,k+n)$. 
\end{proof}


\section{Additional notation and conventions}

\paragraph{Derivations} Recall that a derivation witnessing $(\T, \A) \vdash A(a, n)$ is a sequence $(\F_0, \dots, \F_m)$ such that $\Fmc_0 = \A \cup \T$, $A(a, n) \in \Fmc_m$ and $\Fmc_i$ is obtained from $\Fmc_{i-1}$ by applying a rule of the form \eqref{math:inference-rule:rigid}--\eqref{math:inference-rule:existential}. It will be convenient to represent such $(\F_0, \dots, \F_m)$ as 
\begin{align*}
	\Fmc_0 \xto{f_1} \dots \xto{f_m} \Fmc_m
\end{align*}
where each \emph{rule application} is represented by a pair of sets of formulas $f_i=(\dom(f_i),\rn(f_i))$ where $\dom(f_i) \sbs \Fmc_{i - 1}$ is the \emph{premise} of $f_i$ 
 and $\rn(f_i) = \Fmc_{i} \setminus \Fmc_{i - 1}$ is its \emph{conclusion}, which match, respectively, the left and the right sides of the rule applied to get $\Fmc_i$ from $\Fmc_{i - 1}$. We say that $f_i$ \emph{uses} the formulas in $\dom(f_i)$ and \emph{produces} those in $\rn(f_i)$. 
Similarly, given $G = (N, \Sigma, R)$, a derivation witnessing $G \vdash X(w)$ can be represented as
\begin{align*}
	\Gmc_0 \xto{g_1} \dots \xto{g_m} \Gmc_m
\end{align*}
where $\Gmc_0 = \{c(c) \mid c \in \Sigma\}$, $X(w) \in \Gmc_m$, and each rule application $g_i$ refers to one of the rules defined by    \eqref{math:grammar:semantics:deduction-rules}.

Moreover, \emph{we extend the notion of derivation to derivations of formulas of the form $A(b,n)$ where $b\in\NN$} (instead of $\NI$) from a set of formulas $\Fmc_0$ in which $b$ occurs. Hence, one can write, e.g., $(\Tmc,\{(A(b,k)\})\vdash B(b,n+k)$ for $b\in\NN$.

\paragraph{Extension and use of Proposition \ref{prop:derivationTEL}} In the proofs given in the next sections, Proposition \ref{prop:derivationTEL} allows us to equivalently write $\T\models A\sqsubseteq\Next^n B$, $(\T,\{A(a,0)\})\models B(a,n)$, or $(\T,\{A(a,0)\})\vdash B(a,n)$. 
Moreover, one can use $a\in\NI\cup\NN$ in the last formula since it is easy to see by considering derivations that $(\T,\{A(a,0)\})\vdash B(a,n)$ for $a\in\NI$ iff $(\T,\{A(b,0)\})\vdash B(b,n)$ for $b\in\NN$.

\section{Proofs for Section \ref{sec:conjunctive-grammars}}\label{app:conjunctive-grammars}

\subsection{Proof of Theorem~\ref{thm:conjunctive-grammars:tbox-to-grammar}}
We start with some lemmas. Recall that given a \TELn-TBox \T, \Trig is the TBox defined right after Theorem~\ref{thm:conjunctive-grammars:tbox-to-grammar} in Section~\ref{sec:conjunctive-grammars}.

\begin{applemma}\ \label{lm:app:local-stays-local}
	The following statements hold.
	\begin{enumerate}[label=(\roman*)]
		\item Given a derivation $\Fmc_0 \xto{f_1} \dots \xto{f_m} \Fmc_m$ witnessing $(\T, \{A(a, 0)\}) \vdash B(a, n)$, if $r(b_1, b_2, k) \in \Fmc_i$ and $r(b_1, b_2, k') \in \Fmc_j$, for some $i, j \in \{1, \dots, m\}$ and $r \in \RNloc$, then $k' = k$. \label{part:role}
		
		\item Given a derivation $\Fmc_0 \xto{f_1} \dots \xto{f_m} \Fmc_m$ witnessing $(\Trig, \{A(a, 0)\}) \vdash B(a, n)$, if $C_r(b, k) \in \Fmc_i$ and $C_r(b, k') \in \Fmc_j$, for some $i, j \in \{1, \dots, m\}$, then $k' = k$. \label{part:concept}
	\end{enumerate}
\end{applemma}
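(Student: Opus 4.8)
The plan is to exploit one structural fact: the only ABox atom is $A(a,0)$, which contains no role atom, so every role atom appearing in the derivation is \emph{produced} by some rule application. Among the rules \eqref{math:inference-rule:rigid}--\eqref{math:inference-rule:existential}, only the existential rule \eqref{math:inference-rule:existential} produces role atoms for a local role (rule \eqref{math:inference-rule:rigid} fires exclusively for rigid roles), and in \eqref{math:inference-rule:existential} the second component of the produced atom is a \emph{fresh} null from \NN. Consequently each null is the target of exactly one existential application, and no rule ever copies a local role atom to another timestamp.

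For part \ref{part:role}, suppose $r \in \RNloc$ and $r(b_1, b_2, k)$ occurs in the derivation. By the observation above it was produced by an application of \eqref{math:inference-rule:existential}, so $b_2 \in \NN$ was fresh at that step and $r(b_1, b_2, k)$ is the atom created together with $b_2$. Freshness of $b_2$ means that no other application produces an atom of the form $r(\cdot, b_2, \cdot)$; since local role atoms are never re-timed by any rule, every occurrence $r(b_1, b_2, k') \in \Fmc_j$ comes from that same application, whence $k' = k$.

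For part \ref{part:concept}, I work in \Trig, where all roles are rigid, and anchor each marker concept to the creation time of a null. The marker concepts introduced during rigidification are the non-primed $C_r$ (appearing inside the existentials of \eqref{math:ci:rigidisation:existential}) and the primed $C'_r$ (produced by the conjunctions of \eqref{math:ci:rigidisation:return}). Inspecting every inclusion of \Trig shows that a non-primed marker $C_r$ occurs on a right-hand side only inside an existential $A \sqsubseteq \exists r'.C_r$; hence $C_r(b,k)$ can only be produced by \eqref{math:inference-rule:existential}, with $b$ the fresh null of that step. As each null is created by a single existential application, $C_r(b,\cdot)$ holds at exactly one timestamp, giving $k' = k$. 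The primed markers inherit the property: the only inclusion with $C'_r$ on a right-hand side is a conjunction $A \sqcap C_r \sqsubseteq C'_r$ from \eqref{math:ci:rigidisation:return}, so $C'_r(b,k)$ is produced only by \eqref{math:inference-rule:conjunction}, which requires $C_r(b,k)$ at the \emph{same} timestamp $k$; since $C_r(b,\cdot)$ is available at a single timestamp, so is $C'_r(b,\cdot)$.

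The step requiring the most care, and the main obstacle, is the exhaustive case analysis establishing that markers appear on right-hand sides of \Trig only in these two positions (non-primed exclusively under existentials, primed exclusively as conjunction conclusions). This is exactly what forbids a marker from being ``refreshed'' at a second timestamp via a shift \eqref{math:inference-rule:shift} or a return \eqref{math:inference-rule:return}, and it rests on the freshness of the concept names added during rigidification; once it is in place, both parts reduce to the uniqueness of the existential application that creates each null.
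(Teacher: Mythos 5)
Your proposal is correct and follows essentially the same route as the paper's proof: both parts rest on the observation that local role atoms (resp.\ the marker concepts $C_r$, which occur on right-hand sides of \Trig only under existentials) can only be produced by rule \eqref{math:inference-rule:existential}, whose second argument is a fresh null, so each such atom is created exactly once and at a single timestamp. Your additional treatment of the primed markers $C'_r$ goes slightly beyond what the lemma states (and what its later uses require), but it is sound and harmless.
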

\begin{proof}
	\textit{(i)} Since $\Fmc_0 = \{A(a, 0)\}\cup\T$, by the form of the derivation rules (cf.~\eqref{math:inference-rule:rigid}--\eqref{math:inference-rule:existential}), $b_2 \in \NN$. Suppose $k \neq k'$ and let $i, j$ be the least indexes such that $r(b_1, b_2, k) \in \Fmc_i$ and $r(b_1, b_2, k') \in \Fmc_j$. Since a rule application produces several role facts only if the rule is of form~\eqref{math:inference-rule:rigid}, and $r \in \RNloc$, it must be the case that $i \neq j$. Suppose $i < j$ and observe that $r(b_1, b_2, k')$ is produced by the application $f_j$ of a rule of form \eqref{math:inference-rule:existential}. Then $b_2$ should be fresh, but it already appears in $\Fmc_i$. Hence $k=k'$.
	
\noindent\textit{(ii)}	The proof is analogous to that of \textit{(i)}, using the fact that $C_r$ only occurs in the right-hand side of concept inclusions of $\Trig$ in concept inclusions of the form $A\sqsubseteq\exists r.C_r$, so that a fact of the form $C_r(b, \ell)$ can only be produced by an application of a rule of form \eqref{math:inference-rule:existential} which introduces $b$ as a fresh null.
\end{proof}

\conjunctivegrammarslocalroleremoval*
\begin{proof}
$(\Rightarrow)$ Let $\Fmc_0 \xto{f_1} \dots \xto{f_m} \Fmc_m$ be a derivation witnessing $(\T, \{A(a, 0)\}) \vdash B(a, n)$. We build a derivation witnessing $(\Trig, \{A(a, 0)\}) \vdash B(a, n)$. Let $(h_1,\dots, h_p)$ be the sequence of rule applications obtained from $(f_1,\dots,f_m)$ by applying the following steps. 
\begin{enumerate}
\item Substitute every application of a rule of form \eqref{math:inference-rule:existential} with $r \in \RNloc$, $f_i=(\{C\sqsubseteq\exists r.D,C(e,\ell)\},\{r(e,d,\ell), D(d,\ell)\})$, by consecutive applications of rules of form \eqref{math:inference-rule:existential} and \eqref{math:inference-rule:shift}, $f'_i,f''_i$, where $f'_i=(\{C\sqsubseteq\exists r.D_r,C(e,\ell)\},\{r(e,d,\ell), D_r(d,\ell)\})$ and $f''_i=(\{D_r\sqsubseteq D, D_r(e,\ell)\},\{D(d,\ell)\})$. 
\label{point-from-T-to-Trig}	
		
\item For every application of rule of form \eqref{math:inference-rule:return}  with $r \in \RNloc$, $f_i=(\{r(e,d,\ell), C(d,\ell),\exists r.C\sqsubseteq D\}, \{D(e,\ell)\})$, by Lemma~\ref{lm:app:local-stays-local}, there is no $\ell'\neq\ell$ such that $r(e,d,\ell')$ belongs to any $\Fmc_j$, so since $r(e,d,\ell)$ has been produced by the application of a rule of form \eqref{math:inference-rule:existential}, $f_j$, with $j<i$, there exists $E_r$ such that $E_r(d,\ell)$ is in the conclusion of $f'_j$ defined in point \ref{point-from-T-to-Trig}. 
Substitute $f_i$ by consecutive applications of rules of form \eqref{math:inference-rule:conjunction} and \eqref{math:inference-rule:return}, $f'_i,f''_i$, where $f'_i=(\{C(d,\ell),E_r(d,\ell), C\sqcap E_r\sqsubseteq C'_r\}, \{C'_r(d,\ell)\})$ and $f''_i=(\{r(e,d,\ell), C'_r(d,\ell),\exists r.C'_r\sqsubseteq D\},\{D(e,\ell)\})$.
			
\item Substitute every occurrence of $r$ in the resulting sequence of rule applications by the $r' \in \RNrig$ used in \Trig. 
\end{enumerate}

Since all concept inclusions used in the premises of $h_1,\dots, h_p$ belongs to \Trig by construction, we indeed obtain a derivation $\Fmc'_0 \xto{h_1} \dots \xto{h_p} \Fmc'_p$ witnessing $(\Trig, \{A(a, 0)\}) \vdash B(a, n)$ by setting $\Fmc'_0=\{A(a, 0)\}\cup\Trig$ and $\Fmc'_i=\Fmc'_{i-1}\cup\rn(h_i)$.  
\smallskip
		
\noindent$(\Leftarrow)$ Suppose $\Fmc_0 \xto{f_1} \dots \xto{f_m} \Fmc_m$ is a derivation witnessing $(\Trig, \{A(a, 0)\}) \vdash B(a, n)$. 
We build a derivation witnessing $(\T, \{A(a, 0)\}) \vdash B(a, n)$. 
Let $(h_1,\dots, h_p)$ be the sequence of rule applications obtained from $(f_1,\dots,f_m)$ by applying the following steps. 
\begin{enumerate}
\item Restore each $r'$ of $\Trig$ to the original $r \in \RNloc$ and omit every application of a rule of form \eqref{math:inference-rule:rigid} with $r'$.
		
\item Omit every application of a rule of form \eqref{math:inference-rule:shift} that uses $D_r\sqsubseteq D$, and substitute every application of a rule of form \eqref{math:inference-rule:existential} that uses $C\sqsubseteq \exists r.D_r$, $f_i=(\{C(e,\ell), C\sqsubseteq \exists r.D_r\},\{r(e,d,\ell), D_r(d,\ell)\})$, with one using $C \sqs \exists r . D$ instead: $f'_i=(\{C(e,\ell), C\sqsubseteq \exists r.D\},\{r(e,d,\ell), D(d,\ell)\})$. \label{point-from-Trig-to-T}
			
\item Omit every application of a rule of form \eqref{math:inference-rule:conjunction} that uses a concept inclusion of the form $C\sqcap D_r\sqsubseteq C'_r$.

\item For every application of a rule of form \eqref{math:inference-rule:return} that uses $\exists r. C'_r\sqsubseteq D$, $f_i=(\{r(e,d,\ell), C'_r(d,\ell), \exists r. C'_r\sqsubseteq D\},\{D(e,\ell)\})$, since $C'_r(d,\ell)\in\Fmc_{i-1}$, then 
\begin{enumerate}
\item $C(d,\ell)\in\Fmc_{i-1}$ and there exists $E_r$ such that $E_r(d,\ell)\in\Fmc_{i-1}$ (since $C'_r$ occurs in the right-hand side of concept inclusions in \Trig only in concept inclusions of the form $C\sqcap E_r\sqsubseteq C'_r$), and 

\item by Lemma \ref{lm:app:local-stays-local}, there is no $\ell'\neq\ell$ such that $E_r(d,\ell')$ belongs to any $\Fmc_j$, so since $E_r(d,\ell)$ has been produced by the application of a rule of form \eqref{math:inference-rule:existential}, $f_j$, with $j<i-1$, $r(e,d,\ell)\in\Fmc_{i-1}$ has been produced by $f'_j$ defined in point \ref{point-from-Trig-to-T}\footnote{This ensures that $r(e,d,\ell)$ is not ``lost'' in the derivation when we omit every application of a rule of form \eqref{math:inference-rule:rigid} with $r'$.} (note that since $d$ has been introduced as a fresh element by $f'_j$, there cannot be any $r(e',d,\ell)\in\Fmc_{i-1}$ with $e\neq e'$). 
\end{enumerate}
Substitute $f_i$ with $f'_i=(\{r(e,d,\ell), C(d,\ell), \exists r. C\sqsubseteq D\},\{D(e,\ell)\})$. 
\end{enumerate}

Once again, we can check that all concept inclusions used in the premises of $h_1,\dots, h_p$ belongs to \T so that we indeed obtain a derivation $\Fmc'_0 \xto{h_1} \dots \xto{h_p} \Fmc'_p$ witnessing $(\T, \{A(a, 0)\}) \vdash B(a, n)$ by setting $\Fmc'_0=\{A(a, 0)\}\cup\T$ and $\Fmc'_i=\Fmc'_{i-1}\cup\rn(h_i)$. 
\end{proof}

	Now we are ready to prove Theorem~\ref{thm:conjunctive-grammars:tbox-to-grammar}. 
	
	\TBoxesToGrammars*
\begin{proof}
Let $G_\T$ be the grammar defined from \T in Definition~\ref{def:conjuntive-grammars:GT}. 
	We show that for every $A,B\in\CN(\Trig)$, for every $a\in\NI\cup\NN$, and $n\in\N$, $G_\T \vdash \Nmc_{AB}(c^n)$ iff $(\Trig, \{A(a, 0)\}) \vdash B(a, n)$. The result will follow by Lemma~\ref{lm:conjunctive-grammars:local-role-removal} and the fact that $\NC(\T)\subseteq\NC(\Trig)$. 
	\smallskip
	
	\noindent
	$(\Leftarrow)$ 
	We show by induction on $m$ that for every $A,B\in\CN(\Trig)$, $a\in\NI\cup\NN$, and $n\in\N$, if there exists a derivation witnessing $(\Trig, \{A(a, 0)\}) \vdash B(a, n)$ of length at most $m$, then $G_\T \vdash \Nmc_{AB}(c^n)$. 
	
	In the base case, $m=0$, the derivation consists only of $\F_0 = \Trig \cup \{A(a,0)\}$, so $B(a, n) \in \F_0$ implies that $A=B$ and $n = 0$, and by \eqref{math:grammar:from-tbox:epsilon}, $\Nmc_{AA}\rightarrow \varepsilon$ is a rule of $G_\T$  so $G_\T \vdash \Nmc_{AB}(\varepsilon)$. 
	
	Induction step: assume that the property holds for $m-1$ and let $\Fmc_0 \xto{f_1} \dots \xto{f_m} \Fmc_m$ be a derivation witnessing $(\Trig, \{A(a,0)\}) \vdash B(a, n)$. If $B(a, n)\in\Fmc_{m-1}$, the result follows by induction hypothesis. Otherwise, there are three possible cases for the last rule application $f_m$ that produces $B(a, n)$:
	\begin{itemize}
		\item $\dom(f_m)=\{A'(a,n - k), A' \sqs \Next^k B \}$ for some  $k\in\{0,\dots, n\}$, and $f_m$ is the application of a rule of form \eqref{math:inference-rule:shift}. 
		Since $(\F_0, \dots, \F_{m - 1})$ is a derivation witnessing $(\Trig,\{A(a,0)\})\vdash A'(a, n - k)$, by induction hypothesis, $G_\T \vdash \Nmc_{AA'}(c^{n - k})$. 
		Moreover, since $A' \sqs \Next^k B\in\Trig$, by \eqref{math:grammar:from-tbox:epsilon} or \eqref{math:grammar:from-tbox:shift} depending on $k$, $\Nmc_{A'B}\rightarrow c^k$ is a rule of $G_\T$. Hence $G_\T \vdash \Nmc_{A'B}(c^{k})$. Then, since by \eqref{math:grammar:from-tbox:middle} $\Nmc_{AB}\rightarrow \Nmc_{AA'}\Nmc_{A'B}$ is a rule of $G_\T$, we obtain $G_\T \vdash \Nmc_{AB}(c^{n})$.
		
		\item $\dom(f_m)=\{A'(a,n),A''(a,n), A' \sqcap A''\sqs B \}$ and $f_m$ is the application of a rule of form  \eqref{math:inference-rule:conjunction}. Since $A'(a,n)$ and $A''(a,n)$ are in $\Fmc_{m-1 }$, $(\F_0, \dots, \F_{m - 1})$ is a derivation witnessing $(\Trig, \{A(a,0)\})\vdash A'(a, n)$ and $(\Trig, \{A(a,0)\})\vdash A''(a, n)$. 
		By the induction hypothesis, $G_\T \vdash \Nmc_{AA'}(c^n)$ and $G_\T \vdash \Nmc_{AA''}(c^n)$, and since $A' \sqcap A'' \sqs B \in \Trig$, by \eqref{math:grammar:from-tbox:conjunction}, $\Nmc_{AB}\rightarrow \Nmc_{AA'} \& \Nmc_{AA''}$ is a rule of $G_\T$ so $G_\T \vdash \Nmc_{AB}(c^n)$. 
		
		\item $\dom(f_m)=\{r(a,b, n),A'(b,n), \exists r.A' \sqs B \}$ and $f_m$ is the application of a rule of form \eqref{math:inference-rule:return}. By the form of the derivation rules, $r(a,b, n)\in\Fmc_{m-1}$ has been produced by the application of a rule of form \eqref{math:inference-rule:rigid} or \eqref{math:inference-rule:existential}. Hence, there must be an index $i < m - 1$ such that for some $A'', B'$ and $k \in \N$,  
		$\dom(f_i) = \{A'' \sqs \exists r . B', A''(a, k)\}$, $\rn(f_i) = \{r(a, b, k), B'(b, k)\}$. Thus $A''(a, k) \in \F_{i-1}$, and $(\Fmc_0,\dots,\Fmc_{i-1})$ is a derivation witnessing $(\Trig,\{A(a,0)\})\vdash A''(a, k)$.  By the induction hypothesis, we obtain $G_\T \vdash \Nmc_{AA''}(c^{k})$. Moreover, one can extract from $(\F_i, \dots, \F_{m - 1})$ a derivation of length at most $m - 1$ witnessing $(\Trig, \{B'(b, k)\}) \vdash A'(b, n)$. Since \T is a \TELnf-TBox, $k \leqslant n$. 
		We get a derivation length at most $m - 1$ witnessing $(\Trig, \{B'(b ,0)\}) \vdash A'(b, n-k)$ by shifting all timestamps in this derivation, and thus, by the induction hypothesis, $G_\T \vdash \Nmc_{B'A'}(c^{n-k})$. Since $A'' \sqs \exists r . B'$ and $\exists r.A' \sqs B$ are in \Trig, by \eqref{math:grammar:from-tbox:down-and-up}, $\Nmc_{A''B}\rightarrow \Nmc_{B'A'}$ is a rule of $G_\T$, so from $G_\T \vdash \Nmc_{B'A'}(c^{n-k})$, we get $G_\T \vdash \Nmc_{A''B}(c^{n-k})$. We combine this with $G_\T \vdash \Nmc_{AA''}(c^{k})$ and the fact that by \eqref{math:grammar:from-tbox:middle}, $\Nmc_{AB}\rightarrow\Nmc_{AA''}\Nmc_{A''B}$ is a rule of $G_\T$ to establish $G_\T \vdash \Nmc_{AB}(c^n)$.
	\end{itemize}
	
	\noindent
	$(\Rightarrow)$ We show by induction on $m$ that for every $A,B\in\CN(\Trig)$, $a\in\NI\cup\NN$, and $n\in\N$, if there exists a derivation witnessing $G_\T \vdash \Nmc_{AB}(c^n)$ of length at most $m$, then $(\Trig, \{A(a, 0)\}) \vdash B(a, n)$. 
	
	Since $\Gmc_0=\{c(c)\}$, the base case is $m=1$, when the derivation consists of $\Gmc_0 \xto{g_1}\Gmc_1$. 
	There are two possible cases  for the rule application $g_1$ that produces $\Nmc_{AB}(c^n)$ from $c(c)$:
	\begin{itemize}
		\item $g_1$ is the application of the rule from  \eqref{math:grammar:semantics:deduction-rules} that corresponds to $\Nmc_{AB}\rightarrow \varepsilon$. In this case, $n=0$ and by \eqref{math:grammar:from-tbox:epsilon}, $A\sqsubseteq B\in\Trig$ or $A=B$. In both cases, for every $a\in\NI\cup\NN$, $(\Trig, \{A(a, 0)\}) \vdash B(a, 0)$, i.e. $(\Trig, \{A(a, 0)\}) \vdash B(a, n)$. 
		\item $g_1$ is the application of the rule from  \eqref{math:grammar:semantics:deduction-rules} that corresponds to $\Nmc_{AB}\rightarrow c^n$. In this case, by \eqref{math:grammar:from-tbox:shift}, $A\sqsubseteq \Next^n B\in\Trig$. Hence, for every $a\in\NI\cup\NN$, $(\Trig, \{A(a, 0)\}) \vdash B(a, n)$. 
	\end{itemize}
	Induction step: assume that the property holds for $m-1$ and let $\Gmc_0 \xto{g_1} \dots \xto{g_m} \Gmc_m$ be a derivation witnessing $G_\T \vdash \Nmc_{AB}(c^n)$. If $\Nmc_{AB}(c^n)\in\Gmc_{m-1}$, the result follows by induction hypothesis. Otherwise, there are three possible cases for the last rule application $g_m$ that produces $\Nmc_{AB}(c^n)$:
	\begin{itemize}
		\item $\dom(g_m)=\{ \Nmc_{AC}(c^n), \Nmc_{AD}(c^n)\}$, and $g_m$ is the application of the rule from  \eqref{math:grammar:semantics:deduction-rules} that corresponds to $\Nmc_{AB}\rightarrow \Nmc_{AC}\&\Nmc_{AD}$. By \eqref{math:grammar:from-tbox:conjunction}, $C\sqcap D\sqsubseteq B\in\Trig$. Then $(\Gmc_0,\dots,\Gmc_{m-1})$ is a derivation witnessing $G \vdash \Nmc_{AC}(c^n)$ and $G \vdash \Nmc_{AD}(c^n)$, so by the induction hypothesis, for every $a\in\NI\cup\NN$, $(\Trig, \{A(a, 0)\}) \vdash C(a, n)$ and $(\Trig, \{A(a, 0)\}) \vdash D(a, n)$. Hence $(\Trig, \{A(a, 0)\}) \vdash B(a, n)$.
		
		\item $\dom(g_m)=\{\Nmc_{CD}(c^n)\}$, and $g_m$ is the application of the rule from  \eqref{math:grammar:semantics:deduction-rules} that corresponds to  $\Nmc_{AB}\rightarrow \Nmc_{CD}$. By \eqref{math:grammar:from-tbox:down-and-up}, there exists $r$ such that $A\sqsubseteq \exists r.C$ and $\exists r.D\sqsubseteq B$ are in \Trig. Then $(\Gmc_0,\dots,\Gmc_{m-1})$ is a derivation witnessing $G \vdash \Nmc_{CD}(c^n)$ so by the induction hypothesis, for every $a\in\NI\cup\NN$, $(\Trig, \{C(a, 0)\}) \vdash D(a, n)$. Let $\Fmc_0 \xto{f_1} \dots \xto{f_p} \Fmc_p$ be obtained from that derivation by substituting $a$ everywhere with a fresh $b \in \NN$. Then the following is a derivation witnessing $(\Trig, \{A(a, 0)\}) \vdash B(a, n)$:
		\begin{align*}
			\F \xto{f_0} \F'_0 \xto{f_1} \dots \xto{f_p} \F'_p \xto{f_{p + 1}} \F_{p + 1} \xto{f_{p + 2}} \F_{p + 2}
		\end{align*}
		where
		\begin{itemize}
			\item $\F = \{A(a, 0)\} \cup \Trig$;
			
			\item $f_0$ is an application of the rule of form \eqref{math:inference-rule:existential} with $\dom(f_0) = \{A(a, 0), A \sqs \exists r.C\}$ and $\rn(f_0) = \{r(a, b, 0), C(b, 0)\}$;
			\item $\F'_i=\F_i\cup\{A(a, 0),r(a, b, 0)\}$ for $0\leqslant i\leqslant p$;
			\item $f_{p+ 1}$ is an application of the rule of form \eqref{math:inference-rule:rigid} with $\dom(f_{p+1}) = \{r(a, b, 0)\}$ and $\rn(f_{p+ 1}) = \{r(a, b, n)\}$; 
			
			\item $f_{p + 2}$ is an application of the rule of form \eqref{math:inference-rule:return} with $\dom(f_{p+2}) = \{r(a, b, n), D(b, n), \exists r . D \sqs B\}$ and $\rn(f_{p + 2}) = \{B(a, n)\}$;
			
			\item $\F_{p + 1}=\F'_p\cup\rn(f_{p+1})$ and $\F_{p + 2}=\F_{p+1}\cup\rn(f_{p+2})$.
		\end{itemize}
		
		\item $\dom(g_m)=\{\Nmc_{AC}(c^{n-k}),\Nmc_{CB}(c^k)\}$ for some $k\in\{0,\dots,n\}$, and $g_m$ is the application of the rule from  \eqref{math:grammar:semantics:deduction-rules} that corresponds to $\Nmc_{AB}\rightarrow \Nmc_{AC}\Nmc_{CB}$. By \eqref{math:grammar:from-tbox:middle} $A,B,C\in\NC(\Trig)$. 
Then $(\Gmc_0,\dots,\Gmc_{m-1})$ is a derivation witnessing $G \vdash \Nmc_{AC}(c^{n - k})$ and $G \vdash \Nmc_{CB}(c^k)$. By the induction hypothesis, for every $a\in\NI\cup\NN$, $(\Trig, \{A(a, 0)\}) \vdash C(a, n - k)$ and $(\Trig, \{C(a, 0)\}) \vdash B(a, k)$, which is equivalent (by Proposition~\ref{prop:derivationTEL}) to $(\Trig, \{C(a, n-k)\}) \vdash B(a, n)$. Hence, $(\Trig, \{A(a, 0)\}) \vdash B(a, n)$.  \qedhere
	\end{itemize}
\end{proof}

	\subsection{Proof of Theorem~\ref{thm:conjunctive-grammars:grammar-to-tbox}}
	Again, we start with somme lemmas. Recall that given a unary conjunctive grammar $G$, the \TELnf-TBox $\T_G$ is defined by Definition~\ref{def:grammar-to-tbox}.
	
	\begin{applemma}\label{lm:app:LmCgMeetPreciseFirstDirection}
	For $\iota = i_1 \dots i_k \in \Jmc$, if $\T_G \mdl A \sqs \Next^n C_\iota$ then $n = n_1 + \dots + n_k$, such that $\T_G \mdl A \sqs \Next^{n_j} B_{i_j}$ for $1 \leqslant j \leqslant k$. Moreover, if there exists a derivation for $(\T_G,\{A(a,0)\}) \vdash C_\iota(a,n)$ of length $p$, then for every for $1 \leqslant j \leqslant k$, there exists a derivation for $(\T_G,\{A(a,0)\}) \vdash B_{i_j}(a,n_j)$ of length at most $p-1$.
	\end{applemma}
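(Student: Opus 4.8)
The plan is to pass to the derivation system via Proposition~\ref{prop:derivationTEL}, so that $\T_G \mdl A \sqs \Next^n C_\iota$ becomes the existence of a derivation $\Fmc_0 \xto{f_1} \dots \xto{f_p} \Fmc_p$ witnessing $(\T_G, \{A(a,0)\}) \vdash C_\iota(a,n)$, and then argue by induction on the length $k$ of $\iota$, threading the length bound through. We may assume that $f_p$ is the application producing $C_\iota(a,n)$, since otherwise we truncate the derivation at the first step producing it, which only decreases its length. Before the induction I would record a \emph{locality} observation in the spirit of Lemma~\ref{lm:app:local-stays-local}: since every null is introduced fresh by an application of \eqref{math:inference-rule:existential}, the nulls of a derivation form a forest rooted at $a$, with roles always oriented from parent to child (rules of form \eqref{math:ci:from-grammar:down} place the fresh null in the second argument) and return rules \eqref{math:inference-rule:return} moving information only from child to parent. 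Consequently, when a null $b$ is created with seed fact $A(b,m)$, every fact about $b$ or a descendant of $b$ in the derivation is already derivable from $\{A(b,m)\}\cup\T_G$; collecting these rule applications in their original order yields a self-contained sub-derivation which, after shifting all timestamps by $-m$, witnesses the corresponding entailment from $A$.

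For the base case $k=1$, $\iota = i_1$, I would observe that the subscript of $C_{i_1}$ has length $1$, whereas rule \eqref{math:ci:from-grammar:up} always produces a $C$-concept with a subscript of length at least $2$ (its right-hand side is $C_{i\iota'}$ with $\iota' \in \Jmc$, hence $|\iota'|\geqslant 1$). Thus $C_{i_1}$ occurs on the right-hand side of no inclusion other than $B_{i_1}\sqs C_{i_1}$ from \eqref{math:ci:from-grammar:conclude}, which applies because $i_1\in\Jmc$. Hence $f_p$ applies this inclusion to some $B_{i_1}(a,n)\in\Fmc_{p-1}$, and the prefix $\Fmc_0 \xto{f_1}\dots\xto{f_{p-1}}\Fmc_{p-1}$ witnesses $(\T_G,\{A(a,0)\})\vdash B_{i_1}(a,n)$ in at most $p-1$ steps. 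Taking $n_1 = n$ settles the case.

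For the step $k\geqslant 2$, write $\iota = i_1\iota'$ with $\iota' = i_2\dots i_k\in\Jmc$. Now $C_\iota$ has a subscript of length $\geqslant 2$, so it can only be produced by \eqref{math:ci:from-grammar:up}, i.e.\ by a return application \eqref{math:inference-rule:return} using $\exists r_{i_1\dots i_k}.C_{\iota'}\sqs C_{i_1\dots i_k}$ together with $r_{i_1\dots i_k}(a,b,n)$ and $C_{\iota'}(b,n)$. As $r_{i_1\dots i_k}$ occurs on the right-hand side of only the existential inclusion $B_{i_1}\sqs \exists r_{i_1\dots i_k}.A$ from \eqref{math:ci:from-grammar:down}, and $b$ is fresh at creation, I would trace this role fact back to the unique existential application that produced $r_{i_1\dots i_k}(a,b,n_1)$ and the seed $A(b,n_1)$ from some $B_{i_1}(a,n_1)$, the role fact at time $n$ being available by rigidity through \eqref{math:inference-rule:rigid}. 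The prefix up to that application witnesses $\T_G \mdl A \sqs \Next^{n_1} B_{i_1}$ in fewer than $p$ steps. By the locality observation, $C_{\iota'}(b,n)$ is obtained by a self-contained sub-derivation from $\{A(b,n_1)\}\cup\T_G$ of length $p' < p$; shifting by $-n_1$ gives $\T_G\mdl A \sqs \Next^{n-n_1}C_{\iota'}$ with a witness of length $p'\leqslant p-1$. The induction hypothesis applied to $\iota'$ then yields $n-n_1 = n_2+\dots+n_k$ with $\T_G\mdl A \sqs \Next^{n_j}B_{i_j}$ for $2\leqslant j\leqslant k$, each witnessed in at most $p'-1\leqslant p-1$ steps. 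Setting $n = n_1+\dots+n_k$ and collecting the bounds completes the step.

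I expect the main obstacle to be making the locality and extraction argument fully precise: one must verify that no rule application ever produces a fact about $b$ from a fact lying outside $b$'s subtree, that the retained applications reassemble into a genuine derivation (premises remain available), and that the length bookkeeping really keeps every $B_{i_j}$-derivation within $p-1$ steps across both the direct witness for $j=1$ and the recursively obtained witnesses for $j\geqslant 2$. Once this is in place, the remainder is a short case analysis driven entirely by the fact that the subscript length of the target $C$-concept forces the shape of its last producing inclusion.
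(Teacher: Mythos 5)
Your proof follows essentially the same route as the paper's: induction on $|\iota|$, exploiting that $C_{i_1}$ can only be produced by $B_{i_1}\sqsubseteq C_{i_1}$ in the base case, and in the step tracing $C_{i_1\iota'}$ back through the unique inclusion $\exists r_{i_1\dots i_k}.C_{\iota'}\sqsubseteq C_{i_1\dots i_k}$ to the existential application that created the null $b$ from $B_{i_1}(a,n_1)$, then extracting and time-shifting the self-contained sub-derivation rooted at $b$ before applying the induction hypothesis. The explicit locality/forest observation you add is exactly what the paper leaves implicit when it asserts that a derivation for $(\T_G,\{A(b,0)\})\vdash C_{\iota'}(b,n-n_1)$ can be read off from the suffix, and your length bookkeeping matches the paper's bounds.
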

	\begin{proof}
	
		We show by induction on $k$ that for every $\iota\in\Jmc$ of length $k$, if $\iota=i_1\dots i_k$, for every $n\in\N$, if $\T_G\models A\sqsubseteq\Next^n C_\iota$, then there exist $n_1,\dots, n_k$ such that $n=n_1+\dots+n_k$ and $\T_G\models A\sqsubseteq\Next^{n_j} B_{i_j}$ for $1\leqslant j\leqslant k$, and that if there exists a derivation for $(\T_G,\{A(a,0)\}) \vdash C_\iota(a,n)$ of length $p$, then for every for $1 \leqslant j \leqslant k$, there exists a derivation for $(\T_G,\{A(a,0)\}) \vdash B_{i_j}(a,n_j)$ of length at most $p-1$. 
		
In the base case, $k=1$ and $\iota=i_1$ (hence $n_1=n$). 
If $\T_G\models A\sqsubseteq\Next^n C_{i_1}$, i.e., $(\T_G, \{A(a, 0)\}) \vdash C_{i_1}(a, n)$, since the only concept inclusion in $\T_G$ with $C_{i_1}$ in the right-hand side is $B_{i_1}\sqsubseteq C_{i_1}$ (cf.~Definition~\ref{def:grammar-to-tbox}), any derivation $\Fmc_0 \xto{f_1} \dots \xto{f_p} \Fmc_p$ witnessing $(\T_G,\{A(a,0)\}) \vdash C_\iota(a,n)$ must contain a derivation (of length at most $p-1$) witnessing $(\T_G, \{A(a, 0)\}) \vdash B_{i_1}(a, n)$. Hence, $\T_G\models A\sqsubseteq\Next^n B_{i_1}$.

		Induction step: assume that the property holds for $k-1$ and let $\iota = i_1 \jmath\in\Jmc$ be of length $k$. Note that $\jmath=i_2\dots i_{k}\in\Jmc$ and is of length $k-1$.  Suppose $(\T_G, \{A(a, 0)\}) \vdash C_{i_1\jmath}(a, n)$ and let $\Fmc_0 \xto{f_1} \dots \xto{f_p} \Fmc_p$ be a  derivation witnessing it. Necessarily, the rule application that produces $C_{i_1\jmath}(a, n)$ uses the concept inclusion $\exists r_{i_1\jmath}. C_\jmath\sqsubseteq C_{i_1\jmath}$ since it is the only one with $C_{i_1\jmath}$ in the right-hand side by construction of $\T_G$.
		Thus there exists $b$ such that $r_{i_1\jmath}(a, b, n)$ and $ C_\jmath(b, n)$ are in $\F_{p - 1}$. 
		\begin{itemize}
		\item Let $f_l$ be the application of the rule of form \eqref{math:inference-rule:existential} that produced the first fact of the form $r_{i_1\jmath}(a, b, n_1)$ ($n_1 \in \N$ and $l\leqslant p-1$): by construction of $\T_G$, it must be the case that $\dom(f_l) = \{B_{i_1} \sqs \exists r_{i_1\jmath} . A, B_{i_1}(a, n_1)\}$ and $\rn(f_l) = \{r_{i_1\jmath}(a, b, n_1), A(b, n_1)\}$. 
		From $(\F_{l}, \dots, \F_{p - 1})$, one obtains a derivation  of length at most $p-1$ witnessing $(\T_G, \{A(b, 0)\}) \vdash C_\jmath(b, n - n_1)$ 
		by shifting all timestamps by $-n_1$. By construction of $\T_G$ (in particular, because it is a \TELnf-TBox), it must be the case that $n - n_1 \geq 0$. 
Hence, by the induction hypothesis, there exist $n_2,\dots, n_k$ such that $n - n_1 = n_2 + \dots + n_k$, and for $2 \leqslant j \leqslant k$, $\T_G \mdl A \sqs \Next^{n_j} B_{i_j}$ and there exists a derivation of length at most $p-1$ witnessing $(\T_G, \{A(a, 0)\}) \vdash B_{i_j}(a, n_j)$. 
		\item Furthermore, as $B_{i_1}(a, n_1) \in \dom(f_l) \sbs \F_{l - 1}$, $(\F_0,\dots,\F_{l - 1})$ is a derivation of length at most $p-1$ witnessing $(\T_G, \{A(a, 0)\}) \vdash B_{i_1}(a, n_1)$, and $\T_G \mdl A \sqs \Next^{n_1} B_{i_1}$.
		\end{itemize}
		Hence $n=n_1+\dots+n_k$ and for $1 \leqslant j \leqslant k$, $\T_G \mdl A \sqs \Next^{n_j} B_{i_j}$ and there exists a derivation of length at most $p-1$ witnessing $(\T_G, \{A(a, 0)\}) \vdash B_{i_j}(a, n_j)$. 	
\end{proof}

	\LmCgMeet*
	\begin{proof}
	$\Rightarrow$ The `only if' direction is given by Lemma~\ref{lm:app:LmCgMeetPreciseFirstDirection}. 
	
	\noindent$\Leftarrow$ We show by induction on $k$ that for every $\iota\in\Jmc$ of length $k$, if $\iota=i_1\dots i_k$, for every $n\in\N$, if there exist $n_1,\dots, n_k$ such that $n=n_1+\dots+n_k$ and $\T_G\models A\sqsubseteq\Next^{n_j} B_{i_j}$ for $1\leqslant j\leqslant k$, then $\T_G\models A\sqsubseteq\Next^n C_\iota$. 
		
		In the base case, $k=1$ and $\iota=i_1$. If  $(\T_G, \{A(a, 0)\}) \models B_{i_1}(a, n)$, then $(\T_G, \{A(a, 0)\}) \models C_{i_1}(a, n)$ since $B_{i_1}\sqsubseteq C_{i_1}\in\T_G$ by \eqref{math:ci:from-grammar:conclude}.

		Induction step: assume that the property holds for $k-1$ and let $\iota = i_1 \jmath\in\Jmc$ be of length $k$. Note that $\jmath=i_2\dots i_{k}\in\Jmc$ and is of length $k-1$. 		
 		Suppose that $n=n_1+\dots+n_k$ and $\T_G\models A\sqsubseteq\Next^{n_j} B_{i_j}$ for $1\leqslant j\leqslant k$.
		\begin{itemize}
		\item Since $\T_G\models A\sqsubseteq\Next^{n_1} B_{i_1}$, for every $a\in\NI$, $(\T_G, \{A(a,0)\})\vdash B_{i_1}(a,n_1)$. Let $\Fmc_0 \xto{f_1} \dots \xto{f_m} \Fmc_m$ be a derivation witnessing it.
		
		\item Since $i_1\jmath\in\Jmc$, by \eqref{math:ci:from-grammar:down}, $B_{i_1}\sqsubseteq \exists r_{i_1\jmath}. A\in\T_G$. Let $\Fmc_{m+1}=\Fmc_m\cup\rn(f_{m+1})$ where $f_{m+1}=(\{B_{i_1}\sqsubseteq \exists r_{i_1\jmath}. A, B_{i_1}(a,n_1)\},\{r_{i_1\jmath}(a,b,n_1), A(b,n_1)\})$.

		\item  By the induction hypothesis: $\T_G\models A\sqsubseteq\Next^{n'} C_\jmath$ for $n'=n_2+\dots+n_k$.  
		Hence, $(\T_G, \{A(b,0)\})\vdash C_\jmath(b,n')$, and by Proposition~\ref{prop:derivationTEL}, $(\T_G, \{A(b,n_1)\})\vdash C_\jmath(b,n_1+n')$. Let $\Fmc'_0 \xto{h_1} \dots \xto{h_p} \Fmc'_p$ be a derivation witnessing it, $(f_{m+1},\dots,f_{m+p})=(h_1,\dots,h_p)$, and $\Fmc_{m+1}=\Fmc_m\cup\rn(f_{m+1})$,..., $\Fmc_{m+p}=\Fmc_{m+p-1}\cup\rn(f_{m+p})$. It is easy to check that for every $m+1\leqslant j\leqslant m+p$, $\dom(f_j)\subseteq\Fmc_{j-1}$ and that $r_{i_1\jmath}(a,b,n_1)$ and $C_\jmath(b,n_1+n')$ are in $\Fmc_{m+p}$. 
		
		\item Let 
		 $f_{m+p+1}=(\{r_{i_1\jmath}(a,b,n_1)\},\{r_{i_1\jmath}(a,b,n)\})$
		be the application of the corresponding rule of form \eqref{math:inference-rule:rigid} and $\Fmc_{m+p+1}=\Fmc_{m+p}\cup\rn(f_{m+p+1})$, so that $r_{i_1\jmath}(a,b,n)$ and $C_\jmath(b,n)$ are in $\Fmc_{m+p+1}$.
		
		\item Finally, since $i_1\jmath\in\Jmc$ and $\jmath\in\Jmc$, by \eqref{math:ci:from-grammar:up}, $\exists r_{i_1\jmath}. C_\jmath\sqsubseteq C_{i_1\jmath}\in\T_G$. Let $f_{m+p+2}=(\{r_{i_1\jmath}(a,b,n),C_\jmath(b,n),\exists r_{i_1\jmath}. C_\jmath\sqsubseteq C_{i_1\jmath}\},\{C_{i_1\jmath}(a,n)\})$  be the application of the corresponding rule of form \eqref{math:inference-rule:return} and $\Fmc_{m+p+2}=\Fmc_{m+p+1}\cup\rn(f_{m+p+2})$. 
		
		\end{itemize}
		We obtain a derivation $\Fmc_0 \xto{f_1} \dots \xto{f_{m+p+2}} \Fmc_{m+p+2}$ witnessing $(\T_G,\{A(a,0)\})\vdash C_{i_1\jmath}(a,n)$, i.e., $\T_G\models A\sqsubseteq \Next^n C_{\iota}$.
	\end{proof}

\LmCgMain*
	\begin{proof}
		$(\Rightarrow)$ We show by induction on $p$ that for every $i\in\{1,\dots,m\}$, if there exists a derivation of size $p$ witnessing $(\T_G, \{A(a, 0)\}) \vdash B_i(a, n)$, then $G\vdash \Bmc_i(c^n)$. 
		
		Since $A \neq B_i$ by construction of $\T_G$, the base case is $p=1$. The only possible rule application that produces $ B_i(a, n)$ using formulas from $\T_G\cup\{A(a,0)\}$ is $f_1$ with $\dom(f_1)=\{A(a,0), A\sqsubseteq\Next^n B_i\}$. By  \eqref{math:ci:from-grammar:now} or \eqref{math:ci:from-grammar:shift} (depending on whether $n=0$ or not), it follows that $\Bmc_i\rightarrow c^n$ is a rule of $G$. Hence, $G \vdash \B_i(c^n)$. 
		
		Induction step: assume that the property holds for $p-1$ and let $\Fmc_0 \xto{f_1} \dots \xto{f_p} \Fmc_p$ be a derivation witnessing $(\T_G, \{A(a, 0)\}) \vdash B_i(a, n)$. If $B_i(a, n)\in\Fmc_{p-1}$, the result follows by the induction hypothesis. Otherwise, there are three possible cases for the last rule application that produces $B_i(a, n)$.
\begin{itemize}
		\item $\dom(f_p)=\{A(a, 0), A\sqsubseteq\Next^n B_i\}$, and we conclude as in the base case that $G \vdash \B_i(c^n)$. 
						
		\item $\dom(f_p)=\{C_{\iota(\alpha_1)}(a, n), C_{\iota(\alpha_1)}\sqsubseteq B_i\}$, and by  \eqref{math:ci:from-grammar:imply}, $\Bmc_i\rightarrow \alpha_1$ is a rule of $G$. Since $C_{\iota(\alpha_1)}(a, n)\in\Fmc_{p-1}$, 
 $(\T_G, \{A(a, 0)\}) \vdash C_{\iota(\alpha_1)}(a, n)$. Let $\alpha_1 = \B_{i_1}\dots \B_{i_k}$. By Lemma~\ref{lm:app:LmCgMeetPreciseFirstDirection}, $n = n_1 + \dots + n_k$, and for $1 \leqslant j \leqslant k$, $(\T, \{A(a, 0)\} \vdash B_{i_j}(a, n_j)$ and there is a derivation of length at most $p-2$ witnessing it. Hence, by the induction hypothesis, we conclude that $G \vdash \B_{i_j}(c^{n_j})$. Then, since $\Bmc_i\rightarrow \B_{i_1}\dots \B_{i_k}$ is a rule of $G$, we obtain $G \vdash \B_i(c^n)$.
						
		\item $\dom(f_p)=\{C_{\iota(\alpha_1)}(a, n), C_{\iota(\alpha_2)}(a, n), C_{\iota(\alpha_1)}\sqcap C_{\iota(\alpha_2)}\sqsubseteq B_i\}$, and by  \eqref{math:ci:from-grammar:meet}, $\Bmc_i\rightarrow \alpha_1\&\alpha_2$ is a rule of $G$. The argument is analogous to the one above: we consider separately $\alpha_1= \B_{i_1}\dots \B_{i_k}$ and $\alpha_2= \B'_{i'_1}\dots \B'_{i'_{k'}}$ to obtain  $n = n_1 + \dots + n_k$ and $n = n'_1 + \dots + n'_{k'}$ such that $G \vdash \B_{i_j}(c^{n_j})$ for every $1 \leqslant j \leqslant k$ and $G \vdash \B'_{i'_j}(c^{n'_j})$ for every $1 \leqslant j \leqslant k'$, and conclude using the fact that $\Bmc_i\rightarrow \B_{i_1}\dots \B_{i_k}\& \B'_{i'_1}\dots \B'_{i'_{k'}}$ is a rule of $G$.
		\end{itemize}
		
		\noindent
		$(\Leftarrow)$ We show by induction on $p$ that for every $i\in\{1,\dots,m\}$, if there exists a derivation of size $p$ witnessing $G\vdash \Bmc_i(c^n)$, then $(\T_G, \{A(a, 0)\}) \vdash B_i(a, n)$. 
		
		In the base case, $p = 1$, and there are two possible cases for the rule application $g_1$ that produces $\Bmc_i(c^n)$ from $c(c)$.
		\begin{itemize}
			\item $g_1$ is the application of the rule from  \eqref{math:grammar:semantics:deduction-rules} that corresponds to $\Bmc_i\rightarrow \varepsilon$. In this case, $n=0$ and by \eqref{math:ci:from-grammar:now}, $A\sqsubseteq B_i\in\T_G$.
			
			\item $g_1$ is the application of the rule from  \eqref{math:grammar:semantics:deduction-rules} that corresponds to $\Bmc_i\rightarrow c^n$. In this case, by \eqref{math:ci:from-grammar:shift}, $A\sqsubseteq\Next^n B_i\in\T_G$.
			
		\end{itemize}
		In both cases, $(\T_G, \{A(a, 0)\}) \vdash B_i(a, n)$.
		
		Induction step: assume that the property holds for $p-1$ and let $\Gmc_0 \xto{g_1} \dots \xto{g_p} \Gmc_p$ be a derivation witnessing $G \vdash \B_i(c^n)$. 
		If $\B_i(c^n)\in\Gmc_{p-1}$, the result follows by the induction hypothesis. Otherwise, 
there are two cases for $g_p$ that produces $\B_i(c^n)$. 
		\begin{itemize}
			\item $\dom(g_p)=\{\B_{i_1}(c^{n_1}), \dots, \B_{i_k}(c^{n_k})\}$ with $n=n_1+\dots+n_k$ and $g_p$ is the application of the rule from  \eqref{math:grammar:semantics:deduction-rules} that corresponds to $\Bmc_i\rightarrow \alpha_1$ with $\alpha_1 = \B_{i_1} \dots \B_{i_k}$. For $1 \leqslant j \leqslant k$, since $\B_{i_j}(c^{n_j}) \in \G_{p - 1}$, so that $G \vdash \B_{i_j}(c^{n_j})$ via a derivation of length at most $p - 1$, by the induction hypothesis $(\T_G, \{A(a, 0)\}) \vdash B_{i_j}(c^{n_j})$. Hence, by Lemma~\ref{lm:conjunctive-grammars:meet}, $(\T_G, \{A(a, 0)\}) \vdash C_{\iota(\alpha_1)}(a, n)$. Since by \eqref{math:ci:from-grammar:imply}, $C_{\iota(\alpha_1)}\sqsubseteq B_i$, we obtain that $(\T_G, \{A(a, 0)\}) \vdash B_i(a, n)$.
			
			\item $\dom(g_p)=\{\B_{i_1}(c^{n_1}), \dots, \B_{i_k}(c^{n_k}),\B'_{i'_1}(c^{n'_1}), \dots, \B'_{i'_{k'}}(c^{n'_{k'}})\}$ with $n=n_1+\dots+n_k$, $n=n'_1+\dots+n'_{k'}$ and $g_p$ is the application of the rule from  \eqref{math:grammar:semantics:deduction-rules} that corresponds to $\Bmc_i\rightarrow \alpha_1\&\alpha_2$ with $\alpha_1 = \B_{i_1} \dots \B_{i_k}$ and $\alpha_2 = \B'_{i'_1} \dots \B'_{i'_{k'}}$. As above, we obtain $(\T_G, \{A(a, 0)\}) \vdash C_{\iota(\alpha_1)}(a, n)$ and $(\T_G, \{A(a, 0)\}) \vdash C_{\iota(\alpha_2)}(a, n)$, and conclude using the fact that by \eqref{math:ci:from-grammar:meet}, $C_{\iota(\alpha_1)}\sqcap C_{\iota(\alpha_1)}\sqsubseteq B_i$.
			\qedhere
		\end{itemize}
	\end{proof}
	Theorem~\ref{thm:conjunctive-grammars:grammar-to-tbox} then follows from Definition~\ref{def:grammar-to-tbox} and Lemma \ref{lm:conjunctive-grammars:main}.
\GrammarsToTBoxes*


\section{Proof of Theorem \ref{thm:linear:tbox-to-grammar}}\label{app:linear}

Recall that in this section, we consider a \TELnl-TBox \T. 
We start with the lemmas.

\LemLinearTBoxesRigidToCFGrammars*
\begin{proof}
The proof is similar to that of Theorem \ref{thm:conjunctive-grammars:tbox-to-grammar}, but this time we have to care about two directions in time and two symbols.
	\smallskip
	
	\noindent
	$(\Rightarrow)$ 
	We show by induction on $m$ that for every $A,B \in \CN(\T)$, $a \in \NI\cup\NN$, and $n \in \Z$, if there exists a derivation witnessing $(\T, \{A(a, 0)\}) \vdash B(a, n)$ of length at most $m$, then there is a word $w \in \{c, d\}^*$, such that $\Gamma_\T \vdash \Nmc_{AB}(w)$ and $\# c(w) - \# d(w) = n$. 
	
	In the base case, $m = 0$, the derivation consists only of $\F_0 = \T \cup \{A(a, 0)\}$, so $B(a, n) \in \F_0$ implies that $A = B$ and $n = 0$, and by \eqref{math:grammar:from-tbox:epsilon}, $\Nmc_{AA}\rightarrow \varepsilon$ is a rule of $\Gamma_\T$  so $\Gamma_\T \vdash \Nmc_{AB}(\varepsilon)$. We get $n = 0 = \# c(\varepsilon) - \# d(\varepsilon)$.
	
	Induction step: assume that the property holds for $m - 1$ and let $\Fmc_0 \xto{f_1} \dots \xto{f_m} \Fmc_m$ be a derivation witnessing $(\T, \{A(a,0)\}) \vdash B(a, n)$. If $B(a, n) \in \Fmc_{m-1}$, the result follows by induction hypothesis. Otherwise, there are two possible cases for the last rule application $f_m$ that produces $B(a, n)$:
	\begin{itemize}
		\item $\dom(f_m )= \{A'(a,n - k), A' \sqs \Next^k B \}$ for some $k\in\Z$ 
		is the application of a rule of form \eqref{math:inference-rule:shift}. 
		Since $(\F_0, \dots, \F_{m - 1})$ is a derivation witnessing $(\T,\{A(a,0)\})\vdash A'(a, n - k)$, by induction hypothesis, there is $u \in \{c, d\}^*$ such that $\Gamma_\T \vdash \Nmc_{AA'}(u)$, and $\# c(u) - \# d(u) = n - k$. 
		Moreover, $A' \sqs \Next^k B\in\T$, and, depending on $k$, we have the following two cases:
		\begin{itemize}
			\item if $k \geqslant 0$, then by \eqref{math:grammar:from-tbox:epsilon} or \eqref{math:grammar:from-tbox:shift}, $\Nmc_{A'B}\rightarrow c^k$ is a rule of $\Gamma_\T$, and thus $\Gamma_\T \vdash \Nmc_{A'B}(v)$ for $v = c^k$;
			
			\item if $k < 0$, then by \eqref{math:grammar:from-tbox:negative-shift}, $\Nmc_{A'B}\rightarrow d^{|k|}$ is a rule of $\Gamma_\T$, and thus $\Gamma_\T \vdash \Nmc_{A'B}(v)$ for $v = d^{|k|}$;
		\end{itemize}
		Then, since by \eqref{math:grammar:from-tbox:middle} $\Nmc_{AB} \rightarrow \Nmc_{AA'}\Nmc_{A'B}$ is a rule of $\Gamma_\T$, we obtain $\Gamma_\T \vdash \Nmc_{AB}(uv)$, and $w = uv$ is such that $\# c(w) - \# d(w) = n$. Indeed, 
		\begin{itemize}
		\item if $v=c^k$, then $\# c(w) = \# c(u)+k$ and $\# d(w)=\# d(u)$ so $\# c(w) - \# d(w) =\# c(u)+k- \# d(u)= n-k+k=n$, and 
		\item if $v = d^{|k|}$, $\# c(w) = \# c(u)$ and $\# d(w)=\# d(u)+|k|$ so $\# c(w) - \# d(w) =\# c(u)- \# d(u)-|k|=n-k-|k|=n$ since in this case, $k<0$.
		\end{itemize}
		\item $\dom(f_m)=\{r(a,b, n),A'(b,n), \exists r.A' \sqs B \}$ is the application of a rule of form \eqref{math:inference-rule:return}. By the form of the derivation rules, $r(a,b, n)\in\Fmc_{m-1}$ has been produced by the application of a rule of form \eqref{math:inference-rule:rigid} or \eqref{math:inference-rule:existential}. Hence, there must be an index $i < m - 1$ such that for some $A'', B'$ and $k \in \Z$, $\dom(f_i) = \{A'' \sqs \exists r . B', A''(a, k)\}$, $\rn(f_i) = \{r(a, b, k), B'(b, k)\}$. Thus $A''(a, k) \in \F_{i}$, and $(\Fmc_0,\dots,\Fmc_i)$ is a derivation witnessing $(\T, \{A(a,0)\})\vdash A''(a, k)$.  By the induction hypothesis, we obtain a word $u$, $\# c(u) - \# d(u) = k$, such that $\Gamma_\T \vdash \Nmc_{AA''}(u)$. Moreover, one can extract from $(\F_i, \dots, \F_{m - 1})$ a derivation of length at most $m - 1$ witnessing $(\T, \{B'(b, k)\}) \vdash A'(b, n)$. We get a derivation witnessing $(\T, \{B'(b, 0)\}) \vdash A'(b, n - k)$ by shifting all timestamps in this derivation, and thus, by the induction hypothesis, a word $v$, $\# c(v) - \# d(v) = n - k$, $\Gamma_\T \vdash \Nmc_{B'A'}(v)$. Since $A'' \sqs \exists r . B'$ and $\exists r.A' \sqs B$ are in \T, by \eqref{math:grammar:from-tbox:down-and-up}, $\Nmc_{A''B}\rightarrow \Nmc_{B'A'}$ is a rule of $\Gamma_\T$, so from $\Gamma_\T \vdash \Nmc_{B'A'}(v)$, we get $\Gamma_\T \vdash \Nmc_{A''B}(v)$. We combine this with $\Gamma_\T \vdash \Nmc_{AA''}(u)$ and the fact that by \eqref{math:grammar:from-tbox:middle}, $\Nmc_{AB}\rightarrow\Nmc_{AA''}\Nmc_{A''B}$ is a rule of $\Gamma_\T$ to establish $\Gamma_\T \vdash \Nmc_{AB}(uv)$. Then $w = uv$ is such that $\# c(w) - \# d(w) = \# c(u) - \# d(u) + \# c(v) - \# d(v)=k+n-k=n$. 
	\end{itemize}
	
	\noindent
	$(\Leftarrow)$ We show by induction on $m$ that for every $A, B \in \CN(\T)$, $a \in \NI\cup\NN$, and $n \in \Z$, if there exists a word $w \in \{c, d\}^*$ such that $\# c(w) - \# d(w) = n$, and a derivation witnessing $\Gamma_\T \vdash \Nmc_{AB}(w)$ of length at most $m$, then $(\T, \{A(a, 0)\}) \vdash B(a, n)$. 
	
	Since $\Gmc_0=\{c(c)\}$, the base case is $m = 1$, when the derivation consists of $\Gmc_0 \xto{g_1} \Gmc_1$. 
	There are two possible cases  for the rule application $g_1$ that produces $\Nmc_{AB}(w)$ from $c(c)$:
	\begin{itemize}
		\item $g_1$ is the application of the rule from  \eqref{math:grammar:semantics:deduction-rules} that corresponds to $\Nmc_{AB}\rightarrow \varepsilon$. In this case, $n = 0$ and by \eqref{math:grammar:from-tbox:epsilon}, $A\sqsubseteq B \in \T$ or $A = B$. In both cases, for every $a \in \NI\cup\NN$, $(\T, \{A(a, 0)\}) \vdash B(a, 0)$, i.e. $(\T, \{A(a, 0)\}) \vdash B(a, n)$.
		 
		\item $g_1$ is the application of the rule from  \eqref{math:grammar:semantics:deduction-rules} that corresponds to $\Nmc_{AB} \rightarrow c^n$ ($n\geqslant 0$) or to $\Nmc_{AB} \rightarrow d^{|n|}$ ($n<0$). In this case, by \eqref{math:grammar:from-tbox:shift} or \eqref{math:grammar:from-tbox:negative-shift}, respectively, $A\sqsubseteq \Next^n B \in \T$. Hence, for every $a \in \NI\cup\NN$,  $(\T, \{A(a, 0)\}) \vdash B(a, n)$. 
	\end{itemize}
	Induction step: assume that the property holds for $m - 1$ and let $\Gmc_0 \xto{g_1} \dots \xto{g_m} \Gmc_m$ be a derivation witnessing $\Gamma_\T \vdash \Nmc_{AB}(w)$, $\#c(w) - \#d(w) = n$. If $\Nmc_{AB}(w) \in \Gmc_{m - 1}$, the result follows by induction hypothesis. Otherwise, there are two possible cases for the last rule application $g_m$ that produces $\Nmc_{AB}(w)$:
	\begin{itemize}
		\item $\dom(g_m)=\{\Nmc_{CD}(w)\}$, and $g_m$ is the application of the rule from  \eqref{math:grammar:semantics:deduction-rules} that corresponds to  $\Nmc_{AB}\rightarrow \Nmc_{CD}$. By \eqref{math:grammar:from-tbox:down-and-up}, there exists $r$ such that $A\sqsubseteq \exists r.C$ and $\exists r.D\sqsubseteq B$ are in \T. Then $(\Gmc_0,\dots,\Gmc_{m-1})$ is a derivation witnessing $G \vdash \Nmc_{CD}(w)$ so by the induction hypothesis, for every $a \in \NI\cup\NN$, $(\T, \{C(a, 0)\}) \vdash D(a, n)$. Let $\Fmc_0 \xto{f_1} \dots \xto{f_p} \Fmc_p$ be obtained from that derivation by substituting $a$ everywhere with a fresh $b \in \NN$. Then the following is a derivation witnessing $(\T, \{A(a, 0)\}) \vdash B(a, n)$:
		\begin{align*}
			\F \xto{f_0} \F'_0 \xto{f_1} \dots \xto{f_p} \F'_p \xto{f_{p + 1}} \F_{p + 1} \xto{f_{p + 2}} \F_{p + 2}
		\end{align*}
		where
		\begin{itemize}
			\item $\F = \{A(a, 0)\} \cup \T$;
			
			\item $f_0$ is an application of the rule of form \eqref{math:inference-rule:existential} with $\dom(f_0) = \{A(a, 0), A \sqs \exists r.C\}$ and $\rn(f_0) = \{r(a, b, 0), C(b, 0)\}$;
			\item $\F'_i=\F_i\cup\{A(a, 0),r(a, b, 0)\}$ for $0\leqslant i\leqslant p$;
			\item $f_{p+ 1}$ is an application of the rule of form \eqref{math:inference-rule:rigid} with $\dom(f_{p+1}) = \{r(a, b, 0)\}$ and $\rn(f_{p+ 1}) = \{r(a, b, n)\}$;
			
			\item $f_{p + 2}$ is an application of the rule of form \eqref{math:inference-rule:return} with $\dom(f_{p+2}) = \{r(a, b, n), D(b, n), \exists r . D \sqs B\}$ and $\rn(f_{p + 2}) = \{B(a, n)\}$;
			
			\item $\F_{p + 1}=\F'_p\cup\rn(f_{p+1})$ and $\F_{p + 2}=\F_{p+1}\cup\rn(f_{p+2})$.
		\end{itemize}
		
		\item $\dom(g_m)=\{\Nmc_{AC}(u),\Nmc_{CB}(v)\}$, where $\#c(u) - \#d(u) = n - k$ and $\#c(v) - \#d(v) = k$, for some $k \in \Z$, and $g_m$ is the application of the rule from  \eqref{math:grammar:semantics:deduction-rules} that corresponds to $\Nmc_{AB} \rightarrow \Nmc_{AC}\Nmc_{CB}$. By \eqref{math:grammar:from-tbox:middle} $A,B,C\in\NC(\T)$. 
		Then $(\Gmc_0,\dots,\Gmc_{m-1})$ is a derivation witnessing $G \vdash \Nmc_{AC}(u)$ and $G \vdash \Nmc_{CB}(v)$. By the induction hypothesis, for every $a\in\NI\cup\NN$, $(\T, \{A(a, 0)\}) \vdash C(a, n - k)$ and $(\T, \{C(a, 0)\}) \vdash B(a, k)$, so (by Proposition~\ref{prop:derivationTEL}), $(\T, \{C(a, n-k)\}) \vdash B(a, n)$. Hence, $(\T, \{A(a, 0)\}) \vdash B(a, n)$.  \qedhere
	\end{itemize}
\end{proof}

\LmLinearLocalRolesRemoval*
\begin{proof}
	$(\Rightarrow)$ Let $\Fmc_0 \xto{f_1} \dots \xto{f_m} \Fmc_m$ be a derivation witnessing $(\T, \{A(a, 0)\}) \vdash B(a, n)$. We build a derivation witnessing $(\Triglin, \{A(a, 0)\}) \vdash B(a, n)$. Let $(h_1,\dots, h_p)$ be the sequence of rule applications obtained from $(f_1,\dots,f_m)$ as follows. For every application $f_i$ of a rule of the form \eqref{math:inference-rule:return}, where $f_i = (\{r(a, b, \ell), A'(b, \ell), \exists r . A' \sqs B'\}, \{B'(a, \ell)\})$ with $r \in \RNloc$, find the application $f_j$, $j < i$, of a rule of the form \eqref{math:inference-rule:existential} that produced $r(a,b,\ell)$: $f_j = (\{A''(a, \ell), A'' \sqs \exists r . B''\}, \{r(a, b, \ell), B''(b, \ell)\})$. From $(\F_{j - 1}, \dots, \F_i)$, by Proposition \ref{prop:derivationTEL}, we get $\T \mdl A'' \sqs B'$. Hence, $A'' \sqs B'\in\Triglin$. Substitute the sequence $(f_j, \dots, f_i)$ with a single application $h$ of a rule of the form \eqref{math:inference-rule:shift}, where $h = (\{A''(a, \ell), A'' \sqs B'\}, \{B'(a, \ell)\})$.
	
	Since all concept inclusions used in the premises of $h_1,\dots, h_p$ belongs to \Triglin by construction, we indeed obtain a derivation $\Fmc'_0 \xto{h_1} \dots \xto{h_p} \Fmc'_p$ witnessing $(\Triglin, \{A(a, 0)\}) \vdash B(a, n)$ by setting $\Fmc'_0=\{A(a, 0)\}\cup\Triglin$ and $\Fmc'_i=\Fmc'_{i-1}\cup\rn(h_i)$.  
	\smallskip
	
	\noindent$(\Leftarrow)$ Suppose $\Fmc_0 \xto{f_1} \dots \xto{f_m} \Fmc_m$ is a derivation witnessing $(\Triglin, \{A(a, 0)\}) \vdash B(a, n)$. Let $\T_1=\Triglin\setminus\T$. Since $\Triglin = \T_0 \cup \{A' \sqs B' \mid \T \mdl A' \sqs B'\}$, where $\T_0$ is obtained from \T by removing concept inclusions that use local role names, $\T_1$ contains only concept inclusions of the form $A'\sqsubseteq B'$. 
	
	We build a derivation witnessing $(\T, \{A(a, 0)\}) \vdash B(a, n)$. 
	Let $(h_1,\dots, h_p)$ be the sequence of rule applications obtained from $(f_1,\dots,f_m)$ as follows. 
	For each $f_i = (\{A'(b, \ell), A' \sqs B'\}, \{B'(b, \ell)\})$ such that $A' \sqs B' \in \T_1$, apply Proposition \ref{prop:derivationTEL} to obtain a sequence $(f'_1, \dots, f'_k)$ of rule applications corresponding to a derivation witnessing $(\T, \{A'(b, \ell)\}) \vdash B'(b, \ell)$. Then, substitute $f_i$ with $(f'_1, \dots, f'_k)$.

	We can check that all concept inclusions used in the premises of $h_1,\dots, h_p$ belong to \T, so that we indeed obtain a derivation $\Fmc'_0 \xto{h_1} \dots \xto{h_p} \Fmc'_p$ witnessing $(\T, \{A(a, 0)\}) \vdash B(a, n)$ by setting $\Fmc'_0=\{A(a, 0)\}\cup\T$ and $\Fmc'_i=\Fmc'_{i-1}\cup\rn(h_i)$. 
\end{proof}

\ThmLinearTBoxesToCFGrammars*

\begin{proof}
	Using Lemma \ref{lm:linear:local-roles-removal} and the facts that $\RN(\Triglin) \sbs \RNrig$ and that $|\Triglin|$ is polynomial in $|\T|$, we can assume that $\RN(\T) \sbs \RNrig$.  Let $\Gamma_\T$ be the grammar given in Definition~\ref{def:linear:GammaT}. 
	It is easy to check that the size of $\Gamma_\T$ is polynomial in $|\T|$, and by Lemma~\ref{lem:linear:rigid-tbox-to-grammar}, for any $A, B \in \CN(\T)$, $\T \mdl A \sqs \Next^n B$ iff there exists $w \in L_{\Gamma_\T}(\Nmc_{AB})$ with $\#c(w) - \#d(w) = n$.
\end{proof}
	

\section{Missing proofs for Section \ref{sec:consequences}}\label{app:consequences}

\subsection{Proof of Theorem \ref{thm:consequences:future-answering}}

\ThmFuturePTime*

\begin{proof}
	The lower bounds hold already for the description logic \EL (without temporal operators) \citep{DBLP:journals/ai/CalvaneseGLLR13}. For the upper bounds, we provide a polynomial reduction from the problem of deciding whether $(\T, \A) \mdl A(a, n)$ to that of checking whether a word belongs to the language of a conjunctive grammar, which can be tested in polynomial time (Theorem~\ref{thm:membership-testing}). Our reduction builds a \TELnf-TBox $\T'\cup\T_\A$, an assertion $C_a(a, l)$ and a concept name $A_n$ such that $(\T, \A) \mdl A(a, n)$ iff $(\T' \cup \T_\A, \{C_a(a, l)\}) \mdl A_n(a, n)$, then use Proposition \ref{prop:derivationTEL} and Theorem \ref{thm:conjunctive-grammars:tbox-to-grammar} to conclude. The idea is to encode all information about $a$ in $\A$ into the single fact $C_a(a, l)$ thanks to $\T_\A$.
	
	Let  $\ind(\A)$ be the set of individual names that occur in \A, and $l, m \in \Z$ be the least and the greatest timestamps appearing in \A. 
	We introduce fresh concept names $\{C_a \mid a \in \ind(\A)\}$ and $\{A_k \mid A \in \CN(\T), l \leqslant k \leqslant m + 1\}$, and role names $\{\role{ab} \in \RNrig \mid  a, b \in \ind(\A), {r\in\NR(\T)}\}$. For the convenience of notation, we write $A_k$ for all $k \geqslant l$, assuming that $A_k = A_{m + 1}$ when $k > m$.
		
	Let $\T'$ be the TBox containing the following concept inclusions for all $l\leqslant k\leqslant m+1$.	
	\begin{align}
			&A_k \sqs \Next^s B_{k + s} &&\text{for } A \sqs \Next^s B \in \T
			\tag{\ref{math:ci:abox-to-tbox:shift}}\\
			&A_k \sqcap A'_k \sqs B_k &&\text{for } A \sqcap A' \sqs B \in \T
			\tag{\ref{math:ci:abox-to-tbox:conjunction}}\\
			&A_k \sqs \exists r . B_{k} &&\text{for } A \sqs \exists r . B \in \T
			\tag{\ref{math:ci:abox-to-tbox:existential:null}}\\
			&\exists r . A_k \sqs B_{k} &&\text{for } \exists r . A \sqs B \in \T
			\tag{\ref{math:ci:abox-to-tbox:return:null}}
	\end{align}
	Additionally, define a TBox $\T_\A$ to contain the following concept inclusions.
	\begin{align}
			&C_a \sqs \Next^{k - l} A_k &&\text{for } A(a, k) \in \A
			\tag{\ref{math:ci:abox-to-tbox:mark}}\\
			&C_a \sqs \exists \role{ab}\, .\, C_{b} &&\text{for } r(a, b, \ell) \in \A
			\tag{\ref{math:ci:abox-to-tbox:existential:abox}}\\
			&\exists \role{ab}\, .\, A_k \sqs B_k &&\text{for } \exists r . A \sqs B \in \T, r(a, b, \ell) \in \A,
			\tag{\ref{math:ci:abox-to-tbox:return:abox}}\\
			&\ \notag 														&&\text{where } r \in \RNrig \text{ or } \ell = k
 	\end{align}
	Clearly, both $\T'$ and $\T_\A$ can be constructed in polynomial time w.r.t.~$|\T|+|\A|$ and are expressed in \TELnf (since \T is a \TELnf-TBox and $k-l\geqslant 0$ for every $A(a, k) \in \A$ by definition of $l$). 	
	
	\begin{applemma}\label{lm:app:consequences:indexed-concepts}
		{For all $A, B \in \CN(\T)$ and $a\in\NI$: 		
		\begin{enumerate}[label=(\roman*)]
			\item for all $n\in \N$ and $l\leqslant k,\ell\leqslant m+1$, $$(\T', \{A_k(a, 0)\}) \mdl B_{\ell}(a, n)\text{ implies that }\ell=k+n$$ (or $\ell=m+1$ and $k+n>m$);\label{lm:app:consequences:indexed-concepts:n}
			
			\item for all $s,n,k\in \N$, $$(\T, \{A(a, s)\}) \mdl B(a, n)\text{ iff }(\T', \{A_{k+s}(a, s)\}) \mdl B_{k + n}(a, n)$$ (note that if $n<s$ the equivalence holds trivially since $\T'$ is a \TELnf TBox, and recall that if $k+n>m$, $B_{k+n}=B_{m+1}$, and that if $k+s>m$, $A_{k+s}=A_{m+1}$).\label{lm:app:consequences:indexed-concepts:iff}	
		\end{enumerate}}
	\end{applemma}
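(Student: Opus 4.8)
The plan is to prove both parts by tracking how the subscript of an indexed concept name relates to the timestamp at which it is asserted, exploiting that $\T'$ is designed precisely so that this relationship is an invariant of derivations. Throughout I use Proposition~\ref{prop:derivationTEL} to move freely between $\mdl$ and $\vdash$, the extension of derivations to nulls $e\in\NN$, and the convention that $C_\ell$ denotes $C_{m+1}$ whenever $\ell>m$.

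For part~\ref{lm:app:consequences:indexed-concepts:n} I would in fact prove the stronger \emph{global} invariant: in any derivation witnessing $(\T',\{A_k(a,0)\})\vdash C_\ell(e,t)$ with $e\in\NI\cup\NN$ arbitrary (so that nulls created along the way are covered), one has $\ell=\min(k+t,m+1)$. First note that, since $\T'$ is a \TELnf-TBox and the initial timestamp is $0$, every derived concept fact sits at some $t\geqslant 0$, so $k+t\geqslant k\geqslant l$ and the subscript never underflows the range $[l,m+1]$. Then I induct on the length of the derivation. The base case is immediate. In the inductive step I inspect the rule \eqref{math:inference-rule:rigid}--\eqref{math:inference-rule:existential} producing the new fact: rule \eqref{math:inference-rule:rigid} produces only role facts and is vacuous; for a shift \eqref{math:inference-rule:shift} via \eqref{math:ci:abox-to-tbox:shift}, a premise $A'_{k'}(e,t)$ with $k'=\min(k+t,m+1)$ yields $B'_{k'+s}(e,t+s)$, and the capped subscript $k'+s$ equals $\min(k+(t+s),m+1)$; the conjunction \eqref{math:inference-rule:conjunction} and existential \eqref{math:inference-rule:existential} rules (via \eqref{math:ci:abox-to-tbox:conjunction}, \eqref{math:ci:abox-to-tbox:existential:null}) leave both timestamp and subscript of the conclusion unchanged, so the invariant is inherited verbatim. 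The one delicate case is the return rule \eqref{math:inference-rule:return} via \eqref{math:ci:abox-to-tbox:return:null}: from $r(e,e',t)$ and $A'_{k'}(e',t)$ it derives $B'_{k'}(e,t)$, and since the inner premise $A'_{k'}(e',t)$ is at timestamp $t$, the induction hypothesis pins $k'=\min(k+t,m+1)$, which is exactly the subscript required of the conclusion about $e$ at the same timestamp $t$. Part~\ref{lm:app:consequences:indexed-concepts:n} is then the special case $e=a$.

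For part~\ref{lm:app:consequences:indexed-concepts:iff} I would set up an explicit correspondence between \T-derivations from $A(a,s)$ and $\T'$-derivations from $A_{k+s}(a,s)$, given by the subscript map $C(e,t)\leftrightarrow C_{\min(k+t,m+1)}(e,t)$ on concept facts, with role facts and nulls left unchanged. The case $n<s$ is trivial: both \T and $\T'$ are \TELnf-TBoxes, so neither can derive a fact strictly before its starting timestamp, and both sides are false. For $n\geqslant s$, in the direction $\Rightarrow$ I take a \T-derivation of $B(a,n)$ from $\{A(a,s)\}\cup\T$ — every fact of which lies at some $t\geqslant s$, so the assigned subscripts $k+t\geqslant k+s$ stay $\geqslant l$, within range — and add subscripts according to the map; a rule-by-rule check shows that each application of \eqref{math:inference-rule:shift}--\eqref{math:inference-rule:existential} on a \T-CI becomes a valid application of the corresponding $\T'$-CI \eqref{math:ci:abox-to-tbox:shift}--\eqref{math:ci:abox-to-tbox:return:null}, using that $\T'$ contains these CIs for \emph{every} admissible subscript and that two facts sharing a timestamp receive the same subscript (so conjunction and return still fire). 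This derives $B_{\min(k+n,m+1)}(a,n)=B_{k+n}(a,n)$. Conversely, for $\Leftarrow$ I strip subscripts from a $\T'$-derivation of $B_{k+n}(a,n)$; since every $\T'$-CI projects onto its \T-original, stripping turns each step into a valid \T-step and yields a \T-derivation of $B(a,n)$ from $\{A(a,s)\}\cup\T$. Part~\ref{lm:app:consequences:indexed-concepts:n} guarantees that in the $\T'$-derivation the subscript of each fact is already determined by its timestamp, so the two maps are mutually inverse and the correspondence is a genuine bijection, although for the equivalence alone only the two one-way translations are needed.

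I expect the main obstacle to be the bookkeeping around the return rule together with rigid roles in part~\ref{lm:app:consequences:indexed-concepts:n}: one must argue that a role fact $r(e,e',t)$, possibly obtained at timestamp $t$ by rigidity \eqref{math:inference-rule:rigid} from an occurrence at a \emph{different} timestamp, can only trigger \eqref{math:inference-rule:return} with the subscript dictated by the target fact $A'_{k'}(e',t)$ at the current timestamp, and that this subscript coincides with the invariant value for the source $e$. This is exactly where the construction pays off, since every return CI of $\T'$ carries the same subscript on its antecedent concept and on its conclusion. A secondary point requiring care is the interaction of the $m+1$-capping with the shift rule, ensuring that $\min(k+t,m+1)+s$ and $\min(k+(t+s),m+1)$ agree under the stated convention; this is routine once the convention is applied consistently.
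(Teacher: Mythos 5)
Your proof is correct, but it organizes the induction differently from the paper, and the difference is worth noting. For part~(i) the paper inducts on the length of a derivation of a \emph{single} fact about a \emph{single} element, and handles the return rule by locating the existential step that created the null $b$, extracting the sub-derivation rooted at $B'_j(b,j')$, shifting its timestamps back to $0$, and re-applying the induction hypothesis to that re-based derivation; your global invariant ($\ell=\min(k+t,m+1)$ for \emph{every} concept fact about \emph{every} element, nulls included) makes the return case immediate, since the inner premise $A'_{k'}(e',t)$ is itself covered by the induction hypothesis at the same timestamp. Similarly, for part~(ii) the paper runs two separate inductions with the same extract-and-shift manoeuvre, whereas you translate whole derivations fact-by-fact via the subscripting map $C(e,t)\mapsto C_{\min(k+t,m+1)}(e,t)$ and its stripping inverse; the only thing this needs is that the subscript is a function of the timestamp alone (so conjunction and return premises sharing a timestamp share a subscript), which is exactly your part~(i) invariant for the $\Rightarrow$ direction and is not even needed for $\Leftarrow$, since every concept inclusion of $\T'$ projects onto one of $\T$. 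What your route buys is a shorter and less error-prone argument that avoids re-basing sub-derivations at nulls; what the paper's route buys is that its induction statement is literally the lemma as stated (derivability of one fact), so no auxiliary ``invariant over all facts in the derivation'' needs to be formulated. Two small points to tighten when writing it up: state the invariant as a property of all facts occurring in the sets $\Fmc_i$ rather than as a derivability claim (the paper's extended notion of derivation only covers nulls already present in $\Fmc_0$), and record explicitly the capping identity $\min(\min(k+t,m+1)+s,\,m+1)=\min(k+t+s,\,m+1)$ used in the shift case, together with the observation that $t\geqslant s$ keeps all subscripts $\geqslant l$.
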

	\begin{proof}
	For point \e{\ref{lm:app:consequences:indexed-concepts:n}}, we show by induction on $p$ that for all $A, B \in \CN(\T)$, $a\in\NI\cup\NN$, $n\in \N$, and $l\leqslant k,\ell\leqslant m+1$, if there is a derivation witnessing $(\T', \{A_k(a, 0)\}) \mdl B_{\ell}(a, n')$ of size $p$, then either $\ell=k+n$, or $B_\ell=B_{m+1}$ and $k+n>m$. 	
	
	The base case, $p=0$, is immediate, since in this case $n=0$ and $B_{\ell}=A_k$, so $\ell=k+n$. 
	
	Induction step: assume that the property holds for $p - 1$ and let $\F_0 \xto{f_1} \dots \xto{f_p} \F_p$ be a derivation witnessing $(\T', \{A_k(a, 0)\}) \mdl B_{\ell}(a, n)$. If $B_{\ell}(a, n) \in \F_{p - 1}$, the result follows by induction hypothesis. Otherwise, there are three possible cases for the last rule application $f_p$ that produces $B_{\ell}(a, n)$: 
	
	\begin{itemize}
			\item $\dom(f_p)=\{C_j\sqsubseteq \Next^{\ell-j} B_{\ell}, C_j(a,n-\ell+j)\}$, and $f_p$ is the application of a rule of the form \eqref{math:inference-rule:shift}. Since $C_j(a,n-\ell+j)\in\Fmc_{p-1}$, $(\T', \{A_k(a, 0)\}) \mdl C_{j}(a,n-\ell+j)$ with a derivation of size $p-1$, so by induction hypothesis, either $j=n-\ell+j+k$, or $C_j=C_{m+1}$ and $n-\ell+j+k>m$. In the first case, $\ell=k+n$, and in the second case, since $\ell-j>0$ as $\T'$ is a \TELnf-TBox, $\ell>j$ so $B_\ell=B_{m+1}$ and $k+n>m+\ell-j>m$. 
			
\item $\dom(f_p)=\{C_{\ell}\sqcap C'_{\ell}\sqsubseteq B_{\ell}, C_{\ell}(a,n),C'_{\ell}(a,n)\}$, and $f_p$ is the application of a rule of the form  \eqref{math:inference-rule:conjunction}. Since $C_{\ell}(a,n)\in\Fmc_{p-1}$ and $C'_{\ell}(a,n)\in\Fmc_{p-1}$, $(\Fmc_0,\dots,\Fmc_{p-1})$ is a derivation witnessing $(\T', \{A_k(a, 0)\}) \mdl C_{\ell}(a,n)$ and $(\T', \{A_k(a, 0)\}) \mdl C'_{\ell}(a,n)$. By the induction hypothesis, we obtain that either $\ell=k+n$ or $C_{\ell}=C_{m+1}$, $C'_{\ell}=C'_{m+1}$ and $k+n>m$. 

\item $\dom(f_p)=\{\exists r.C_{\ell}\sqsubseteq B_{\ell}, r(a,b,n), C_{\ell}(b,n)\}$, and $f_p$ is the application of a rule of the form  \eqref{math:inference-rule:return}.  
By the form of the derivation rules, $r(a, b, n) \in \Fmc_{p - 1}$ has been produced by the application of a rule of form \eqref{math:inference-rule:rigid} or \eqref{math:inference-rule:existential}. Hence, there must be an index $i < p - 1$ such that for some $A''_j, B'_j$ and $j'$, $\dom(f_i) = \{A''_j \sqs \exists r . B'_j , A''_\ell(a, j')\}$, $\rn(f_i) = \{r(a, b, j'), B'_j(b, j')\}$. Thus $A''_j (a, j') \in \F_{i - 1}$, and $(\Fmc_0, \dots, \Fmc_{i - 1})$ is a derivation witnessing $(\T', \{A_k(a, 0)\}) \mdl A''_j(a, j')$. By induction hypothesis, it follows that either (1) $j=k+j'$ or (2) $A''_j=A''_{m+1}$ and $k+j'>m$. 
Moreover, one can extract from $(\Fmc_i,\dots,\Fmc_{p-1})$ a derivation for $(\T',\{B'_j(b, j')\})\vdash C_{\ell}(b,n)$ of size at most $p-1$, from which we obtain a derivation for $(\T',\{B'_j(b, 0)\})\vdash C_{\ell}(b,n-j')$ by shifting all timestamps. Hence, by induction hypothesis, we have either (i) $\ell=j+n-j'$ or (ii) $C_\ell=C_{m+1}$ and $j+n-j'>m$. Moreover, since $\T'$ is a \TELnf-TBox, $n\geqslant j'$.
\begin{itemize}
\item In case (1-i), $\ell=j+n-j'=k+j'+n-j'=k+n$.
\item In case (1-ii), $C_\ell=C_{m+1}$ and $j+n-j'>m$ so $k+j'+n-j'>m$, i.e., $k+n>m$.
\item In case (2), $k+j'>m$ so since $n\geqslant j'$, $k+n>m$. Moreover, in this case, $A''_j=A''_{m+1}$ so $B'_j=B'_{m+1}$ and by the form of the concept inclusions in \T' (where concept names that occur in the right-hand side have always equal or higher indexes than those from the left-hand side), it must be the case that $C_\ell=C_{m+1}$.
\end{itemize}
\end{itemize}
	
	For point (ii), we show the two directions of the equivalence in a similar way.
	
	\noindent$(\Rightarrow)$ We show by induction on $p$ that for all $A,B \in \CN(\T)$, $a\in\NI\cup\NN$, and $s,n \in \N$, if there exists a derivation of length $p$ witnessing $(\T, \{A(a,s)\}) \vdash B(a, n)$, then for every $k$, $(\T', \{A_{k+s}(a,s)\}) \vdash B_{k+n}(a, n)$ (with $m+1$ instead of $k+s$ and/or $k+n$ if they are larger than $m$). 
	
	The base case, $p=0$, is immediate since in this case $B=A$ and $s=n=0$, and it holds that $(\T', \{A_k(a,0)\}) \vdash A_{k}(a, 0)$. 
	
	Induction step: assume that the property holds for $p - 1$ and let $\F_0 \xto{f_1} \dots \xto{f_p} \F_p$ be a derivation witnessing $(\T, \{A(a,s)\}) \vdash B(a, n)$. If $B(a, n) \in \F_{p - 1}$, the result follows by induction hypothesis. Otherwise, there are three possible cases for the last rule application $f_p$ that produces $B(a, n)$: 
	\begin{itemize}
		\item $\dom(f_p)=\{A'(a, n - \ell), A' \sqs \Next^\ell B \}$ for some  $\ell \in \{0,\dots, n - s\}$ (since \T is a \TELnf-TBox and $(\T,\{A(a,s)\})\vdash A'(a, n - \ell)$), and $f_p$ is the application of a rule of the form \eqref{math:inference-rule:shift}. 
			Since $(\F_0, \dots, \F_{p - 1})$ is a derivation witnessing $(\T, \{A(a,s)\}) \vdash A'(a, n - \ell)$, by induction hypothesis, for every $k$, $(\T',\{A_{k+s}(a,s)\}) \vdash A'_{k+n - \ell}(a, n - \ell)$. Since $A' \sqs \Next^\ell B\in \T$ it follows that $A'_{k+n - \ell} \sqs \Next^\ell B_{k+n} \in \T'$ (and $A'_{j}\sqs \Next^\ell B_{m+1}$ for every $j$ such that $j+\ell\geqslant m+1$), by \eqref{math:ci:abox-to-tbox:shift}. Hence, $(\T',\{A_{k+s}(a,s)\}) \vdash B_{k+n}(a, n)$.  
			
			\item $\dom(f_p)=\{A'(a, n), A''(a, n), A' \sqcap A'' \sqs B\}$ and $f_p$ is the application of a rule of form  \eqref{math:inference-rule:conjunction}. Since $A'(a, n)$ and $A''(a, n)$ are in $\Fmc_{p - 1}$, $(\F_0, \dots, \F_{p - 1})$ is a derivation witnessing $(\T, \{A(a,s)\}) \vdash A'(a, n)$ and $(\T, \{A(a,s)\}) \vdash A''(a, n)$. 
			By the induction hypothesis, for every $k$, $(\T', \{A_{k+s}(a,s)\}) \vdash A'_{k+n}(a, n)$ and $(\T',  \{A_{k+s}(a,s)\}) \vdash A''_{k+n}(a, n)$. Since $A' \sqcap A'' \sqs B \in \T$, $A'_{k+n} \sqcap A''_{k+n} \sqs B_{k+n} \in \T'$, by \eqref{math:ci:abox-to-tbox:conjunction}. Hence, $(\T' ,  \{A_{k+s}(a,s)\}) \vdash B_{k+n}(a, n)$. 
	\item $\dom(f_p)=\{r(a, b, n), A'(b, n), \exists r . A' \sqs B\}$ and $f_p$ is the application of a rule of form \eqref{math:inference-rule:return}.  	
	By the form of the derivation rules, $r(a, b, n) \in \Fmc_{p - 1}$ has been produced by the application of a rule of form \eqref{math:inference-rule:rigid} or \eqref{math:inference-rule:existential}. Hence, there must be an index $i < p - 1$ such that for some $A'', B'$ and $j \geqslant s$ (again, because \T is a \TELnf-TBox),
				$\dom(f_i) = \{A'' \sqs \exists r . B', A''(a, j)\}$, $\rn(f_i) = \{r(a, b, j), B'(b, j)\}$. Thus $A''(a, j) \in \F_{i - 1}$, and $(\Fmc_0, \dots, \Fmc_{i - 1})$ is a derivation witnessing $(\T,  \{A(a,s)\}) \vdash A''(a, j)$. By the induction hypothesis, for every $k$, there is a derivation $\der_1$ witnessing $(\T' ,  \{A_{k+s}(a,s)\}) \vdash A''_{k+j}(a, j)$. Moreover, we extract from $(\F_i, \dots, \F_{p - 1})$ a derivation witnessing $(\T, \{B'(b, j)\}) \vdash A'(b, n)$ (note that $n = j$ if $r \in \RNloc$). By the induction hypothesis, $(\T', \{B'_{k+j}(b, j)\}) \vdash A'_{k+n}(b, n)$. Let $\der_2$ be a derivation witnessing this. We obtain a derivation witnessing $(\T' ,  \{A_{k+s}(a,s)\}) \vdash B_{k+n}(a, n)$ as follows. Since $A'' \sqs \exists r . B' \in \T$, we have $A''_{k+j} \sqs \exists r . B'_{k+j}  \in \T'$, by \eqref{math:ci:abox-to-tbox:existential:null}. Further, since $\exists r . A' \sqs B \in \T$, we have $\exists r . A'_{k+n} \sqs B_{k+n}\in\T'$, by \eqref{math:ci:abox-to-tbox:return:null}. Thus, start from $\F'_0 = \T'  \cup  \{A_{k+s}(a,s)\}$, proceed as in $\der_1$ until you derive $A''_{k+j}(a, j)$. Apply a rule of the form \eqref{math:inference-rule:existential} using $A''_{k+j} \sqs \exists r . B'_{k+j}$ to obtain $r(a,b,j)$ and $B'_{k+j}(b, j)$, and proceed as in $\der_2$ until you have $A'_{k+n}(b, n)$. 
				If $r\in\RNrig$, apply a rule of the form \eqref{math:inference-rule:rigid} to obtain $r(a,b,n)$. Otherwise, it means that $j=n$, so we already have $r(a,b,n)$. 
				Finally, apply a rule of the form \eqref{math:inference-rule:return} using $\exists r . A'_{k+n} \sqs B_{k+n}$ to get $B_{k+n}(a, n)$.
	\end{itemize}
	\noindent$(\Leftarrow)$ We show by induction on $p$ that for all $A,B \in \CN(\T)$, $a\in\NI\cup\NN$, and $s,n,k \in \N$, if  there exists a derivation of length $p$ witnessing $(\T', \{A_{k+s}(a,s)\}) \vdash B_{k+n}(a, n)$ (with $m+1$ instead of $k+s$ and/or $k+n$ if they are larger than $m$), then $(\T, \{A(a,s)\}) \vdash B(a, n)$. 
	
	The base case, $p=0$, is immediate since in this case $B=A$ and $n=s$, and it holds that $(\T, \{A(a,s)\}) \vdash A(a, s)$. 
	
		Induction step: assume that the property holds for $p - 1$ and let $\F_0 \xto{f_1} \dots \xto{f_p} \F_p$ be a derivation witnessing $(\T', \{A_{k+s}(a,s)\}) \vdash B_{k+n}(a, n)$. If $B_{k+n}(a, n) \in \F_{p - 1}$, the result follows by induction hypothesis. Otherwise, there are three possible cases for the last rule application $f_p$ that produces $B_{k+n}(a, n)$:
		\begin{itemize}
			\item $\dom(f_p)=\{A'_{n+k-\ell}(a, n+k-\ell), A'_{n+k-\ell} \sqs \Next^\ell B_{k+n} \}$ for some  $\ell \in \{0,\dots, n+k-s\}$ (since $\T'$ is a \TELnf-TBox and $(\T',\{A_{k+s}(a,s)\}\mdl A'_{n+k-\ell}(a, n+k-\ell)$), and $f_p$ is the application of a rule of the form \eqref{math:inference-rule:shift}. 
			Since $(\F_0, \dots, \F_{p - 1})$ is a derivation witnessing $(\T',\{A_{k+s}(a,s)\}) \vdash A'_{n+k-\ell}(a, n+k-\ell)$, by induction hypothesis, $(\T,\{A(a,s)\}) \vdash A'(a, n - \ell)$. Since $A'_{n+k-\ell} \sqs \Next^\ell B_{k+n} \in \T'$ it follows that $A' \sqs \Next^\ell B \in \T$, by \eqref{math:ci:abox-to-tbox:shift}. We conclude that $(\T, \{A(a,s)\}) \vdash B(a, n)$ by applying a rule of the form \eqref{math:inference-rule:shift}.
			
	\item $\dom(f_p)=\{A'_{k+n}(a, n), A''_{k+n}(a,n), A'_{k+n} \sqcap A''_{k+n} \sqs B_{k+n}\}$ and $f_p$ is the application of a rule of form  \eqref{math:inference-rule:conjunction}. Since $A'_{k+n}(a, n)$ and $A''_{k+n}(a, n)$ are in $\Fmc_{p - 1}$, $(\F_0, \dots, \F_{p - 1})$ is a derivation witnessing $(\T', \{A_{k+s}(a,s)\}) \vdash A'_{k+n}(a, n)$ and $(\T', \{A_{k+s}(a,s)\}) \vdash A''_{k+n}(a, n)$. 
			By the induction hypothesis, $(\T, \{A(a,s)\}) \vdash A'(a, n)$ and $(\T, \{A(a,s)\}) \vdash A''(a, n)$. Since $A'_{k+n} \sqcap A''_{k+n} \sqs B_{k+n} \in \T'$, $A' \sqcap A'' \sqs B \in \T$, by \eqref{math:ci:abox-to-tbox:conjunction}. We conclude that $(\T, \{A(a,s)\}) \vdash B(a, n)$ by applying a rule of the form \eqref{math:inference-rule:conjunction}.
			
			\item $\dom(f_p)=\{r(a, b, n), A'_{k+n}(b, n), \exists r . A'_{k+n} \sqs B_{k+n}\}$ and $f_p$ is the application of a rule of form \eqref{math:inference-rule:return}. 
			By the form of the derivation rules, $r(a, b, n) \in \Fmc_{p - 1}$ has been produced by the application of a rule of form \eqref{math:inference-rule:rigid} or \eqref{math:inference-rule:existential}. Hence, there must be an index $i < p - 1$ such that for some $A''_\ell, B'_\ell$, $\dom(f_i) = \{A''_\ell \sqs \exists r . B'_\ell, A''_\ell(a, \ell')\}$, $\rn(f_i) = \{r(a, b, \ell'), B'_\ell(b, \ell')\}$ (note that $\ell'=n$ if $r \in \RNloc$). Thus $A''_\ell(a, \ell') \in \F_{i - 1}$, and $(\Fmc_0, \dots, \Fmc_{i - 1})$ is a derivation witnessing $(\T', \{A_{k+s}(a,s)\}) \vdash A''_\ell(a, \ell')$. 
			Since $(\T', \{A_{k+s}(a,0)\}) \vdash A''_\ell(a, \ell'-s)$, by point \e{\ref{lm:app:consequences:indexed-concepts:n}} of the lemma, $\ell=k+s+\ell'-s$, so $\ell=k+\ell'$. 
			Hence, since $(\T', \{A_{k+s}(a,s)\}) \vdash A''_{k+\ell'}(a, \ell')$, 
			 by the induction hypothesis, there is a derivation $\der_1$ witnessing $(\T, \{A(a,s)\}) \vdash A''(a, \ell')$. Moreover, we extract from $(\F_i, \dots, \F_{p - 1})$ a derivation witnessing $(\T', \{B'_{k+\ell'}(b, \ell')\}) \vdash A'_{k+n}(b, n)$. 
			By induction hypothesis, $(\T, \{B'(b, \ell')\}) \vdash A'(b, n)$. Let $\der_2$ be a derivation witnessing it. We obtain a derivation witnessing $(\T, \{A(a,s)\}) \vdash B(a, n)$ as follows. Since $A''_\ell \sqs \exists r . B'_\ell \in \T'$, we have $A'' \sqs \exists r . B' \in \T$, by \eqref{math:ci:abox-to-tbox:existential:null}. Further, since $\exists r . A'_{k+n} \sqs B_{k+n} \in \T'$, we have $\exists r . A' \sqs B\in\T$, by \eqref{math:ci:abox-to-tbox:return:null}. Thus, start from $\F'_0 = \T \cup \{A(a,s)\}$, proceed as in $\der_1$ until you derive $A''(a, \ell')$. Apply a rule of the form \eqref{math:inference-rule:existential} to obtain $r(a,b,\ell')$ and $B'(b, \ell')$, and proceed as in $\der_2$ until you have $A'(b, n)$. 
			If $r\in\RNrig$, apply a rule of the form \eqref{math:inference-rule:rigid} to obtain $r(a,b,n)$. Otherwise, it means that $\ell'=n$, so we already have $r(a,b,n)$. 
			Finally, apply a rule of the form \eqref{math:inference-rule:return} to get $B(a, n)$.	\qedhere
		\end{itemize}	
			\end{proof}
	
	\begin{applemma}\label{lm:app:consequences:compressed-abox}
		For all $A \in \CN(\T)$, $a\in\NI$ and $n \in \N$, $$(\T, \A) \mdl A(a, n)\text{ iff }(\T' \cup \T_\A, \{C_a(a, l)\}) \mdl A_n(a, n)$$ {(recall that if $n>m+1$, $A_n=A_{m+1}$, and that if $n<l$, $(\T, \A) \not\mdl A(a, n)$ and $(\T' \cup \T_\A, \{C_a(a, l)\}) \not\mdl A_n(a, n)$ since both \T and $\T' \cup \T_\A$ are \TELnf-TBoxes)}.
	\end{applemma}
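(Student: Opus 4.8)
The plan is to prove both directions by induction on the length of a derivation, using Proposition~\ref{prop:derivationTEL} to reason with $\vdash$ instead of $\models$, and leaning on Lemma~\ref{lm:app:consequences:indexed-concepts}: part~\ref{lm:app:consequences:indexed-concepts:n} pins down the invariant that in any derivation from $\{C_a(a,l)\}\cup\T'\cup\T_\A$ an indexed fact $B_k$ can only ever appear at timestamp $k$ (with the truncation $B_k=B_{m+1}$ once $k>m$), and part~\ref{lm:app:consequences:indexed-concepts:iff} handles the ``anonymous'' reasoning carried out inside a single individual's cone, i.e. the $\T'$-fragment \eqref{math:ci:abox-to-tbox:shift}--\eqref{math:ci:abox-to-tbox:return:null} that mirrors $\T$ and its own existentials. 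A preliminary step is to extend the index/timestamp invariant from $\T'$ to the full $\T'\cup\T_\A$: this follows from part~\ref{lm:app:consequences:indexed-concepts:n} together with the observation that every marker $C_b$ is produced only at timestamp $l$ (markers occur on right-hand sides solely as the filler of \eqref{math:ci:abox-to-tbox:existential:abox}, and the seed $C_a(a,l)$ sits at $l$), so that \eqref{math:ci:abox-to-tbox:mark} always places $A_k$ at exactly time $k$.

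For the direction $(\Leftarrow)$ I would define a map $h$ from the elements of a derivation witnessing $(\T'\cup\T_\A,\{C_a(a,l)\})\vdash A_n(a,n)$ to $\ind(\A)\cup\NN$, sending $a\mapsto a$, sending each null created by an application of \eqref{math:ci:abox-to-tbox:existential:abox} (hence carrying a unique marker $C_b$) to $b$, and leaving the nulls created by \eqref{math:ci:abox-to-tbox:existential:null} fresh. Since each null receives exactly one marker at its creation, $h$ is well defined. I would then show by induction on derivation length that every derived $B_k(e,k)$ yields $(\T,\A)\vdash B(h(e),k)$, while each $\role{ab}(e,e',t)$ arises from a marker $C_a$ at $e$, so $h(e)=a$, $h(e')=b$, and $r(a,b,\cdot)\in\A$. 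The only non-routine case is the return \eqref{math:ci:abox-to-tbox:return:abox}, whose side condition ``$r\in\RNrig$ or $\ell=k$'' is exactly what guarantees $r(h(e),h(e'),k)$ is available in $(\T,\A)$ (by rule~\eqref{math:inference-rule:rigid} when $r$ is rigid, and directly from $\A$ when $r$ is local, forcing $\ell=k$), so the original return fires. Taking $e=a$ gives $(\T,\A)\vdash A(a,n)$.

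For the direction $(\Rightarrow)$ I would prove the stronger statement that, for every $b\in\ind(\A)$, $B\in\CN(\T)$ and $k$, if $(\T,\A)\vdash B(b,k)$ then $B_k(e,k)$ is derivable in the compressed KB for \emph{every} element $e$ carrying the marker $C_b$ at time $l$ in its canonical model. The induction is on the length of a derivation of $B(b,k)$, by cases on the last rule, extracting sub-derivations for the premises (as elsewhere in this appendix), which keeps every individual involved forward-reachable from $b$. The universal quantification over all $C_b$-marked nodes is the crucial device: in the step for a return $\exists r.A'\sqsubseteq B$ along an ABox edge $r(b,c,\ell)$ it lets the induction descend, via the child $e'$ of $e$ introduced by \eqref{math:ci:abox-to-tbox:existential:abox} and marked $C_c$, and apply the hypothesis to the shorter derivation of $A'(c,k)$ at $e'$, thereby reconstructing the possibly cyclic ABox reasoning on the tree unfolding; \eqref{math:ci:abox-to-tbox:return:abox} then applies since $\role{bc}$ is rigid and its side condition matches the locality of $r$ exactly as above. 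The remaining cases---an ABox fact $B(b,k)\in\A$ (handled by the marker through \eqref{math:ci:abox-to-tbox:mark}), shift and conjunction (\eqref{math:ci:abox-to-tbox:shift}, \eqref{math:ci:abox-to-tbox:conjunction}), and the anonymous sub-derivations spawned by $\T$'s own existentials (handled by Lemma~\ref{lm:app:consequences:indexed-concepts}\ref{lm:app:consequences:indexed-concepts:iff}, i.e. the fragment \eqref{math:ci:abox-to-tbox:existential:null}--\eqref{math:ci:abox-to-tbox:return:null})---are routine. Instantiating $b=a$ and $e=a$ delivers $A_n(a,n)$.

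The main obstacle is this $(\Rightarrow)$ direction: one must argue that collapsing the ABox graph into the tree unfolding generated from the single seed $C_a(a,l)$ loses nothing about $a$, even when the ABox role graph contains cycles. I resolve this through the universally quantified induction hypothesis over marker-nodes, which makes the self-similarity of the unfolding (every $C_b$-node roots a full copy of $b$'s unfolding) available for free, and by using the truncation at $m+1$ together with Lemma~\ref{lm:app:consequences:indexed-concepts}\ref{lm:app:consequences:indexed-concepts:n} to keep indices and timestamps synchronised throughout; the same truncation makes the return inclusions \eqref{math:ci:abox-to-tbox:return:abox} absent at index $m+1$ for local roles, consistently with the fact that a local role fact (at some $\ell\leqslant m$) can never support a return at a time exceeding $m$.
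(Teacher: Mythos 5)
Your proof is correct and follows essentially the same route as the paper's: both directions proceed by induction on derivation length, relying on Lemma~\ref{lm:app:consequences:indexed-concepts} to keep indices synchronised with timestamps and to dispatch the anonymous cones, and on the side condition of \eqref{math:ci:abox-to-tbox:return:abox} to match the rigid/local cases of role returns. Your collapse map $h$ in the backward direction and the universally quantified induction hypothesis over $C_b$-marked nodes in the forward direction are only mild repackagings of the paper's sub-derivation extraction and replay-with-renaming arguments.
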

	\begin{proof}	
		$(\Rightarrow)$ We show by induction on $m$ that for all $A \in \CN(\T)$, $a\in\NI$ and $n \in \N$, if there exists a derivation of length $m$ witnessing $(\T, \A) \vdash A(a, n)$, then $(\T' \cup \T_\A, \{C_a(a, l)\}) \vdash A_n(a, n)$. 
		
		In the base case, $m = 0$, the derivation consists only of $\F_0 = \T \cup \A$, so $A(a, n) \in \A$. Hence, by \eqref{math:ci:abox-to-tbox:mark}, $C_a\sqsubseteq \Next^{n-l}A_n\in\T_A$. It follows that $(\T' \cup \T_\A, \{C_a(a, l)\}) \vdash A_n(a, n)$.
		
		Induction step: assume that the property holds for $m - 1$ and let $\F_0 \xto{f_1} \dots \xto{f_m} \F_m$ be a derivation witnessing $(\T, \A) \vdash A(a, n)$. If $A(a, n) \in \F_{m - 1}$, the result follows by induction hypothesis. Otherwise, there are three possible cases for the last rule application $f_m$ that produces $A(a, n)$: 
		\begin{itemize}
			\item $\dom(f_m)=\{A'(a, n - k), A' \sqs \Next^k A \}$ for some  $k \in \{0,\dots, n - l\}$ (since $n - k$ cannot be less than $l$, given that \T is a \TELnf-TBox and $(\T,\A)\vdash A'(a, n - k)$), and $f_m$ is the application of a rule of the form \eqref{math:inference-rule:shift}. 
			Since $(\F_0, \dots, \F_{m - 1})$ is a derivation witnessing $(\T, \A) \vdash A'(a, n - k)$, by induction hypothesis, $(\T' \cup \T_\A, \{C_a(a, l)\}) \vdash A'_{n - k}(a, n - k)$. Since $A' \sqs \Next^k A \in \T$ it follows that $A'_{n - k} \sqs \Next^k A_n \in \T'$, by \eqref{math:ci:abox-to-tbox:shift}. Hence, $(\T' \cup \T_\A, \{C_a(a, l)\}) \vdash A_{n}(a, n)$. 
			
			\item $\dom(f_m)=\{A'(a, n), A''(a, n), A' \sqcap A'' \sqs A\}$ and $f_m$ is the application of a rule of form  \eqref{math:inference-rule:conjunction}. Since $A'(a, n)$ and $A''(a, n)$ are in $\Fmc_{m - 1}$, $(\F_0, \dots, \F_{m - 1})$ is a derivation witnessing $(\T, \A) \vdash A'(a, n)$ and $(\T, \A) \vdash A''(a, n)$. 
			By the induction hypothesis, $(\T' \cup \T_\A, \{C_a(a, l)\}) \vdash A'_n(a, n)$ and $(\T' \cup \T_\A, \{C_a(a, l)\}) \vdash A''_n(a, n)$. Since $A' \sqcap A'' \sqs A \in \T$, $A'_n \sqcap A''_n \sqs A_n \in \T'$, by \eqref{math:ci:abox-to-tbox:conjunction}. Hence, $(\T' \cup \T_\A, \{C_a(a, l)\}) \vdash A_{n}(a, n)$. 			
			\item $\dom(f_m)=\{r(a, b, n), A'(b, n), \exists r . A' \sqs A\}$ and $f_m$ is the application of a rule of form \eqref{math:inference-rule:return}. Then we have two subcases:
			\begin{itemize}
			\item $b \in \ind(\A)$. Since $A'(b, n) \in \F_{m - 1}$, there is a derivation of length $m - 1$ witnessing $(\T, \A) \vdash A'(b, n)$. By the induction hypothesis, there is a derivation \der witnessing 
			$(\T' \cup \T_\A, \{C_b(b, l)\}) \vdash A'_n(b, n)$. 
			Furthermore, since the derivation rules can only add a fact of the form $r(a, b, \ell)$ with $a,b\in\NI$ if $r$ is rigid (cf.~\eqref{math:inference-rule:rigid}--\eqref{math:inference-rule:existential}), there exists $r(a, b, \ell) \in \A$ such that either $r \in \RNrig$ or $\ell = n$. In both cases, by \eqref{math:ci:abox-to-tbox:existential:abox}, $C_a\sqsubseteq \exists\role{a,b}.C_b\in\T_\A$, and by \eqref{math:ci:abox-to-tbox:return:abox}, since $\exists r . A' \sqs A\in\T$, it follows that $\exists \role{ab}\, .\, A'_n \sqs A_n\in\T_\A$. 
			We obtain a derivation witnessing $(\T' \cup \T_\A, \{C_a(a, l)\})$ as follows. Start with $\F_0 = \T' \cup \T_\A \cup \{C_a(a, l)\}$. Apply a rule of the form \eqref{math:inference-rule:existential} using $C_a\sqsubseteq \exists\role{a,b}.C_b$ to obtain $\role{a, b}(a, b', l)$ and $C_b(b', l)$ for some $b' \in \NN$. Proceed as in \der to obtain $A'_n(b', n)$. Apply a rule of the form \eqref{math:inference-rule:rigid} to obtain $\role{a, b}(a, b', n)$, then a rule of the form \eqref{math:inference-rule:return} using $\exists \role{ab}\, .\, A'_n \sqs A_n$ to get $A_n(a, n)$.  
			
				\item $b \in \NN$. By the form of the derivation rules, $r(a, b, n) \in \Fmc_{m - 1}$ has been produced by the application of a rule of form \eqref{math:inference-rule:rigid} or \eqref{math:inference-rule:existential}. Hence, there must be an index $i < m - 1$ such that for some $A'', B'$ and $k \geqslant l$ (again, because \T is a \TELnf-TBox),
				$\dom(f_i) = \{A'' \sqs \exists r . B', A''(a, k)\}$, $\rn(f_i) = \{r(a, b, k), B'(b, k)\}$. Thus $A''(a, k) \in \F_{i - 1}$, and $(\Fmc_0, \dots, \Fmc_{i - 1})$ is a derivation witnessing $(\T, \A) \vdash A''(a, k)$. By the induction hypothesis, there is a derivation $\der_1$ witnessing $(\T' \cup \T_\A, \{C_a(a, l)\}) \vdash A''_k(a, k)$. Moreover, we extract from $(\F_i, \dots, \F_{m - 1})$ a derivation witnessing $(\T, \{B'(b, k)\}) \vdash A'(b, n)$ (note that $n = k$ if $r \in \RNloc$). Since $(\T, \{B'(b, k)\}) \vdash A'(b, n)$, by Lemma~\ref{lm:app:consequences:indexed-concepts}, $(\T', \{B'_k(b, k)\}) \vdash A'_{n}(b, n)$. Let $\der_2$ be a derivation witnessing this. We obtain a derivation witnessing $(\T' \cup \T_\A, C_a(a, l)) \vdash A_{n}(a, n)$ as follows. Since $A'' \sqs \exists r . B' \in \T$, we have $A''_k \sqs \exists r . B'_k \in \T'$, by \eqref{math:ci:abox-to-tbox:existential:null}. Further, since $\exists r . A' \sqs A \in \T$, we have $\exists r . A'_n \sqs A_n\in\T'$, by \eqref{math:ci:abox-to-tbox:return:null}. Thus, start from $\F'_0 = \T' \cup \T_\A \cup \{C_a(a, l)\}$, proceed as in $\der_1$ until you derive $A''_k(a, k)$. Apply a rule of the form \eqref{math:inference-rule:existential} using $A''_k \sqs \exists r . B'_k$ to obtain $r(a,b,k)$ and $B'_k(b, k)$, and proceed as in $\der_2$ until you have $A'_n(b, n)$. 
				If $r\in\RNrig$, apply a rule of the form \eqref{math:inference-rule:rigid} to obtain $r(a,b,n)$. Otherwise, it means that $k=n$, so we already have $r(a,b,n)$. 
				Finally, apply a rule of the form \eqref{math:inference-rule:return} using $\exists r . A'_n \sqs A_n$ to get $A_n(a, n)$.
							
			\end{itemize}		
		\end{itemize}
		
		$(\Leftarrow)$ Again, we show by induction on $m$ that for all $A \in \CN(\T)$, $a\in\NI$ and $n, n' \in \N$, if there exists a derivation of length $m$ witnessing $(\T' \cup \T_\A, \{C_a(a, l)\}) \vdash A_n(a, n')$, then $n'=n$ and $(\T, \A) \vdash A(a, n)$.
		
		The base case is $m = 1$. Then $\F_0 = \T' \cup \T_\A \cup \{C_a(a, l)\}$ and $A_n(a, n') \in \F_1$. Thus, $f_1$ is an application of a rule of the form \eqref{math:inference-rule:shift} using $C_a\sqsubseteq \Next^{n-l}A_n$, so it must be the case that $n'=n$ and $C_a\sqsubseteq \Next^{n-l}A_n\in\T_\A$. By \eqref{math:ci:abox-to-tbox:mark}, it follows that $A(a, n) \in \A$, so $(\T, \A) \vdash A(a, n)$.
		
		Induction step: assume that the property holds for $m - 1$ and let $\F_0 \xto{f_1} \dots \xto{f_m} \F_m$ be a derivation witnessing $(\T' \cup \T_\A, \{C_a(a, l)\}) \vdash A_n(a, n')$. If $A_n(a, n') \in \F_{m - 1}$, the result follows by induction hypothesis. Otherwise, there are five possible cases for the last rule application $f_m$ that produces $A_n(a, n')$:
		\begin{itemize}
		\item $\dom(f_m)=\{C_b(a, l), C_b\sqs \Next^{n-l} A_n \}$ and $f_m$ is the application of a rule of the form \eqref{math:inference-rule:shift}. However, by the form of concept inclusions of $\T'$ and $\T_\A$, it must be the case that $C_b=C_a$ since the derivation rules cannot produce a fact of the form $C_b(a,\ell)$ with $a\in\NI$ and $a\neq b$. Hence we obtain $n'=n$ and $(\T, \A) \vdash A(a, n)$ as in the base case.
			\item $\dom(f_m)=\{A'_{n - k}(a, n' - k), A'_{n - k} \sqs \Next^k A_n \}$ for some  $k \in \{0,\dots, \min(n - l, n'-l)\}$ (since $\T'\cup\T_\A$ is a \TELnf-TBox so that $n'-k\geqslant l$ and $n-k\geqslant l$ by definition of $\T'$), and $f_m$ is the application of a rule of the form \eqref{math:inference-rule:shift}. 
			Since $(\F_0, \dots, \F_{m - 1})$ is a derivation witnessing $(\T' \cup \T_\A, \{C_a(a, l)\}) \vdash A'_{n - k}(a, n' - k)$, by induction hypothesis, $n'-k=n-k$, i.e., $n'=n$, and $(\T, \A) \vdash A'(a, n - k)$. Since $A'_{n - k} \sqs \Next^k A_n \in \T'$ it follows that $A' \sqs \Next^k A \in \T$, by \eqref{math:ci:abox-to-tbox:shift}. We conclude that $(\T, \A) \vdash A(a, n)$ by applying a rule of the form \eqref{math:inference-rule:shift}.
		
			\item $\dom(f_m)=\{A'_n(a, n'), A''_n(a, n'), A'_n \sqcap A''_n \sqs A_n\}$ and $f_m$ is the application of a rule of form  \eqref{math:inference-rule:conjunction}. Since $A'_n(a, n')$ and $A''_n(a, n')$ are in $\Fmc_{m - 1}$, $(\F_0, \dots, \F_{m - 1})$ is a derivation witnessing $(\T' \cup \T_\A, \{C_a(a, l)\}) \vdash A'_n(a, n')$ and $(\T' \cup \T_\A, \{C_a(a, l)\}) \vdash A''_n(a, n')$. 
			By the induction hypothesis, $n'=n$ and $(\T, \A) \vdash A'(a, n)$ and $(\T, \A) \vdash A''(a, n)$. Since $A'_n \sqcap A''_n \sqs A_n \in \T'$, $A' \sqcap A'' \sqs A \in \T$, by \eqref{math:ci:abox-to-tbox:conjunction}. We conclude that $(\T, \A) \vdash A(a, n)$ by applying a rule of the form \eqref{math:inference-rule:conjunction}.
			
			\item $\dom(f_m)=\{r(a, b, n'), A'_n(b, n'), \exists r . A'_n \sqs A_n\}$ and $f_m$ is the application of a rule of form \eqref{math:inference-rule:return}. 
			By the form of the derivation rules, $r(a, b, n') \in \Fmc_{m - 1}$ has been produced by the application of a rule of form \eqref{math:inference-rule:rigid} or \eqref{math:inference-rule:existential}. Hence, there must be an index $i < m - 1$ such that for some $A''_k, B'_k$ and $k' \geqslant l$, $\dom(f_i) = \{A''_k \sqs \exists r . B'_k, A''_k(a, k')\}$, $\rn(f_i) = \{r(a, b, k'), B'_k(b, k')\}$. Thus $A''_k(a, k') \in \F_{i - 1}$, and $(\Fmc_0, \dots, \Fmc_{i - 1})$ is a derivation witnessing $(\T' \cup \T_\A, \{C_a(a, l)\}) \vdash A''_k(a, k')$. By the induction hypothesis, $k'=k$ and there is a derivation $\der_1$ witnessing $(\T, \A) \vdash A''(a, k)$. Moreover, we extract from $(\F_i, \dots, \F_{m - 1})$ a derivation 
			witnessing $(\T' \cup \T_\A, \{B'_k(b, k)\}) \vdash A'_n(b, n')$. Since $B'_k$ does not appear in the left-hand sides of concept inclusions in $\T_\A$, it is not hard to see that, in fact, $(\T', \{B'_k(b, k)\}) \vdash A'_n(b, n')$. Moreover, since $\T'$ is a \TELnf-TBox, $k \leqslant n'$, thus $n' - k \in \N$ (note that $n' = k$ if $r \in \RNloc$). 
			Since $(\T', \{B'_k(b, k)\}) \vdash A'_n(b, n')$, $(\T', \{B'_k(b, 0)\}) \vdash A'_n(b, n'-k)$, so 
			 by point \e{\ref{lm:app:consequences:indexed-concepts:n}} of Lemma~\ref{lm:app:consequences:indexed-concepts}, $n=n'-k+k=n'$, so $(\T', \{B'_k(b, k)\}) \vdash A'_n(b, n)$ and thus, by point \e{\ref{lm:app:consequences:indexed-concepts:iff}} of Lemma~\ref{lm:app:consequences:indexed-concepts}, $(\T, \{B'(b, k)\}) \vdash A'(b, n)$. Let $\der_2$ be a derivation witnessing it. We obtain a derivation witnessing $(\T, \A) \vdash A(a, n)$ as follows. Since $A''_k \sqs \exists r . B'_k \in \T'$, we have $A'' \sqs \exists r . B' \in \T$, by \eqref{math:ci:abox-to-tbox:existential:null}. Further, since $\exists r . A'_n \sqs A_n \in \T'$, we have $\exists r . A' \sqs A\in\T$, by \eqref{math:ci:abox-to-tbox:return:null}. Thus, start from $\F'_0 = \T \cup \A$, proceed as in $\der_1$ until you derive $A''(a, k)$. Apply a rule of the form \eqref{math:inference-rule:existential} to obtain $r(a,b,k)$ and $B'(b, k)$, and proceed as in $\der_2$ until you have $A'(b, n)$. 
			 If $r\in\RNrig$, apply a rule of the form \eqref{math:inference-rule:rigid} to obtain $r(a,b,n)$. Otherwise, it means that $k=n$, so we already have $r(a,b,n)$.
			 Finally, apply a rule of the form \eqref{math:inference-rule:return} to get $A(a, n)$.
			
			\item $\dom(f_m)=\{\role{a, b}(a, b', n'), A'_n(b', n'), \exists \role{a, b} . A'_n \sqs A_n\}$ and $f_m$ is the application of a rule of form \eqref{math:inference-rule:return}. Then $A'_n(b', n') \in \F_{m - 1}$.
			By the form of the derivation rules, $\role{a, b}(a, b', n') \in \Fmc_{m - 1}$ has been produced by the application of a rule of form \eqref{math:inference-rule:rigid} or \eqref{math:inference-rule:existential}. Moreover, there must be an index $i < m - 1$ such that $\dom(f_i) = \{C_a \sqs \exists \role{a, b} . C_b, C_a(a, l)\}$ (note that by the form of the concept inclusions in $\T'\cup\T_\A$, $C_a(a, l)$ is the only fact of the form $C_a(a, \ell)$ that can be derived since $a\in\NI$) and $\rn(f_i) = \{\role{a, b}(a, b', l), C_b(b', l)\}$. We extract a derivation witnessing $(\T' \cup \T_\A, \{C_b(b', l)\}) \vdash A'_n(b', n')$ of length at most $m - 1$. Renaming $b'$ to $b$, we get a derivation of length at most $m - 1$ witnessing $(\T' \cup \T_\A, \{C_b(b, l)\}) \vdash A'_n(b, n')$. By the induction hypothesis, $n'=n$ and there is a derivation \der witnessing $(\T, \A) \vdash A'(b, n)$. We obtain a derivation witnessing $(\T, \A) \vdash A(a, n)$ as follows. Since $C_a \sqs \exists \role{a, b} . C_b \in \T_\A$, we have $r(a, b, \ell) \in \A$, by \eqref{math:ci:abox-to-tbox:existential:abox}. Further, since $\exists \role{a, b} . A'_n \sqs A_n\in\T_\A$, either $r \in \RNrig$ or $\ell = n$, and $\exists r . A' \sqs A \in \T$, by \eqref{math:ci:abox-to-tbox:return:abox}. Thus, start with $\F_0 = \T \cup \A$. Use \der to derive $A'(b, n)$. If $r \in \RNrig$, apply a rule of the form \eqref{math:inference-rule:rigid} to obtain $r(a, b, n)$ (otherwise, it is already in \A). Conclude by applying a rule of the form \eqref{math:inference-rule:return} using $\exists r . A' \sqs A$. \qedhere
		\end{itemize}
	\end{proof}
	
	By Theorem~\ref{thm:conjunctive-grammars:tbox-to-grammar}, one can construct in polynomial time w.r.t.~$|\T'| + |\T_\A|$ (hence in polynomial time w.r.t. $|\T|+|\A|$) a unary conjunctive grammar $G_{\T' \cup \T_\A}=(N,\{c\},R)$ such that there exists $\Nmc_{C_aA}\in N$ such that $c^{n - l} \in L_{G_{\T' \cup \T_\A}} (\Nmc_{C_aA})$ iff $\T' \cup \T_\A \models (C_a\sqsubseteq \Next^{n - l} A_n)$. By Proposition \ref{prop:derivationTEL},  $\T' \cup \T_\A \models (C_a\sqsubseteq \Next^{n - l} A_n)$ iff  $(\T' \cup \T_\A,\{C_a(a,l)\}) \models A_n(a,n)$, and by Lemma \ref{lm:app:consequences:compressed-abox}, the latter is equivalent to $(\T, \A) \models A(a, n)$. Hence, $(\T, \A) \models A(a, n)$ iff $c^{n - l} \in L_{G_{\T' \cup \T_\A}} (\Nmc_{C_aA})$.
\end{proof}
	

\subsection{Proof of Theorem \ref{thm:consequences:linear:complexity}}

We first recall several definitions and results from the literature.

\paragraph{A bound for Parikh images}

Let $G = (N, \Sigma, R)$ be a context-free grammar. If $\Nmc \to \alpha \in R$, we write $\alpha/_N$ and $\alpha/_\Sigma$ for the words obtained from $\alpha$ by projecting on $N$ and on $\Sigma$, respectively. The \textit{degree} of $G$ is the number $m = -1 + \max_{\Nmc \to \alpha \in R} |\alpha/_N|$, and the \textit{productivity} of $G$ is the number $p = \sum_{\Nmc \to \alpha \in R} |\alpha/_\Sigma|$.

The following result is from A. W. To's PhD thesis ``Model-Checking Infinite-State Systems: Generic and Specific Approaches'' (2010), cited as given by \citet{Esparza:Parikh-Automaton}.

\begin{apptheorem}[To] \label{thm:app:consequences:set-size}
	Let $M$ be a nondeterministic state automaton with $s$ states over an alphabet of $l$ letters. Then the Parikh image of $L(M)$ is a union of $\bigO{s^{l^2 + 3l +3} l^{4l+6}}$ linear sets with at most $l$ period vectors; the maximum component of any oﬀset vector is $\bigO{s^{3l+3} l^{4l+6}}$, and the maximum component of any period vector is at most $s$.
\end{apptheorem}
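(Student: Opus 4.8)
The plan is to derive the bounds by characterizing $p(L(M))$ through integer flows on the transition graph of $M$, extracting a pool of candidate period vectors of tightly controlled size, and then applying a Carath\'eodory-type argument to cap the number of periods per linear set at $l$. First I would reduce to a single accepting state by taking the union over the at most $s$ final states, which contributes only a factor $s$ to the final count of linear sets. Fixing a source $q_0$ and a target $q_f$, I would use the standard flow characterization: a vector $\vec v \in \mathbb{N}^l$ lies in $p(L(M))$ exactly when there is a non-negative integer edge-assignment satisfying Kirchhoff flow conservation (balanced at every state except $q_0$ and $q_f$) whose support is connected and links $q_0$ to $q_f$, and whose projection onto letters equals $\vec v$. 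In this system the circulations are generated by the simple cycles of the graph, while a base traversal from $q_0$ to $q_f$ supplies a particular solution.

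The crucial quantitative observation is that, although the graph may have exponentially many simple cycles, each has length at most $s$, so its Parikh image has every component bounded by $s$; this immediately gives the claimed bound on period components. Consequently the \emph{distinct} candidate period vectors number at most $(s+1)^l$, which is polynomial in $s$ for fixed $l$. I would then invoke a Carath\'eodory argument for the rational cone they span in $\mathbb{R}^l$: any vector in that cone, relative to an offset, can be written using at most $l$ of the generators. Splitting the representation according to which $l$-subset of candidate periods is active yields at most $\binom{(s+1)^l}{l} = \bigO{s^{l^2}}$ active period-sets, which accounts for the $s^{l^2}$ factor in the final count.

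The hard part, and the source of the remaining factors, is twofold. First, the connectivity/realizability condition---that a balanced edge-multiset actually spells out a single genuine run from $q_0$ to $q_f$---is not linear, so I would fold the base path's role into the offset: the offset must traverse enough of the support to make each chosen cycle reachable, hence it need not be a \emph{simple} path, and bounding its Parikh image requires a small-solution theorem for the associated flow equations (a Cramer's-rule/Siegel-type bound on integer linear systems). This is precisely what pushes the offset components up to $\bigO{s^{3l+3} l^{4l+6}}$ rather than merely $s$; multiplying the number of admissible offsets by the number of period-sets then yields the total $\bigO{s^{l^2+3l+3} l^{4l+6}}$ linear sets. Second, the Carath\'eodory trimming works over $\mathbb{R}^l$, so passing back to $\mathbb{N}^l$ forces the fractional remainders to be absorbed by finitely many enlarged offsets, which must be done without breaching the offset bound.

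The delicate point throughout is to carry out the trimming and the offset estimate \emph{simultaneously}, so that capping the periods at $l$ inflates neither the offsets nor the number of linear sets beyond the stated bounds. This careful quantitative bookkeeping is the heart of To's argument and is, in effect, an effective refinement of Parikh's theorem (Theorem~\ref{thm:parikh}), turning the purely existential semilinearity statement into explicit bounds on the representation parameters.
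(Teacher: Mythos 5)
This statement is not proved in the paper at all: it is an external result quoted verbatim from A.~W.~To's PhD thesis (via \citet{Esparza:Parikh-Automaton}) and used as a black box in the proof of Theorem~\ref{thm:consequences:linear:complexity}. So there is no in-paper argument to compare yours against; what can be judged is whether your sketch actually establishes the stated bounds.

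Your outline does follow the standard strategy behind effective Parikh theorems for NFAs (Euler-type flow characterization, periods drawn from Parikh images of simple cycles --- which correctly yields the ``period components at most $s$'' clause --- and a Carath\'eodory trimming to at most $l$ periods per linear set). However, the entire content of this theorem is the explicit exponents, and your sketch does not derive them; it asserts that they ``work out.'' Two places are genuinely load-bearing and left open. First, the passage from real Carath\'eodory to integer combinations: after restricting to $l$ linearly independent periods, the integer points of the cone decompose into residue classes modulo the sublattice they generate, and the number of such classes is bounded only by the lattice index, which by Hadamard can be as large as $s^{l}\,l^{l/2}$; each class contributes a separate enlarged offset, hence a separate linear set, and you never check that this multiplicative factor (together with your $\bigO{s^{l^2}}$ count of period subsets and the factor $s$ for final states) stays within $\bigO{s^{l^2+3l+3} l^{4l+6}}$. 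Second, the offset bound $\bigO{s^{3l+3} l^{4l+6}}$ is attributed to an unspecified ``Cramer's-rule/Siegel-type'' small-solution theorem for the connectivity-constrained flow system, with no statement of which system is being solved or why its minimal solutions obey exactly this bound. Since the theorem is purely quantitative, deferring both of these is not bookkeeping to be filled in later --- it is the proof. As it stands, your argument establishes semilinearity with periods bounded by $s$ and at most $l$ periods per linear set, but not the claimed counts or offset magnitudes.
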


\begin{apptheorem}[\citet{Esparza:Parikh-Automaton}]\label{thm:app:consequences:parikh-automaton}
	If $G$ is a context-free grammar with $n$ nonterminals, degree $m$ and productivity $p$,
	then one can construct, in polynomial space,\footnote{This is not stated explicitly in \citep{Esparza:Parikh-Automaton}, but follows from their construction.} a nondeterministic finite state automaton $M$ with $\binom{n + nm + 1}{n} \cdot p$ states such that $L(G)$ and $L(M)$ have the same Parikh image.
\end{apptheorem}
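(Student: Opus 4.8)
The plan is to follow the automaton construction of Esparza, Ganty, Kiefer and Luttenberger and then to verify the space bound, which is the only part not stated verbatim in their work. First I would recall the shape of their construction: the automaton $M$ simulates derivations of $G$ not in the usual ordered fashion, but by maintaining at each step a \emph{multiset} of pending nonterminals, exploiting that for the Parikh image the order in which productions are fired is irrelevant. A state of $M$ records such a multiset together with a pointer into a partially emitted right-hand side; a transition selects one pending nonterminal, applies a production to it, emits the terminal symbols of that production one at a time, and updates the multiset accordingly. The crucial combinatorial input, supplied by their analysis, is that the pending multisets can be kept small: every derivation can be rearranged, without changing the multiset of terminals it yields, into one whose pending multiset never exceeds cardinality $nm+1$. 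This boundedness is what makes the state set finite and yields the stated count.

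Given this, soundness and completeness of $M$ with respect to the Parikh image split cleanly, and I would verify them through the standard derivation-tree correspondence. For soundness, every accepting run of $M$ corresponds to a genuine (reordered) derivation of $G$, so the emitted word has the same Parikh image as some word of $L(G)$; one reads a derivation tree off the sequence of production applications. For completeness, I would start from an arbitrary $w \in L(G)$, invoke the rearrangement to obtain a derivation whose frontier multiset stays within the bound, and translate it step by step into an accepting run of $M$ emitting a word $w'$ with $p(w') = p(w)$. The state count then follows by counting: multisets of nonterminals of size at most $nm+1$ over $n$ symbols number $\binom{n+nm+1}{n}$, while the factor $p$ (the productivity) records which terminal position of which right-hand side is currently being consumed, since $\sum_{\Nmc \to \alpha \in R} |\alpha/_\Sigma| = p$.

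The part that genuinely needs argument is the footnoted claim that $M$ is constructible in polynomial space. For this I would observe that each state of $M$ admits a description of size $\poly{|G|}$ (a bounded multiset of nonterminals, together with a production and a position inside it), and that the transition relation is \emph{locally checkable}: given two such descriptions and a production, deciding whether the corresponding transition exists requires only polynomial workspace. Hence $M$ need never be materialised in full; it is presented implicitly, its states and transitions being enumerated or tested on demand by a polynomial-space procedure. I expect the main obstacle to be bookkeeping rather than mathematics: aligning the precise state encoding with that of Esparza et al. so that the bound $\binom{n+nm+1}{n}\cdot p$ and the on-the-fly polynomial-space generation hold simultaneously. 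The Parikh-equivalence itself is imported directly from their result, so no new combinatorial difficulty arises.
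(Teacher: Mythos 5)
The paper gives no proof of this statement at all: it is imported directly from Esparza, Ganty, Kiefer and Luttenberger, with only the footnote asserting that polynomial-space constructibility ``follows from their construction,'' and your reconstruction matches that construction faithfully --- the bounded-index rearrangement of derivations, states as multisets of pending nonterminals of size at most $nm+1$ (hence $\binom{n+nm+1}{n}$ of them), and the factor $p$ accounting for the positions at which terminals of right-hand sides are emitted. Your on-the-fly argument for the space bound (each state has a polynomial-size description and the transition relation is locally checkable, so the exponentially large automaton can be enumerated with polynomial working space) is exactly the justification the footnote leaves implicit, so the proposal is correct and takes the same route as the cited source.
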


\begin{appcorollary}\label{cor:app:consequences:bounds}
	For every context-free grammar $G = (N, \Sigma, R, \Smc)$ of degree $n$ and productivity $p$, with $|N| = n$ and $|\Sigma| = t$, the bounds of Theorem \ref{thm:app:consequences:set-size} hold with $s =\binom{n + nm + 1}{n} \cdot p$ and $l = t$.
\end{appcorollary}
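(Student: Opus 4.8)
The plan is to obtain this corollary as a direct composition of Theorems~\ref{thm:app:consequences:parikh-automaton} and~\ref{thm:app:consequences:set-size}. First I would invoke Theorem~\ref{thm:app:consequences:parikh-automaton} on the given context-free grammar $G$. Since $G$ has $n$ nonterminals, degree $m$ and productivity $p$, that theorem yields a nondeterministic finite automaton $M$ operating over the \emph{same} terminal alphabet $\Sigma$ such that $L(G)$ and $L(M)$ have identical Parikh images, and whose number of states is exactly $s = \binom{n + nm + 1}{n}\cdot p$. Crucially, the construction leaves the alphabet unchanged, so $M$ reads over $\Sigma$ and therefore has $l = t$ input letters.

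Next I would apply Theorem~\ref{thm:app:consequences:set-size} (To's bound) to this automaton $M$. With $s$ states and $l = t$ letters, the theorem guarantees that the Parikh image $p(L(M))$ is a union of $\bigO{s^{l^2 + 3l + 3} l^{4l + 6}}$ linear sets, each having at most $l$ period vectors, with every offset component bounded by $\bigO{s^{3l + 3} l^{4l + 6}}$ and every period component bounded by $s$. Substituting the concrete values $s = \binom{n + nm + 1}{n}\cdot p$ and $l = t$ gives precisely the bounds asserted in the statement.

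The final step is to transfer these bounds from $M$ to $G$: because the first step guarantees $p(L(G)) = p(L(M))$, every structural bound established for $p(L(M))$ holds verbatim for $p(L(G))$, which is exactly the claim.

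There is essentially no real obstacle here beyond careful bookkeeping of the parameters across the two theorems. The only points that genuinely require attention are confirming that the automaton produced by Theorem~\ref{thm:app:consequences:parikh-automaton} is built over the terminal alphabet $\Sigma$ itself (so that $l = t$ rather than some inflated alphabet), and that the Parikh-image preservation in that theorem is an exact equality of Parikh images---so that it transports not merely the number of linear sets but all of To's structural quantities (offset and period bounds) to $L(G)$.
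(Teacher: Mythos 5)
Your proposal is correct and is exactly the argument the paper intends: the corollary is the immediate composition of Theorem~\ref{thm:app:consequences:parikh-automaton} (which yields an NFA over the same alphabet $\Sigma$ with $\binom{n+nm+1}{n}\cdot p$ states and the same Parikh image) with Theorem~\ref{thm:app:consequences:set-size} applied to that NFA, and the paper gives no separate proof precisely because this is all there is to it. Your two points of care (same alphabet, exact Parikh-image equality) are the right ones, and note in passing that the corollary's phrase ``of degree $n$'' is a typo for degree $m$, which you have implicitly corrected.
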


\paragraph{\e{Datalog}$_\textup{1S}$}

We informally describe \dlos programs. For a formal definition, the reader is referred to \citet{Chomicki:TDD}.

Fix a dedicated \e{temporal variable} $t$. A \dlos program $\Pi$ is a finite set of rules of the form
\begin{align}\label{math:dlos:rule}
	B(\ovl{x}, t) \impd A_1(\ovl{x}_1, t + i_1), \dots, A_k(\ovl{x}_k, t + i_k)
\end{align}
where $B, A_1, \dots, A_k$ are predicate symbols, $\ovl{x}_i$ are tuples of variables
and $i_1, \dots, i_k \in \Z$ are given in unary.\footnote{(\citet{Chomicki:TDD} also allow atoms of the form $A(\ovl{x}, k)$, for $k \in \Z$, but we will not need them).} 
Since we work with DL ABoxes, we assume that the arities of $B$ and $A_i$ are either 2 or 3.

Given a program $\Pi$ and an ABox \A, their \textit{canonical model} (the least Herbrand model, in the terminology of \citet{Chomicki:TDD}) is obtained by exhaustive application of the rules of $\Pi$ to \A. We write $(\Pi, \A) \mdl A(a, n)$ if $A(a, n)$ holds in the canonical model of $\Pi$ and \A. We write $\Pi \mdl A \sqs \Next^k B$ if $(\Pi, \A) \mdl A(a, n)$ implies $(\Pi, \A) \mdl B(a, n + k)$, for any ABox $\A$.

\paragraph{\dlnd}

We define \dlnd, introduced by \citet{Artale:Linear-Temporal-Datalog}, as a syntactic variant of \dlos. A rule of the form \eqref{math:dlos:rule} is written in \dlnd as:
\begin{align}\label{math:dlnd:rule}
	B(\ovl{x}) \impd \Next^{i_1}A_1(\ovl{x}_1), \dots, \Next^{i_k}A_k(\ovl{x}_k)
\end{align}
Intuitively, the explicit temporal variable $t$ of \dlos is hidden behind the temporal operator \Next of \dlnd. A \dlnd program $\Phi$ is a finite set of rules. The semantics are borrowed from \dlos.

Additionally to \Next, \dlnd allows for temporal operators \Df/\Dp (eventually in the future/past) in right-hand sides of the rules, but such rules can be equivalently written without this operator: a rule of the form $B(\ovl{x}) \impd \Df A(\ovl{x}), \p$ is equivalent to
\begin{align}\label{math:dlnd:diamond}
	& B(\ovl{x}) \impd  A'(\ovl{x}), \p
	&&A'(\ovl{x}) \impd A(\ovl{x})
	&&A'(\ovl{x}) \impd \Next A'(\ovl{x})
\end{align}
where $A'$ is a fresh symbol, and symmetrically for \Dp.

Both in \dlos and \dlnd, the predicate symbols that appear in the left-hand sides of the rules are called intensional predicates. A program is \emph{linear} if each of its rules contains at most one intensional predicate in the right-hand side. Note that a linear \dlnd program may not be expressible as a linear \dlos program, as the rewriting of the form \eqref{math:dlnd:diamond} introduces a new intensional predicate in the right-hand side of the first rule.

Let $l, m$ be the least and the greatest timestamps of \A. \citet{Artale:Linear-Temporal-Datalog} use the notion of derivations similar to ours to show that for a linear \dlnd program $\Phi$, one can limit attention to timestamps within the range $[l - \poly{|\Phi|, |\A|}, m + \poly{|\Phi|, |\A|}]$. In turn, when the range of timestamps is bounded, \dlnd programs can be rewritten to standard Datalog by a straightforward simulation of the order on the timestamps with binary relations. Thus, \citet{Artale:Linear-Temporal-Datalog} conclude that query answering with linear \dlnd programs is \NLogSpace-complete, for data complexity, matching the case of standard linear Datalog. However, the argument above works also for combined complexity. Thus, we obtain the following extended result that we attribute to \citet{Artale:Linear-Temporal-Datalog} (complexity of query answering with linear Datalog programs is from \citet{Gottlob-Papadimitriou:Datalog-Complexity}).

\begin{apptheorem}[\citet{Artale:Linear-Temporal-Datalog}] \label{thm:app:vlad}%
	Query answering with linear \dlnd programs is \NLogSpace-complete and \PSpace-complete, respectively, for data and combined complexity.
\end{apptheorem}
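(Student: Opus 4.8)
The plan is to reduce query answering for linear \dlnd programs to query answering for ordinary (linear) Datalog, for which the matching \NLogSpace/\PSpace bounds are already known \citep{Gottlob-Papadimitriou:Datalog-Complexity}, and to obtain the lower bounds for free, since plain linear Datalog is the special case of linear \dlnd with no temporal operators. The reduction proceeds in two stages: first, bound the set of timestamps that can be relevant to deriving the query fact to a window of polynomial size; second, compile the resulting ``windowed'' temporal program into a standard linear Datalog program whose size is polynomial in $|\Phi|+|\A|$.

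For the first stage, I would begin by eliminating the operators \Df and \Dp using the rewriting \eqref{math:dlnd:diamond}, so that every rule uses only \Next-shifts; this is harmless except that it introduces the propagation rules $A'(\ovl{x}) \impd \Next A'(\ovl{x})$, which are precisely the mechanism by which a derivation can travel arbitrarily far in time. Because the program is linear, every derivation witnessing $(\Phi, \A) \mdl A(a, n)$ is essentially a chain: each step produces a single intensional fact from the previous one together with extensional facts of \A. Let $l, m$ be the least and greatest timestamps in \A, and let $N = \poly{|\Phi|,|\A|}$. The key lemma is that if $A(a, n)$ is derivable at all, then it is derivable by a chain all of whose facts carry timestamps in $[\,\min(l,n) - N,\ \max(m,n) + N\,]$. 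I would prove this by a pumping argument: the number of distinct configurations (an intensional predicate together with its tuple of data arguments over the active domain) is polynomial in $|\Phi|+|\A|$, so a sub-chain that stays outside the window and is longer than this bound must repeat a configuration; since outside $[l,m]$ no extensional atom of \A is available, the segment between two occurrences uses only \Next-shift rules and can be excised (or, when it separates the query from the data, replaced by a shorter detour), yielding a shorter chain with the same endpoints.

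For the second stage, once timestamps are confined to a window $W$ of size $\poly{|\Phi|,|\A|}$, I would introduce a fresh constant for each point of $W$ together with a binary successor relation $\mathit{succ}$ and the corresponding order facts, and turn the implicit temporal argument into an explicit one: a rule $B(\ovl{x}) \impd \Next^{i_1} A_1(\ovl{x}_1), \dots, \Next^{i_k} A_k(\ovl{x}_k)$ becomes an ordinary Datalog rule that threads a time variable through $\mathit{succ}$ to realise each shift $i_j$, while the facts of \A are loaded at their timestamp-constants. This compilation preserves linearity (the order relation is extensional) and incurs only a polynomial blow-up, so $(\Phi, \A) \mdl A(a, n)$ reduces to a linear Datalog query-answering instance of polynomial size. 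Invoking \citet{Gottlob-Papadimitriou:Datalog-Complexity} then gives the \NLogSpace upper bound in data complexity (where $\Phi$, hence $W$ and the compiled program, is fixed up to the data) and the \PSpace upper bound in combined complexity.

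I expect the main obstacle to be the pumping lemma of the first stage, specifically controlling the interaction between the unbounded jumps permitted by the propagation rules arising from \Df/\Dp and the requirement that the excision preserve a valid derivation of the \emph{same} query fact; one must also take care that the window is enlarged to surround the query timestamp $n$ and not merely the data interval $[l,m]$, since $n$ may lie far outside $[l,m]$.
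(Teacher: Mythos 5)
Your proposal follows essentially the same route as the paper: bound the relevant timestamps to a window of size $\poly{|\Phi|,|\A|}$ around the data (and the query timestamp), compile the windowed program into ordinary linear Datalog by making the temporal argument explicit via a successor relation, and invoke the known \NLogSpace/\PSpace bounds for linear Datalog, with the lower bounds inherited from the atemporal case. The only difference is that the paper simply cites \citet{Artale:Linear-Temporal-Datalog} for the window-bounding lemma, whereas you sketch a plausible pumping argument for it; that sketch is consistent with how the cited result is obtained.
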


We are now ready to prove Theorem \ref{thm:consequences:linear:complexity}.

\ThmLinearUP*

\begin{proof}
	\e{(i)} Let \T be a \TELnc-TBox. The proof that \T is ultimately periodic is in the main text. For the bound on $\|\T\|$, fix any $A, B \in \CN(\T)$ and let $\Gamma_\T = (N, \{c, d\}, R, \Nmc_{AB})$ be as in Theorem \ref{thm:linear:tbox-to-grammar}. Then by Definition \ref{def:linear:GammaT}, the parameters in Corollary \ref{cor:app:consequences:bounds} are as follows: $n, m, p \leqslant \poly{|\T|}$, and $t = 2$. Thus, $s \leqslant 2^\poly{|\T|}$, and $l = 2$, rendering $p(L(G))$ to be a union of at most $2^\poly{|\T|}$ linear sets with at most 2 periods, the maximum entry of any oﬀset and any period is $2^\poly{|\T|}$. It follows that $\|\T\| \leqslant 2^\poly{|\T|}$.
	
	\e{(ii), (iii)} The lower bound for data complexity is from linear~\EL~\citep{DBLP:conf/rr/DimartinoCPW16}. For the upper bound, we recall from \citet{Basulto-et-al:TEL} that every ultimately periodic (w.r.t. quasimodels) \TELn-TBox \T can be translated, in polynomial time, to a \dlos program $\Pi_\T$, such that for any $\A$ and $A(a, n)$ we have $(\T, \A) \mdl A(a, n)$ iff $(\Pi_\T, \A) \mdl A(a, n)$. To understand the translation, the reader will need the definition of quasimodels given in Section \ref{app:prelim:quasimodels}; \citet{Basulto-et-al:TEL} also provide intuitive explanations that we omit here. The program $\Pi_\T^1$ consists of the following rules.
	\begin{align}
		&r(x, y, t) \impd r(x, y, t \pm 1) &&\text{for } r \in \RNrig(\T)
		\label{math:dlos:rigid}\\
		&B(x, t) \impd A(x, t), A'(x, t) &&\text{for } A \sqcap A' \sqs B \in \T
		\label{math:dlos:conjunction}\\
		&B(x, t) \impd r(x, y, t), A(y, t) &&\text{for } \exists r . A \sqs B \in \T
		\label{math:dlos:up:abox}
	\end{align}
	Further, let $\Qmf = \{\pi_d \mid d \in \CN(\T)\}$ be the canonical quasimodel of $(\T, \emptyset)$. For each trace $\pi_A$, with integers $m_P$ , $p_P$,
	$m_F$ , $p_F$ given by ultimate periodicity of \T, the program $\Pi_\T^2$ contains the following rules with a fresh predicate~$F_A$: 
	\begin{align}
		&B(x, t) \impd A(x, t - i) \quad \text{for } 0 \leqslant i < m_F,\ B \in \pi_A(i)
		\label{math:dlos:fill:offset}\\
		&F_A(x, t) \impd A(x, t - m_F)
		\label{math:dlos:mark:offset}\\
		&F_A(x, t) \impd F_A(x, t - p_F)
		\label{math:dlos:mark:period}\\
		&B(x, t) \impd F_A(x, t - i) \ \ \text{for } 0 \leqslant i < p_F , B \in \pi_A(m_F + i)
		\label{math:dlos:fill:period}
	\end{align}
	and symmetric rules with $m_P$, $p_P$ and fresh predicate $P_A$. Intuitively, the
	rules of the form~\eqref{math:dlos:fill:offset} replicate the initial part of $\pi_A$, from 0
	to $m_F$. The rules of the forms \eqref{math:dlos:mark:offset} and \eqref{math:dlos:mark:period} mark the start of each period with $F_A$ while the rules of the form \eqref{math:dlos:fill:period}
	replicate the period of $\pi_B$, starting from each marker $F_A$.

	Finally, $\Pi_\T = \Pi_\T^1 \cup \Pi_\T^2$.
	
	\begin{appproposition}[\citet{Basulto-et-al:TEL}] \label{prop:app:dlos}
		For any \TELn-TBox \T, ABox \A and TAQ $A(a, n)$, $(\T, \A) \mdl A(a, n)$ iff $(\Pi_\T, \A) \mdl A(a, n)$.
	\end{appproposition}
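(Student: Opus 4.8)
The plan is to route the equivalence through the canonical quasimodel characterization of Theorem~\ref{thm:app:qm}. Fix $(\T,\A)$ and let $\Qmf=\{\pi_d\mid d\in D\}$ be its canonical quasimodel. By Theorem~\ref{thm:app:qm} we have $(\T,\A)\mdl A(a,n)$ iff $A\in\pi_a(n)$, so it suffices to show that the canonical (least Herbrand) model $\mathcal{C}$ of $(\Pi_\T,\A)$ contains $A(a,n)$ iff $A\in\pi_a(n)$, for every $a\in\ind(\A)$ and $n\in\Z$, and analogously that the role atoms of $\mathcal{C}$ agree with the role facts forced by $\A$ under rigidity. I would establish the two inclusions $\mathcal{C}\sbs\Qmf$ and $\Qmf\sbs\mathcal{C}$ separately. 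Note that the very definition of $\Pi_\T^2$ presupposes ultimate periodicity of \T (so that $m_F,p_F,m_P,p_P$ exist); for \TELnc this is guaranteed by part~\emph{(i)} together with the equivalence of the two notions of ultimate periodicity established in Proposition~\ref{prop:app:def-equivalence}.

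First I would record the one structural fact that makes the replication rules of $\Pi_\T^2$ sound and complete: a \emph{shift-containment} property stating that if $A\in\pi_d(s)$ in the canonical quasimodel, then $\pi_A\sbs^s\pi_d$, i.e.\ $\pi_A(i)\subseteq\pi_d(s+i)$ for all $i$, where $\pi_A$ is the concept-name trace. This follows from Lemma~\ref{lm:app:qm}, which gives $\pi_A(i)=\{B\mid\T\mdl A\sqs\Next^i B\}$, together with the closure of $\pi_d$ under the trace and existential conditions (t1)--(t2) and (q4): an occurrence of $A$ at time $s$ seeds exactly the consequences recorded in $\pi_A$, including those obtained by descending to an existential witness and returning through a rigid role (the latter captured by the projections in (q4) and the entailments $\T\mdl A\sqs\Next^i B$ of Proposition~\ref{prop:derivationTEL}). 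Crucially, this absorbs all existential round-trips into a single trace, which is precisely why the \dlos program need never materialise witnesses.

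With shift-containment in hand, \textbf{soundness} ($\mathcal{C}\sbs\Qmf$) is an induction on the length of the \dlos derivation: each rule of $\Pi_\T$ preserves membership in the quasimodel. Rule~\eqref{math:dlos:rigid} is sound by (q3'), rule~\eqref{math:dlos:conjunction} by (t1), and rule~\eqref{math:dlos:up:abox} by (q3); for the replication rules \eqref{math:dlos:fill:offset}--\eqref{math:dlos:fill:period} and their past-symmetric counterparts, deriving $B(x,t)$ from $A(x,t-i)$ with $B\in\pi_A(i)$ is sound exactly because shift-containment yields $B\in\pi_A(i)\subseteq\pi_x(t)$. For \textbf{completeness} ($\Qmf\sbs\mathcal{C}$) I would induct on the saturation procedure that builds $\Qmf$, simulating each closure step in $\mathcal{C}$: the seeds (q1)--(q2) are present in \A; conjunction (t1) is rule~\eqref{math:dlos:conjunction}; the ABox returns (q3),(q3') are rules~\eqref{math:dlos:up:abox},\eqref{math:dlos:rigid}; and every shift (t2) and every existential round-trip (q4) at a named element $x$ is folded into the replication of $\pi_A$ from each occurrence of $A$ at $x$, so that the contributions routed through the concept-name traces $\pi_B$ appear in $\mathcal{C}$ without introducing any fresh domain element.

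The hard part will be the completeness of the $\Pi_\T^2$ replication: I must argue that the \emph{finite}, ultimately-periodic rule set \eqref{math:dlos:fill:offset}--\eqref{math:dlos:fill:period} together with the period markers $F_A$ (and symmetrically $P_A$, $m_P$, $p_P$) reproduces the \emph{entire} infinite trace $\pi_A$, and not merely its initial segment. Here ultimate periodicity is essential: it is what guarantees that $F_A$, advanced by full periods via \eqref{math:dlos:mark:period}, lands on precisely the positions $m_F+kp_F$ at which rule~\eqref{math:dlos:fill:period} must re-install $\pi_A(m_F+i)$, so that the offset rules \eqref{math:dlos:fill:offset}--\eqref{math:dlos:mark:offset} cover $0\leqslant i<m_F$ and the periodic rules cover the remainder, exactly matching $\pi_A$ on all of $\N$ (and symmetrically on the negative integers). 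Combining the two inclusions gives $A(a,n)\in\mathcal{C}$ iff $A\in\pi_a(n)$, which by Theorem~\ref{thm:app:qm} is exactly the statement of Proposition~\ref{prop:app:dlos}.
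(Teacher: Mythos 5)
The paper does not prove this proposition: it is stated as a cited result of \citet{Basulto-et-al:TEL}, with only the construction of $\Pi_\T=\Pi^1_\T\cup\Pi^2_\T$ recalled, so there is no in-paper argument to compare yours against. Judged on its own merits, your reconstruction takes the natural route and gets the key ingredients right: reducing both sides to the canonical quasimodel via Theorem~\ref{thm:app:qm}; the shift-containment property (which is cleanest to derive directly from Theorem~\ref{thm:app:qm} and Lemma~\ref{lm:app:qm}, since $B\in\pi_A(i)$ iff $\T\mdl A\sqs\Next^i B$ and entailed inclusions compose); the observation that the replication rules of $\Pi^2_\T$ absorb every (t2)-shift and every (q4)-round-trip through an anonymous witness into the concept-name trace $\pi_A$, so that the Herbrand model never needs fresh domain elements; and the correct identification of where ultimate periodicity enters. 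You are also right that the proposition's wording silently presupposes ultimate periodicity of \T (so that $m_P,p_P,m_F,p_F$ exist), which the surrounding text makes explicit via Proposition~\ref{prop:app:def-equivalence} and part \emph{(i)} of Theorem~\ref{thm:consequences:linear:complexity}.

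Two points of articulation to fix if you write this out in full; neither is a gap in the plan. First, in the soundness direction you justify rule~\eqref{math:dlos:up:abox} by (q3) alone, but its body atom $r(x,y,t)$ may have been produced by rule~\eqref{math:dlos:rigid} at a time point different from the one asserted in \A; for rigid roles the relevant quasimodel condition is (q3'), which you invoke only in the completeness direction. Second, ultimate periodicity is needed for \emph{soundness} of the periodic replication as well, not only for completeness: after $k$ applications of \eqref{math:dlos:mark:period}, rule~\eqref{math:dlos:fill:period} installs $\pi_A(m_F+i)$ at offset $m_F+i+kp_F$ from the seeding occurrence of $A$, and this is correct only because $\pi_A(m_F+i+kp_F)=\pi_A(m_F+i)$. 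Your phrase ``deriving $B(x,t)$ from $A(x,t-i)$ with $B\in\pi_A(i)$'' states the net effect but silently uses that identity, and the invariant for the auxiliary predicate $F_A$ (namely that $F_A(x,t)$ holds only if $A\in\pi_x(t-m_F-kp_F)$ for some $k\geqslant 0$) should be made part of the soundness induction explicitly.
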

	
	Note that rules of the forms \eqref{math:dlos:conjunction} and \eqref{math:dlos:up:abox} may be not linear.
	Now, suppose \T is a \TELnc-TBox, which excludes rules of the form \eqref{math:dlos:conjunction}.  
	Our goal is to rewrite $\Pi_\T$ into a linear \dlnd program $\Phi_\T$, using the additional power of operators \Df/\Dp. We will also use our definition of ultimate periodicity (w.r.t. concept inclusions) and Theorem~\ref{thm:app:consequences:set-size} to have a program of reasonable size.
	
	We define $\Phi_\T^1$ to contain, for each rule of the form \eqref{math:dlos:up:abox}, the following rules:
	\begin{align}
		&B(x) \impd r(x, y), A(y)
		\label{math:dlnd:up:abox:local}
		\tag{\ref{math:dlos:up:abox}$^\mathit{loc}$}\\
		&B(x) \impd \D r(x, y), A(y)
		&&\text{if } r \in \RNrig 
		\label{math:dlnd:up:abox:rigid:future}
		\tag{\ref{math:dlos:up:abox}$^\mathit{rig}_F$}\\
		&B(x) \impd \Dp r(x, y), A(y)
		&&\text{if } r \in \RNrig 
		\label{math:dlnd:up:abox:rigid:past}
		\tag{\ref{math:dlos:up:abox}$^\mathit{rig}_P$}
	\end{align}  
	Further, for every $A, B  \in \CN(\T)$, we take a representation
	\begin{equation}\label{math:app:shift-set-for-dlnd}
		\begin{aligned}
			\{n \in \Z \mid \T \mdl A \sqs \Next^n B\} = \Lmc_1 \cup \dots \cup \Lmc_m\\
			\Lmc_i = \{b^i + k_1 p^{i}_{1} + \dots k_l p^{i}_{l} \mid k_1, \dots, k_l \in \N\}.
		\end{aligned}
	\end{equation}
	Then, the program $\Phi_\T^2$ contains the following rules, $i \in \{1, \dots, m\}$.
	\begin{align}
		&F_i^{AB}(x) \impd \Next^{-b^i} A(x)
		\label{math:dlnd:offset}\\
		&F_i^{AB}(x) \impd \Next^{-p^i_j} F^{AB}_i(x) &&\text{for } 1 \leqslant j \leqslant l
		\label{math:dlnd:periods}\\
		&B(x) \impd F_i^{AB}(x)
		\label{math:dlnd:imply}
	\end{align}
	Finally, $\Phi_\T = \Phi_\T^1 \cup \Phi_\T^2$.
	
	It is readily seen that $\Phi_\T$ is a linear program, and that $|\Phi_\T| \leqslant \poly{\|\T\|}$. From point \e{(i)} we get the bound $|\Phi_\T| \leqslant 2^\poly{|\T|}$. Moreover, if $\RN(\T) \sbs \RNrig$, $\Phi_\T$ can be effectively constructed from $\T$. Indeed, in this case, one obtains the grammar $\Gamma_\T$ of Theorem \ref{thm:linear:tbox-to-grammar} in polynomial time. Then, using the automaton of Theorem \ref{thm:app:consequences:parikh-automaton}, one obtains representations \eqref{math:app:shift-set-for-dlnd} that respect the bounds of Corollary \ref{cor:app:consequences:bounds}. 
	
	Points \e{(ii)} and \e{(iii)} then follow from the fact that $(\Pi_\T, \A) \mdl A(a, n)$ iff $(\Phi_\T, \A) \mdl A(a, n)$. To see the latter, recall that when \T is a \TELnc-TBox, $\Pi^1_\T$ does not contain rules of the form \eqref{math:dlos:conjunction}. It is immediate that $\Pi^1_\T$ is equivalent to $\Phi^1_\T$. The fact that $\Pi_\T^2$ is equivalent to $\Phi_\T^2$ follows from the lemma below, given the facts that rules of the forms \eqref{math:dlos:fill:offset}-\eqref{math:dlos:fill:period} and \eqref{math:dlnd:offset}-\eqref{math:dlnd:imply} allow to derive $D(b, \ell')$ from $C(a, \ell)$ only if $a = b$.
	\begin{applemma}\label{lm:app:pi-t-s}
		For any \TELnc-TBox \T and $A, B \in \CN(\T)$, $\Pi^2_\T \mdl A \sqs \Next^n B$ iff $\Phi^2_\T \mdl A \sqs \Next^n B$.
	\end{applemma}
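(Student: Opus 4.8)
The plan is to prove that each of the two programs, when asked for shift consequences, computes exactly the set
\[
S_{AB} \;:=\; \{n \in \Z \mid \T \mdl A \sqs \Next^n B\},
\]
which by Lemma~\ref{lm:app:qm} coincides with $\{n \mid B \in \pi_A(n)\}$. Since $S_{AB}$ is defined from \T alone, independently of the program, establishing $\{n \mid \Pi^2_\T \mdl A \sqs \Next^n B\} = S_{AB} = \{n \mid \Phi^2_\T \mdl A \sqs \Next^n B\}$ immediately yields the equivalence. I will repeatedly use two closure properties of $S$, both instances of Proposition~\ref{prop:derivationTEL} obtained by composing and time-shifting derivations: \emph{reflexivity}, $0 \in S_{AA}$, and \emph{transitivity}, $S_{AA'} + S_{A'B} \sbs S_{AB}$ for all concept names $A,A',B$. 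Note that since \T is \TELnc{} it is ultimately periodic (Theorem~\ref{thm:consequences:linear:complexity}\,\emph{(i)}), so both the quasimodel parameters $m_P,p_P,m_F,p_F$ defining $\Pi^2_\T$ and the semilinear representations defining $\Phi^2_\T$ exist, and they describe the same sets by Proposition~\ref{prop:app:def-equivalence}.

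\textbf{Reduction to a single fact.} First I would show that for $\Pi \in \{\Pi^2_\T,\Phi^2_\T\}$ we have $\Pi \mdl A \sqs \Next^n B$ iff $(\Pi,\{A(a,0)\}) \mdl B(a,n)$. The ``only if'' direction is the instance $\A = \{A(a,0)\}$. For ``if'', I use that every rule of these programs involves a single individual variable and derives $D(b,\ell')$ from $C(a,\ell)$ only when $a=b$, and that all rules are invariant under shifting all timestamps by a constant; hence a derivation of $B(a,n)$ from $\{A(a,0)\}$ can be relocated (rename $a\mapsto c$, shift time by $n_0$) to a derivation of $B(c,n_0+n)$ from $A(c,n_0)$, so whenever $(\Pi,\A)\mdl A(c,n_0)$ monotonicity gives $(\Pi,\A)\mdl B(c,n_0+n)$. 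It therefore suffices to compute the canonical model of each program on the single fact $A(a,0)$.

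\textbf{The program $\Pi^2_\T$.} For $\{n \mid (\Pi^2_\T,\{A(a,0)\}) \mdl B(a,n)\} \sbs S_{AB}$ I would exploit $\Pi^2_\T \sbs \Pi_\T$: by monotonicity any consequence of $(\Pi^2_\T,\{A(a,0)\})$ is one of $(\Pi_\T,\{A(a,0)\})$, which by Proposition~\ref{prop:app:dlos} equals $(\T,\{A(a,0)\})\mdl B(a,n)$, i.e.\ $n \in S_{AB}$. For the converse I would show the rules reproduce the trace $\pi_A$: for $n \in S_{AB}$ we have $B \in \pi_A(n)$ by Lemma~\ref{lm:app:qm}; a single application of \eqref{math:dlos:fill:offset} to $A(a,0)$ handles $0\le n<m_F$, while for $n\ge m_F$ one writes $n = m_F + kp_F + i$ with $0\le i<p_F$, derives $F_A(a,m_F+kp_F)$ by \eqref{math:dlos:mark:offset}--\eqref{math:dlos:mark:period}, and applies \eqref{math:dlos:fill:period}, using the identity $\pi_A(m_F+i)=\pi_A(n)$ from \eqref{math:app:prelims:up:future}; the symmetric past rules with $P_A$ and \eqref{math:app:prelims:up:past} handle $n<0$. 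Thus $B(a,n)$ is derived for every $n \in S_{AB}$.

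\textbf{The program $\Phi^2_\T$ and the main obstacle.} Completeness, $S_{AB} \sbs \{n \mid (\Phi^2_\T,\{A(a,0)\}) \mdl B(a,n)\}$, is immediate: writing $S_{AB} = \Lmc_1 \cup \dots \cup \Lmc_m$ as in \eqref{math:app:shift-set-for-dlnd}, for $n = b^i + k_1 p^i_1 + \dots + k_l p^i_l \in \Lmc_i$ one applies \eqref{math:dlnd:offset} to $A(a,0)$ to obtain $F_i^{AB}(a,b^i)$, iterates \eqref{math:dlnd:periods} to reach $F_i^{AB}(a,n)$, and concludes with \eqref{math:dlnd:imply}. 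The delicate direction is soundness, $\{n \mid (\Phi^2_\T,\{A(a,0)\}) \mdl B(a,n)\} \sbs S_{AB}$, since $\Phi^2_\T$ has rules for \emph{all} pairs of concept names, so derivations chain through intermediate concepts and the auxiliary predicates $F_i^{A'B'}$; I expect this to be the main obstacle. I would avoid induction on derivation length and instead use the ``closed model'' method: set
\[
M = \{A'(a,n) \mid n \in S_{AA'}\} \;\cup\; \{F_i^{A'B'}(a,s) \mid \exists\, m \in S_{AA'},\ s-m \in \Lmc_i^{A'B'}\},
\]
where $\Lmc_i^{A'B'}$ is the $i$-th linear set of $S_{A'B'}$, and verify that $M$ contains $A(a,0)$ (using $0\in S_{AA}$) and is closed under every rule: closure under \eqref{math:dlnd:offset} and \eqref{math:dlnd:periods} holds because offsets lie in, and periods preserve, their linear sets, while closure under \eqref{math:dlnd:imply} is precisely transitivity, $S_{AA'} + \Lmc_i^{A'B'} \sbs S_{AA'} + S_{A'B'} \sbs S_{AB'}$. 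As the canonical model is the least model, it is contained in $M$, which gives soundness. Combining the two inclusions for each program shows both equal $S_{AB}$, proving the lemma.
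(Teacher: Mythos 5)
Your proof is correct and follows the same route as the paper's: both programs' shift-entailments are shown to coincide with $\{n \in \Z \mid \T \mdl A \sqs \Next^n B\} = \{n \in \Z \mid B \in \pi_A(n)\}$ via Lemma~\ref{lm:app:qm}. The paper dispatches the two program-to-TBox equivalences with ``by the form of the rules''; your elaboration --- in particular the reduction to the single fact $A(a,0)$, the use of $\Pi^2_\T \sbs \Pi_\T$ with Proposition~\ref{prop:app:dlos} for one soundness direction, and the closed-model argument for soundness of $\Phi^2_\T$, which rests on the transitivity $S_{AA'} + S_{A'B} \sbs S_{AB}$ and correctly handles derivations chaining through intermediate concept names --- supplies exactly the details the official proof leaves implicit.
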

	\begin{proof}
		By the form of the rules \eqref{math:dlos:fill:offset}-\eqref{math:dlos:fill:period}, $\Pi_\T^2 \mdl A \sqs \Next^n B$ iff $B \in \pi_A(n)$. By Lemma \ref{lm:app:qm}, $B \in \pi_A(n)$ iff $\T \mdl A \sqs \Next^n B$. Finally, by the form of the rules \eqref{math:dlnd:offset}-\eqref{math:dlnd:imply}, $\T \mdl A \sqs \Next^n B$ iff $\Phi_\T^2 \mdl  A \sqs \Next^n B$.
	\end{proof}\phantom{\qedhere}
\end{proof}

\end{document}